\PassOptionsToPackage{unicode}{hyperref}
\PassOptionsToPackage{hyphens}{url}
\PassOptionsToPackage{dvipsnames,svgnames,x11names}{xcolor}

\documentclass[12pt]{article}

\usepackage{xurl}
\usepackage{amsmath,amssymb}
\usepackage{iftex}
\ifPDFTeX
  \usepackage[T1]{fontenc}
  \usepackage[utf8]{inputenc}
  \usepackage{textcomp} 
\else 
  \usepackage{unicode-math}
  \defaultfontfeatures{Scale=MatchLowercase}
  \defaultfontfeatures[\rmfamily]{Ligatures=TeX,Scale=1}
\fi
\usepackage{lmodern}
\ifPDFTeX\else  
\fi
\IfFileExists{upquote.sty}{\usepackage{upquote}}{}
\IfFileExists{microtype.sty}{
  \usepackage[]{microtype}
  \UseMicrotypeSet[protrusion]{basicmath} 
}{}
\makeatletter
\@ifundefined{KOMAClassName}{
  \IfFileExists{parskip.sty}{%
    \usepackage{parskip}
  }{
    \setlength{\parindent}{0pt}
    \setlength{\parskip}{6pt plus 2pt minus 1pt}}
}{
  \KOMAoptions{parskip=half}}
\makeatother
\usepackage{xcolor}

\makeatletter
\ifx\paragraph\undefined\else
  \let\oldparagraph\paragraph
  \renewcommand{\paragraph}{
    \@ifstar
      \xxxParagraphStar
      \xxxParagraphNoStar
  }
  \newcommand{\xxxParagraphStar}[1]{\oldparagraph*{#1}\mbox{}}
  \newcommand{\xxxParagraphNoStar}[1]{\oldparagraph{#1}\mbox{}}
\fi
\ifx\subparagraph\undefined\else
  \let\oldsubparagraph\subparagraph
  \renewcommand{\subparagraph}{
    \@ifstar
      \xxxSubParagraphStar
      \xxxSubParagraphNoStar
  }
  \newcommand{\xxxSubParagraphStar}[1]{\oldsubparagraph*{#1}\mbox{}}
  \newcommand{\xxxSubParagraphNoStar}[1]{\oldsubparagraph{#1}\mbox{}}
\fi
\makeatother

\usepackage{longtable,booktabs,array}
\usepackage{calc} 
\usepackage{etoolbox}
\makeatletter
\patchcmd\longtable{\par}{\if@noskipsec\mbox{}\fi\par}{}{}
\makeatother
\IfFileExists{footnotehyper.sty}{\usepackage{footnotehyper}}{\usepackage{footnote}}
\makesavenoteenv{longtable}
\usepackage{graphicx}
\makeatletter
\def\maxwidth{\ifdim\Gin@nat@width>\linewidth\linewidth\else\Gin@nat@width\fi}
\def\maxheight{\ifdim\Gin@nat@height>\textheight\textheight\else\Gin@nat@height\fi}
\makeatother
\setkeys{Gin}{width=\maxwidth,height=\maxheight,keepaspectratio}
\makeatletter
\def\fps@figure{htbp}
\makeatother

\makeatletter
\@ifpackageloaded{caption}{}{\usepackage{caption}}
\AtBeginDocument{%
\ifdefined\contentsname
  \renewcommand*\contentsname{Table of contents}
\else
  \newcommand\contentsname{Table of contents}
\fi
\ifdefined\listfigurename
  \renewcommand*\listfigurename{List of Figures}
\else
  \newcommand\listfigurename{List of Figures}
\fi
\ifdefined\listtablename
  \renewcommand*\listtablename{List of Tables}
\else
  \newcommand\listtablename{List of Tables}
\fi
\ifdefined\figurename
  \renewcommand*\figurename{Figure}
\else
  \newcommand\figurename{Figure}
\fi
\ifdefined\tablename
  \renewcommand*\tablename{Table}
\else
  \newcommand\tablename{Table}
\fi
}
\@ifpackageloaded{float}{}{\usepackage{float}}
\floatstyle{ruled}
\@ifundefined{c@chapter}{\newfloat{codelisting}{h}{lop}}{\newfloat{codelisting}{h}{lop}[chapter]}
\floatname{codelisting}{Listing}

\makeatother
\makeatletter
\@ifpackageloaded{caption}{}{\usepackage{caption}}
\@ifpackageloaded{subcaption}{}{\usepackage{subcaption}}
\makeatother

\ifLuaTeX
  \usepackage{selnolig}  
\fi
 \usepackage[round]{natbib}
\usepackage{bookmark}

\IfFileExists{xurl.sty}{\usepackage{xurl}}{} 
\definecolor{mycyan}{HTML}{007FA3}
\hypersetup{
  pdftitle={Title},
  pdfauthor={Author 1; Author 2},
  pdfkeywords={3 to 6 keywords, that do not appear in the title},
  colorlinks=true,
  linkcolor={mycyan},
  filecolor={Maroon},
  citecolor={mycyan},
  urlcolor={mycyan},
  pdfcreator={LaTeX via pandoc}}

\usepackage{hyperref}

  \usepackage{graphicx,psfrag,epsf}
\usepackage{enumerate}
\usepackage{url} 
\usepackage{tikz, adjustbox, float} 
\usepackage{amsmath, amsfonts, amssymb, amsthm, dsfont} 
\usepackage{mathrsfs} 
\usepackage{soul} 
\usepackage{bm} 
\usepackage{booktabs}  
\usepackage{enumitem} 
\usepackage{mathtools}
\usepackage{mathabx}
\usepackage{nicefrac}
\usepackage{tabularx}
\usepackage{algorithm}
\usepackage{algpseudocode}
\usepackage{cleveref}

\usepackage{comment}

\usepackage{bbm} 

\usepackage{float}

\usepackage{wrapfig}

\usepackage{setspace}

\usepackage{tabularx}
\usepackage{makecell}

\usepackage{graphicx}

\usepackage{placeins}

\usepackage[letterpaper, top=2cm, bottom=2cm, left=2.05cm, right=2.05cm]{geometry}

\usepackage{url}
\usepackage{hyperref}

\def\boxit#1{\vbox{\hrule\hbox{\vrule\kern6pt  \vbox{\kern6pt#1\kern6pt}\kern6pt\vrule}\hrule}}

\def\real{{\rm I\!R}}

\DeclareMathOperator*{\argmin}{argmin}
\DeclareMathOperator*{\tr}{tr}

\DeclareMathOperator*{\cov}{cov}
\def\0{{\bf 0}}

\DeclareMathOperator*{\var}{var}

\DeclareMathOperator*{\diag}{diag}

\DeclareMathOperator*{\op}{op}

\def\bOmega{{\bm{\Omega}}}

\def\X{{\bf X}}
\def\x{{\bf x}}
\def\I{{\bf I}}

\def\A{{\bf A}}
\def\B{{\bf B}}
\def\C{{\bf C}}
\def\D{{\bf D}}
\def\G{{\bf G}}
\def\H{{\bf H}}
\def\h{{\bf h}}
\def\M{{\bf M}}
\def\P{{\bf P}}
\def\Q{{\bf Q}}
\def\R{{\bf R}}
\def\r{{\bf r}}
\def\S{{\bf S}}
\def\V{{\bf V}}
\def\W{{\bf W}}
\def\Y{\mathbf{Y}}
\def\y{{\bf y}}

\def\Z{{\bf Z}}

\def\bnu{{\boldsymbol\nu}}

\def\bgamma{\boldsymbol\gamma}
\def\bg{\boldsymbol\gamma}

\def\trans{^{\rm T}}

\newtheoremstyle{mytheoremstyle} 
    {0.3cm}                      
    {0cm}                        
    {\itshape}                   
    {}                           
    {\bf}                   
    {: }                          
    {0em}                       
    {}  

\theoremstyle{mytheoremstyle}
\newtheorem{Theorem}{Theorem}

\newtheorem{Lemma}{Lemma}
\newtheorem*{Lemma*}{Lemma}

\newtheorem{Proposition}{Proposition}

\newtheoremstyle{myExampleRemarkstyle} 
    {0.3cm}                    
    {0cm}                           
    {\itshape}                   
    {}                           
    {\bf}                   
    {: }                          
    {0em}                       
    {}  

\theoremstyle{myExampleRemarkstyle}
 
\newtheorem{Remark}{Remark}
\newtheorem{Assumption}{Assumption}

\newtheorem{innercustomgeneric}{\customgenericname}
\providecommand{\customgenericname}{}
\newcommand{\newcustomtheorem}[2]{%
  \newenvironment{#1}[1]
  {%
   \renewcommand\customgenericname{#2}%
   \renewcommand\theinnercustomgeneric{##1}%
   \innercustomgeneric
  }
  {\endinnercustomgeneric}
}

\newcustomtheorem{customthm}{Theorem}
\newcustomtheorem{customlemma}{Lemma}

\newcommand{\anon}{1}

\begin{document}

\def\spacingset#1{\renewcommand{\baselinestretch}%
{#1}\small\normalsize} \spacingset{1}

\date{}

\if1\anon
{
\title{\vspace{-1.5cm} \bf Towards Open Science: Monitoring Crustal Deformations in North America}
  \author{
 Lionel~Voirol$^{1}$, Haotian~Xu$^{2}$,  Yuming Zhang$^{3}$, Luca Insolia$^{4, 5, 6}$,\\ Roberto Molinari$^{2}$ \& St\'ephane~Guerrier$^{4, 5, 6}$
 \vspace{.2cm}\\
{\small  $^{1}$Geneva School of Economics and Management, University of Geneva, Switzerland}\\
{\small $^{2}$Department of Mathematics and Statistics, Auburn University, United States}\\
{\small  $^{3}$Department of Biostatistics, Harvard T.H. Chan School of Public Health, United States}\\
{\small $^{4}$Institute of Pharmaceutical Sciences
of Western Switzerland, University of Geneva, Switzerland}\\
{\small $^{5}$School of Pharmaceutical Sciences, University of Geneva, Switzerland}\\
{\small $^{6}$Department of Earth Sciences, University of Geneva, Switzerland}
    }
  \maketitle
} 
\fi



\begin{abstract}
\noindent
The study of the Earth's behavior has greatly benefited from the widespread deployment of Global Navigation Satellite Systems (GNSS), enabling large-scale monitoring of crustal deformation and long-term geophysical trends. In this work, we focus on the North American region, where complex tectonic activity, particularly along the western margin, requires methods capable of processing and analyzing large collections of GNSS time series distributed across extensive spatial domains.
Analyzing the full GNSS network provides a coherent view of deformation across multiple scales, allowing detection of long-wavelength signals, subtle intraplate strain and regionally consistent velocity fields that are difficult to capture through local or subsampled analyses. Despite the availability of such data, existing methodologies remain computationally prohibitive for large-scale analyses across this region (or others). To address this limitation, we introduce a highly scalable framework (implemented in open-source software) that enables inference on crustal deformation across the full North American GNSS network using standard computational resources. The proposed method achieves substantial computational gains while maintaining inferential performance comparable to existing approaches and confirms existing tectonic trends, thereby supporting efficient large-scale monitoring and contributing to ongoing efforts toward ``Open Science''.
\end{abstract}

\noindent%
{\it Keywords:} Wavelet Variance, Generalized Method of Moments, GNSS Time Series, Missing Data, Model Selection, Long-Memory Processes
\vfill

\newpage
\spacingset{1.8} 

\section{Introduction}
\label{sec:intro}

The study of crustal deformations is one of the most important and impactful areas of research in Earth sciences, mainly focused on understanding and monitoring changes in the shape, position and structure of the Earth's lithosphere caused by tectonic forces, gravitational stresses, volcanic activity and surface processes. These deformations occur over a wide range of spatial and temporal scales, from slow continental drift to sudden earthquakes \citep{stein2003introduction, turcotte2014geodynamics} which, among others, manifest as uplift and horizontal displacements of the Earth's crust. As a result, aside from providing insights into the Earth's dynamic evolution \citep[see e.g.,][]{turcotte2014geodynamics}, the study of these phenomena is essential to perform efficient hazard mapping and risk reduction \citep{scholz2002mechanics} as well as sustainable resource exploration, extraction and management strategies \citep{fossen2016structural}. Within this context, a region of particular interest is North America which is characterized by a diverse tectonic setting that includes the active Pacific-North America plate boundary along California and Alaska, intraplate regions such as the stable continental interior, and postglacial rebound processes in the north \citep{calais2006continental,argus2010glacial}. These processes interact across large spatial scales, producing a deformation field that cannot be fully understood by focusing solely on localized or randomly sampled Global Navigation Satellite System (GNSS) networks. More precisely, these networks consist of permanent geodetic stations that continuously record precise three-dimensional positions over time, allowing surface motions to be measured at millimeter-level accuracy across local, continental or global scales. While small GNSS networks provide precise constraints on local deformation (e.g., along major faults or in urban regions), they often miss long-wavelength signals that are critical in regions such as North America, with postglacial isostatic adjustments across the Canadian Shield \citep{argus2010glacial, lambeck2014sea}, far-field plate-boundary effects extending from California into the Basin and Range and central United States (U.S.) \citep{calais2006continental}, and subtle intraplate strain accumulation associated with areas like the New Madrid seismic zone \citep{calais2005strain}.

To address this need for crustal deformation monitoring (as well as others) at larger scales, there has been an important increase in installation of GNSS stations across the globe which record relevant data. 
Indeed, the North American region has experienced rapid growth in GNSS ground stations used to observe signals from satellite constellations such as Global Positioning System (GPS), which is operated by the U.S. Space Force \citep{zawacki2025history}. Beyond navigation, these data provide essential information for monitoring crustal deformation at continental scales \citep[e.g.,][]{montillet2024big}. Such analyses rely on models that separate deterministic components (which capture signals of interest such as long-term velocities) from stochastic components whose correct specification is crucial for valid inference and uncertainty quantification \citep{williams2003effect, he2017review}. Because crustal deformation exhibits both short- and long-term variability, exploiting long and spatially extensive records is key to improving estimates of deformation rates, particularly velocities and their associated uncertainties \citep[e.g.,][]{lv2025investigating, kleinherenbrink2018comparison, mudelsee2019trend, aydin2021effect, maddanu2023trends, kermarrec2024modeling}. Confidence Intervals (CIs) and hypothesis tests play a central role in applications ranging from sea-level change to seismic hazard assessment \citep{wang202295, hohensinn2018stand, gokdacs2021velocity}. Nevertheless, despite this data expansion, analytic capabilities have not kept pace. Most current approaches rely on Gaussian likelihood-based frameworks that jointly estimate deterministic and stochastic components, requiring repeated evaluation of likelihoods involving large covariance matrices \citep{williams2003effect, serpelloni2022surface, sun2023relationship}. As record lengths and network sizes grow, these methods become computationally prohibitive, motivating the use of approximations that reduce cost \citep{Bos2008, bos2013fast}. However, while still remaining computationally demanding for practical purposes, such approximations often also come with limited or unclear statistical guarantees thereby highlighting the need for methods that are both scalable and inferentially reliable.

As a consequence, while the access to data has been opened to the wider research community, the result of the above-described setting is an important bottleneck in the actual capacity of researchers to benefit from these data. Indeed, while severely affecting the study and responsiveness to these natural phenomena, this aspect constitutes a fundamental barrier to achieving (or at least getting closer to) ``Open Science'', a global movement advocating for a democratization of science through the sharing of data, software and computational resources (among others) to make science more accessible, transparent and reproducible for everyone \citep{das2021unesco}. This movement has been particularly relevant for Earth sciences where, for example, National Aeronautics and Space Administration's  (NASA) ``Transform to Open Science'' program and its funded broad-access ``CryoCloud'' aim to share data and computing resources to accelerate research and more efficiently monitor global natural processes \citep{snow2023cryocloud}. These solutions indeed allow to increase access to the use of data and computational resources, but they still require practitioners and researchers to have access to high-performance computing clusters and are steered towards collaborative analyses rather than towards providing all researchers with locally-usable solutions that do not require additional costs or learning barriers. Moreover, the increased reliance on clusters and computing clouds has important environmental impacts due to their energy-consumption and heat-production, thereby directly affecting many natural phenomena they are being used to monitor \citep{achar2022cloud}.
\\
\textbf{Contributions}: To contribute to the goal of Open Science in this domain, this work puts forward a comprehensive statistical framework that provides an \textit{alternative to likelihood-based approaches} which, in practical terms, can deliver outputs in \textit{linear computational time} with \textit{well-defined statistical properties}. In contrast, exact likelihood-based methods generally exhibit quadratic or cubic computational complexity, whereas scalable likelihood-based implementations typically rely on additional approximations whose statistical guarantees depend on the approximation adopted. More specifically:
\begin{enumerate}
    \item We develop a framework that explicitly \textit{accounts for missing observations} in GNSS time series while \textit{preserving strong theoretical guarantees} for estimation and inference.
    
    \item We provide \textit{reliable uncertainty quantification}, demonstrating through extensive simulations that the proposed approach achieves \textit{inferential performance comparable to, or better than, likelihood-based methods}.
    
    \item We deliver these results with substantially \textit{improved computational efficiency}, enabling large-scale analyses to be performed in seconds rather than hours, thereby supporting more accessible and reproducible research.

    \item We introduce a \textit{computationally efficient model selection procedure for the stochastic component}, achieving performance comparable to likelihood-based methods such as the Akaike Information Criterion (AIC), while avoiding costly likelihood evaluations.

    \item We further develop \textit{efficient computational tools} for key quantities underlying the method, which are \textit{broadly applicable to wavelet-based analyses} and large time series.
    
    \item We provide an \textit{open-source \texttt{R} package and an interactive application} that support reproducible, end-to-end analysis and visualization of large-scale geodetic datasets.
\end{enumerate}

Based on these contributions, researchers monitoring the velocity of crustal deformations over the North American region can process and analyze all the data on standard laptops or, if needed, on small-scale computing resources which are commonly accessible to research communities (including those already made available through current Open Science initiatives). The proposed framework is built on the Generalized Method of Wavelet Moments (GMWM) which, in a two-step fashion, takes advantage of scalable wavelet decompositions and their relations to the covariance structure to model complex dependence structures (see e.g., \citealp{percival1995estimation,guerrier2013wavelet, cucci2023generalized}). More specifically, we fully deliver the properties of the GMWM when working with estimated residuals and missing values, thereby providing inference for regression coefficients on-par (or better) than the Maximum-Likelihood Estimator (MLE) as well as proposing a new efficient model selection method for the covariance structure. As highlighted later in this work, we achieve these results while observing computational speed-ups 100 times faster than the best benchmark approaches (which are computationally fast approximations to the standard likelihood) bringing times down from many hours to the order of seconds without losing inferential accuracy (but often improving it). 

\textbf{Organization}: To perform the case study of interest, Section~\ref{sec:data} provides an overview of the data used for the analysis of velocity of crustal deformations in North America, followed by Section~\ref{sec:framework} which describes the common statistical model used to analyze this data and the contributions of the proposed approach, including how it addresses limitations of existing methods. Section~\ref{sec:methodology} introduces the proposed methodology, including a new modelling framework to address missing data (Section~\ref{sec:missing_data}), with corresponding inferential and model selection properties (Sections~\ref{sec:inference} and \ref{sec:model_selection}) that are supported by Section~\ref{sec:simulations} in which we provide a brief overview of some simulation studies. Section \ref{sec:case_study} presents the first ever joint analysis of crustal deformation using data covering the entire GNSS network in the North American region thereby (i) delivering the first continental-scale, uncertainty-quantified velocity field with calibrated confidence intervals at every station, 
(ii) highlighting how this is achieved through standard computational resources and, consequently, (iii) allowing the needed timely updates for this highly active region. Finally, concluding remarks are provided in Section~\ref{sec:conclusion}. Supplementary materials include (i) a \texttt{GitHub} repository with all code used for the simulations and the case study, (ii) an open-source \texttt{R} package implementing the proposed method, and (iii) a web application illustrating the case-study results. Complete derivations of the theoretical results, together with additional simulation studies, are provided in the appendices.

\section{Geodetic Time Series Data}
\label{sec:data}

As a result of rapid technological advances, there has been a widespread global deployment of multiple satellite constellations (e.g., GPS, the Global Navigation Satellite System (GLONASS), Galileo, BeiDou) with a corresponding expansion of geodetic infrastructure where thousands of permanent GNSS stations now operate worldwide, continuously collecting vast amounts of data. Several web-based archives, including the International GNSS Service and the University NAVSTAR Consortium (UNAVCO), offer extensive collections of GNSS data products that support geoscience research and applications. The Nevada Geodetic Laboratory (NGL) further streamlines access by harvesting data from more than 130 online archives, providing a comprehensive set of GNSS products for over $20,000$ stations worldwide (\citealp{blewitt2018harnessing}). Among the available data products provided by the NGL are the daily spatial positions of geodetic stations expressed in graticule distance (see \citealp{blewitt2024improved}) within the IGS20 reference frame (see \citealp{rebischung2023reference}). The daily position solutions are derived from raw GNSS data processed with precise orbit, clock and atmospheric models. For each station and at each day, the position is described by three components: northing (coordinate in the north-south direction), easting (coordinate in the east-west direction), and height (vertical coordinate). Although numerous new stations have been installed in recent years, many GNSS stations have been operating continuously for several decades. Hence, for each station, this results in a time series (signal) for each of the three coordinates, each with thousands of observations, hence representing a substantial volume of data. These time series frequently include missing values due to factors such as power outages, hardware malfunctions, maintenance periods or data transmission failures, which complicate analysis and modeling (see e.g., \citealp{bao2021filling, liu2022missing}). The daily position time series are available at the NGL data online portal which contains the time series expressed in meters for each coordinate and for each station. Figure~\ref{fig:example_gnss_network_gnss_station_and_geodetic_time_series} illustrates a subset of the NGL GNSS network along the U.S. east coast, together with a schematic representation of a typical continuously operating GNSS station and representative northing, easting and height position time series.
\begin{figure}[]
    \centering
    \includegraphics[width=0.95\linewidth]{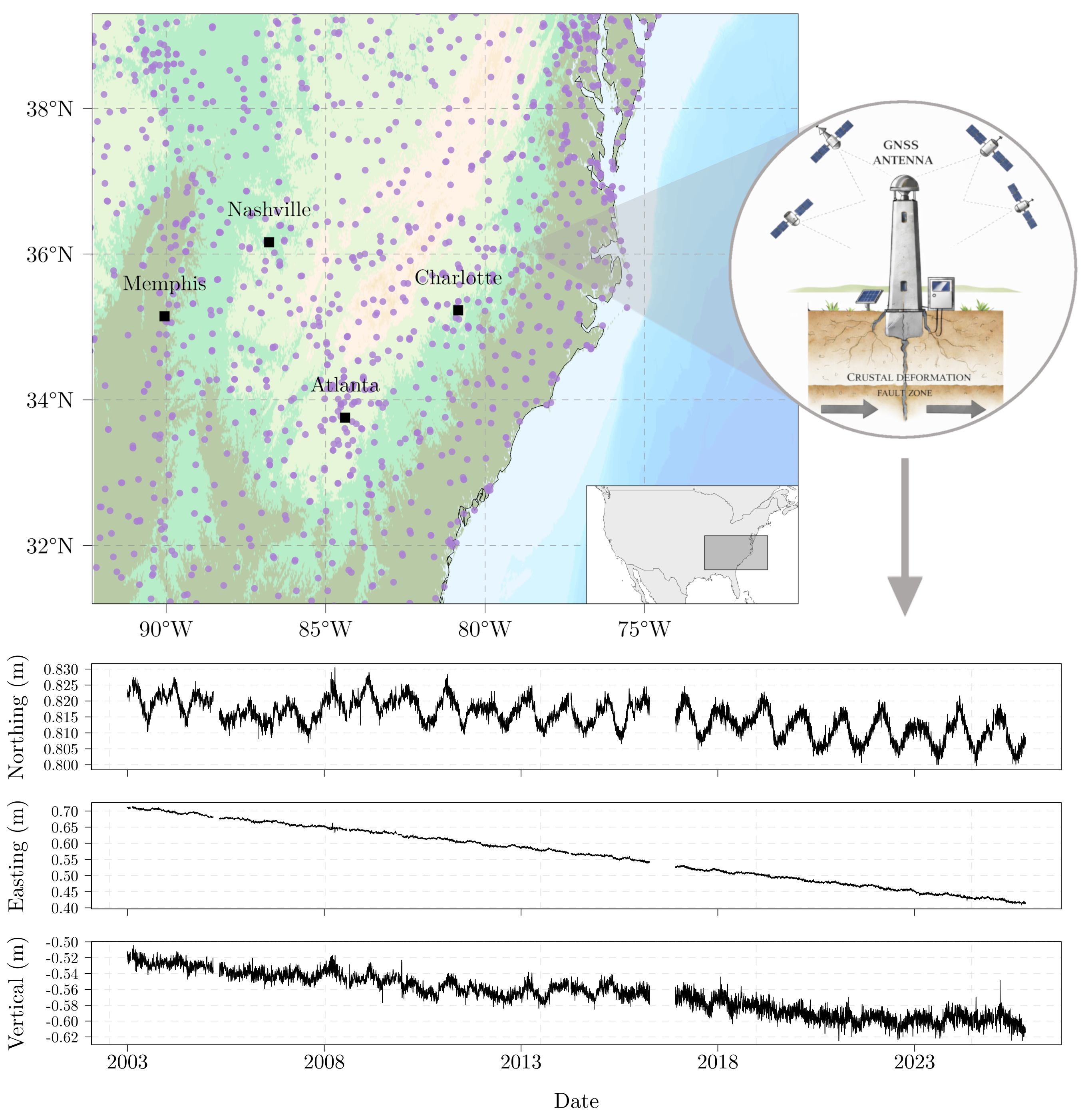}
    \caption{Illustration of a subset of the GNSS station network provided by the Nevada Geodetic Laboratory along the east coast of the United States. Purple dots represent station locations, with selected major cities indicated for geographical reference. The inset schematic illustrates a typical continuously operating GNSS station measuring crustal deformation. The lower panels display an example of geodetic position time series (North, East, and Vertical components) over multiple years for a representative station.}
    \label{fig:example_gnss_network_gnss_station_and_geodetic_time_series}
\end{figure}
The focus of this study is on the North American region, which is of particular interest due to its complex tectonic activity along the western margin, where the Pacific Plate and the North American Plate interact \citep{freymueller1999kinematics}. This plate boundary is characterized by significant compressional forces, leading to the development of subduction zones such as the Cascadia Subduction Zone, as well as intense shear stress that manifests in major transform faults, notably the San Andreas Fault system. The selected stations for this region are located between $180^\circ$W and $45^\circ$W longitude and between $10^\circ$N and $75^\circ$N latitude, and the criteria to select these stations from the NGL database consist in their signals having long observation periods and relatively low amounts of missing data. More in detail, we retained only stations in this area with more than 10 years of observations and less than 40\% missing data to ensure sufficient temporal coverage to reliably estimate long-term trends, seasonal components and temporally correlated noise. These thresholds represent a compromise between maximizing network coverage and ensuring stable and valid inference from each time series. These selection criteria deliver a network of $4,793$ GNSS stations, implying a sample of $4,793 \times 3 = 14,379$ signals including all three axes of measurement for each station, with an effective number of daily observations per station varying between $2,231$ and $11,667$, while the proportion of missing data in these time series ranges from $0\%$ to $39.99\%$. These portions of missing data can pose considerable challenges to adequately analyze the data when also considering the length of these signals. Indeed, there is a need to consider the entire signal in order to better characterize long-range dependence structures in these data and, for this reason, there is the consequent need to continuously update the models when new data come in. Aside from the computational strain, the main challenge from this perspective is characterizing the time-dependence structure in these data, considering that this is commonly constituted by a mix of different short- and long-term stochastic (noise) components which make the estimation of this structure particularly complex. In fact, while the estimation of the deterministic component can be relatively straightforward, it is the estimation of these stochastic components (which are essential for appropriate inference on the deterministic parameters of interest) that poses the major statistical and computational challenges.

More in detail, this scale of data analysis is (largely) prohibitive in terms of computational time and/or resources for many geoscientists. Indeed, the state-of-the-art software to analyze this data is represented by \texttt{Hector} \citep{bos2013fast}: this software provides the most efficient implementation of the MLE for these analyses which has been continuously optimized over the last decade, allowing it to become 10-100 times faster than the classical MLE, and is a standard tool for analyzing geodetic time series. Without considering the optimization procedure implemented in \texttt{Hector}, storing the covariance matrix of the shortest time series considered (i.e., $n = 2,231$) is already non-negligible, and additional temporary copies are typically required during optimization and matrix factorizations. More importantly, the computational burden of repeated likelihood evaluations and covariance matrix factorizations becomes substantial when model selection must be performed across multiple candidate stochastic models and thousands of long GNSS position time series. Although likelihood-based inference is practical for individual signals, scaling such analyses to large GNSS networks often requires dedicated high-performance computing resources. For example, the recent study of \cite{tunini2024global} relied on a dedicated computing cluster to analyze a relatively small GNSS network spanning two decades and required approximately one week of computation. Aside from the required time, many research teams cannot afford these types of computational resources, leading to limitations in terms of \textit{practical} access to data and in terms of Open Science in this line of research.

\section{Current Modeling Approaches}
\label{sec:framework}

The data described in Section \ref{sec:data} is commonly analyzed by considering a standard trajectory model for each signal described earlier: this model includes an intercept, a velocity (or trend) parameter, seasonal and half-seasonal components, offsets and post-seismic relaxations to model earthquakes near the stations. For this reason we also employ this model for our study since it adequately incorporates the main effects coming from such processes (see e.g., \citealp{bevis2014trajectory,he2017review,ren2021analysis}). More specifically, with $t_i$ representing the $i^{\text{th}}$ ordered time point of the time series (per station and per axis) and $t_0$ representing the reference epoch, the model for the mean function (deterministic component) can be represented as follows:
\begin{equation}
\label{eq:func_model}
    \begin{aligned}
        \mathbb{E}[Y_i] &= a + b \left(t_{i} - t_0\right) +  \sum_{h=1}^{m}\Big\{c_{h} \sin \left(2 \pi f_{h} t_{i}\right)  + d_h \cos \left(2 \pi f_h t_i\right)\Big\} \\& \;\;\;\;+ \sum_{j=1}^{r}e_j H\left(t_i - t_j\right) + \sum_{k = 1 }^{s} l_k \left\{1- \exp\left(\frac{-(t_i-t_k)}{\tau_k}\right)\right\}H\left(t_{i}-t_k\right)\text{,}
    \end{aligned}
\end{equation}
for $i \in \{1, \ldots, n\}$, where $a$ is the initial position at the reference epoch $t_0$, $b$ is the trend (or velocity) parameter and $c_h$ and $d_h$ are the periodic motion parameters. The common choice for $m$ (the number of periodic components) is $2$ thereby delivering $f_1 = \nicefrac{1}{365.25}$ and $f_2 = \nicefrac{2}{365.25}$ as the annual and semiannual seasonal terms respectively. The coefficients $e_j$ represent the magnitude of the $r$ offset terms (step change in the mean function) that model earthquakes, equipment changes or human interventions, where $H(\cdot)$ is the Heaviside step function defined as $H(x) = \mathbbm{1}(x\geq 0)$, and with $\mathbbm{1}(\cdot)$ being the indicator function. The coefficients $l_k$ model the magnitude of the exponential decay function which itself models post-seismic relaxations (e.g., \citealp{shen1994postseismic, ren2021analysis}), where $t_k$ and $\tau_k$ represent respectively the start time of the $k^{\text{th}}$ post-seismic relaxation and its relaxation time (these are both user-provided). In our case, the start time of each post-seismic relaxation associated with each station are provided by the NGL and we estimate the amplitude of the offsets and the amplitude of the exponential decay function used to model post-seismic relaxations. The relaxation time $\tau_k$ is fixed to $1$ year, in line with the findings of \cite{ren2021analysis} who highlight that the seismic relaxation time does not significantly affect the overall trajectory or variation characteristics of individual GNSS coordinate time series. Hence, the model assumed for these data takes on the standard linear regression form as follows:
\begin{equation}
    \Y = \X \boldsymbol{\beta} + \boldsymbol{\varepsilon}, \quad \boldsymbol{\varepsilon} \sim \mathcal{N}\left\{\0, \boldsymbol{\Sigma}\left(\boldsymbol{\gamma}\right)\right\},
    \label{eq.reg_complete}
\end{equation}
where $\X \in \real^{n \times p}$ is a design matrix of observed/known predictors described earlier and $\boldsymbol{\beta} \in \real^p$ is the regression parameter vector corresponding to: $\boldsymbol{\beta}=\left(a, b, c_1, \ldots , c_m, d_1, \ldots, d_m, e_1, \ldots, e_r, l_1, \ldots, l_s\right)^{\trans}$. In addition to the deterministic component, $\boldsymbol{\varepsilon} = (\varepsilon_{1}, \dots, \varepsilon_n)^{\trans}$ represents a zero-mean error process following a Gaussian distribution $\mathcal{N}$ with positive-definite covariance function $\boldsymbol{\Sigma}(\boldsymbol{\gamma}) \in \real^{n\times n}$ characterizing the second-order dependence structure of the process (induced by measurement errors and unobserved variables) and parameterized by the vector $\boldsymbol{\gamma} \in \boldsymbol{\Gamma} \subset \real^q$. The Gaussian assumption is common in the analysis of this data, however we must underline that many results for our proposed method do not require Gaussianity and remain valid under a broad class of non-Gaussian distributions. Given the above, the goal is therefore to adequately estimate $\boldsymbol{\beta}$ as well as $\boldsymbol{\gamma}$, the latter being needed to perform inference on the former. More specifically, in the above setting we know that to perform inference on $\bm{\beta}$ we make use of the estimated parameter vector $\hat{\bm{\beta}}$ as well as its corresponding covariance matrix. In this work we will employ least-squares regression to estimate $\bm{\beta}$ and use the notation $\bm{\Phi}_n$ for its covariance matrix to underline its dependence on the design matrix $\X$ and therefore on the sample size $n$. Since this design matrix $\X$ is fixed (known) for the modeling of GNSS stations, the consequent structure of the deterministic (functional) component in \eqref{eq:func_model} is generally unproblematic since it is defined via domain knowledge. However, the challenge mainly comes from the choice of the stochastic model for the error process $\boldsymbol{\varepsilon}$ and the consequent estimation of its parameter $\bm{\gamma}$ which itself impacts the estimation of $\bm{\Phi}_n$. Indeed, the latter is used to create CIs for $\bm{\beta}$ and the importance of correctly identifying and estimating $\bm{\gamma}$ for this purpose is widely supported by the literature which highlights how neglecting long-range dependence noise in the error process, for example, can lead to an underestimation of uncertainties on $\hat{\bm{\beta}}$ by factors ranging from $5$ to $11$ if not appropriately addressed \citep{mao1999noise, wang2012noise}. Considering this, in the following sections we provide an overview of limitations of current methodologies to estimate $\bm{\beta}$, and more importantly $\bm{\gamma}$, thereby motivating the use of the proposed methodology for our case study of interest.
\subsection{Existing Methodologies}
\label{sec:related_work}

While the MLE is the standard methodology for framework described in Section~\ref{sec:framework}, estimating the parameters $\boldsymbol{\beta}$ and $\boldsymbol{\gamma}$ by directly maximizing the likelihood function is commonly computationally demanding with respect to both time and resources (see e.g.,~\citealp{guerrier2013wavelet,proietti2022modelling}). For example, in the analysis of GNSS time series, most likelihood-based methods have a computational complexity of $\mathcal{O}(n^\delta)$, where $\delta \in [2, 3]$, depending among others on the degree and structure of missing data as well as the stochastic model (see e.g.,~\citealp{bos2013fast}). There have been improvements in this direction such as the restricted MLE in \cite{tehranchi2021fast} or faster approximations in \cite{Bos2008, bos2013fast} which can reduce computation times by factors of 10 to 100 compared to traditional methods \citep{williams2008cats}. However, while some methods can accommodate missing observations, they generally do not explicitly model the missing-data mechanism and the challenges of parameter estimation remain significant due to matrix storage limitations and complex noise structures which are very common for the study of crustal deformations \citep[see][]{tunini2024global}.

To address these challenges, several strategies have been proposed. Some methods sparsify the covariance matrix by partitioning the domain or modifying correlations to induce near-independence \citep{fuentes2002spectral, sang2011covariance}, while others impose sparsity in the precision matrix through conditional modeling or spectral approaches \citep{stein2013stochastic, guinness2019spectral}. Variational low-rank approximations and predictive processes reduce computational cost by using a smaller set of knots or basis functions \citep{gibbs2000variational, lazaro2011variational}, and distribution-free approaches such as Gapfill rely on quantile regression \citep{gerber2018predicting}. A benchmark study by \cite{heaton2019case} highlights that, while these approaches can achieve substantial computational gains, they often do so at the expense of accuracy or well-characterized statistical guarantees. As a consequence, existing methods generally fail to simultaneously provide scalability and reliable inference, particularly in the presence of long-range dependence and missing data (two common characteristics of GNSS time series). Moreover, extending these approaches to include model selection for the stochastic component remains computationally prohibitive in practice \citep{he2017review}, further limiting their applicability to large-scale analyses. Alternative step-wise procedures are also available. A common approach is Feasible Generalized Least Squares, where an initial estimate of $\boldsymbol{\beta}$ is obtained, the residuals are used to estimate the covariance parameters $\boldsymbol{\gamma}$, and the resulting covariance estimate is then used to update $\boldsymbol{\beta}$. More generally, moment-based variants of this strategy replace likelihood-based covariance estimation by moment-matching procedures, following the same philosophy as the Generalized Method of Moments (GMM) framework of \cite{hansen1982large}. Following this approach, \cite{cucci2023generalized} proposed a two-step method combining least-squares regression with the GMWM for large-scale GNSS data. However, this framework suffers from several important limitations: (i) it treats the error process as observed, leading to biased estimation and inference, (ii) it relies on unrealistic assumptions on the missing data mechanism, and (iii) it cannot accommodate non-stationary noise processes, which are common in GNSS applications over the North American region. In addition, it provides no theoretical guarantees for valid inference under realistic settings, as it assumes the stochastic model to be known. In contrast, the methodology proposed in this work is specifically designed to overcome these limitations by jointly addressing missing observations, complex stochastic structures and large-scale computations (an overview of these contributions and their supporting results is provided in Appendix~\ref{sec.appendix_structure}).

\section{Wavelet Moment Regression}
\label{sec:methodology}

Given the interest in inference on $\bm{\beta}$, and therefore on the error process $\bm{\varepsilon}$ (i.e., the stochastic component), in this section we firstly put forward a new modeling approach to adequately address the missing data mechanism (including assumptions and notation). We then present the proposed WAMORE framework which builds on this approach to deliver scalable inferential tools for constructing CIs and model selection (among others).

\subsection{New Modeling Approach with Missing Data Mechanism}
\label{sec:missing_data}

Recalling the modeling framework defined in \eqref{eq.reg_complete}, we address the missing data process by equipping the response of interest $\Y$ (i.e., a position time series of a station for a given axis) with a random variable $\Z = (Z_{1}, \dots, Z_n)^{\trans}$ which describes this mechanism. In particular, we assume $\Z$ to be a binary-valued stationary process \textit{independent} of $\Y$ defined as
\begin{equation}
\label{eq:markov_missinigness_in_text}
  Z_i =
\begin{cases}
    Z_{i-1}, & \text{with probability } \rho, \\
    W_i \sim \mathrm{Bernoulli}(\mu(\bm{\vartheta})), & \text{with probability } 1-\rho,
\end{cases}
\end{equation}
with $\rho \in [0,1)$ such that its expectation is $\mu(\boldsymbol{\vartheta}) = \mathbb{E}[Z_i] \in (0, \, 1]$, for all $i$, with covariance matrix $\boldsymbol{\Lambda}(\boldsymbol{\vartheta}) = \var\left(\Z\right) \in \real^{n\times n}$  whose structure is assumed known up to the parameter vector $\boldsymbol{\vartheta} \in \boldsymbol{\Upsilon} \subset \real^k$. Following this, we assume that we only observe $\tilde{\Y} = \Z \odot \Y$, where $\odot$ denotes the element-wise product. Considering the missing mechanism as \textit{random and independent} of $\Y$ is reasonable because we generally do not know when data will be observed in advance, since most of the time data are not observed due to mechanical or software failure in the system. Moreover, given the dependence over time, the presence/absence of data will be conditional on the presence/absence of previous observations. Finally, the stationarity of this process is assumed plausible since, in practice, human intervention is often required to fix potential issues with corresponding equipment and practices for these procedures remaining relatively stable over time. Even in the case where such assumptions were (slightly) violated, we consider this approach to be closer to reality than those underlying current alternative strategies. Nevertheless, other missing data mechanisms could easily be adopted with minimal changes to the proposed procedure. With this in mind, $\Z$ therefore determines which elements of $\Y$ are observed (i.e., ~when $Z_i = 1$). As a result of the missing observations mechanism, the (overall) observed process $\tilde{\Y}$ is now parametrized by the vector:
$$\boldsymbol{\theta} = \left(\boldsymbol{\beta}^{\trans}, \boldsymbol{\gamma}^{\trans}, \boldsymbol{\vartheta}^{\trans}\right)^{\trans} \in \boldsymbol{\Theta},$$
where $\boldsymbol{\Theta} = \real^p \times \boldsymbol{\Gamma} \times \boldsymbol{\Upsilon}$. This new parametrization implies that we can also define $\tilde{\X}=\left(\Z \otimes \bm{1}^{\trans}\right) \odot \X \in \real^{n \times p}$ as the design matrix $\X$ with zero-valued vectors for the rows where observations are missing in $\Y$, where $\otimes$ to denotes the Kronecker product and $\bm{1}$ represents a vector of ones of dimension $p$. It is important to note that, following the model in \eqref{eq.reg_complete}, throughout this work we will assume that all rows in $\X$ are observed (or can be deterministically derived/computed) implying that, as defined earlier, the missing data mechanism $\Z$ is solely associated to $\bm{\varepsilon}$ and hence $\Y$ (therefore not to $\X$). Following this, the missing data mechanism does not have a sizeable impact on the estimator of $\bm{\beta}$, defined as
\begin{equation}
\label{eq.beta_hat_miss}
    \hat{\boldsymbol{\beta}} = \left(\tilde{\X}^{\trans} \tilde{\X}\right)^{-1} \tilde{\X}^{\trans} \tilde{\Y},
\end{equation}
except in terms of statistical efficiency since the number of observations is reduced. Therefore, hereinafter the estimator $\hat{\boldsymbol{\beta}}$ will refer to the definition given in \eqref{eq.beta_hat_miss}. This being said, the impact of the missing data mechanism is instead significant on the covariance matrix of $\hat{\bm{\beta}}$ which, as shown in Appendix \ref{app:cov_beta}, is now given by:
\begin{equation}
\label{eq.beta_cov_miss}
    \bm{\Phi}_n = \mu(\bm{\vartheta})^{-2}(\X^{\trans}\X)^{-1} \X^{\trans} \bm{\Sigma}(\bm{\gamma}, \bm{\vartheta}) \X (\X^{\trans}\X)^{-1},
\end{equation}
where $\bm{\Sigma}(\bm{\gamma}, \bm{\vartheta}) = \left\{\bm{\Lambda}(\bm{\vartheta}) + \mu(\bm{\vartheta})^2 \bm{1}\bm{1}^{\trans}\right\} \odot \bm{\Sigma}(\bm{\gamma})$. Similarly to $\hat{\bm{\beta}}$, hereinafter $\bm{\Phi}_n$ will therefore refer to the matrix defined in \eqref{eq.beta_cov_miss}. Notice that in the analysis of GNSS data, as mentioned earlier, the design matrix $\X$ is commonly known and, while only the rows for the observed values in $\Y$ can be used to estimate $\hat{\bm{\beta}}$, the full design can be used in \eqref{eq.beta_cov_miss}, with $\tilde{\Y}$ being used to estimate $\bm{\vartheta}$ and $\bm{\gamma}$ as discussed in the next section.

\subsection{Inference}
\label{sec:inference}

While the quantities to obtain the estimate $\hat{\bm{\beta}}$ are directly observable in \eqref{eq.beta_hat_miss}, to perform inference on $\bm{\beta}$ one needs its covariance matrix $\bm{\Phi}_n$ which depends on two unknown quantities as mentioned earlier: $\bm{\vartheta}$ and $\bm{\gamma}$. Given the independence between $\Z$ and $\bm{\varepsilon}$ (and hence $\Y$) we can firstly estimate the missing data mechanism parameter $\bm{\vartheta}$ for which many statistically valid approaches already exist \citep[see e.g.,~][and references therein]{chib2001markov}. More specifically, in this work we will employ the MLE for the estimation of Markov processes. Once the estimate $\hat{\bm{\vartheta}}$ is available, this can be plugged into \eqref{eq.beta_cov_miss} as well as into other quantities that depend on this parameter. However the estimation of the parameter characterizing the stochastic error model, i.e., $\bm{\gamma}$, is generally the most computationally challenging component to obtain. For this reason we build upon the GMWM which was put forward in \cite{guerrier2013wavelet} and later developed in different directions \citep[see e.g.,][]{guerrier2014estimation, guerrier2015automatic, xu2019multivariate, guerrier2022robust, voirol2023accounting}. More specifically, with the goal of obtaining statistically valid estimates of the stochastic parameter $\bm{\gamma}$, the GMWM is defined as follows:
\begin{equation}
    \hat{\bm{\gamma}} = \argmin_{\bm{\gamma} \in \bm{\Gamma}} \|\hat{\bm{\nu}} - \bm{\nu}(\bm{\gamma})\|_{\bm{\Omega}}^2,
    \label{eq:gmwm}
\end{equation}
where (i) $\|\mathbf{a}\|_{\mathbf{B}}^2 = \mathbf{a}^{\trans}\mathbf{B}\mathbf{a}$, (ii) $\hat{\bm{\nu}} = [\hat{\nu}_j]_{j = 1, \hdots, J} \in \real_+^J$ is the estimated Wavelet Variance (WV) vector from the error process (with $J < \log_2(n)$ being the scales of wavelet decomposition, see e.g., \citealp{serroukh2000statistical}), (iii) $\bm{\nu}(\bm{\gamma}) = [\nu_j(\bm{\gamma})]_{j = 1, \hdots, J} \in \real_+^J$ is the theoretical WV vector implied by the assumed stochastic model for the process, and (iv) $\bm{\Omega}$ is any positive-definite weighting matrix whose choice is made to achieve statistical efficiency of the estimator $\hat{\bm{\gamma}}$. The (asymptotically) optimal choice for this weighting matrix is given by $\bm{\Omega} = \V^{-1}$, where $\V = \var\left(\hat{\bm{\nu}}\right)$ is the covariance matrix of the WV estimator \citep[see][]{hansen1982large, guerrier2013wavelet}.

In order to make use of the GMWM within the setting of this work, we however need to integrate \eqref{eq:gmwm} with the missing data mechanism parametrized by $\bm{\vartheta}$. In particular, the quantity that requires this integration is the theoretical WV since this is a function of the covariance matrix $\bm{\Sigma}(\bm{\gamma}, \bm{\vartheta})$. However, existing forms of the theoretical WV are available only for a restricted class of time series models and a general form of this function does not exist to enable the use of a broader class of models such as those considered in GNSS signals. To address this limitation, we derive a general expression for the theoretical WV at scale $j$, denoted as $\nu_j$, that holds for any zero-mean process with finite covariance matrix $\boldsymbol{\Sigma} \in \real^{n\times n}$, i.e., $\nu_j = \mathrm{tr}\!\left\{\mathbf{A}_j \boldsymbol{\Sigma} \right\}$. The proof of this result and the form of the matrix $\mathbf{A}_j$ are given in Appendix~\ref{app:wv_theo_forms}. It must be noted that this explicit form can be of considerable importance beyond the scope of this work since it can be used in many other natural and economic science applications. A potential issue with its computation however are the matrix operations involved in this form, and we therefore deliver a computationally efficient $\mathcal{O}(n^{-1})$-accurate approximation to compute $\nu_j$ in Appendix~\ref{app:fast_theo_wv}.
In our setting, we consider the error process with missing data $\tilde{\boldsymbol{\varepsilon}} = \tilde{\mathbf{Y}} - \tilde{\mathbf{X}}\boldsymbol{\beta} = \boldsymbol{\varepsilon} \odot \mathbf{Z}$ whose theoretical WV can therefore be written as
\begin{equation}
    \nu_j\left(\boldsymbol{\gamma}, \boldsymbol{\vartheta} \right) = \mathrm{tr}\!\left\{\mathbf{A}_j \boldsymbol{\Sigma}(\boldsymbol{\gamma}, \boldsymbol{\vartheta}) \right\}.
    \label{eq:wv_missing}
\end{equation}
 This being said, if we define $\bm{\nu}(\bm{\gamma}, \bm{\vartheta}) = [\nu_j(\bm{\gamma}, \bm{\vartheta})]_{j = 1, \hdots, J}$ and use an appropriate estimator for $\bm{\vartheta}$, we can consequently define the estimator for $\boldsymbol{\gamma}$ with missing observations as follows:
\begin{equation}
    \hat{\bm{\gamma}} = \argmin_{\bm{\gamma} \in \bm{\Gamma}} \,\|\hat{\bm{\nu}} - \bm{\nu}(\bm{\gamma}, \hat{\bm{\vartheta}})\|_{\bm{\Omega}}^2,
    \label{eq:gmwm_new}
\end{equation}
where $\hat{\bm{\nu}}$ is the estimated WV computed on the estimated residuals $\hat{\bm{\varepsilon}}$ and $\hat{\bm{\vartheta}}$ is the pre-computed estimator of the parameter $\bm{\vartheta}$, with $\bm{\Omega}$ being any positive-definite matrix (e.g., identity). It must be noted at this point that the empirical WV $\hat{\bm{\nu}}$ is computed on the \textit{estimated} error (residual) process $\hat{\boldsymbol{\varepsilon}} = {\tilde{\Y}} -\tilde{\X} \hat{\boldsymbol{\beta}}$ which is not a direct realization of the true error process with missingness, $\tilde{\boldsymbol{\varepsilon}}$. For this reason, we denote the covariance matrix of the estimated error process as $\bm{\Sigma}_{\hat{\bm{\varepsilon}}}(\bm{\gamma}, \bm{\vartheta})$ which is therefore different from $\bm{\Sigma}(\bm{\gamma}, \bm{\vartheta})$ in finite samples. Considering the scale of GNSS data, a first option would consist in ignoring the estimated nature of this error process and make a large sample approximation by assuming $\bm{\Sigma}_{\hat{\bm{\varepsilon}}}(\bm{\gamma}, \bm{\vartheta}) \approx \bm{\Sigma}(\bm{\gamma}, \bm{\vartheta})$. However, since the latter approximation can induce large biases with consequent negative impacts on inference, we propose a computationally efficient finite-sample approximation of $\bm{\Sigma}_{\hat{\bm{\varepsilon}}}(\bm{\gamma}, \bm{\vartheta})$ in Appendix~\ref{app:IminusH} and study its performance in Section \ref{sec:simulations}. This approximation is plugged into \eqref{eq:wv_missing} instead of $\bm{\Sigma}(\bm{\gamma}, \bm{\vartheta})$ to estimate $\bm{\gamma}$ in \eqref{eq:gmwm_new}, following which we can obtain the two parameters needed to estimate the required covariance matrix $\bm{\Phi}_n$. With the latter, we now have all the quantities needed to estimate and perform inference on $\bm{\beta}$ and the entire pipeline for the analysis is summarized in Algorithm \ref{algo:gmwmx}.

\begin{algorithm}[H]
\caption{Wavelet Moment Regression (WAMORE)}
\begin{algorithmic}[1]

\vspace{0.4em}
\State \textbf{Input:} Complete design matrix $\X$; response vector with missing observations $\tilde{\Y}$; missingness indicator vector $\Z$ 

\vspace{1em}
\State Obtain $\tilde{\X}= \left(\Z \otimes \mathbf{1}^{\trans}\right) \odot \X$
\State  Compute $ \hat{\boldsymbol{\beta}} = (\tilde{\X}^{\trans} \tilde{\X})^{-1} \tilde{\X}^{\trans} \tilde{\Y}$

\vspace{0.4em}
\State Compute  $\hat{\boldsymbol{\varepsilon}} = {\tilde{\Y}} -\tilde{\X} \hat{\boldsymbol{\beta}}$

\vspace{0.4em}
\State Compute $\hat{\boldsymbol{\vartheta}}$ based on $\Z$ and obtain $\mu(\hat{\boldsymbol{\vartheta}})$ and $\boldsymbol{\Lambda}(\boldsymbol{\hat{\vartheta}})$

\vspace{0.4em}
\State Compute $\hat{\bm{\gamma}}$ through \eqref{eq:gmwm_new} by using the covariance matrix $\bm{\Sigma}_{\hat{\bm{\varepsilon}}}(\bm{\gamma}, \bm{\vartheta})$ and a diagonal $\bm{\Omega}$

\vspace{0.4em}
\State \textit{Optional}: Use $\hat{\bm{\gamma}}$ to compute $\hat{\V}$ (see Eq. \eqref{eq:var_wv} later) and define $\bm{\Omega} = \hat{\V}^{-1}$. Then repeat step 6

\vspace{0.4em}
\State Compute $\hat{ \boldsymbol{\Phi}}_n  = \mu(\hat{\bm{\vartheta}})^{-2}(\X^{\trans}\X)^{-1} \X^{\trans} \bm{\Sigma}(\hat{\bm{\gamma}}, \hat{\bm{\vartheta}})\X (\X^{\trans}\X)^{-1}$

\vspace{1em}
\State \textbf{Output:} Estimated parameter vector $\hat{\bm{\beta}}$ and its estimated covariance matrix $\hat{\bm{\Phi}}_n$

\end{algorithmic}
\label{algo:gmwmx}
\end{algorithm}

To ensure validity of the inferential procedure using estimates obtained from Algorithm~\ref{algo:gmwmx}, we need to obtain the statistical properties of the quantities of interest, starting from the asymptotic distribution of $\hat{\bm{\beta}}$, under different memory/dependence regimes (common in the signals coming from crustal deformations), to the consistency of $\hat{\bm{\gamma}}$ which is then used in $\bm{\Phi}_n$ for inference on $\bm{\beta}$. In particular, Appendices \ref{app:short_memory} and \ref{app:long_memory} deliver the asymptotic distributions of $\hat{\bm{\beta}}$ under short- and long-term memory regimes respectively, while Appendix~\ref{app:consistency_nu_hat} proves the consistency of the WV estimator $\hat{\bm{\nu}}$ under these same regimes, which is then used in Appendix \ref{app:consistency_of_gamma_hat} to deliver the consistency of $\hat{\bm{\gamma}}$. In addition to these theoretical properties, in Appendix~\ref{app:fast_theo_wv} we put forward computational approaches to efficiently compute the theoretical WV in \eqref{eq:wv_missing} as well as a computationally efficient approximation of $\bm{\Sigma}_{\hat{\bm{\varepsilon}}}(\hat{\bm{\gamma}}, \hat{\bm{\vartheta}})$ which can be found in Appendix \ref{app:IminusH}.

\subsection{Model Selection}
\label{sec:model_selection}

The inferential procedure presented in the previous section assumes that the model (covariance structure) for the stochastic error component, parametrized by $\bm{\gamma}$, is \textit{known}. While the models for the error processes can indeed usually be determined rather accurately by experts and practitioners, any misspecification of the error model can however lead to an underestimation of parameter uncertainties by factors ranging from $5$ to $11$ \citep{mao1999noise, wang2012noise}. There is consequently a strong need to accurately identify the stochastic model but, since current inferential methods for the error process rely either on spectral decompositions (in the frequency domain) or likelihood-based approaches, the associated model selection tools cannot reliably capture the lowest-frequency components \citep{langbein1997correlated,zhang1997southern} or are particularly computationally demanding due to the need to compute the MLE for each considered stochastic model \citep{Bos2008, amiri2007assessment,bos2013fast}. 

We therefore put forward a model selection criterion built on the WAMORE framework. More specifically, following the covariance penalty selection in \cite{efron2004estimation}, we consider a model selection criterion based on an out-of-sample expectation $\mathbb{E}_0[\cdot]$ of the GMWM loss as follows:
\begin{equation}
\label{def:model_selection_criterion}
    C = \mathbb{E}_0\left[ \mathbb{E}\left\{\left\|\hat{\bm{\nu}}_0 - \bm{\nu}(\hat{\bm{\gamma}}, \hat{\boldsymbol{\vartheta}}) \right\|_{\bm{\Omega}}^2 \right\}\right],
\end{equation}
where $\hat{\bm{\gamma}}$ and $\hat{\bm{\vartheta}}$ are computed on the estimated residuals $\hat{\bm{\varepsilon}}$ and $\hat{\bm{\nu}}_0$ is computed on $\hat{\bm{\varepsilon}}_0$ (an independent copy of $\hat{\bm{\varepsilon}}$ satisfying $\hat{\bm{\varepsilon}} \overset{d}{=} \hat{\bm{\varepsilon}}_0$, where $\overset{d}{=}$ denotes equality in distribution). This criterion is similar to Mallow's $C_p$ developed for least-squares estimation \citep{mallows2000some} and similar ideas are also used by \cite{andrews1999consistent} and \cite{hansen1982large} where goodness-of-fit tests (e.g.,~Sargan-Hansen test or
$J$-test) provide model selection criteria for GMM frameworks. Indeed, computationally-intensive approximations to directly estimate this criterion had already been considered for the GMWM framework in \cite{guerrier2015automatic} and in \cite{radi2019multisignal} where they compute the outer expectation $\mathbb{E}_0[\cdot]$ via a cross-validation approach. To avoid a computationally expensive cross-validation approach, we put forward an explicit plug-in estimator for this criterion which, up to an additive constant and assuming that the missing-data mechanism parametrized by $\boldsymbol{\vartheta}$ is kept fixed across all candidate models, is given by
\begin{equation}
\label{eq.mod_crit}
    \hat{C} = \left\|\hat{\bnu} - \bnu(\hat{\bgamma},\hat{\boldsymbol{\vartheta}}) \right\|_{\hat{\mathbf{V}}^{-1}}^2 + 2q,
    \end{equation}
where $\hat{\V}$ is a consistent estimator for $\V$. The derivation of this estimator can be found in Appendix \ref{app:model_selection}. 

In this perspective, the only quantity missing to compute the estimator in \eqref{eq.mod_crit} is $\hat{\V}$ (the estimator of the covariance matrix of the observed WV $\hat{\bm{\nu}}$). To date there are not even explicit theoretical forms for this covariance matrix. However, adopting the same approach with which we put forward the theoretical WV in \eqref{eq:wv_missing}, we employ the \textit{variance of quadratic forms} (of Gaussian processes) to deliver a compact explicit form for the $(j,l)^{\text{th}}$ element of the covariance matrix $\V$ of the empirical WV computed on a zero-mean Gaussian process with covariance matrix $\boldsymbol{\Sigma}\in \real^{n \times n}$ which is given by:
\begin{equation}
\label{eq:var_wv}
v_{j,l} = 2 \tr\left\{\mathbf{A}_j \boldsymbol{\Sigma}\mathbf{A}_l \boldsymbol{\Sigma} \right\},
\end{equation}
where the matrix $\mathbf{A}_j$ is the same as in \eqref{eq:wv_missing} and defined in Appendix~\ref{app:wv_theo_forms}, where the derivation of this result is also provided. This is the \textit{first explicit form} of the covariance matrix of the empirical WV computed on Gaussian processes: this result is of significant importance also for applications and statistical inference beyond the scope of this work (see Remark \ref{remark.v} further on). In our setting, we consider the error process with missing data $\boldsymbol{\varepsilon}\odot \Z$ which therefore allows us to define these quantities as $v_{j,l} = 2 \tr\left\{\mathbf{A}_j \boldsymbol{\Sigma}(\boldsymbol{\gamma}, \boldsymbol{\vartheta})\mathbf{A}_l \boldsymbol{\Sigma}(\boldsymbol{\gamma}, \boldsymbol{\vartheta}) \right\}$. To estimate the elements of this matrix we therefore simply need to plug-in a consistent estimator for $\bm{\gamma}$ which is already given in \eqref{eq:gmwm_new} and of $\boldsymbol{\vartheta}$ that can be easily estimated. Although the process with missing observations is not Gaussian, we exploit the covariance structure of the error process with missing data through the corresponding covariance expression. While a direct evaluation of \eqref{eq:var_wv} would require the multiplication of large matrices and quickly become computationally prohibitive, this representation allows us to leverage the computational properties of the wavelet decomposition (e.g., the pyramid algorithm, \citealp{percival2000wavelet}) to derive an exact and computationally efficient algorithm for computing $\hat{\mathbf V}$ without explicitly forming these matrices. The details of this algorithm are provided in Appendix~\ref{app:compute_v}.

\begin{Remark}
\label{remark.v}
    The proposal of an efficient and consistent estimator of $\mathbf{V}$ is not only important for the case study of interest in this work, but is also of relevance for all procedures that employ the WV to perform statistical inference in different settings. For example, the WV is broadly employed within the natural and physical sciences for model building and prediction \citep[see e.g.,][for an overview]{percival2000wavelet}, and within other fields of research \citep[see e.g.,][to mention a few]{gallegati2012wavelet, xie2013wavelet,foufoula2014wavelets, jia2015correlations, ziaja2016fault, abry2018wavelet}. Other examples include inferential approaches for unit-root tests \citep{fan2010unit}, isotropy in random fields \citep{thon2014multiscale} or Portmanteau tests \citep{gencay2015multi}. Moreover, by plugging-in an initial estimate of $\bm{\gamma}$ we can obtain $\hat{\V}$ allowing us to choose $\hat{\bm{\Omega}} = \hat{\V}^{-1}$. The latter can then be used in a second run of \eqref{eq:gmwm_new} to obtain a more statistically efficient estimator for $\bm{\gamma}$ (see step 7 in Algorithm \ref{algo:gmwmx}).
\end{Remark}

Now that we have an efficient way of computing $\V$ from $\boldsymbol{\Sigma}\left(\boldsymbol{\gamma}, \bm \vartheta \right)$, this can be plugged into the proposed estimator in \eqref{eq.mod_crit}. 
In Appendix~\ref{app:model_selection} we show the consistency of the proposed estimator for the criterion in \eqref{def:model_selection_criterion} and discuss its overfitting and underfitting behavior and connections to the AIC. Importantly, the estimator for \eqref{def:model_selection_criterion} can be computed through the quantities $\hat{\bm{\nu}}$, $\hat{\V}$ and $\hat{\bm{\Omega}} = \hat{\V}^{-1}$, without relying on the likelihood function. While $\hat{\bm{\nu}}$ is computed directly from the data, $\hat{\V}$ and $\hat{\bm{\Omega}}$ are estimated only once using the largest candidate stochastic model, providing a low-bias estimate of the covariance structure. As a consequence, the only quantity requiring an update when evaluating the criterion is $\hat{\bm{\gamma}}$, which is obtained via the GMWM based on these same pre-computed quantities. This implies that the computational complexity of the selection procedure only depends on the evaluation of $\bm{\nu}\left(\bm{\gamma}, \hat{\bm{\vartheta}}\right)$ which relies on the computational efficiency of the procedure proposed in Appendix~\ref{app:fast_theo_wv}. Moreover, in Appendix~\ref{app:model_selection_simulation}, we present a simulation study comparing the performance of the model selection criterion presented in this section with the AIC implemented in the \texttt{Hector} software \citep{bos2013fast} for the task of selecting the appropriate model for the stochastic error of a simulated GNSS time series.

\section{Simulations}
\label{sec:simulations}

We evaluate the proposed inferential framework against the MLE approximation implemented in \texttt{Hector} v2.0 \citep{Bos2008} under varying conditions of (i) time series length, (ii) stochastic model, and (iii) proportion of missing data. Specifically, we consider three noise models: white noise with stationary power-law noise (WN + PL), white noise with flicker noise (WN + FL), and white noise with a Matérn process and a random walk (WN + MAT + RW). The first two correspond to commonly used models for GNSS time series \citep{zhang1997southern, calais1999continuous, bock2000instantaneous}, while the latter represents a more complex but practically relevant setting \citep{kermarrec2014matern}. For all scenarios, we adopt a simplified deterministic model including an intercept, a linear trend, and annual and semi-annual components, with parameters calibrated from the empirical distributions observed in the case study of Section~\ref{sec:case_study}. The results for two additional noise settings are given in Appendix~\ref{app:main_simu_results}.

For each stochastic model, we generate synthetic daily time series with lengths ranging from $10$ to $30$ years and introduce missing data via a Markov process (see Appendix~\ref{app:main_simu_results}), yielding missingness proportions between $0\%$ and $40\%$. For each configuration, $1{,}000$ signals are simulated and confidence intervals for the trend parameter are computed using both WAMORE and \texttt{Hector}, recording empirical coverage and computational time. For settings where the MLE takes more than 2 hours and 20 minutes per signal on average, we evaluate the MLE only on $5$ Monte Carlo realizations due to computational constraints. The results in Figures~\ref{fig:wn_matern_rw_running_time} and \ref{fig:wn_matern_rw_type_1_error} show that WAMORE achieves substantial computational gains while maintaining, or improving upon, the inferential performance of the MLE across all considered settings.

\begin{figure}[]
    \centering
    \includegraphics[width=1\linewidth]{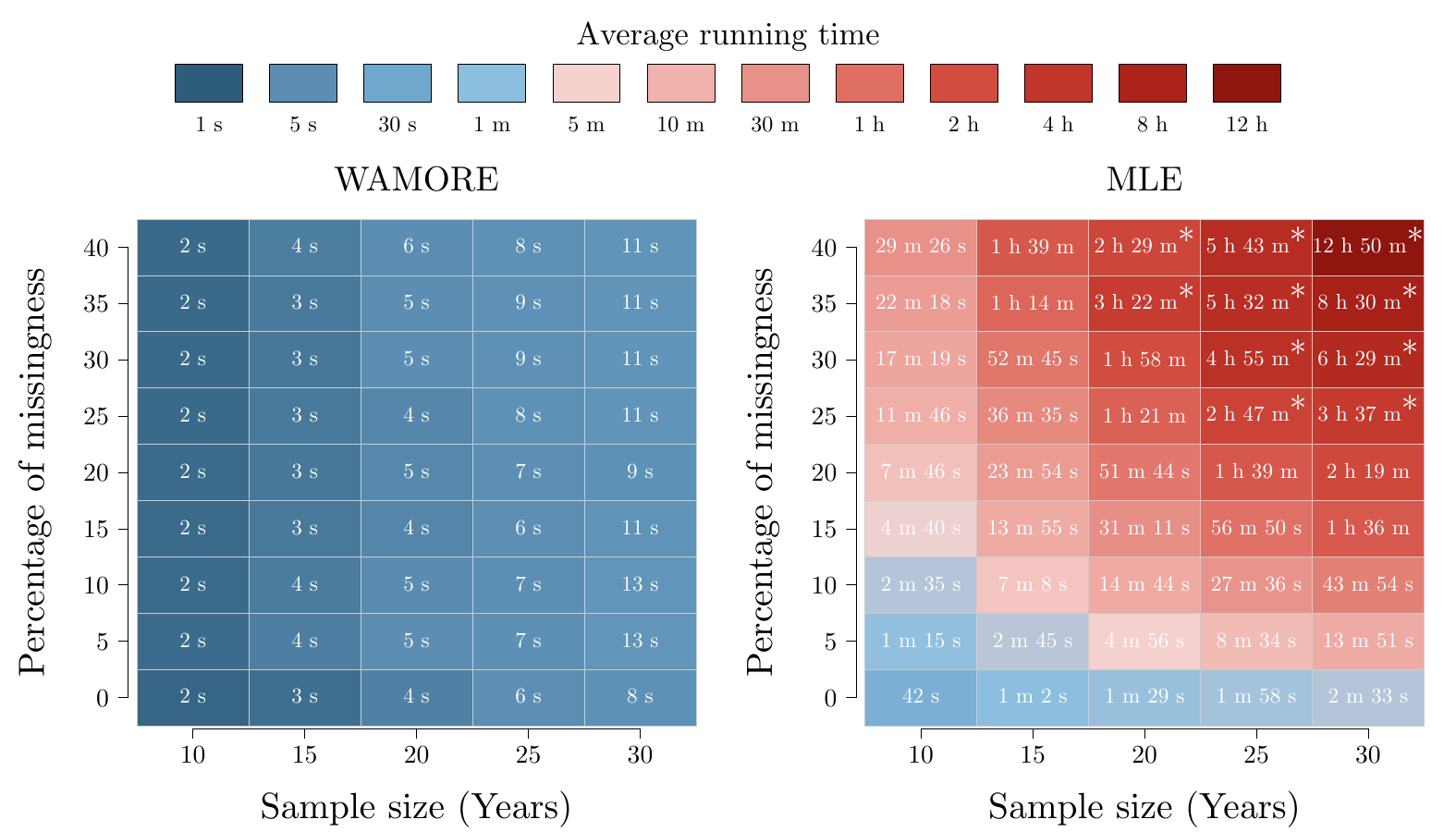}
    \caption{Average running time of the WAMORE (left panel) vs the MLE (right panel) for the setting with the WN + MAT + RW model. Settings annotated with a ``$\raisebox{-0.9ex}{\scalebox{1.3}{*}}$'' were evaluated only on $5$ Monte Carlo realizations due to computational constraints.}
    \label{fig:wn_matern_rw_running_time}
\end{figure}
\begin{figure}[]
    \centering
    \includegraphics[width=1\linewidth]{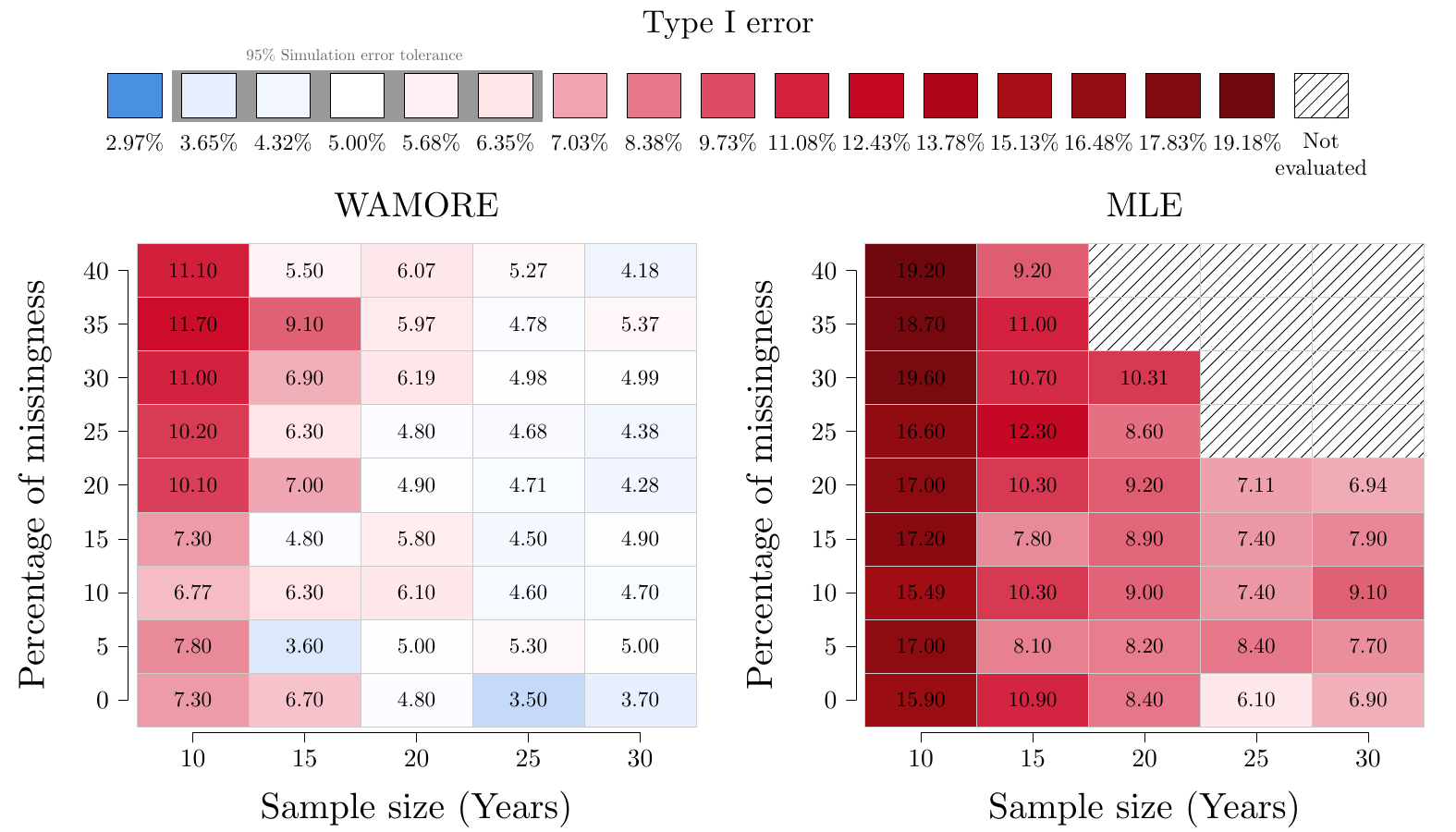}
    \caption{Empirical type I error (in \%) of the WAMORE (left panel) vs the MLE (right panel) for the setting with the WN + MAT + RW model.}
    \label{fig:wn_matern_rw_type_1_error}
\end{figure}

We additionally assess the model selection criterion proposed in Section~\ref{sec:model_selection} through a dedicated simulation study (see Appendix~\ref{app:model_selection_simulation} for details). Considering a WN + FL baseline, we evaluate the ability to detect an additional random walk component which is a common challenge in GNSS analysis \citep{kaczmarek2018identification, he2019investigation, he2021analysis}. Across $1{,}000$ simulations per setting and varying the random walk variance, both the proposed criterion and the AIC in \texttt{Hector} exhibit similar selection performance for time series of $20$ and $40$ years. However, the proposed method is significantly faster, with a median runtime of about $4$ minutes compared to $29$ minutes for the AIC on $40$-year series.

\FloatBarrier
\section{Velocity of Crustal Deformations in North America}
\label{sec:case_study}

We now employ our proposed WAMORE methodology to study velocity of crustal deformations based on the data described in Section \ref{sec:data}. As noted before, while the structure of the deterministic (functional) component in \eqref{eq:func_model} is generally known, the uncertainty mainly comes from the choice of the stochastic model for the error process $\boldsymbol{\varepsilon}$. For this reason, for each signal we consider six stochastic models to select from, each consisting of white noise (WN) combined with either (i) a flicker noise (FN), (ii) a stationary power-law process (PL) or (iii) a Matérn process (MAT), as well as the same three models with an additional random walk (RW) component. These correspond to the most commonly considered stochastic models used to analyze GNSS position time series (see e.g., \citealp{kermarrec2014matern,he2019investigation,wang2019impact}). For the WAMORE, we modify the procedure slightly by adding some steps to Algorithm \ref{algo:gmwmx}: first we estimate all models with the fixed diagonal matrix $\bm{\Omega}$ and then use the model with the smallest GMWM loss to construct a low-bias estimator of $\V$ (the covariance matrix of the WV). Based on this we can re-estimate all stochastic models using the inverse of this low-bias estimator of $\V$ as the weighting matrix $\boldsymbol{\Omega}$ in the GMWM step, which also allows us to compute the selection criterion presented in Section~\ref{sec:model_selection}. Based on the stochastic model selected from the latter criterion, we then compute the covariance matrix $\boldsymbol{\Phi}_n$ of the estimated functional parameters to extract the standard error of the estimated trend parameter $\hat{b}$ in \eqref{eq:func_model}. Using this we can construct 95\% CIs based on the derived asymptotic distribution of the estimated parameters $\hat{\boldsymbol{\beta}}$. The trends estimated by this procedure for all stations are represented in Figure~\ref{fig:plot_network_north_america} with their corresponding uncertainties (represented in gradient color scale).

\begin{figure}[H]
    \centering
    \includegraphics[width=1\linewidth]{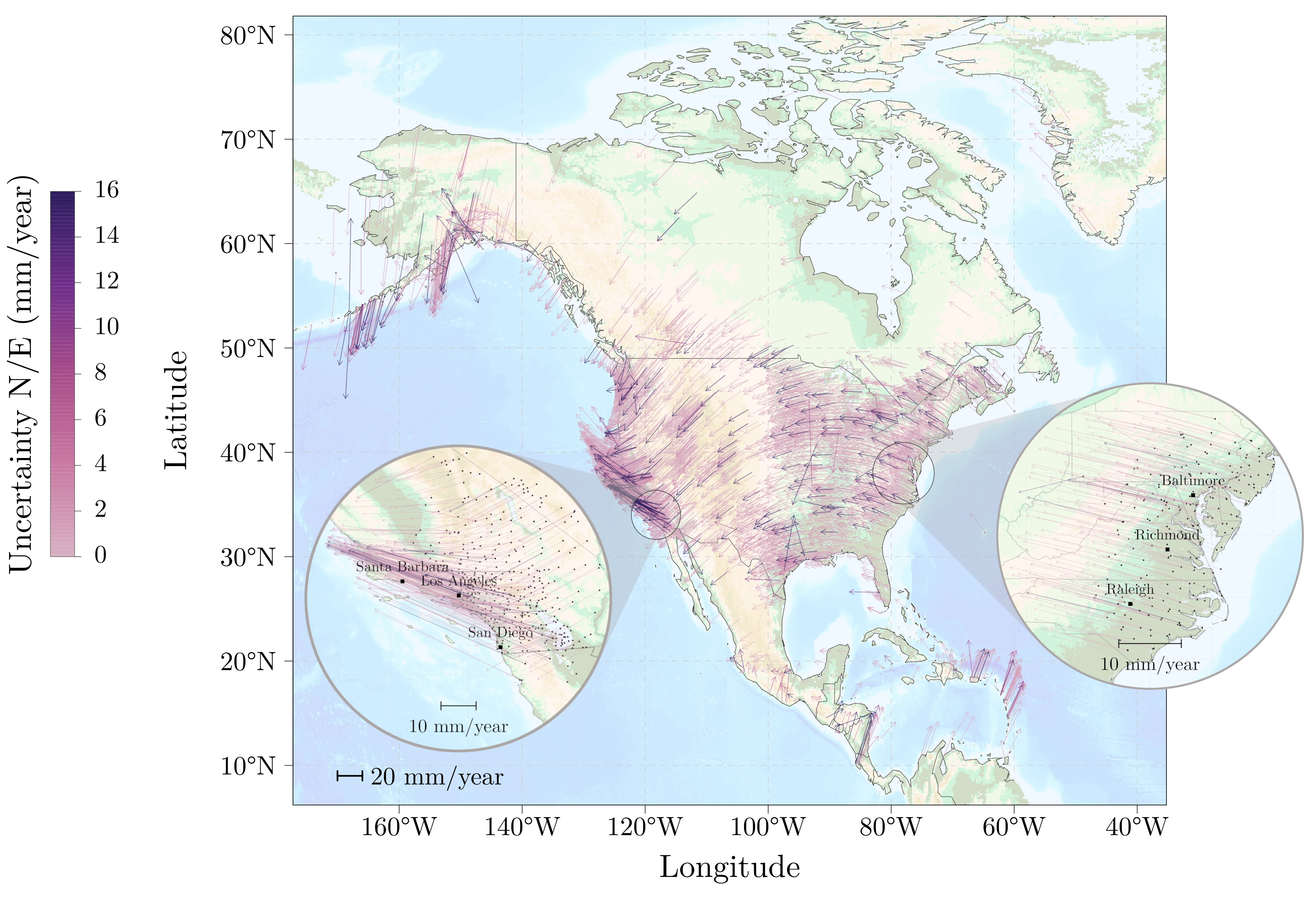}
    \caption{Estimated tectonic velocities and associated uncertainty using the WAMORE methodology. Arrow length encodes velocity magnitude, and color represents uncertainty, computed as the $\ell_2$-norm of the north and east component standard errors.}
    \label{fig:plot_network_north_america}
\end{figure}

It can be noticed that the eastern region generally reports similar trends with low uncertainty across all stations, whereas the western region has trends moving in different directions (with greater uncertainty) highlighting the known crustal behaviors for these areas. Indeed, at the continental scale, the inferred velocity field is consistent with the first-order kinematic structure of North American tectonics established by previous geodetic studies based on GNSS observations and plate motion models \citep{argus2010absolute, kreemer2014geodetic}. In particular, the results clearly distinguish the rigid behavior of the North American plate interior from the actively deforming plate boundary zones along the western margin. Stations located in eastern and central North America exhibit small-magnitude horizontal velocities with a high degree of directional coherence and uniformly low estimated uncertainty. This behavior is characteristic of the stable plate interior, where long-term strain rates are known to be very low and GNSS-derived residual velocities are typically on the order of a few tenths of a millimeter per year once reference-frame effects are accounted for \citep{mccaffrey2013active,demets2016high,kreemer2018robust}. The low uncertainty reported by the WAMORE indicates that the available observations provide relatively precise estimates of the trend parameters in these regions. This finding is consistent with the long observation records available at many stations in the stable interior of the North American plate and with previous studies showing that GNSS velocities can often be estimated with high precision when temporally correlated noise is appropriately modeled \citep{martin2017real}.

In contrast, the western portion of the North American region displays substantially larger velocities and pronounced spatial variability in both magnitude and direction. This pattern reflects distributed deformation along the Pacific-North American plate boundary, including transform faulting, block rotations and strain partitioning across multiple fault systems \citep{yang2013tectonic,rollins2018interseismic}. Similar velocity gradients and directional changes have been widely documented in regional and continental GNSS velocity fields \citep{mccaffrey2013active,demets2016high, kreemer2018robust}. Importantly, the WAMORE also reports systematically higher uncertainty for stations located within or adjacent to these tectonically active regions. From a geophysical perspective, elevated uncertainty in deforming regions is expected. GNSS time series in such environments are often affected by colored noise, transient deformation events, post-seismic relaxation and other non-stationary processes that complicate the estimation of long-term secular trends \citep{hammond2016gps,he2019investigation, duchnowski2024robust}.

Overall, the velocity field inferred using the WAMORE aligns closely with established local models of North American tectonics. Such a behavior is essential for downstream geophysical applications, including strain-rate estimation, block modeling and seismic hazard analysis, where reliable point estimates and realistic uncertainty quantification play a central role \citep{kreemer2014geodetic,verard2019plate}. The complete procedure, including model selection for the $4{,}793$ stations (corresponding to $14{,}379$ signals and $90{,}122{,}118$ data points), requires $170$ hours. In contrast, if the model were assumed to be known, the procedure would take only $20$ hours. To support the validity of the WAMORE analysis, we take a random sub-sample of 300 stations from the complete network that geographically covers the entire region, and then run the \texttt{Hector} software on these to compute the MLE. More specifically, we take a sub-sample since otherwise computing the MLE for the complete network is computationally prohibitive, thereby justifying the need for a more scalable method such as WAMORE. With this in mind, when comparing WAMORE to the MLE on these 300 stations, the results in terms of significance of the trend parameter $b$ are the same for \(85.11\%\) of the \(900\) signals analyzed. As shown in Figure~\ref{fig_agreement_test} in Appendix~\ref{app:case_study_additional_graphs}, agreement is the highest when both methods select the same stochastic model, exceeding \(92\%\) in nearly all such cases. When model selection coincides, CI lengths for the trend parameter $b$ are also similar, with median WAMORE-to-MLE ratios between \(1.05\) and \(1.98\); larger discrepancies arise primarily when one method selects a model including a random walk while the other does not. Since the true noise structure is unknown, we further investigate these differences through an emulation study which can be found in Appendix~\ref{app:emulation_study}. This analysis is designed to emulate realistic GNSS time series characteristics using parameter values estimated from an actual station analyzed in this section. We are therefore able to assess empirical coverage of CIs for the trend parameter under both direct model estimation and model selection. These results support the agreement between the WAMORE and the MLE in the real data analysis, especially for model selection criteria, while highlighting better coverage of the WAMORE for non-stationary cases when a random walk is included.

\subsection{Open Science Implications}

Having established the statistical validity of the WAMORE, we now consider its primary objective: enabling a broader community of researchers to analyze large-scale GNSS datasets using standard computational resources and within practical time frames. Expanding access to such analyses is essential for Open Science, as it allows a wider range of scientists to independently investigate crustal deformation processes and to contribute to the continuous monitoring of tectonic activity at regional and global scales. In turn, this facilitates the timely production of information that is critical for scientific synthesis and for decision-makers concerned with geophysical hazards.

In our case study, likelihood-based inference proves computationally prohibitive when access to computational resources is limited. Indeed, using the \texttt{Hector} software, estimation and model selection across six candidate stochastic models for $900$ signals required over $49$ days of computing time. Reducing this burden by limiting the model set would increase the risk of misspecification and compromise automation. In contrast, the WAMORE completed the same analysis in approximately $12.7$ hours, achieving nearly a two-order-of-magnitude speedup while delivering estimates and inferential conclusions largely consistent with those obtained via the MLE. These gains are conservative relative to simulation results, where the WAMORE can be up to $1{,}000$ times faster for longer GNSS records and higher proportions of missing data. To further demonstrate its scalability, we apply the WAMORE to a global analysis that is impractical (or infeasible/impossible) with existing methods. From over $20{,}000$ GNSS stations processed by the NGL, we analyze $6{,}313$ stations ($18{,}939$ signals across three components). Despite heterogeneous record lengths and missing data, the full analysis was completed in under $254$ hours using the WAMORE. Such large-scale processing is essential for maintaining global reference frames and studying deformation processes, yet would require months of computation with traditional likelihood-based approaches.

\section{Conclusion}
\label{sec:conclusion}

In this work, we proposed and applied WAMORE to the complete GNSS network covering North America and showed that it recovers a velocity field fully consistent with the first-order tectonic structure of this region, clearly distinguishing the stable plate interior from the actively deforming western plate boundary. The spatial patterns in both estimated velocities and their uncertainties reflect well-known geophysical regimes, with coherent, low-uncertainty signals in the continental interior and larger, more heterogeneous motions accompanied by increased uncertainty in tectonically complex regions. At the same time, WAMORE delivers inferential conclusions largely consistent with likelihood-based methods while reducing computational costs by orders of magnitude. These results demonstrate that WAMORE enables statistically sound, scalable and practically accessible analysis of large geodetic datasets, making it a valuable tool for routine monitoring of crustal deformation, the maintenance of reference frames and the timely assessment of geophysical hazards.

\begingroup
    {
    \small
     \setstretch{.875} 
\bibliography{biblio.bib}

@article{cucci2023generalized,
  title={{The Generalized Method of Wavelet Moments with eXogenous inputs: A Fast Approach for the Analysis of GNSS Position Time Series}},
  author={Cucci, Davide A and Voirol, Lionel and Kermarrec, Ga{\"e}l and Montillet, Jean-Philippe and Guerrier, St{\'e}phane},
  journal={Journal of Geodesy},
  volume={97},
  number={2},
  pages={14},
  year={2023},
  publisher={Springer}
}

@article{guerrier2022robust,
  title={{Robust two-step wavelet-based Inference for Time Series Models}},
  author={Guerrier, St{\'e}phane and Molinari, Roberto and Victoria-Feser, Maria-Pia and Xu, Haotian},
  journal={Journal of the American Statistical Association},
  volume={117},
  number={540},
  pages={1996--2013},
  year={2022},
  publisher={Taylor \& Francis}
}

@article{serroukh2000statistical,
  title={{Statistical Properties and Uses of the Wavelet Variance Estimator for the Scale Analysis of Time Series}},
  author={Serroukh, Abdeslam and Walden, Andrew T and Percival, Donald B},
  journal={Journal of the American Statistical Association},
  volume={95},
  number={449},
  pages={184--196},
  year={2000},
  publisher={Taylor \& Francis}
}

@article{jia2015correlations,
  title={{How do Correlations of Crude Oil Prices co-move? A Grey correlation-based Wavelet Perspective}},
  author={Jia, X. and An, H. and Fang, W. and Sun, X. and Huang, X.},
  journal={Energy Economics},
  volume={49},
  pages={588--598},
  year={2015},
  publisher={Elsevier}
}

@article{gallegati2012wavelet,
  title={{A wavelet-based Approach to Test for Financial Market Contagion}},
  author={Gallegati, M.},
  journal={Computational Statistics \& Data Analysis},
  volume={56},
  number={11},
  pages={3491--3497},
  year={2012},
  publisher={Elsevier}
}

@article{xie2013wavelet,
  title={{Wavelet-based Sparse Functional Linear Model with Applications to EEGs Seizure Detection and Epilepsy Diagnosis}},
  author={Xie, S. and Krishnan, S.},
  journal={Medical \& Biological Engineering \& Computing},
  volume={51},
  number={1-2},
  pages={49--60},
  year={2013},
  publisher={Springer}
}

@book{foufoula2014wavelets,
  title={{Wavelets in Geophysics}},
  author={Foufoula-Georgiou, E. and Kumar, P.},
  volume={4},
  year={2014},
  publisher={Academic Press}
}

@article{abry2018wavelet,
  title={{Wavelet Estimation for Operator Fractional Brownian Motion}},
  author={Abry, P. and Didier, G. and others},
  journal={Bernoulli},
  volume={24},
  number={2},
  pages={895--928},
  year={2018},
  publisher={Bernoulli Society for Mathematical Statistics and Probability}
}

@article{ziaja2016fault,
  title={{Fault Detection in Rolling Element Bearings Using wavelet-based Variance Analysis and Novelty Detection}},
  author={Ziaja, A. and Antoniadou, I. and Barszcz, T. and Staszewski, W. J. and Worden, K.},
  journal={Journal of Vibration and Control},
  volume={22},
  number={2},
  pages={396--411},
  year={2016},
  publisher={SAGE Publications Sage UK: London, England}
}

@article{fan2010unit,
  title={{Unit Root Tests with Wavelets}},
  author={Fan, Yanqin and Gen{\c{c}}ay, Ramazan},
  journal={Econometric Theory},
  volume={26},
  number={5},
  pages={1305--1331},
  year={2010},
  publisher={Cambridge University Press}
}

@article{thon2014multiscale,
  title={{A Multiscale wavelet-based Test for Isotropy of Random Fields on a Regular Lattice}},
  author={Thon, Kevin and Geilhufe, Marc and Percival, Donald B},
  journal={IEEE Transactions on Image Processing},
  volume={24},
  number={2},
  pages={694--708},
  year={2014},
  publisher={IEEE}
}

@article{gencay2015multi,
  title={{Multi-scale Tests for Serial Correlation}},
  author={Gencay, Ramazan and Signori, Daniele},
  journal={Journal of Econometrics},
  volume={184},
  number={1},
  pages={62--80},
  year={2015},
  publisher={Elsevier}
}

@article{zawacki2025history,
  title={{A History of UNAVCO: Four Decades of Advancing Geodesy}},
  author={Zawacki, Emily E and Charlevoix, Donna J and Meertens, Charles M and Freymueller, Jeffrey T and van Dam, Tonie},
  journal={Perspectives of Earth and Space Scientists},
  volume={6},
  number={1},
  pages={e2025CN000276},
  year={2025},
  publisher={Wiley Online Library}
}

@article{williams2003effect,
  title={{The Effect of Coloured Noise on the Uncertainties of Rates Estimated from Geodetic Time Series}},
  author={Williams, SDP},
  journal={Journal of Geodesy},
  volume={76},
  number={9},
  pages={483--494},
  year={2003},
  publisher={Springer}
}

@article{sun2023relationship,
  title={{The Relationship of Time Span and Missing Data on the Noise Model Estimation of GNSS Time Series}},
  author={Sun, Xiwen and Lu, Tieding and Hu, Shunqiang and Huang, Jiahui and He, Xiaoxing and Montillet, Jean-Philippe and Ma, Xiaping and Huang, Zhengkai},
  journal={Remote Sensing},
  volume={15},
  number={14},
  pages={3572},
  year={2023},
  publisher={MDPI}
}

@article{serpelloni2022surface,
  title={{Surface Velocities and strain-rates in the Euro-Mediterranean Region from Massive GPS Data Processing}},
  author={Serpelloni, Enrico and Cavaliere, Adriano and Martelli, Leonardo and Pintori, Francesco and Anderlini, Letizia and Borghi, Alessandra and Randazzo, Daniele and Bruni, Sergio and Devoti, Roberto and Perfetti, Paolo and others},
  journal={Frontiers in Earth Science},
  volume={10},
  pages={907897},
  year={2022},
  publisher={Frontiers Media SA}
}

@article{lv2025investigating,
  title={{Investigating Surface Loading Effect on Seasonal Crustal Deformation Observed by GNSS in Hong Kong}},
  author={Lv, Hongli and He, Xiaoxing and Hu, Shunqiang},
  journal={Scientific Reports},
  volume={15},
  number={1},
  pages={2742},
  year={2025},
  publisher={Nature Publishing Group UK London}
}

@article{liu2022missing,
  title={{Missing Data Imputation in GNSS Monitoring Time Series Using Temporal and Spatial Hankel Matrix Factorization}},
  author={Liu, Hanlin and Li, Linchao},
  journal={Remote Sensing},
  volume={14},
  number={6},
  pages={1500},
  year={2022},
  publisher={MDPI}
}

@Book{claeskens2008model,
publisher={Cambridge University Press},
series={Cambridge Books},
author={Claeskens,Gerda and Hjort,Nils Lid},
title={Model Selection and Model Averaging},
year={2008},
month={August},
url={https://ideas.repec.org/b/cup/cbooks/9780521852258.html},
}

@book{mcquarrie1998regression,
  title={{Regression and Time Series Model Selection}},
  author={McQuarrie, Allan DR and Tsai, Chih-Ling},
  year={1998},
  publisher={World Scientific}
}

@article{efron2004estimation,
  title={{The Estimation of Prediction Error: Covariance Penalties and cross-validation}},
  author={Efron, Bradley},
  journal={Journal of the American Statistical Association},
  volume={99},
  number={467},
  pages={619--632},
  year={2004},
  publisher={Taylor \& Francis}
}

@article{das2021unesco,
  title={{UNESCO Recommendation on Open Science: An Upcoming Milestone in Global Science}},
  author={Das, Anup Kumar},
  journal={Science Diplomacy},
  pages={39},
  year={2021}
}

@article{achar2022cloud,
  title={{Cloud Computing: Toward Sustainable Processes and Better Environmental Impact}},
  author={Achar, Sandesh},
  journal={Journal of Computer Hardware Engineering (JCHE)},
  volume={1},
  number={1},
  year={2022}
}

@article{andrews1999consistent,
  title={{Consistent Moment Selection Procedures for Generalized Method of Moments Estimation}},
  author={Andrews, Donald WK},
  journal={Econometrica},
  volume={67},
  number={3},
  pages={543--563},
  year={1999},
  publisher={Wiley Online Library}
}

@article{mallows2000some,
  title={{Some Comments on Cp}},
  author={Mallows, Colin L},
  journal={Technometrics},
  volume={42},
  number={1},
  pages={87--94},
  year={2000},
  publisher={Taylor \& Francis}
}

@article{guerrier2015automatic,
  title={{Automatic Identification and Calibration of Stochastic Parameters in Inertial Sensors}},
  author={Guerrier, St{\'e}phane and Molinari, Roberto and Skaloud, Jan},
  journal={Navigation: Journal of The Institute of Navigation},
  volume={62},
  number={4},
  pages={265--272},
  year={2015},
  publisher={Wiley Online Library}
}

@article{radi2019multisignal,
  title={{A Multisignal Wavelet variance-based Framework for Inertial Sensor Stochastic Error Modeling}},
  author={Radi, Ahmed and Bakalli, Gaetan and Guerrier, St{\'e}phane and El-Sheimy, Naser and Sesay, Abu B and Molinari, Roberto},
  journal={IEEE Transactions on Instrumentation and Measurement},
  volume={68},
  number={12},
  pages={4924--4936},
  year={2019},
  publisher={IEEE}
}

@article{hansen1982large,
  title={{Large Sample Properties of Generalized Method of Moments Estimators}},
  author={Hansen, Lars Peter},
  journal={Econometrica},
  pages={1029--1054},
  year={1982},
  publisher={JSTOR}
}

@inproceedings{snow2023cryocloud,
  title={{CryoCloud JupyterHub for NASA Cryosphere Communities: Open science in the Cloud as a Process, not a Product}},
  author={Snow, Tasha and Millstein, Joanna D and Sauthoff, Wilson and Scheick, Jessica and Leong, Wei Ji and Colliander, James and Munroe, James and P{\'e}rez, Fernando and Felikson, Denis and Sutterley, Tyler C and others},
  booktitle={AGU Fall Meeting Abstracts},
  volume={2023},
  pages={U24B--03},
  year={2023}
}

@article{fuentes2002spectral,
  title={{Spectral Methods for Nonstationary Spatial Processes}},
  author={Fuentes, Montserrat},
  journal={Biometrika},
  volume={89},
  number={1},
  pages={197--210},
  year={2002},
  publisher={Oxford University Press}
}

@article{gerber2018predicting,
  title={{Predicting Missing Values in Spatio-Temporal Remote Sensing Data}},
  author={Gerber, Florian and de Jong, Rogier and Schaepman, Michael E and Schaepman-Strub, Gabriela and Furrer, Reinhard},
  journal={IEEE Transactions on Geoscience and Remote Sensing},
  volume={56},
  number={5},
  pages={2841--2853},
  year={2018},
  publisher={IEEE}
}

@article{guinness2019spectral,
  title={{Spectral Density Estimation for Random Fields via Periodic Embeddings}},
  author={Guinness, Joseph},
  journal={Biometrika},
  volume={106},
  number={2},
  pages={267--286},
  year={2019},
  publisher={Oxford University Press}
}

@article{stein2013stochastic,
  title={{Stochastic Approximation of Score Functions for Gaussian Processes}},
  author={Stein, Michael L and Chen, Jie and Anitescu, Mihai},
  year={2013},
    journal = {The Annals of Applied Statistics},
    volume  = {7},
  number  = {2},
    pages   = {1162--1191}
}

@inproceedings{lazaro2011variational,
  title={{Variational Heteroscedastic Gaussian Process Regression}},
  author={L{\'a}zaro-Gredilla, Miguel and Titsias, Michalis K},
  booktitle={ICML},
  pages={841--848},
  year={2011}
}

@article{gibbs2000variational,
  title={{Variational Gaussian Process Classifiers}},
  author={Gibbs, Mark N and MacKay, David JC},
  journal={IEEE Transactions on Neural Networks},
  volume={11},
  number={6},
  pages={1458--1464},
  year={2000},
  publisher={IEEE}
}

@article{sang2011covariance,
  title={{Covariance Approximation for Large Multivariate Spatial Data Sets with an Application to Multiple Climate Model Errors}},
  author={Sang, Huiyan and Jun, Mikyoung and Huang, Jianhua Z},
  journal={The Annals of Applied Statistics},
  pages={2519--2548},
  year={2011},
  publisher={JSTOR}
}

@article{chib2001markov,
  title={{Markov Chain Monte Carlo Methods: Computation and Inference}},
  author={Chib, Siddhartha},
  journal={{Handbook of Econometrics}},
  volume={5},
  pages={3569--3649},
  year={2001},
  publisher={Elsevier}
}

@article{guerrier2013wavelet,
  title={{Wavelet-Variance-based Estimation for Composite Stochastic Processes}},
  author={Guerrier, St{\'e}phane and Skaloud, Jan and Stebler, Yannick and Victoria-Feser, Maria-Pia},
  journal={Journal of the American Statistical Association},
  volume={108},
  number={503},
  pages={1021--1030},
  year={2013},
  publisher={Taylor \& Francis}
}

@article{Bos2008,
author = {Bos, M. S. and Fernandes, R. M. S. and Williams, S. D. P. and Bastos, L.},
 year = {2008},
 title = {{Fast Error Analysis of Continuous GPS Observations}},
 pages = {157--166},
 volume = {82},
 number = {3},
 issn = {1432-1394},
 journal = {Journal of Geodesy},
 doi = {10.1007/s00190-007-0165-x}
}

@article{bos2013fast,
  title={{Fast Error Analysis of Continuous GNSS Observations with Missing Data}},
  author={Bos, MS and Fernandes, RMS and Williams, SDP and Bastos, L},
  journal={Journal of Geodesy},
  volume={87},
  number={4},
  pages={351--360},
  year={2013},
  publisher={Springer}
}

@article{heaton2019case,
  title={{A Case Study Competition Among Methods for Analyzing Large Spatial Data}},
  author={Heaton, Matthew J and Datta, Abhirup and Finley, Andrew O and Furrer, Reinhard and Guinness, Joseph and Guhaniyogi, Rajarshi and Gerber, Florian and Gramacy, Robert B and Hammerling, Dorit and Katzfuss, Matthias and others},
  journal={Journal of Agricultural, Biological and Environmental Statistics},
  volume={24},
  pages={398--425},
  year={2019},
  publisher={Springer}
}

@article{tehranchi2021fast,
  title={{Fast Approximation Algorithm to Noise Components Estimation in long-term GPS Coordinate Time Series}},
  author={Tehranchi, R and Moghtased-Azar, K and Safari, A},
  journal={Journal of Geodesy},
  volume={95},
  number={2},
  pages={1--16},
  year={2021},
  publisher={Springer}
}

@article{he2021analysis,
  title={{Analysis and Discussion on the Optimal Noise Model of Global GNSS long-term Coordinate Series Considering Hydrological Loading}},
  author={He, Yuefan and Nie, Guigen and Wu, Shuguang and Li, Haiyang},
  journal={Remote Sensing},
  volume={13},
  number={3},
  pages={431},
  year={2021},
  publisher={MDPI}
}

@article{kaczmarek2018identification,
  title={{Identification of the Noise Model in the Time Series of GNSS Stations Coordinates using Wavelet Analysis}},
  author={Kaczmarek, Adrian and Kontny, Bernard},
  journal={Remote Sensing},
  volume={10},
  number={10},
  pages={1611},
  year={2018},
  publisher={MDPI}
}

@article{shen1994postseismic,
  title={{Postseismic Deformation Following the Landers Earthquake, California, 28 June 1992}},
  author={Shen, Zheng-Kang and Jackson, David D and Feng, Yanjie and Cline, Michael and Kim, Mercedes and Fang, Peng and Bock, Yehuda},
  journal={Bulletin of the Seismological Society of America},
  volume={84},
  number={3},
  pages={780--791},
  year={1994},
  publisher={The Seismological Society of America}
}

@article{ren2021analysis,
  title={{Analysis of Seismic Deformation from Global three-decade GNSS Displacements: Implications for a three-dimensional Earth GNSS Velocity Field}},
  author={Ren, Yingying and Lian, Lizhen and Wang, Jiexian},
  journal={Remote Sensing},
  volume={13},
  number={17},
  pages={3369},
  year={2021},
  publisher={MDPI}
}

@article{he2019investigation,
  title={{Investigation of the Noise Properties at Low Frequencies in Long GNSS Time Series}},
  author={He, X and Bos, MS and Montillet, JP and Fernandes, RMS},
  journal={Journal of Geodesy},
  volume={93},
  number={9},
  pages={1271--1282},
  year={2019},
  publisher={Springer}
}

@article{kermarrec2024modeling,
  title={{Modeling Trends and Periodic Components in Geodetic Time Series: A Unified Approach}},
  author={Kermarrec, Ga{\"e}l and Maddanu, Federico and Klos, Anna and Proietti, Tommaso and Bogusz, Janusz},
  journal={Journal of Geodesy},
  volume={98},
  number={3},
  pages={17},
  year={2024},
  publisher={Springer}
}

@article{freymueller1999kinematics,
  title={{Kinematics of the Pacific-North America Plate Boundary Zone, Northern California}},
  author={Freymueller, Jeffrey T and Murray, Mark H and Segall, Paul and Castillo, David},
  journal={Journal of Geophysical Research: Solid Earth},
  volume={104},
  number={B4},
  pages={7419--7441},
  year={1999},
  publisher={Wiley Online Library}
}

@article{mudelsee2019trend,
  title={{Trend Analysis of Climate Time Series: A Review of Methods}},
  author={Mudelsee, Manfred},
  journal={Earth-Science Reviews},
  volume={190},
  pages={310--322},
  year={2019},
  publisher={Elsevier}
}

@article{newey1994large,
  title={{Large Sample Estimation and Hypothesis Testing}},
  author={Newey, Whitney K and McFadden, Daniel},
  journal={Handbook of Econometrics},
  volume={4},
  pages={2111--2245},
  year={1994},
  publisher={Elsevier}
}

@article{xu2019multivariate,
  title={{Multivariate Signal Modeling with Applications to Inertial Sensor Calibration}},
  author={Xu, Haotian and Guerrier, St{\'e}phane and Molinari, Roberto Carlo and Karemera, Mucyo},
  journal={IEEE Transactions on Signal Processing},
  volume={67},
  number={19},
  pages={5143--5152},
  year={2019},
  publisher={IEEE}
}

@article{bao2021filling,
  title={{Filling Missing Values of Multi-station GNSS Coordinate Time Series Based on Matrix Completion}},
  author={Bao, Zhi and Chang, Guobin and Zhang, Laihong and Chen, Guoliang and Zhang, Siyu},
  journal={Measurement},
  volume={183},
  pages={},
  year={2021},
  publisher={Elsevier}
}

@article{bock2000instantaneous,
  title={{Instantaneous Geodetic Positioning at Medium Distances with the Global Positioning System}},
  author={Bock, Yehuda and Nikolaidis, Rosanne M and de Jonge, Paul J and Bevis, Michael},
  journal={Journal of Geophysical Research: Solid Earth},
  volume={105},
  number={B12},
  pages={28223--28253},
  year={2000},
  publisher={Wiley Online Library}
}

@article{calais1999continuous,
  title={{Continuous GPS Measurements Across the Western Alps, 1996--1998}},
  author={Calais, Eric},
  journal={Geophysical Journal International},
  volume={138},
  number={1},
  pages={221--230},
  year={1999},
  publisher={Blackwell Publishing Ltd Oxford, UK}
}

@article{zhang1997southern,
  title={{Southern California Permanent GPS Geodetic Array: Error Analysis of Daily Position Estimates and Site Velocities}},
  author={Zhang, Jie and Bock, Yehuda and Johnson, Hadley and Fang, Peng and Williams, Simon and Genrich, Joachim and Wdowinski, Shimon and Behr, Jeff},
  journal={Journal of Geophysical Research: Solid Earth},
  volume={102},
  number={B8},
  pages={18035--18055},
  year={1997},
  publisher={Wiley Online Library}
}

@article{hohensinn2018stand,
  title={{Stand-alone GNSS Sensors as Velocity Seismometers: Real-time Monitoring and Earthquake Detection}},
  author={Hohensinn, Roland and Geiger, Alain},
  journal={Sensors},
  volume={18},
  number={11},
  pages={3712},
  year={2018},
  publisher={MDPI}
}

@article{gokdacs2021velocity,
  title={{Velocity Estimation Performance of GNSS Online Services (APPS and AUSPOS)}},
  author={G{\"o}kda{\c{s}}, {\"O}mer and {\"O}zl{\"u}demir, M Tevfik},
  journal={Survey Review},
  volume={53},
  number={378},
  pages={280--288},
  year={2021},
  publisher={Taylor \& Francis}
}

@article{aydin2021effect,
  title={{Effect of Stochastic Model Errors on Significance Test for Velocities in Analysis of GPS Position Time Series}},
  author={Aydin, C{\"u}neyt and Duman, H{\"u}seyin and G{\"u}nes, {\"O}zge and Ugur Sanli, Dogan},
  journal={Journal of Surveying Engineering},
  volume={147},
  number={1},
  pages={},
  year={2021},
  publisher={American Society of Civil Engineers}
}

@article{wang202295,
  title={{The 95\% Confidence Interval for GNSS-derived Site Velocities}},
  author={Wang, Guoquan},
  journal={Journal of Surveying Engineering},
  volume={148},
  number={1},
  pages={},
  year={2022},
  publisher={American Society of Civil Engineers}
}

@article{bevis2014trajectory,
  title={{Trajectory Models and Reference Frames for Crustal Motion Geodesy}},
  author={Bevis, Michael and Brown, Abel},
  journal={Journal of Geodesy},
  volume={88},
  pages={283--311},
  year={2014},
  publisher={Springer}
}

@article{kermarrec2014matern,
  title={{On the M{\'a}tern Covariance Family: A Proposal for Modeling Temporal Correlations Based on Turbulence Theory}},
  author={Kermarrec, Ga{\"e}l and Sch{\"o}n, Steffen},
  journal={Journal of Geodesy},
  volume={88},
  pages={1061--1079},
  year={2014},
  publisher={Springer}
}

@article{wang2019impact,
  title={{Impact of Estimating Position Offsets on the Uncertainties of GNSS Site Velocity Estimates}},
  author={Wang, Lei and Herring, Thomas},
  journal={Journal of Geophysical Research: Solid Earth},
  volume={124},
  number={12},
  pages={13452--13467},
  year={2019},
  publisher={Wiley Online Library}
}

@article{amiri2007assessment,
  title={{Assessment of Noise in GPS Coordinate Time Series: Methodology and Results}},
  author={Amiri-Simkooei, Ali Reza and Tiberius, Christian CJM and Teunissen, Peter JG},
  journal={Journal of Geophysical Research: Solid Earth},
  volume={112},
  number={B7},
  year={2007},
  publisher={Wiley Online Library}
}

@article{kleinherenbrink2018comparison,
  title={{A Comparison of Methods to Estimate Vertical Land Motion Trends from GNSS and Altimetry at Tide Gauge Stations}},
  author={Kleinherenbrink, Marcel and Riva, Riccardo and Frederikse, Thomas},
  journal={Ocean Science},
  volume={14},
  number={2},
  pages={187--204},
  year={2018},
  publisher={Copernicus GmbH}
}

@article{akaike1974new,
  title={{A New Look at the Statistical Model Identification}},
  author={Akaike, Hirotugu},
  journal={IEEE Transactions on Automatic Control},
  volume={19},
  number={6},
  pages={716--723},
  year={1974},
  publisher={Ieee}
}

@article{williams2008cats,
  title={{CATS: GPS Coordinate Time Series Analysis Software}},
  author={Williams, Simon DP},
  journal={GPS Solutions},
  volume={12},
  number={2},
  pages={147--153},
  year={2008},
  publisher={Springer}
}

@article{voirol2023accounting,
  title={{Accounting for Vibration Noise in Stochastic Measurement Errors of Inertial Sensors}},
  author={Karemera, Mucyo and Voirol, Lionel and Cucci, Davide A and Chu, Wenfei and Molinari, Roberto and Guerrier, St{\'e}phane},
  journal={IEEE Transactions on Signal Processing},
  volume={72},
  pages={2117--2129},
  year={2024},
  publisher={IEEE}
}

@article{guerrier2014estimation,
  title={{Estimation of Time Series Models via Robust Wavelet Variance}},
  author={Guerrier, St{\'e}phane and Molinari, Roberto and Victoria-Feser, Maria-Pia},
  journal={Austrian Journal of Statistics},
  volume={43},
  number={4},
  pages={267--277},
  year={2014}
}

@article{wang2012noise,
  title={{Noise Analysis of Continuous GPS Coordinate Time Series for CMONOC}},
  author={Wang, Wei and Zhao, Bin and Wang, Qi and Yang, Shaomin},
  journal={Advances in Space Research},
  volume={49},
  number={5},
  pages={943--956},
  year={2012},
  publisher={Elsevier}
}

@article{he2017review,
  title={{Review of Current GPS Methodologies for Producing Accurate Time Series and their Error Sources}},
  author={He, Xiaoxing and Montillet, Jean-Philippe and Fernandes, Rui and Bos, Machiel and Yu, Kegen and Hua, Xianghong and Jiang, Weiping},
  journal={Journal of Geodynamics},
  volume={106},
  pages={12--29},
  year={2017},
  publisher={Elsevier}
}

@article{mao1999noise,
  title={{Noise in GPS Coordinate Time Series}},
  author={Mao, Ailin and Harrison, Christopher GA and Dixon, Timothy H},
  journal={Journal of Geophysical Research: Solid Earth},
  volume={104},
  number={B2},
  pages={2797--2816},
  year={1999},
  publisher={Wiley Online Library}
}

@article{langbein1997correlated,
  title={{Correlated Errors in Geodetic Time Series: Implications for time-dependent Deformation}},
  author={Langbein, John and Johnson, Hadley},
  journal={Journal of Geophysical Research: Solid Earth},
  volume={102},
  number={B1},
  pages={591--603},
  year={1997},
  publisher={Wiley Online Library}
}

@article{blewitt2024improved,
  title={{An Improved Equation of Latitude and a Global System of Graticule Distance Coordinates}},
  author={Blewitt, Geoffrey},
  journal={Journal of Geodesy},
  volume={98},
  number={1},
  pages={6},
  year={2024},
  publisher={Springer}
}

@article{blewitt2018harnessing,
  title={{Harnessing the GPS Data Explosion for Interdisciplinary Science}},
  author={Blewitt, Geoffrey and Hammond, William and Kreemer, Corn},
  journal={Eos},
  volume={99},
  number={2},
  pages={},
  year={2018}
}

@article{percival1995estimation,
  title={{On Estimation of the Wavelet Variance}},
  author={Percival, Donald P},
  journal={Biometrika},
  volume={82},
  number={3},
  pages={619--631},
  year={1995},
  publisher={Oxford University Press}
}

@book{percival2000wavelet,
  title={{Wavelet Methods for Time Series Analysis}},
  author={Percival, Donald B and Walden, Andrew T},
  volume={4},
  year={2000},
  publisher={Cambridge University Press}
}

@article{montillet2024big,
  title={{How Big Data Can Help to Monitor the Environment and to Mitigate Risks due to Climate Change: A Review}},
  author={Montillet, J-P and Kermarrec, Ga{\"e}l and Forootan, Ehsan and Haberreiter, Margit and He, Xiaoxing and Finsterle, Wolfgang and Fernandes, Rui and Shum, CK},
  journal={IEEE Geoscience and Remote Sensing Magazine},
  year={2024},
  publisher={IEEE}
}

@article{proietti2022modelling,
  title={{Modelling Cycles in Climate Series: The Fractional Sinusoidal Waveform Process}},
  author={Proietti, Tommaso and Maddanu, Federico},
  journal={Journal of Econometrics},
  pages={105299},
  year={2022},
  publisher={Elsevier}
}

@article{maddanu2023trends,
  title={{Trends in Atmospheric Ethane}},
  author={Maddanu, Federico and Proietti, Tommaso},
  journal={Climatic Change},
  volume={176},
  number={5},
  pages={53},
  year={2023},
  publisher={Springer}
}

@article{tunini2024global,
  title={{Global Navigation Satellite System (GNSS) Time Series and Velocities About a Slowly Convergent Margin Processed on High-Performance Computing (HPC) Clusters: Products and Robustness Evaluation}},
  author={Tunini, Lavinia and Magrin, Andrea and Rossi, Giuliana and Zuliani, David},
  journal={Earth System Science Data},
  volume={16},
  number={2},
  pages={1083--1106},
  year={2024},
  publisher={Copernicus Publications G{\"o}ttingen, Germany}
}

@article{xu2017study,
  title={{A Study of the Allan Variance for constant-mean Nonstationary Processes}},
  author={Xu, Haotian and Guerrier, St{\'e}phane and Molinari, Roberto and Zhang, Yuming},
  journal={IEEE Signal Processing Letters},
  volume={24},
  number={8},
  pages={1257--1260},
  year={2017},
  publisher={IEEE}
}

@article{andrews2001consistent,
  title={{Consistent Model and Moment Selection Procedures for GMM Estimation with Application to Dynamic Panel Data Models}},
  author={Andrews, Donald WK and Lu, Biao},
  journal={Journal of Econometrics},
  volume={101},
  number={1},
  pages={123--164},
  year={2001},
  publisher={Elsevier}
}

@article{banna2016bernstein,
  title={{Bernstein-type Inequality for a Class of Dependent Random Matrices}},
  author={Banna, Marwa and Merlev{\`e}de, Florence and Youssef, Pierre},
  journal={Random Matrices: Theory and Applications},
  volume={5},
  number={02},
  pages={1650006},
  year={2016},
  publisher={World Scientific}
}

@article{bradley2005basic,
  title={{Basic Properties of Strong Mixing Conditions. A Survey and Some Open Questions}},
  author={Bradley, Richard C},
  journal={Probability Surveys},
  volume={2},
  pages={107--144},
  year={2005}
}

@article{pollard1991asymptotics,
  title={{Asymptotics for Least Absolute Deviation Regression Estimators}},
  author={Pollard, David},
  journal={Econometric Theory},
  volume={7},
  number={2},
  pages={186--199},
  year={1991},
  publisher={Cambridge University Press}
}

@book{beran2013long,
  title={{Long-memory Processes}},
  author={Beran, Jan and Feng, Yuanhua and Ghosh, Sucharita and Kulik, Rafal},
  year={2013},
  publisher={Springer}
}

@article{pipiras2000convergence,
  title={{Convergence of Weighted Sums of Random Variables with Long-range Dependence}},
  author={Pipiras, Vladas and Taqqu, Murad S},
  journal={Stochastic Processes and their Applications},
  volume={90},
  number={1},
  pages={157--174},
  year={2000},
  publisher={Elsevier}
}

@article{rosenthal1995convergence,
  title={{Convergence Rates for Markov Chains}},
  author={Rosenthal, Jeffrey S},
  journal={SIAM Review},
  volume={37},
  number={3},
  pages={387--405},
  year={1995},
  publisher={SIAM}
}

@article{xu2021online,
  title={{Online Network Change Point Detection with Missing Values and Temporal Dependence}},
  author={Xu, Haotian and Dubey, Paromita and Yu, Yi},
  journal={Journal of Time Series Analysis},
  year={2025},
  publisher={Wiley Online Library}
}

@book{fan2008nonlinear,
  title={{Nonlinear Time Series: Nonparametric and Parametric Methods}},
  author={Fan, Jianqing and Yao, Qiwei},
  year={2008},
  publisher={Springer Science \& Business Media}
}

@article{hannan1973asymptotic,
  title={{The Asymptotic Theory of Linear time-series Models}},
  author={Hannan, Edward J},
  journal={Journal of Applied Probability},
  volume={10},
  number={1},
  pages={130--145},
  year={1973},
  publisher={Cambridge University Press}
}

@article{pipiras2001classes,
  title={{Are Classes of Deterministic Integrands for Fractional Brownian Motion on an Interval Complete?}},
  author={Pipiras, Vladas and Taqqu, Murad S},
  year={2001},
    journal = {Bernoulli},
  volume   = {7},
  number   = {6},
  pages    = {873--897},
  year     = {2001}
}

@book{stein2003introduction,
  title={Introduction to Seismology, Earthquakes, and Earth Structure},
  author={Stein, Seth and Wysession, Michael},
  year={2003},
  publisher={Blackwell Publishing}
}

@book{turcotte2014geodynamics,
  title={Geodynamics},
  author={Turcotte, Donald L and Schubert, Gerald},
  year={2014},
  publisher={Cambridge University Press}
}

@book{scholz2002mechanics,
  title={The Mechanics of Earthquakes and Faulting},
  author={Scholz, Christopher H},
  year={2002},
  publisher={Cambridge University Press}
}

@book{fossen2016structural,
  title={Structural Geology},
  author={Fossen, Haakon},
  year={2016},
  publisher={Cambridge University Press}
}

@article{calais2006continental,
  title={{Continental Deformation in North America: The Role of Plate Boundary Forces, Intraplate Stresses, and Mantle Structure}},
  author={Calais, Eric and Stein, Seth and Newman, Andrew},
  journal={Journal of Geophysical Research: Solid Earth},
  volume={111},
  number={B6},
  year={2006},
  publisher={Wiley Online Library}
}

@article{argus2010glacial,
  title={{The Viscosity Structure of the Upper Mantle Inferred from GPS Measurements of Glacial Isostatic Adjustment}},
  author={Argus, Donald F and Peltier, W Richard and Drummond, Rachel and Moore, Angelyn W},
  journal={Journal of Geophysical Research: Solid Earth},
  volume={115},
  number={B9},
  year={2010},
  publisher={Wiley Online Library}
}

@article{calais2005strain,
  title={{Strain Accumulation in the New Madrid Seismic Zone from GPS Observations, 1997–2004}},
  author={Calais, Eric and Han, Jiaxing and DeMets, Charles and Nocquet, Jean-Mathieu},
  journal={Geophysical Research Letters},
  volume={32},
  number={23},
  year={2005},
  publisher={Wiley Online Library}
}

@article{lambeck2014sea,
  title={{Sea Level and Global Ice Volumes from the Last Glacial Maximum to the Holocene}},
  author={Lambeck, Kurt and Rouby, H{\'e}l{\`e}ne and Purcell, Anthony and Sun, Y and Sambridge, Malcolm},
  journal={Proceedings of the National Academy of Sciences},
  volume={111},
  number={43},
  pages={15296--15303},
  year={2014},
  publisher={National Acad Sciences}
}

@article{argus2010absolute,
  title={{The Angular Velocities of the Plates and the Velocity of Earth's Centre from Space Geodesy}},
  author={Argus, Donald F and Gordon, Richard G and Heflin, Michael B and Ma, Chopo and Eanes, Richard J and Willis, Pascal and Peltier, W Richard and Owen, Susan E},
  journal={Geophysical Journal International},
  volume={180},
  number={3},
  pages={913--960},
  year={2010},
  publisher={Blackwell Publishing Ltd Oxford, UK}
}

@article{kreemer2014geodetic,
  title={{A Geodetic Plate Motion and Global Strain Rate Model}},
  author={Kreemer, Corn{\'e} and Blewitt, Geoffrey and Klein, Elliot C},
  journal={Geochemistry, Geophysics, Geosystems},
  volume={15},
  number={10},
  pages={3849--3889},
  year={2014},
  publisher={Wiley Online Library}
}

@article{kreemer2018robust,
  title={{A Robust Estimation of the 3-D Intraplate Deformation of the North American Plate from GPS}},
  author={Kreemer, Corn{\'e} and Hammond, William C and Blewitt, Geoffrey},
  journal={Journal of Geophysical Research: Solid Earth},
  volume={123},
  number={5},
  pages={4388--4412},
  year={2018},
  publisher={Wiley Online Library}
}

@article{demets2016high,
  title={{High-resolution Reconstructions of Pacific--North America Plate Motion: 20 Ma to Present}},
  author={DeMets, C and Merkouriev, S},
  journal={Geophysical Journal International},
  volume={207},
  number={2},
  pages={741--773},
  year={2016},
  publisher={Oxford University Press}
}

@article{mccaffrey2013active,
  title={{Active Tectonics of Northwestern US Inferred from GPS-derived Surface Velocities}},
  author={McCaffrey, Robert and King, Robert W and Payne, Suzette J and Lancaster, Matthew},
  journal={Journal of Geophysical Research: Solid Earth},
  volume={118},
  number={2},
  pages={709--723},
  year={2013},
  publisher={Wiley Online Library}
}

@inproceedings{martin2017real,
  title={{Real-time Uncertainty Quantification Using Correlated Noise Models for GNSS Positioning}},
  author={Martin, AD and Soundy, Andrew WR and Panckhurst, Bradley J and Brown, CP and Schumayer, Daniel and Molteno, Tim CA and Parry, Matthew},
  booktitle={2017 IEEE SENSORS},
  pages={1--3},
  year={2017},
  organization={IEEE}
}

@article{yang2013tectonic,
  title={{The Tectonic Crustal Stress Field and Style of Faulting Along the Pacific North America Plate Boundary in Southern California}},
  author={Yang, Wenzheng and Hauksson, Egill},
  journal={Geophysical Journal International},
  volume={194},
  number={1},
  pages={100--117},
  year={2013},
  publisher={Oxford University Press}
}

@article{duchnowski2024robust,
  title={{Robust Procedures in Processing Measurements in Geodesy and Surveying: A Review}},
  author={Duchnowski, Robert and Wyszkowska, Patrycja},
  journal={Measurement Science and Technology},
  volume={35},
  number={5},
  pages={052002},
  year={2024},
  publisher={IOP Publishing}
}

@article{hammond2016gps,
  title={{GPS Imaging of Vertical Land Motion in California and Nevada: Implications for Sierra Nevada Uplift}},
  author={Hammond, William C and Blewitt, Geoffrey and Kreemer, Corn{\'e}},
  journal={Journal of Geophysical Research: Solid Earth},
  volume={121},
  number={10},
  pages={7681--7703},
  year={2016},
  publisher={Wiley Online Library}
}

@article{verard2019plate,
  title={{Plate Tectonic Modelling: Review and Perspectives}},
  author={Verard, Christian},
  journal={Geological Magazine},
  volume={156},
  number={2},
  pages={208--241},
  year={2019}
}

@article{rollins2018interseismic,
  title={{Interseismic Strain Accumulation on Faults Beneath Los Angeles, California}},
  author={Rollins, Chris and Avouac, Jean-Philippe and Landry, Walter and Argus, Donald F and Barbot, Sylvain},
  journal={Journal of Geophysical Research: Solid Earth},
  volume={123},
  number={8},
  pages={7126--7150},
  year={2018},
  publisher={Wiley Online Library}
}

@article{rebischung2023reference,
  title={{Reference Frame Committee Technical Report 2024}},
  author={Rebischung, P},
  journal={IGS Central Bureau},
  pages={237},
  year={2023}
}
    }
\endgroup

\phantomsection\label{supplementary-material}
\bigskip

\newpage

\begin{center}
{\large\bf SUPPLEMENTARY MATERIAL}

\end{center}

\begin{description}

\item[Github repository] A public GitHub repository available at:\\ \href{https://github.com/SMAC-Group/paper_wamore_repro}
{\texttt{https://github.com/SMAC-Group/paper\_wamore\_repro}} with all codes to reproduce the simulations and the case study presented in the paper.


\item[\texttt{R}-package] An open-source \texttt{R} package available at:\\
\url{https://github.com/SMAC-Group/gmwmx2} that implements the proposed estimation framework and that provides functions to download and plot GNSS position time series from the Nevada Geodetic Laboratory.

\item[Web application] A web application available at:\\
\url{https://data-analytics-lab.shinyapps.io/wamore-crustal-deformation-explorer} that allows to visualize the estimated tectonic velocities and estimated crustal uplift velocity of the network of GNSS stations discussed in the case study.
\end{description}

\newpage
\section{Appendices}
\label{sec.appendix_structure}

The appendices are organized as follows:
\begin{enumerate}
    \item \textbf{WAMORE Properties} (Appendix \ref{app:gmwmx_properties}): Here we deliver statements and proofs for all the statistical properties of the WAMORE:
    \begin{enumerate}
        \item We firstly define some extra notation that is used to deliver the theoretical results of the WAMORE (Appendix \ref{app:notation});
        \item For completeness and a self-contained work, we include existing definitions and results that will be used for the new results presented in our work (Appendix \ref{app:auxiliary_results});
        \item We derive the covariance matrix $\bm{\Phi}_n$, i.e., the covariance matrix of $\hat{\bm{\beta}}$ (Appendix \ref{app:cov_beta});
        \item We provide the theoretical forms of the WV and of its covariance (Appendix \ref{app:wv_theo_forms});
        \item Based on the covariance matrix $\bm{\Phi}_n$ we deliver the asymptotic distribution of $\hat{\bm{\beta}}$ under a short-memory and long-memory regime for the error process $\bm{\varepsilon}$ (Appendices \ref{app:short_memory} and \ref{app:long_memory} respectively);
        \item We prove the consistency of the empirical WV $\hat{\boldsymbol{\nu}}$ computed on $\hat{\boldsymbol{\varepsilon}}$  (Appendix~\ref{app:consistency_nu_hat}) and the consistency of the estimated stochastic parameters $\hat{\boldsymbol{\gamma}}$ (\ref{app:consistency_of_gamma_hat});
        \item We prove the consistency of the estimator of the proposed model selection criterion for the WAMORE towards the true criterion (Appendix \ref{app:model_selection}).
    \end{enumerate}
    \item \textbf{Computational Results} (Appendix \ref{app:comp_results}): In this appendix we provide the definitions and proofs for the computational results that ensure scalability of the WAMORE:
    \begin{enumerate}

\item For any covariance matrix of the error process $\bm{\Sigma}$ (estimated or true), we derive an $\mathcal{O}(n^{-1})$-accurate approximation of the theoretical WV $\bm{\nu}(\bm{\gamma}, \bm{\vartheta})$ based solely on the averages of the diagonal and superdiagonals of $\bm{\Sigma}$ (Appendix \ref{app:fast_theo_wv});

        \item We discuss a computationally efficient approximation for the covariance matrix of the estimated residuals (not of the true error process) and define the corresponding theoretical WV based on Appendix \ref{app:fast_theo_wv} (Appendix \ref{app:IminusH});
        \item We derive the exact form of the covariance matrix of the estimated WV $\hat{\bm{\nu}}$, i.e., $\mathbf{V}$, and provide a computationally efficient approach to compute it (Appendix \ref{app:compute_v}).
    \end{enumerate}
    \item \textbf{Simulation Studies} (Appendix \ref{app:simulations}): In this appendix, we provide extensive simulation studies that showcase the performance of the WAMORE and of the proposed criterion in different settings.
    \begin{enumerate}
        \item We provide the complete results of simulation studies (Appendix \ref{app:main_simu_results});
        \item We show that the Markov chain used to generate $\mathbf{Z}$ in the simulations is a special case of \eqref{eq:markov_missinigness} (Appendix \ref{app:beta_mixing});

        \item We provide empirical results from simulation studies investigating the performance of the model selection criterion presented in Section~\ref{sec:model_selection}   (Appendix \ref{app:model_selection_simulation}).
        
    \end{enumerate}
\item \textbf{Case Study: Additional Results} (Appendix \ref{app:case_study}): 
In this appendix, we present supplementary material related to the case study introduced in Section~\ref{sec:case_study}.
\begin{enumerate}
    \item We provide additional graphs supporting the case study (Appendix \ref{app:case_study_additional_graphs});
    \item We provide results and discussion of the emulation study (Appendix \ref{app:emulation_study}).
\end{enumerate}

\end{enumerate}

\newpage
\subsection{WAMORE Statistical Properties}
\label{app:gmwmx_properties}

\subsubsection{Notation}
\label{app:notation}

\noindent \textbf{Mixing Processes}

\noindent To start, let us define $\alpha$- and $\beta$-mixing processes. Firstly, a process $\{X_i\}_{i\in{\mathbb{Z}}}$ is said to be $\alpha$-mixing if
$$\alpha_k=\sup_{t\in{\mathbb{Z}}}\alpha(\sigma(X_s,s\le t),\sigma(X_s,s\ge t+k))\to 0,$$ as $k \to \infty$, where
$
\alpha(\mathcal{A}, \mathcal{B})=\sup _{A \in \mathcal{A}, B \in \mathcal{B}}|\mathbb{P}(A \cap B)-\mathbb{P}(A) \mathbb{P}(B)|
$ for any two $\sigma$-fields $\mathcal{A}$ and $\mathcal{B}$.
Similarly, a process $\{X_t\}_{t\in{\mathbb{Z}}}$ is said to be $\beta$-mixing if
$$\beta_k=\sup_{t\in{\mathbb{Z}}}\beta(\sigma(X_s,s\le t),\sigma(X_s,s\ge t+k))\to 0,$$ as $k \to \infty$, where
$
\beta(\mathcal{A}, \mathcal{B})=2^{-1}\sup \sum_{i = 1}^I\sum_{j = 1}^J|\mathbb{P}(A_i \cap B_j)-\mathbb{P}(A_i) \mathbb{P}(B_j)|
$ for any two $\sigma$-fields $\mathcal{A}$ and $\mathcal{B}$, and  the supremum is taken over all pairs of (finite) partitions $\{A_1, \dots, A_I\}$ and $\{B_1, \dots, B_J\}$ of the sample space such that $A_i \in \mathcal{A}$ for each $i$ and $B_j \in \mathcal{B}$ for each $j$. Note that $\alpha_k$ and $\beta_k$ are referred to as $\alpha$- and $\beta$-mixing coefficients respectively and that the conditions based on the $\alpha$-mixing are weaker than that of $\beta$-mixing. More precisely, it follows that $2\alpha_k  \leq \beta_k$ for any $k \in \mathbb{N}$ \cite[e.g., ][]{bradley2005basic}.
\\
\noindent \textbf{Matrix Operations}

\noindent For a matrix $\M$, its operator norm is denoted by $\|\M\|_{\op}$. If $\M$ is a square matrix, the maximum and minimum eigenvalues of $\M$ are denoted by $\lambda_{\max}(\M)$ and $\lambda_{\min}(\M)$, respectively. The trace of $\M$ is denoted by $\operatorname{tr}(\M)$. The transpose of $\M$ is denoted by $\M^{\trans}$. For a matrix $\M \in \real^{n \times p}$, we denote its $b^{\text{th}}$ column as $\M_{\cdot b} \in \mathbb R^n$ and its $r^{\text{th}}$ row as $\M_{r \cdot} \in \real^p$.
\\
For two matrices $\A$ and $\B$ of the same dimension $m \times n$, we denote the Hadamard product as $\A \odot \B$, where $\A \odot \B$ is a matrix of dimension $m \times n$ with elements given by $(\A \odot \B)_{i, j}=(\A)_{i, j}(\B)_{i, j}$.
\\
For a $m \times n$ matrix $\A$ and a $p \times q $ matrix $\B$, we denote the  Kronecker product $\mathbf{A} \otimes \mathbf{B}$ as the $p m \times q n$  block matrix: 

\begin{equation*}
\mathbf{A} \otimes \mathbf{B} = 
\begin{pmatrix}
a_{11} \mathbf{B} & \cdots & a_{1n} \mathbf{B} \\
\vdots & \ddots & \vdots \\
a_{m1} \mathbf{B} & \cdots & a_{mn} \mathbf{B}
\end{pmatrix}
\end{equation*}

For two vectors $\x, \y \in \real^k$ and a matrix $\mathbf{A} \in \real^{k \times k}$, we write $\left\|\x-\y\right\|_{\mathbf{A}}^2 $ to denote $(\x -\y)^{\trans} \A (\x -\y)$.
\\
\noindent \textbf{Sequences}

\noindent For a sequence of random variables $\{X_n\}$ and positive numbers $\{a_n\}$, we write $X_n = O_p(a_n)$ if $\lim_{K \to \infty}\limsup_{n \to \infty}\mathbb{P}(|X_n| \geq Ka_n) = 0$. For two sequences of positive numbers $\{a_n\}$ and $\{b_n\}$, we write $a_n = O(b_n)$ if there exists some constant $C > 0$ such that $a_n/b_n \leq C$ for all large $n$.
\\
\noindent \textbf{Design-Based Matrices}
    
\noindent Let $\X \in \real^{n \times p}$ be the design matrix, $\Y \in \real^n$ be the responses, and $\Z \in \{0, 1\}^n$ be the vector indicating the missing observations. Denote $\tilde{\X} = \Z \otimes \bm{1}_p^{\trans} \odot \X \in \real^{n \times p}$ the masked design matrix and $\tilde{\Y} = \Z \odot \Y$ the masked responses.

We define a collection of matrices starting from $\D_n \in \mathbb R^{p \times p}$ which is a diagonal matrix given by
\begin{align}\label{eq:D_n}
    \D_n = \diag(\X^{\trans}\X) = \diag(\|\X_{\cdot 1}\|_2^2, \dots, \|\X_{\cdot p}\|_2^2),
\end{align}
where $\X_{\cdot b} \in \mathbb R^n$ is the $b^{\text{th}}$ column of $\X$. Moreover, we define the following matrices and vectors that will also be used throughout the corresponding proofs:
\begin{itemize}
    \item $\bm{\Pi}_n = \X^{\trans}\X \in \real^{p \times p}$;
    \item $\tilde{\bm{\Pi}}_n = \tilde{\X}^{\trans}\tilde{\X}\in \real^{p \times p}$;
    \item $\C_n = \D_n^{-1/2}\X^{\trans}\X\D_n^{-1/2} \in \real^{p \times p}$;
    \item $\Q_{i \cdot} = \bm{\Pi}_n^{-1/2}\X_{i \cdot} \in \real^{p}$ and $\tilde{\Q}_{i \cdot} = Z_i \cdot \Q_{i \cdot} \in \real^{p}$.
\end{itemize}

\subsubsection{Auxiliary Results}
\label{app:auxiliary_results}

In this appendix we list and state some auxiliary results that will be used for the proofs given in the following appendices.

\begin{Lemma}[Theorem 5.2 in \citealt{bradley2005basic}]\label{lemma:mixing_indep_comp}
Denote $X^{(k)} = \{X^{(k)}_t\}_{t \in \mathbb{Z}}$, for $k \in \mathbb N^*$. Suppose these sequences, $X^{(k)}$, are mutually independent across $k$. Suppose that for each $t \in \mathbb{Z}$, $h_t: \real\times \real \times \real \times \dots \mapsto \real$ is a Borel function. Define the sequence $X = \{X_t\}_{t \in \mathbb{Z}}$ of random variables with $X_t = h_t\big(X^{(1)}_t,X^{(2)}_t,X^{(3)}_t,\dots\big)$, for $t \in \mathbb{Z}$. Then for any $\ell \geq 1$, it follows that $\alpha_{\ell} \leq \sum_{k = 1}^{\infty}\alpha^{(k)}_{\ell}$ and $\beta_{\ell} \leq \sum_{k = 1}^{\infty}\beta^{(k)}_{\ell}$.
\end{Lemma}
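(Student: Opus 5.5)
The plan is to reduce the statement to a comparison of $\sigma$-fields: the past and future fields of $X$ are contained in those of the full vector sequence $(X^{(1)},X^{(2)},\dots)$, and the mixing coefficient of a product of independent $\sigma$-fields is at most the sum of the individual coefficients. Fix $t\in\mathbb{Z}$ and $\ell\ge1$. Write $\mathcal{A}^{(k)}=\sigma(X^{(k)}_s,s\le t)$ and $\mathcal{B}^{(k)}=\sigma(X^{(k)}_s,s\ge t+\ell)$. Since each $X_s$ is a Borel function of $(X^{(1)}_s,X^{(2)}_s,\dots)$, we have
\[
\sigma(X_s,s\le t)\subseteq \mathcal{A}:=\bigvee_k \mathcal{A}^{(k)},\qquad
\sigma(X_s,s\ge t+\ell)\subseteq \mathcal{B}:=\bigvee_k \mathcal{B}^{(k)}.
\]
Both $\alpha(\cdot,\cdot)$ and $\beta(\cdot,\cdot)$ are monotone under enlarging the $\sigma$-fields, because the supremum is then taken over more sets or partitions. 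It therefore suffices to prove
\[
\beta(\mathcal{A},\mathcal{B})\le\sum_k\beta(\mathcal{A}^{(k)},\mathcal{B}^{(k)}),
\]
and the same inequality for $\alpha$. Taking $\sup_t$ on both sides and using $\sup_t\sum_k\le\sum_k\sup_t$ then gives the claim.

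\textbf{Finite number of components.} By mutual independence, the pairs $(\mathcal{A}^{(k)},\mathcal{B}^{(k)})$ are independent across $k$. I would induct on $K$ using the two-block inequality
\[
\beta(\mathcal{A}_1\vee\mathcal{A}_2,\mathcal{B}_1\vee\mathcal{B}_2)\le\beta(\mathcal{A}_1,\mathcal{B}_1)+\beta(\mathcal{A}_2,\mathcal{B}_2)
\]
whenever $\mathcal{A}_1\vee\mathcal{B}_1$ is independent of $\mathcal{A}_2\vee\mathcal{B}_2$.

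For $\beta$, I would use the total-variation characterization on a standard Borel space:
\[
\beta(\mathcal{A},\mathcal{B})=\|P_{\mathcal{A}\vee\mathcal{B}}-P_{\mathcal{A}}\otimes P_{\mathcal{B}}\|_{TV}.
\]
Under independence of the blocks, the joint law is $\mu_1\otimes\mu_2$ with $\mu_i$ the law on $\mathcal{A}_i\vee\mathcal{B}_i$, and the product of marginals is $\nu_1\otimes\nu_2$. The standard bound
\[
\|\mu_1\otimes\mu_2-\nu_1\otimes\nu_2\|_{TV}\le\|\mu_1-\nu_1\|_{TV}+\|\mu_2-\nu_2\|_{TV}
\]
then follows from the triangle inequality through the intermediate measure $\nu_1\otimes\mu_2$.

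For $\alpha$, I would use $\alpha=\sup|\mathrm{Cov}(\mathbf 1_A,\mathbf 1_B)|$ and follow Bradley's argument. First approximate $A\in\mathcal{A}_1\vee\mathcal{A}_2$ and $B\in\mathcal{B}_1\vee\mathcal{B}_2$ by finite unions of rectangles. Then condition on $\mathcal{A}_2\vee\mathcal{B}_2$: the section $A_\omega\in\mathcal{A}_1$, so by independence
\[
|P(A\cap B\mid\mathcal{A}_2\vee\mathcal{B}_2)-E[P(A_\omega)P(B_\omega)]|\le\alpha(\mathcal{A}_1,\mathcal{B}_1),
\]
and the remaining term is controlled by $\alpha(\mathcal{A}_2,\mathcal{B}_2)$.

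\textbf{Main obstacle and countable case.} The hard step is this $\alpha$-case conditioning argument: the function $E[P(A_{\omega_2})P(B_{\omega_2})]$ must be expressed as a covariance of $[0,1]$-valued $\mathcal{A}_2$- and $\mathcal{B}_2$-measurable variables. I would handle it with the bound $|\mathrm{Cov}(f,g)|\le\alpha$ for $[0,1]$-valued functions, which holds with constant $1$ for $[0,1]$-valued variables, since they are convex combinations of indicators via the layer-cake representation. To pass from finitely many to countably many components, I would note that $\bigvee_{k\le K}$ increases to $\bigvee_k$. Any finite partition in the limiting $\sigma$-field can be approximated in probability by partitions from the $K$-th level, and $\alpha,\beta$ are continuous under such approximation. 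This yields the bound by $\sum_{k=1}^\infty$, which is trivially true if the sum diverges.
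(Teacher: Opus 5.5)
The paper gives no proof of its own and only cites Bradley. Your argument is correct and is the standard route behind that citation: the inclusion $\sigma(X_s,s\le t)\subseteq\bigvee_k\mathcal{A}^{(k)}$ (and its future analogue), combined with the subadditivity of $\alpha$ and $\beta$ over independent pairs of $\sigma$-fields, which you establish via total variation for $\beta$ and the section/layer-cake argument for $\alpha$.

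One small fix is needed in your conditional display, which as written compares a random quantity with a constant. It should read pointwise $|P(A\cap B\mid\mathcal{A}_2\vee\mathcal{B}_2)(\omega)-P(A_\omega)P(B_\omega)|\le\alpha(\mathcal{A}_1,\mathcal{B}_1)$. Only afterwards do you take expectations, giving $|P(A\cap B)-E[P(A_\omega)P(B_\omega)]|\le\alpha(\mathcal{A}_1,\mathcal{B}_1)$.
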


\begin{Lemma}[Theorem~2.20 in \citealp{fan2008nonlinear}]\label{lemma:mixing_mement_ineq}
    Let $\{X_t\}_{t \in \mathbb{Z}}$ be a strictly stationary and $\alpha$-mixing process, with $\alpha$-mixing coefficients $\{\alpha_k\}_{k \geq 1}$ and the autocovariances $\{\gamma(k)\}_{k \in \mathbb{Z}}$.  Suppose that one of the following conditions holds:
    \begin{itemize}
        \item[(i)]
        $\mathbb{E}|X_t|^{\delta} < \infty \quad \text{and} \quad \sum_{k \geq 1}\alpha_k^{1 - 2/\delta} < \infty \quad \text{for some constant} \quad \delta > 2$.
        \item[(ii)]
        $\mathbb{P}(|X_t| < C) = 1 \quad \text{for some constant} \quad C > 0, \quad \text{and} \quad \sum_{k \geq 1}\alpha_k < \infty$.
    \end{itemize}
    Then $\sum_{k \geq 1}|\gamma(k)| < \infty$, and as $n \to \infty$
    \[
    \frac{1}{n}\var\left(\sum_{t = 1}^nX_t\right) \to \gamma(0) + 2\sum_{k = 1}^{\infty}\gamma(k).
    \]
\end{Lemma}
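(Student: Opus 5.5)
The statement is Theorem~2.20 of \citealp{fan2008nonlinear}, so it could simply be cited. For a self-contained argument I would proceed in two steps. First, I would establish a covariance inequality for mixing sequences. Second, I would combine it with the stationary decomposition of the variance of a partial sum.

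\textbf{Step 1 (covariance bounds).} By strict stationarity, $\gamma(k)=\mathrm{cov}(X_0,X_k)$, and $X_0$ and $X_k$ are measurable with respect to $\sigma(X_s,s\le 0)$ and $\sigma(X_s,s\ge k)$ respectively. Under (ii) I would use the bounded Ibragimov inequality $|\gamma(k)|\le 4C^2\alpha_k$. This is obtained by writing the covariance through indicator functions of level sets and applying the definition of $\alpha(\mathcal A,\mathcal B)$. Under (i) I would use Davydov's inequality $|\gamma(k)|\le 8\,\alpha_k^{1-2/\delta}\|X_0\|_\delta^2$. Its proof truncates $X_0$ and $X_k$ at a level $M$ and applies the bounded inequality to the truncated parts. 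The tails are controlled by H\"older and Markov through $\mathbb E|X|^\delta M^{1-\delta}$, and $M$ is then chosen proportional to $\alpha_k^{-1/\delta}$ to balance the two terms. In either case the stated summability of $\alpha_k$ or of $\alpha_k^{1-2/\delta}$ gives $\sum_{k\ge1}|\gamma(k)|<\infty$.

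\textbf{Step 2 (variance limit).} Stationarity gives the identity
\[
\frac1n\var\Big(\sum_{t=1}^n X_t\Big)=\gamma(0)+2\sum_{k=1}^{n-1}\Big(1-\frac kn\Big)\gamma(k).
\]
The summands $(1-k/n)\gamma(k)\mathbbm 1(k<n)$ converge to $\gamma(k)$ pointwise in $k$, and they are dominated by $|\gamma(k)|$, which is summable by Step 1. Dominated convergence for series (Kronecker's lemma also works) then gives the limit $\gamma(0)+2\sum_{k\ge1}\gamma(k)$.

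The main obstacle is Step 1 under (i). Getting the exponent $1-2/\delta$ exactly right requires a careful truncation and H\"older bookkeeping. I would first prove the bounded case cleanly and then reduce the moment case to it via the truncation above.
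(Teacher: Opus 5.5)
Your proof is correct. The bounded covariance inequality $|\gamma(k)|\le 4C^2\alpha_k$ and Davydov's inequality $|\gamma(k)|\le 8\alpha_k^{1-2/\delta}\|X_0\|_\delta^2$ give absolute summability, and the identity $\frac1n\var(\sum_t X_t)=\gamma(0)+2\sum_{k<n}(1-k/n)\gamma(k)$ with dominated convergence gives the limit. The paper gives no proof of its own but imports the result by citation from \citet{fan2008nonlinear}, and your argument is the standard covariance-inequality-plus-dominated-convergence proof behind that cited theorem.
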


\begin{Lemma}[Theorem~2.21 in \citealp{fan2008nonlinear}]\label{lemma:mixing_clt}
    Assume that $\mathbb{E}X_t = 0$, and $\sigma^2 = \gamma(0) + 2\sum_{j = 1}^{\infty}\gamma(j)$ is positive. Then
    \[
    \frac{1}{\sqrt{n}}\sum_{t = 1}^nX_t \overset{d}{\to} \mathcal{N}(0, \sigma^2),
    \]
    if one of the following conditions holds:
    \begin{itemize}
        \item[(i)]
        $\mathbb{E}|X_t|^{\delta} < \infty \quad \text{and} \quad \sum_{k \geq 1}\alpha_k^{1 - 2/\delta} < \infty \quad \text{for some constant} \quad \delta > 2$.
        \item[(ii)]
        $\mathbb{P}(|X_t| < C) = 1 \quad \text{for some constant} \quad C > 0, \quad \text{and} \quad \sum_{k \geq 1}\alpha_k < \infty$.
    \end{itemize}
\end{Lemma}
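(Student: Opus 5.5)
The proof will follow the classical Bernstein big-block/small-block scheme. It reduces the dependent sum to an approximately independent sum of block totals, and then applies the Lindeberg--Feller central limit theorem. Throughout, $S_n=\sum_{t=1}^n X_t$. By Lemma~\ref{lemma:mixing_mement_ineq}, under either (i) or (ii) the autocovariances are absolutely summable and $n^{-1}\var(S_n)\to\sigma^2>0$. This fixes the normalization and lets us control variances of partial sums over arbitrary index sets.

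\textbf{Block construction.} I would first reduce case (i) to case (ii) by truncation. Fix $L>0$ and write $X_t=X_t^{(L)}+R_t^{(L)}$, where
\[
X_t^{(L)}=X_t\mathbbm{1}(|X_t|\le L)-\mathbb{E}[X_t\mathbbm{1}(|X_t|\le L)].
\]
Both pieces are measurable functions of $X_t$, so they are stationary with mixing coefficients bounded by $\alpha_k$. Davydov's covariance inequality gives
\[
|\cov(R^{(L)}_0,R^{(L)}_k)|\le 8\,\alpha_k^{1-2/\delta}\,\|R^{(L)}_0\|_\delta^2 .
\]
Hence $\limsup_n n^{-1}\var\big(\sum_t R^{(L)}_t\big)\to 0$ as $L\to\infty$, and the long-run variance of $X^{(L)}$ tends to $\sigma^2$. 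A standard approximation argument (Billingsley, Theorem 3.2) then reduces the claim to the bounded case (ii). In that case I choose integers $p_n\to\infty$ (big blocks) and $q_n\to\infty$ (small blocks) with
\[
q_n/p_n\to 0,\qquad p_n/\sqrt n\to 0,\qquad k_n\,\alpha_{q_n}\to 0,
\]
where $k_n=\lfloor n/(p_n+q_n)\rfloor$ is the number of blocks. I then split
\[
S_n=\sum_{j=1}^{k_n}U_j+\sum_{j=1}^{k_n}V_j+W_n ,
\]
with $U_j$ the big-block sums, $V_j$ the small-block sums and $W_n$ the leftover tail.

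\textbf{Key steps.} Step 1: the absolute summability from Lemma~\ref{lemma:mixing_mement_ineq} gives $\var(\sum_j V_j)\le C k_n q_n$ and $\var(W_n)\le C(p_n+q_n)$. Both are $o(n)$, so these terms are negligible after dividing by $\sqrt n$. Step 2: the $U_j$ are separated by gaps of length $q_n$. The Volkonskii--Rozanov inequality gives
\[
\Bigl|\mathbb{E}e^{\mathrm{i}s\sum_j U_j/\sqrt n}-\prod_j\mathbb{E}e^{\mathrm{i}sU_j/\sqrt n}\Bigr|\le 16\,k_n\,\alpha_{q_n}\to 0,
\]
so it suffices to prove the CLT for independent copies of the $U_j$. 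Step 3: by stationarity, $k_n\var(U_1)/n\to\sigma^2$, using $p_n^{-1}\var(U_1)\to\sigma^2$ and $k_np_n/n\to1$. In the bounded case $|U_j|\le Cp_n$ and $p_n/\sqrt n\to0$, so the Lindeberg condition holds trivially.

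\textbf{Main obstacle.} I expect the hard part to be reconciling the rate requirements on $p_n,q_n$. Summability together with monotonicity of $\alpha_k$ only gives $\alpha_q=o(1/q)$. Then $k_n\alpha_{q_n}\approx (n/p_n)\,o(1/q_n)$, and this need not vanish when $p_n=o(\sqrt n)$ and $q_n=o(p_n)$. My first attempt would drop the crude bound $|U_j|\le Cp_n$ and allow $p_n$ to be of larger order than $\sqrt n$, with $p_n=o(n)$ and $q_n=o(p_n)$; this is compatible with $k_n\alpha_{q_n}\to0$. I would then verify Lindeberg by showing uniform integrability of $\{U_1^2/p_n\}$. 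For this I would use a fourth-moment bound on bounded mixing sums, $\mathbb{E}U_1^4\le C p_n^2$, which needs slightly more than summability. Failing that, I would take a truncation-in-blocks approach, bounding
\[
\mathbb{E}\bigl[U_1^2\mathbbm{1}(|U_1|>\epsilon\sqrt n)\bigr]
\]
through the already-established weak convergence of $U_1/\sqrt{p_n}$ along a diagonal subsequence, as in the proof of Theorem~2.21 in \citet{fan2008nonlinear}.
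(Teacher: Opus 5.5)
The paper does not prove this lemma; it is quoted from Fan and Yao as a black box, so your argument has to stand on its own. Its architecture (truncation, Bernstein blocks, Volkonskii--Rozanov, Lindeberg) is the standard one, and most of it is sound. The reduction of (i) to (ii) works: Davydov's inequality controls both the remainder variance and the cross terms in $\sigma_L^2-\sigma^2$. The truncated process also genuinely satisfies (ii), because $\alpha_k\le 1/4$ gives $\sum_k\alpha_k\le\sum_k\alpha_k^{1-2/\delta}$. Steps 1--3 are correct once you have block lengths with $q_n/p_n\to0$, $p_n/\sqrt n\to0$ and $k_n\alpha_{q_n}\to0$.

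The gap is your ``main obstacle'' paragraph, which leaves the bounded case unfinished.

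\textbf{What fails.} From $\alpha_q=o(1/q)$ you infer that the three rate conditions may be incompatible, and you retreat to fallbacks that do not close. The bound $\mathbb{E}U_1^4\le Cp_n^2$ is, as you concede, not available from $\sum_k\alpha_k<\infty$ alone. The uniform integrability of $U_1^2/p_n$ via ``already-established'' weak convergence of $U_1/\sqrt{p_n}$ is circular. That weak convergence is the CLT for block sums, i.e.\ the statement you are proving.

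\textbf{The missing idea.} The speed at which $q\alpha_q\to0$ is known before the blocks are chosen, so you can make $n/(p_nq_n)$ grow slowly enough to be beaten by it.

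\textbf{Concrete construction.}
\begin{itemize}
\item Let $\eta_q=\sup_{k\ge q}k\alpha_k$. This decreases to $0$, because $\alpha_k$ is nonincreasing and $(k/2)\alpha_k\le\sum_{j\ge\lceil k/2\rceil}\alpha_j$.
\item Take $a_n=\min\{n^{1/8},\eta_{\lfloor n^{1/4}\rfloor}^{-1/4}\}$, read as $n^{1/8}$ if $\eta_{\lfloor n^{1/4}\rfloor}=0$, so that $a_n\to\infty$.
\item Set $p_n=\lfloor\sqrt n/a_n\rfloor$ and $q_n=\lfloor\sqrt n/a_n^2\rfloor$.
\end{itemize}
Then $q_n\ge\lfloor n^{1/4}\rfloor\to\infty$, $q_n/p_n\le2/a_n\to0$ and $p_n/\sqrt n\le1/a_n\to0$. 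Moreover, for large $n$,
\[
k_n\alpha_{q_n}\le\frac{n}{p_nq_n}\,q_n\alpha_{q_n}\le 4a_n^3\,\eta_{\lfloor n^{1/4}\rfloor}\le 4\,\eta_{\lfloor n^{1/4}\rfloor}^{1/4}\to0 .
\]
With these blocks your original Steps 1--3 go through unchanged, including the trivial Lindeberg check $|U_j|\le Cp_n=o(\sqrt n)$. No fourth-moment or uniform-integrability argument is needed.
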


\begin{Lemma}[Theorem 1 in \citealp{banna2016bernstein}]\label{lemma:matrix_bernstein}
Let $\{\M_t\}_{t \in \mathbb{N}^{+}}$ is a family of self-adjoint random matrices of size $d$. Assume that there exists a constant $c > 0$ such that for any $\ell \geq 1$, $\beta_{\M}(\ell) \leq \exp(1-c\,\ell)$, and there exists a positive constant $D$ such that for any $t \in \mathbb{N}^{+}$,
\begin{align*}
    \mathbb{E}[\M_t] = \bm{0} \quad \text{and} \quad \|\M_t\|_{\mathrm{op}} \leq D \quad \text{almost surely}.
\end{align*}
Then there exists an absolute constant $C$ such that for any $x > 0$ and any integer $n \geq 2$,
\[
\mathbb{P}\left(\left\|\sum_{t=1}^n \M_t\right\|_{\mathrm{op}} \geq x\right) \leq d\exp\left(-\frac{C\,x^2}{\iota^2\,n + c^{-1}\,D^2 + x\,D\,\gamma(c,n)}\right),
\]
where
\[
\iota^2 = \sup_{\mathcal{K} \subseteq\{1, \dots, n\}}\frac{1}{|\mathcal{K}|}\left\|\mathbb{E}\left(\sum_{i \in \mathcal{K}}\M_i\right)^2\right\|_{\mathrm{op}}
\]
and
\[
\gamma(c,n) = \frac{\log n}{\log 2}\max\left\{2, \frac{32\log n}{c\log 2}\right\}.
\]
\end{Lemma}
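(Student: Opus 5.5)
Since this lemma is quoted from \citet{banna2016bernstein}, the plan below follows the standard route for a Bernstein inequality under geometric $\beta$-mixing, which should also be close to theirs. The idea is to reduce the tail bound to a bound on the matrix Laplace transform. Write $\S_n=\sum_{t=1}^n \M_t$. By the matrix Chernoff argument, $\mathbb{P}(\lambda_{\max}(\S_n)\ge x)\le \inf_{s>0} e^{-sx}\,\mathbb{E}\operatorname{tr}\exp(s\S_n)$, and likewise for $-\S_n$. It therefore suffices to show, for $0<s<1/\{D\,\gamma(c,n)\}$, a bound of the form
\begin{equation*}
\log \mathbb{E}\operatorname{tr}\exp(s\S_n)\le \log d+\frac{C' s^2\,(\iota^2 n + c^{-1}D^2)}{1-s D\,\gamma(c,n)} .
\end{equation*}
Optimizing over $s$ in the usual Bernstein way then gives the stated denominator $\iota^2 n + c^{-1}D^2 + xD\gamma(c,n)$.

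To handle dependence, I would use a Cantor-type recursive blocking. Split $\{1,\dots,n\}$ into two large blocks separated by a small gap, then split each block again in the same way, for $L\approx \log n/\log 2$ levels. Choose the gap at level $\ell$ of order $c^{-1}\log n$, so that $\beta$-mixing at that lag is at most $\exp(1-c\cdot\text{gap})\lesssim n^{-2}$. At each level, Berbee's coupling lemma replaces the two blocks by independent copies with the same marginals. The copies differ from the originals only on an event of probability at most the $\beta$-coefficient, and on that event $\|\S\|_{\op}\le nD$, so the error in the Laplace transform is controlled for $s\lesssim 1/(D\log n)$. The terms in the gaps are set aside: each gap sum has operator norm at most $D\cdot\text{gap}$, and summed over levels they contribute the $c^{-1}D^2$ and $D\gamma(c,n)$ terms, the latter being exactly $\log n/\log 2$ times a gap length of order $\log n/c$.

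For the independent pieces I would use Tropp's subadditivity of matrix cumulant generating functions, which follows from Lieb's concavity theorem: $\mathbb{E}\operatorname{tr}\exp(\sum_k s\mathbf{B}_k)\le \operatorname{tr}\exp(\sum_k\log\mathbb{E}e^{s\mathbf{B}_k})$ for independent self-adjoint $\mathbf{B}_k$. At the bottom level the blocks have length of order $\log n/c$, so each block sum has operator norm at most $D\log n/c$. The scalar Bernstein bound $e^{u}\le 1+u+u^2/(2(1-|u|/3))$, applied in the functional calculus, then bounds $\log\mathbb{E}e^{s\mathbf{B}_k}$ by $s^2\,\mathbb{E}\mathbf{B}_k^2/(1-s\|\mathbf{B}_k\|)$. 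Here $\|\mathbb{E}\mathbf{B}_k^2\|_{\op}\le \iota^2|K|$ by the definition of $\iota^2$ as a supremum over index sets $K$. Summing over blocks, the sizes add to at most $n$, which produces $\iota^2 n$.

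The main obstacle is combining the gap terms, which are small but dependent on everything, with the coupled block sums inside a single trace exponential. Because the matrices do not commute, Golden–Thompson cannot be chained over many terms. I would first try convexity of $\mathbf{A}\mapsto\operatorname{tr}\exp(\mathbf{A})$. Write $s\S_n$ as a convex combination $\sum_\ell w_\ell\,(s/w_\ell)\mathbf{T}_\ell$ of the per-level pieces, with weights $w_\ell$ chosen proportional to the scale of each level (for example $\sqrt{\text{variance}}$ plus a $D$-term). Then apply the Laplace bound to each $\mathbf{T}_\ell$ at the inflated parameter $s/w_\ell$. Keeping track of these weights across the $L$ levels is what should produce exactly the factor $\gamma(c,n)$, rather than something worse, in the admissible range of $s$. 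This is the bookkeeping I expect to be delicate.
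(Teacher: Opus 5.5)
The paper gives no proof of this lemma. It is imported from \citet{banna2016bernstein}, whose argument transposes to matrices the Cantor-set method of Merlev\`ede, Peligrad and Rio (2009) for bounded, geometrically $\beta$-mixing scalar sequences. Your ingredients are the right ones: the trace Laplace transform, Berbee's coupling, Lieb--Tropp subadditivity on decoupled blocks, the one-block bound with $\|\mathbb{E}\mathbf{B}_k^2\|_{\mathrm{op}}\le\iota^2|K|$, and convexity of $\mathbf{A}\mapsto\tr\exp(\mathbf{A})$ to glue pieces together (the matrix substitute for the H\"older step in the scalar proof). However, the blocking scheme you set up cannot reach the range $s<1/\{D\gamma(c,n)\}$, for two concrete reasons.

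The first problem is the coupling error. Suppose the copies differ from the originals on an event of probability $\lesssim n^{-2}$, and you bound $\tr\exp(s\S_n)$ there by $d\,e^{snD}$. Then the error is of order $d\,n^{-2}e^{snD}$, which is negligible only for $s\lesssim \log n/(nD)$, not for $s\lesssim 1/(D\log n)$. At $s=1/(D\log n)$ it is of order $e^{n/\log n}n^{-2}$. No gap of length $O(c^{-1}\log n)$ can pay for exposing the whole sum. Two things are needed instead:
\begin{itemize}
\item The gap used to split an interval of length $m$ must be proportional to $m$. In the Cantor construction it is a fraction $\delta$ of $m$, with $\delta$ of order $1/\log n$ so that the Cantor set keeps at least half of the points.
\item The failure event of that coupling must be charged only against $e^{sDm}$, not against $e^{snD}$. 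This is possible because the pieces already decoupled are independent of the current interval. Monotonicity of $\tr\exp$ and Jensen's inequality (the summands are centred) then turn the additive error into a relative factor $1+e^{sDm}\beta(\delta m)$, which is harmless once $sD\lesssim c\delta$.
\end{itemize}

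The second problem is the treatment of the gaps. With gaps of length $\asymp c^{-1}\log n$ at every level and bottom blocks of comparable length, level $\ell$ contains $2^\ell$ gaps, not one. The gaps therefore cover a constant fraction of $\{1,\dots,n\}$. Bounding each gap sum by $D$ times its length and adding gives $O(Dn)$, not $c^{-1}D^2+D\gamma(c,n)$; your count ``$\log n/\log 2$ times one gap length'' uses one gap per level. Even with proportional gaps, the complement of the Cantor set may contain up to half of the indices, so it cannot be set aside.

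The missing idea is a second recursion:
\begin{itemize}
\item The whole construction is reapplied to the complement, re-indexed; its $\beta$-coefficients at lag $\ell$ are still at most $\beta_{\M}(\ell)$.
\item This produces about $\log n/\log 2$ Cantor sets of cardinalities at most $n2^{-k}$, plus a short remainder handled trivially.
\item Each piece $\mathbf{U}_k$ gets its own bound $\log\mathbb{E}\tr\exp(s\mathbf{U}_k)\le\log d+s^2\sigma_k^2/(1-s\kappa_k)$.
\end{itemize}
Only then does your convexity step give $\sum_k\sigma_k$ of order $\iota\sqrt{n}+D/\sqrt{c}$ and $\sum_k\kappa_k\lesssim D\gamma(c,n)$.

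Finally, running Chernoff on $\pm\S_n$ yields the prefactor $2d$, not $d$. For $d\ge 2$ this is absorbed by halving $C$. For $d=1$ the two-sided bound as quoted is false: i.i.d.\ Rademacher signs with $n$ odd and $x=1$ give probability one. The source states the inequality for $\lambda_{\max}(\sum_t\M_t)$, so the operator-norm version should carry $2d$.
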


\begin{Lemma}[Weyl’s inequality]\label{lemma:weyl_ineq}
For symmetric matrices $\A, \B \in \mathbb R^{r \times r}$, we have that
    $$\max_{i = 1}^r|\lambda_{i}(\A) - \lambda_i(\B)| \leq \|\A - \B \|_{\op}.$$
\end{Lemma}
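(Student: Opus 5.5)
The statement is Weyl's inequality: for symmetric $\A, \B \in \real^{r\times r}$, $\max_i|\lambda_i(\A)-\lambda_i(\B)| \le \|\A-\B\|_{\op}$, with eigenvalues of each matrix ordered decreasingly. I will derive it from the Courant--Fischer min--max characterization. For a symmetric $\M$ with eigenvalues $\lambda_1(\M)\ge\dots\ge\lambda_r(\M)$,
$$\lambda_i(\M)=\max_{\dim\mathcal{S}=i}\ \min_{\x\in\mathcal{S},\,\|\x\|_2=1}\x^{\trans}\M\x .$$
This identity follows from the spectral theorem and a dimension-counting argument: any $i$-dimensional subspace meets the span of the last $r-i+1$ eigenvectors nontrivially. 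I will take it as standard.

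Set $\mathbf{E}=\A-\B$, which is symmetric. For every unit vector $\x$, Cauchy--Schwarz gives $|\x^{\trans}\mathbf{E}\x|\le\|\mathbf{E}\|_{\op}$, so
$$\x^{\trans}\B\x-\|\mathbf{E}\|_{\op}\ \le\ \x^{\trans}\A\x\ \le\ \x^{\trans}\B\x+\|\mathbf{E}\|_{\op}.$$
Fix $i$ and an $i$-dimensional subspace $\mathcal{S}$. Minimizing over unit $\x\in\mathcal{S}$ preserves these inequalities, because each side is shifted by the same constant. Maximizing over all such $\mathcal{S}$ preserves them as well, which gives $|\lambda_i(\A)-\lambda_i(\B)|\le\|\mathbf{E}\|_{\op}$. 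Taking the maximum over $i=1,\dots,r$ yields the stated bound.

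There is no real obstacle here. The only substantive ingredient is the min--max principle itself, and the lemma is classical. I would simply cite the min--max principle, or sketch it in one line via the dimension argument above.
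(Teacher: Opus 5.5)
Your proof is correct: the bound $|\x^{\trans}(\A-\B)\x|\le\|\A-\B\|_{\op}$ on unit vectors survives both the inner minimum and the outer maximum in the Courant--Fischer formula, which gives $|\lambda_i(\A)-\lambda_i(\B)|\le\|\A-\B\|_{\op}$ for each $i$ when both spectra are ordered decreasingly.

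The paper gives no proof of this lemma and simply lists it among standard auxiliary results, so your textbook derivation is all that is required. The paper only uses the case $i=r$, namely the lower bound on $\lambda_{\min}(\tilde{\X}^{\trans}\tilde{\X})$ in Lemma~\ref{lemma:restricted_eigen} and Theorem~\ref{theorem:short_memory}. For that case the Rayleigh-quotient identity $\lambda_{\min}(\M)=\min_{\|\x\|_2=1}\x^{\trans}\M\x$ already suffices, without the full min--max principle.
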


The next theorem is from \cite{pipiras2000convergence}, which is an invariance principle for  weighted sums of a long-memory dependent process. To state the theorem, we need the following definitions.
Define the function space
\[
|\Lambda|^{d} = \left\{ f : \int_{\real} \int_{\real} | f(u) | | f(v) | |u - v|^{2d - 1} \, \mathrm{d}u \, \mathrm{d}v < \infty \right\},
\]
for $d \in (0,1/2)$ and the associated norm
\[
\|f\|^2_{|\Lambda|^{d}} = \int_{\real} \int_{\real} | f(u) | | f(v) | |u - v|^{2d - 1} \, \mathrm{d}u \, \mathrm{d}v.
\]
For $k \in \mathbb N \cup \{\infty\}$, define the approximations of a function $f$:
\[
f^{+}_{n,k} = \sum_{i=0}^{k} f\left( \frac{i}{n} \right) \mathbbm{1}_{[i/n, (i+1)/n)},
\quad
f^{-}_{n,k} = \sum_{i=-k}^{-1} f\left( \frac{i}{n} \right) \mathbbm{1}_{[i/n, (i+1)/n)},
\]
\[
f^{+}_{n} = f^{+}_{n,\infty},
\quad
f^{-}_{n} = f^{-}_{n,\infty},
\quad
f_{n} = f^{+}_{n} + f^{-}_{n}.
\]
Moreover, define the following two-sided (normalized) partial sum process of $\{e_i\}_{i \in \mathbb Z}$ as
\[
B^{d}_n(u) =
\begin{cases}
\frac{1}{n^{d + 1/2}} \sum_{i=1}^{\lfloor nu \rfloor} e_i, & u \geq 0, \\[10pt]
-\frac{1}{n^{d + 1/2}} \sum_{i=\lfloor nu \rfloor + 1}^{0} e_i, & u < 0,
\end{cases}
\]

\begin{Theorem}[Theorem 3.1 in \citealp{pipiras2000convergence}]\label{thm:clt_long-memory}
    Let $d \in (0, 1/2)$. Let $f, f^{\pm}_n, f^{\pm}_{n,k}, f_n$ be deterministic functions defined above. Suppose that the following conditions are satisfied:
\begin{enumerate}
    \item[(i)] $f, f^{\pm}_n \in |\Lambda|^{d}$, $\|f^{\pm}_n - f^{\pm}_{n,k}\|_{|\Lambda|^{d}} \to 0$, as $k \to \infty$, $\|f - f_n\|_{|\Lambda|^{d}} \to 0$, as $n \to \infty$,
    \item[(ii)] $\{e_i\}_{i \in \mathbb{Z}}$ is a stationary process with $\mathbb E(e_i) = 0$ and $\var(e_i)<\infty$, such that $|E(e_0 e_\ell)| \leq c |\ell|^{2d - 1}$, $\ell \in \mathbb{N}$, and is such that the sequence of processes $B_n^{d}$ converges to $B^{d}$ in the sense of the finite-dimensional distributions.
\end{enumerate}
Then
\[
\frac{1}{n^{d + 1/2}} \sum_{i = -\infty}^{\infty} f\left(\frac{i}{n}\right) e_i \xrightarrow{\mathcal{D}} \int_{\real} f(u) \, \mathrm{d}B^{d}(u),
\]
where $\{B^d(u)\}_{u \in \mathbb R}$ is a two-sided standard fractional Brownian motion.
\end{Theorem}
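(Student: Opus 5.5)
This result is Theorem~3.1 of \cite{pipiras2000convergence}, and I would prove it by the usual \emph{approximation-by-elementary-functions} argument. The core of the argument is a uniform $L^2$ bound on weighted sums in terms of the $|\Lambda|^d$ norm. With that bound, the weak limit for step functions follows from the assumed finite-dimensional convergence of $B_n^d$. A Billingsley-type approximation lemma then extends the convergence to general $f$.

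\textbf{Step 1: variance bound.} For a deterministic $g$, write $S_n(g)=n^{-d-1/2}\sum_{i}g(i/n)e_i$. I would show that
\[
\mathbb{E}\,S_n(g)^2\le C\,\|g_n\|^2_{|\Lambda|^{d}} ,
\]
where $g_n$ is the step version $g_n=\sum_i g(i/n)\mathbbm{1}_{[i/n,(i+1)/n)}$ and $C$ does not depend on $n$ or $g$. The proof expands the square and uses $|\mathbb{E}(e_ie_j)|\le c|i-j|^{2d-1}$ for $i\ne j$ and $\var(e_i)<\infty$ for $i=j$. Each term is then compared with $|g(i/n)||g(j/n)|\int_{\text{cell}_i}\int_{\text{cell}_j}|u-v|^{2d-1}\,\mathrm{d}u\,\mathrm{d}v$. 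For $|i-j|\ge1$ this cell integral is of order $n^{-2d-1}|i-j|^{2d-1}$, since $2d-1<0$ and the kernel is comparable across neighbouring cells. For $i=j$ the cell integral equals $c_d\,n^{-2d-1}$ exactly. Hence the diagonal terms $n^{-2d-1}\var(e_0)\sum_i g(i/n)^2$ are also dominated by $\|g_n\|^2_{|\Lambda|^d}$. This step also shows that the infinite sum defining $S_n(f)$ exists as an $L^2$ limit of the truncations $S_n(f^{\pm}_{n,k})$, because $\|f^\pm_n-f^\pm_{n,k}\|_{|\Lambda|^d}\to0$ by (i).

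\textbf{Step 2: step functions.} Let $g=\sum_{m=1}^M g_m\mathbbm{1}_{[a_m,b_m)}$ have compact support and finitely many pieces. Then $S_n(g)=\sum_m g_m\{B^d_n(b_m)-B^d_n(a_m)\}$ up to a few boundary summands at the jump points. Each boundary summand has second moment $O(n^{-2d-1})$, so it is negligible. The assumed finite-dimensional convergence $B_n^d\to B^d$ therefore gives $S_n(g)\xrightarrow{\mathcal D}\int g\,\mathrm{d}B^d$.

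\textbf{Step 3: approximation.} Elementary functions are dense in $|\Lambda|^d$ (Pipiras--Taqqu), so for $\epsilon>0$ I choose a step function $g$ with $\|f-g\|_{|\Lambda|^d}<\epsilon$. For the limit, the fBm covariance gives $\mathbb{E}\{\int(f-g)\,\mathrm{d}B^d\}^2\le C\|f-g\|^2_{|\Lambda|^d}$. For the pre-limit, Step~1 gives
\[
\mathbb{E}\{S_n(f)-S_n(g)\}^2\le C\big(\|f_n-f\|+\|f-g\|+\|g-g_n\|\big)^2_{|\Lambda|^d}.
\]
Here $\|f_n-f\|\to0$ by (i), and $\|g-g_n\|\to0$ by direct computation for a step function. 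Hence $\limsup_n \mathbb{E}\{S_n(f)-S_n(g)\}^2\le C\epsilon^2$. Billingsley's Theorem~3.2 (convergence of approximating sequences) then yields $S_n(f)\xrightarrow{\mathcal D}\int f\,\mathrm{d}B^d$.

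The main obstacle is Step~1. It needs careful bookkeeping of the near-diagonal cells, where $|u-v|^{2d-1}$ is singular, so that the discrete covariance sum is bounded by the continuous norm of $g_n$ with a constant uniform in $n$, including for unbounded weights. I would handle it by splitting the double sum into $|i-j|\le1$ and $|i-j|\ge2$. On $|i-j|\ge2$, $|u-v|$ is comparable to $|i-j|/n$ on the product cell. On $|i-j|\le1$, I would bound $|g(i/n)g(j/n)|\le\frac12\{g(i/n)^2+g(j/n)^2\}$ and use the explicit cell integrals.
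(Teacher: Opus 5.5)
The paper gives no proof of this statement; it quotes it verbatim from \cite{pipiras2000convergence}. Your argument is correct and is essentially the proof given in that source: an $L^2$ bound of the weighted sums by $\|g_n\|_{|\Lambda|^{d}}$, then finite-dimensional convergence for compactly supported step functions, then Billingsley's approximation theorem using density of step functions in $|\Lambda|^{d}$ and the triangle inequality for that norm.

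The step you flag as the main obstacle is in fact immediate. Since $|x|^{2d-1}$ is decreasing, $\int_0^1\int_0^1|m+s-t|^{2d-1}\,\mathrm{d}s\,\mathrm{d}t \geq 2^{2d-1}|m|^{2d-1}$ for every $|m|\geq 1$, and the integral equals $1/\{d(2d+1)\}>0$ for $m=0$. Hence $n^{-2d-1}|\mathbb{E}(e_ie_j)|$ is bounded by a constant times the $(i,j)$ cell integral for every pair, and no near-diagonal splitting or AM--GM step is needed.
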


The next Lemma is given in \cite{pollard1991asymptotics}. It provides a result for a sequence of random convex functions, ensuring that pointwise convergence in probability to a deterministic function implies uniform convergence over compact sets. We will use it to obtain consistency of the estimated stochastic parameters $\hat{\boldsymbol{\gamma}}$ in Appendix~\ref{app:consistency_of_gamma_hat}.

\begin{Lemma}[Convexity Lemma in \citealp{pollard1991asymptotics}]
\label{lemma::uniform_convergence}
    Let $\{\lambda_n(\bm\theta): \bm\theta \in \bm\Theta\}$ be a sequence of random convex functions defined on a convex, open subset $\bm\Theta \subseteq \mathbb R^d$. Suppose $\lambda(\cdot)$ is an $\mathbb R$-valued function on $\bm\Theta$ for which $\lambda_n(\bm\theta) \xrightarrow{p} \lambda(\bm\theta)$ pointwise for each $\bm\theta  \in \bm\Theta$. Then for each compact subset $K \subset \bm\Theta$,
    $$\sup_{\bm\theta \in K}|\lambda_n(\bm\theta) - \lambda(\bm\theta)| \xrightarrow{p} 0.$$
    The function $\lambda(\cdot)$ is necessarily convex on $\bm\Theta$.
\end{Lemma}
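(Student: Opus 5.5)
The plan follows the classical argument of Rockafellar (Theorem~10.8) for deterministic convex functions, transferred to convergence in probability. The reduction is from uniform convergence on $K$ to convergence at finitely many points, where pointwise convergence in probability is enough.

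\textbf{Step 1: the limit is convex.} Fix $\bm\theta_1,\bm\theta_2\in\bm\Theta$ and $t\in[0,1]$. Every $\lambda_n$ satisfies $\lambda_n(t\bm\theta_1+(1-t)\bm\theta_2)\le t\lambda_n(\bm\theta_1)+(1-t)\lambda_n(\bm\theta_2)$. Only three points are involved, so the three sequences converge jointly in probability, and the inequality passes to the limit $\lambda$. Hence $\lambda$ is a finite convex function on the open convex set $\bm\Theta$. It is therefore continuous, and uniformly continuous on compact subsets.

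\textbf{Step 2: discretisation.} Since $K$ is compact and $\bm\Theta$ is open, there is $\delta>0$ such that the $2\delta$-enlargement $K_{2\delta}=\{\bm\theta:\operatorname{dist}(\bm\theta,K)\le 2\delta\}$ is a compact subset of $\bm\Theta$. Cover $K_{2\delta}$ by finitely many closed cubes of side at most $\delta/\sqrt d$, all contained in $\bm\Theta$ (shrink $\delta$ if needed). Let $G$ be the finite set of all their vertices, refined as needed in Step~4.

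\textbf{Step 3: uniform control of $\lambda_n$ on $K_{2\delta}$.} Because $G$ is finite, $\max_{\bm g\in G}|\lambda_n(\bm g)-\lambda(\bm g)|\xrightarrow{p}0$. Any point of $K_{2\delta}$ is a convex combination of the vertices of a cube containing it, so by convexity
\[
\sup_{K_{2\delta}}\lambda_n\le \max_{G}\lambda_n=O_p(1).
\]
For a lower bound at $\bm\theta\in K_\delta$, reflect through a nearby grid point: write a grid point $\bm g$ as the midpoint of $\bm\theta$ and $2\bm g-\bm\theta\in K_{2\delta}$. Then $\lambda_n(\bm\theta)\ge 2\lambda_n(\bm g)-\sup_{K_{2\delta}}\lambda_n$, which is also $O_p(1)$. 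So $M_n:=\sup_{K_\delta}|\lambda_n|=O_p(1)$.

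The standard Lipschitz estimate for convex functions then applies: if $|f|\le M$ on the $\delta$-enlargement of $K$, then $|f(\bm x)-f(\bm y)|\le (2M/\delta)\|\bm x-\bm y\|$ for $\bm x,\bm y\in K$. This gives random Lipschitz constants $L_n=2M_n/\delta=O_p(1)$ for $\lambda_n$ on $K$. The same bound gives a deterministic Lipschitz constant $L$ for $\lambda$ on $K$.

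\textbf{Step 4: conclusion.} Given $\varepsilon,\eta>0$, pick $B$ with $\mathbb P(L_n>B)<\eta$ for all large $n$. Then refine the grid so that every point of $K$ lies within $\varepsilon/(B+L)$ of some $\bm g\in G\cap K_\delta$. This refinement does not affect Step~3, since a finer grid only improves those bounds. For $\bm\theta\in K$ with nearest grid point $\bm g$,
\[
|\lambda_n(\bm\theta)-\lambda(\bm\theta)|\le |\lambda_n(\bm g)-\lambda(\bm g)|+(L_n+L)\|\bm\theta-\bm g\|.
\]
Taking the supremum over $K$ and using the finite-grid convergence gives $\limsup_n\mathbb P\bigl(\sup_K|\lambda_n-\lambda|>2\varepsilon\bigr)\le\eta$, which proves the claim.

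The main obstacle is Step~3, specifically the lower bound on $\lambda_n$. An upper bound follows directly from convexity, but a lower bound needs the reflection trick, which requires the enlargement to stay inside the open set $\bm\Theta$. Getting the nested enlargements $K\subset K_\delta\subset K_{2\delta}\subset\bm\Theta$ right is where care is needed.
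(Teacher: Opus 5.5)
Your proof is correct. The paper gives no proof of its own: it imports the lemma as an auxiliary result from \cite{pollard1991asymptotics}. There it appears as the stochastic counterpart of the deterministic theorem on pointwise-convergent convex functions (Rockafellar's Theorem~10.8). The classical transfer to probability, due to Andersen and Gill (1982), works as follows: take a countable dense set, use a diagonal argument to find subsequences converging almost surely at all its points, apply the deterministic theorem pathwise, and conclude via the subsequence characterisation of convergence in probability.

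Your route differs. You transplant the mechanism of the deterministic proof directly into probability: an upper bound from convex hulls of grid vertices, a lower bound by reflection, then a stochastically bounded Lipschitz constant $L_n=O_p(1)$ in the role of the equi-Lipschitz constant, and finally a finite net. This gives a self-contained, elementary proof that makes the source of uniformity explicit and avoids subsequence extraction. The subsequence route is shorter, but it hands all the convex analysis to Rockafellar's theorem.

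Two small points should be tightened.

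\textbf{The reflection step.} The grid point $\bm g$ must be taken within $\delta/2$ of $\bm\theta$, for instance the nearest vertex of a cube containing $\bm\theta$. With cubes of diameter $\delta$, an arbitrary vertex only guarantees $2\bm g-\bm\theta\in K_{3\delta}$. That point may lie outside the union of cubes on which your upper bound $\lambda_n\le\max_G\lambda_n$ holds.

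\textbf{Step~4.} Your Lipschitz bound was derived only for pairs of points in $K$, but the approximating grid point is only placed in $K_\delta$. There are two easy fixes.
\begin{enumerate}
\item Take any finite $\varepsilon/(B+L)$-net of $K$ lying inside $K$. No refinement of $G$ is needed, because $L_n=2\sup_{K_\delta}|\lambda_n|/\delta$ does not depend on the grid, so $B$ is unaffected. This is the correct reason your choice of $B$ survives; it is not that a finer grid improves the bounds.
\item Note that your Lipschitz argument also works on $K_\delta$. It only needs an upper bound on the $\delta$-enlargement, which you have on $K_{2\delta}$, and a lower bound on the base set, which you have on $K_\delta$.
\end{enumerate}
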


\subsubsection{Covariance of $\hat{\bm{\beta}}$ Under Missingness}
\label{app:cov_beta}

In this appendix, we deliver a collection of results that will be then used to determine the asymptotic distribution of the estimator $\hat{\bm{\beta}}$ under two different regimes for the rate of decay of the dependence structure of the error process $\{\varepsilon_i\}_{i \in \mathbb{Z}}$, more specifically the \textit{short-memory} and \textit{long-memory} regimes (see Appendices \ref{app:short_memory} and \ref{app:long_memory} respectively). For this purpose, Lemma \ref{lemma:beta_cov} determines the form of the covariance matrix of $\hat{\boldsymbol{\beta}}$ defined in \eqref{eq.beta_cov_miss}, i.e., ~$\bm{\Phi}_n$, \textit{under missingness}. For this we also provide Proposition \ref{prop:theo_wv} which provides the theoretical form of the WV under missingness which is a component of the covariance matrix $\bm{\Phi}_n$. Throughout this appendix, we consider the error process $\{\varepsilon_{i}\}_{i \in \mathbb{Z}}$ to be stationary, $\alpha$-mixing and with $\mathbb{E}(\varepsilon_i) = 0$.

To start, let $\hat{\boldsymbol{\varepsilon}}$ be the estimated error process defined as $\hat{\boldsymbol{\varepsilon}} = {\tilde{\Y}} -\tilde{\X} \hat{\boldsymbol{\beta}}$ in Section~\ref{sec:inference}, and denote $\X_{i \cdot} \in \real^p$ the $i^{\text{th}}$ row of the design matrix $\X \in \real^{n \times p}$. We now recall the assumption for which we consider the following Markov chain-based model for missingness patterns:
\begin{Assumption}[Missingness]\label{ass:missing}
    Let the missing observation process $\Z = \{Z_i\}_{i = 1}^n \subseteq \{0, 1\}$ be a stationary Markov chain in the form of 
    \begin{equation}\label{eq:markov_missinigness}
    Z_i =
\begin{cases}
    Z_{i-1}, & \text{with probability } \rho, \\
    W_i \sim \mathrm{Bernoulli}(\mu(\bm{\vartheta})), & \text{with probability } 1-\rho,
\end{cases}
\end{equation}
    with $\rho \in [0, 1)$ and $\mu(\bm{\bm{\vartheta}}) \in (0,1]$. Then the expected value $\mu(\bm{\bm{\vartheta}}) = \mathbb E[Z_i]$ is such that
    $$\mu(\bm{\bm{\vartheta}})\,\lambda_{\min}(\X^{\trans} \X) \geq C_{x}\, \alpha_n \, \max_i (\|X_{i \cdot}\|_{2}^2)\,n^{1/2}\,\log (n),$$
    where $C_{x} > 0$ is some constant and $\alpha_n > 0$ is any slowly diverging sequence.
    In addition, assume that $\Z$ is independent of the error process $\{\varepsilon_i\}_{i \in \mathbb{Z}}$.
\end{Assumption}

The implication of this assumption is essentially that the average proportion of missing observations does not affect the invertibility of the matrix $\tilde{\X}^{\trans} \tilde{\X}$,
which is needed to compute $\hat{\bm{\beta}}$ and all quantities derived from it (including its covariance matrix). 

This assumption guarantees (asymptotically) that this matrix is indeed invertibile.

Based on this, we deliver the following lemma, which provides the form of the asymptotic covariance matrix of $\hat{\bm{\beta}}$ defined in \eqref{eq.beta_cov_miss} in the presence of the missing observation process.

\begin{Lemma}\label{lemma:beta_cov}
    Suppose \Cref{ass:missing} holds. Then, as $n \rightarrow \infty$, we have that
\begin{align}
    \hat{\bm{\beta}} - \bm{\beta} = (1+o_p(1)) \mu(\bm{\vartheta})^{-1}( \X^{\trans} \X)^{-1} \X^{\trans} \tilde{\bm{\varepsilon}},
\end{align}
and hence we have that
    $$\bm{\Phi}_n = \var\left(\mu(\bm{\vartheta})^{-1}(\X^{\trans}\X )^{-1} \X^{\trans} \tilde{\bm{\varepsilon}}\right) = \mu(\bm{\vartheta})^{-2}(\X^{\trans}\X)^{-1} \X^{\trans} \bm{\Sigma}(\bm{\gamma}, \bm{\vartheta}) \X (\X^{\trans}\X)^{-1} , 
    $$
    where $\bm{\Sigma}(\bm{\gamma}, \bm{\vartheta}) = [\bm{\Lambda}(\bm{\vartheta}) + \mu(\bm{\vartheta})^2 \bm{1}\bm{1}^{\trans}] \odot \bm{\Sigma}(\bm{\gamma})$.
\end{Lemma}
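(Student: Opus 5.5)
We start from the exact identity for the masked least-squares estimator. Since $\tilde{\Y} = \tilde{\X}\bm{\beta} + \tilde{\bm{\varepsilon}}$ with $\tilde{\bm{\varepsilon}} = \Z\odot\bm{\varepsilon}$, and $\tilde{\X}^{\trans}\tilde{\bm{\varepsilon}} = \X^{\trans}\tilde{\bm{\varepsilon}}$ (because $Z_i^2 = Z_i$), we have
\begin{equation*}
\hat{\bm{\beta}}-\bm{\beta} = \tilde{\bm{\Pi}}_n^{-1}\X^{\trans}\tilde{\bm{\varepsilon}}, \qquad \tilde{\bm{\Pi}}_n = \sum_{i=1}^n Z_i \X_{i\cdot}\X_{i\cdot}^{\trans}.
\end{equation*}
The whole first claim therefore reduces to showing that $\tilde{\bm{\Pi}}_n = \mu(\bm{\vartheta})\bm{\Pi}_n(\I + o_p(1))$ in a relative operator-norm sense. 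Equivalently, we need $\|\bm{\Pi}_n^{-1/2}\tilde{\bm{\Pi}}_n\bm{\Pi}_n^{-1/2}/\mu(\bm{\vartheta}) - \I\|_{\op} = o_p(1)$. Once this holds, $\tilde{\bm{\Pi}}_n^{-1} = \mu(\bm{\vartheta})^{-1}\bm{\Pi}_n^{-1/2}(\I+o_p(1))\bm{\Pi}_n^{-1/2}$, which yields the stated representation $(1+o_p(1))\mu(\bm{\vartheta})^{-1}\bm{\Pi}_n^{-1}\X^{\trans}\tilde{\bm{\varepsilon}}$. The factor $(1+o_p(1))$ is read as a matrix perturbation of the identity.

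To control the perturbation we write
\begin{equation*}
\bm{\Pi}_n^{-1/2}(\tilde{\bm{\Pi}}_n - \mu\bm{\Pi}_n)\bm{\Pi}_n^{-1/2} = \sum_{i=1}^n \M_i, \qquad \M_i = (Z_i-\mu)\Q_{i\cdot}\Q_{i\cdot}^{\trans},
\end{equation*}
a sum of centred, self-adjoint $p\times p$ random matrices. The Markov chain in \Cref{ass:missing} with $\rho<1$ has transition kernel $\rho\I + (1-\rho)\bm{1}\pi^{\trans}$. It is therefore geometrically ergodic, and its $\beta$-mixing coefficients satisfy $\beta(\ell)\le \rho^{\ell}$, which is of the form $\exp(1-c\ell)$. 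Each term is bounded by $\|\M_i\|_{\op}\le \|\Q_{i\cdot}\|_2^2 \le \max_i\|\X_{i\cdot}\|_2^2/\lambda_{\min}(\X^{\trans}\X) =: D$.

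We then apply the matrix Bernstein inequality of \Cref{lemma:matrix_bernstein}. For the variance proxy, $\iota^2 n$ is bounded by a constant (depending on $\rho$) times $D\|\sum_i \Q_{i\cdot}\Q_{i\cdot}^{\trans}\|_{\op} = D$. The constant comes from summing the geometrically decaying covariances $\mathrm{cov}(Z_i,Z_j)=\mu(1-\mu)\rho^{|i-j|}$, and $\sum_i \Q_{i\cdot}\Q_{i\cdot}^{\trans} = \I$. Choosing $x = \epsilon\mu$, the exponent becomes of order $\epsilon^2\mu^2/(D + D^2 + \epsilon\mu D\log^2 n)$. \Cref{ass:missing} gives $\mu/D \ge C_x\alpha_n n^{1/2}\log n$, so this exponent diverges and the probability tends to zero. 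This matrix concentration step is the main obstacle. The delicate parts are getting the correct scale of $\iota^2$ under dependence and checking that the $\gamma(c,n)\sim\log^2 n$ factor is absorbed by the assumed rate. If the $\log^2 n$ is not fully absorbed, I would fall back on a cruder Chebyshev bound on the Frobenius norm using the same covariance summation, which only needs $\mu/D\to\infty$ at rate $n^{1/2}$ up to logs.

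For the covariance formula, we use the leading term $\mu^{-1}\bm{\Pi}_n^{-1}\X^{\trans}\tilde{\bm{\varepsilon}}$. We compute $\var(\tilde{\bm{\varepsilon}})$ using independence of $\Z$ and $\bm{\varepsilon}$ and $\mathbb{E}\bm{\varepsilon}=\bm 0$:
\begin{equation*}
\mathbb{E}[Z_iZ_j\varepsilon_i\varepsilon_j] = \mathbb{E}[Z_iZ_j]\,\mathbb{E}[\varepsilon_i\varepsilon_j] = \{\bm{\Lambda}(\bm{\vartheta})_{ij}+\mu^2\}\,\bm{\Sigma}(\bm{\gamma})_{ij}.
\end{equation*}
Hence $\var(\tilde{\bm{\varepsilon}}) = \bm{\Sigma}(\bm{\gamma},\bm{\vartheta})$. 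The sandwich form of $\bm{\Phi}_n$ then follows by linearity, since $\X$ is deterministic.
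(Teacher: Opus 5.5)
Your proof follows essentially the paper's route. The paper also applies \Cref{lemma:matrix_bernstein} to the centred masked Gram matrix, but unnormalized, with $\M_i=(Z_i-\mu)\X_{i\cdot}\X_{i\cdot}^{\trans}$. It then lower-bounds $\lambda_{\min}(\tilde{\bm{\Pi}}_n)$ by Weyl's inequality, uses $\|\mu\tilde{\bm{\Pi}}_n^{-1}\bm{\Pi}_n-\I\|_{\op}\le\lambda_{\min}^{-1}(\tilde{\bm{\Pi}}_n)\|\tilde{\bm{\Pi}}_n-\mu\bm{\Pi}_n\|_{\op}$, and ends with the same Hadamard-product computation of $\var(\tilde{\bm{\varepsilon}})$. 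Normalizing by $\bm{\Pi}_n^{-1/2}$ first is a small gain. It yields $\bm{\Pi}_n^{1/2}(\hat{\bm{\beta}}-\bm{\beta})=\{\I+o_p(1)\}\mu^{-1}\sum_i\tilde{\varepsilon}_i\Q_{i\cdot}$ directly, which is the form invoked in \Cref{theorem:short_memory}. The paper's perturbation instead acts in the original coordinates. The two readings of $(1+o_p(1))$ are related by a similarity with $\bm{\Pi}_n^{1/2}$, so they agree only up to a factor $\{\lambda_{\max}(\bm{\Pi}_n)/\lambda_{\min}(\bm{\Pi}_n)\}^{1/2}$.

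One step is wrong as written, though it does not break the argument. Since $\iota^2$ is a supremum over all index sets $\mathcal{K}$, taking a singleton already gives $\iota^2\ge\mu(1-\mu)\max_i\|\Q_{i\cdot}\|_2^4$. So your bound $\iota^2 n\le CD$ fails in general, e.g.\ whenever the largest leverage is of order $D$ with $nD\to\infty$, which \Cref{ass:missing} allows. The correct general bound, from $\cov(Z_i,Z_j)=\mu(1-\mu)\rho^{|i-j|}$, is $\iota^2\le CD^2$; this is the normalized analogue of the paper's $\iota^2\le C_z\max_i\|\X_{i\cdot}\|_2^4$. The argument still closes with it. With $x=\epsilon\mu$ the exponent is of order $\epsilon^2\mu^2/(nD^2+D^2+\epsilon\mu D\log^2 n)$. \Cref{ass:missing} gives $\mu^2/(nD^2)\ge C_x^2\alpha_n^2\log^2 n\to\infty$ and $\mu/(D\log^2 n)\ge C_x\alpha_n n^{1/2}/\log n\to\infty$, which is exactly what the factor $n^{1/2}\log n$ in the assumption is there for.

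Your Chebyshev fallback is actually the more elementary route, and it needs less than you suggest. Using $\sum_i\|\Q_{i\cdot}\|_2^2=p$, you get $\mathbb{E}\|\sum_i\M_i\|_F^2=\mu(1-\mu)\sum_{i,j}\rho^{|i-j|}(\Q_{i\cdot}^{\trans}\Q_{j\cdot})^2\le\mu(1-\mu)\frac{1+\rho}{1-\rho}\sum_i\|\Q_{i\cdot}\|_2^4\le\mu(1-\mu)\frac{1+\rho}{1-\rho}\,pD$. The relative error is therefore $O_p\{(pD/\mu)^{1/2}\}$, so only $D/\mu\to0$ is required, with no mixing-based matrix inequality at all.
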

\begin{Remark}
    \Cref{lemma:beta_cov} provides the explicit form of the covariance matrix for the approximation of $\hat{\bm{\beta}} - \bm{\beta}$. Later in this section, we demonstrate that, under mild conditions on both the design matrix and the error process, and with appropriate rescaling, $\hat{\bm{\beta}}$ is uniformly tight, i.e., ~$O_p(1)$. Consequently, after suitable rescaling, $\bm{\Phi}_n$'s limit represents the asymptotic variance (the precise forms under two dependence regimes are provided later). %
\end{Remark}

Before we prove Lemma \ref{lemma:beta_cov}, we firstly state and prove additional lemmas that will be needed. The first additional lemma concerns the properties of the missing observation process $\Z$. For example, in \cite{xu2021online} it was shown that the missingness pattern of \eqref{eq:markov_missinigness} is geometrically ergodic, i.e., ~the $\beta$-mixing coefficients satisfying $\beta_Z(\ell) \leq \exp(1-c\ell)$ for some constant $c > 0$ only depending on $\rho$. In the context of this work, we note that $\{Z_i\}_{i = 1}^n$ defined in \eqref{eq:markov_missinigness} is stationary, such that
\[
\mathbb{E}[Z_i] = \rho\mathbb{E}[Z_{i-1}] + (1 - \rho)\mu(\bm{\bm{\vartheta}}),
\]
which leads to
\[
\mu(\bm{\bm{\vartheta}}) = \mathbb E[Z_i].
\]
Note also, when $\rho = 0$, \eqref{eq:markov_missinigness} reduces to a model of i.i.d.~Bernoulli random variables. With these definitions we have the following auxiliary lemma.

\begin{Lemma}\label{lemma:ergodic}
    Let the missingness sequence $\Z = (Z_1, \dots, Z_n)^{\trans}$ satisfy \Cref{ass:missing}. Then $\Z$ is a geometric ergodic Markov chain, and its $\beta$-mixing coefficients satisfy $\beta_{Z}(\ell) \leq \rho^{\ell} = \exp(1-c\ell)$, for some constant $c > 0$ only depending on $\rho$. 
\end{Lemma}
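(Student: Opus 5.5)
The plan is to compute the transition kernel of $\Z$ explicitly and then bound the $\beta$-mixing coefficients through the total-variation distance between the $\ell$-step kernel and the stationary law. Write $\mu = \mu(\bm{\vartheta})$ and let $\pi = \mathrm{Bernoulli}(\mu)$. From \eqref{eq:markov_missinigness}, conditionally on $Z_{i-1}=z$, the next state equals $z$ with probability $\rho$ and is an independent Bernoulli$(\mu)$ draw with probability $1-\rho$. Hence the one-step kernel is
\[
P(z,\cdot) = \rho\,\delta_z + (1-\rho)\,\pi .
\]
This is a mixture of the identity and the stationary law, so $\pi P = \pi$; this confirms stationarity, consistent with the computation $\mathbb{E}[Z_i]=\mu$ already noted before the statement. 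I will then show by induction on $\ell$ that
\[
P^{\ell}(z,\cdot) = \rho^{\ell}\,\delta_z + (1-\rho^{\ell})\,\pi .
\]
The induction step is one line: apply $P$ to each piece, using $\delta_z P = \rho\,\delta_z + (1-\rho)\pi$ and $\pi P = \pi$. Equivalently, the chain "forgets" its past as soon as a refresh occurs, and the probability that no refresh has happened in $\ell$ steps is $\rho^\ell$.

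From this representation, $\|P^{\ell}(z,\cdot)-\pi\|_{TV} = \rho^{\ell}\|\delta_z-\pi\|_{TV} \le \rho^{\ell}$ for every $z$. This uniform geometric convergence is geometric ergodicity; in fact it is uniform ergodicity, via Doeblin's condition with minorizing measure $(1-\rho)\pi$. Next I invoke the standard identity for stationary Markov chains, that the $\beta$-mixing coefficient between $\sigma(Z_s,s\le t)$ and $\sigma(Z_s,s\ge t+\ell)$ reduces, by the Markov property, to
\[
\beta_Z(\ell) = \mathbb{E}_{\pi}\|P^{\ell}(Z_0,\cdot)-\pi\|_{TV}
\]
(see \citealp{bradley2005basic}). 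Combining the two gives $\beta_Z(\ell)\le \rho^{\ell}$. More explicitly, the quantity equals $\rho^\ell\cdot 2\mu(1-\mu)$ under the convention $\|\cdot\|_{TV}=\sup_A|\cdot|$, which is even smaller. If one prefers to avoid citing the Markov reduction, a coupling argument works directly. Run the chain and an independent stationary copy with shared refresh times and shared Bernoulli draws after the first common refresh. The two past/future sigma-fields then become independent once a refresh has occurred in $(t,t+\ell]$, which fails with probability $\rho^\ell$.

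Finally, I convert to the stated form. When $\rho\in(0,1)$, set $c=\log(1/\rho)>0$, so $\rho^\ell=\exp(-c\ell)\le \exp(1-c\ell)$, and $c$ depends only on $\rho$. When $\rho=0$, $\Z$ is i.i.d., so $\beta_Z(\ell)=0$ for $\ell\ge1$ and any $c>0$ works. The statement's "$\rho^\ell = \exp(1-c\ell)$" should therefore be read as the inequality $\rho^\ell\le\exp(1-c\ell)$, which is the form needed to invoke Lemma~\ref{lemma:matrix_bernstein}.

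The main obstacle is the Markov reduction of the $\beta$-coefficient, that is, justifying that the supremum over partitions of the whole past and future collapses to the one-step-conditioned total variation. I would handle it through the coupling construction above, since it bounds $\beta$ directly by the non-coupling probability $\rho^\ell$. Everything else is elementary.
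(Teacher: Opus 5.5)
Your argument is correct. It rests on the same structural fact the paper uses: the one-step kernel contains the component $(1-\rho)\pi$, i.e.\ the chain refreshes from its stationary law with probability $1-\rho$. You exploit that fact differently, however.

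The paper only verifies Doeblin's minorization $P(x,A)\ge(1-\rho)\,\mathbb{P}(B\in A)$, via the representation $Z_i=(1-U)Z_{i-1}+UB$. It then appeals to Theorem 3.7 of Bradley (2005) for geometric ergodicity and exponentially decaying $\beta$-coefficients. The specific rate $\rho^{\ell}$ is left implicit there. It would follow from the standard Doeblin coupling bound $\|P^{\ell}(x,\cdot)-\pi\|_{TV}\le(1-\epsilon)^{\ell}$ with $\epsilon=1-\rho$, combined with the Markov reduction of $\beta$ to a single total-variation distance. Neither of these steps is written out in the paper.

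You instead iterate the kernel in closed form, $P^{\ell}(z,\cdot)=\rho^{\ell}\delta_z+(1-\rho^{\ell})\pi$. From this you compute the $\beta$-coefficient directly, even exactly as $2\mu(1-\mu)\rho^{\ell}$ under the paper's definition. Alternatively, you bound it by the no-refresh probability through the shared-innovation coupling, which is valid and avoids citing the Markov reduction.

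What each route buys:
\begin{itemize}
\item Your route is self-contained and actually delivers the stated constant.
\item Your reading of ``$\rho^{\ell}=\exp(1-c\ell)$'' as the inequality $\rho^{\ell}\le\exp(1-c\ell)$, with $c=\log(1/\rho)$ and the i.i.d.\ case $\rho=0$ handled separately, is a correct repair of the statement. Taken literally, the equality cannot hold for all $\ell$.
\item The paper's minorization route is less explicit but more portable. It needs no closed-form $\ell$-step kernel, so it would survive generalizations of the missingness model where $P^{\ell}$ cannot be computed explicitly.
\end{itemize}
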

\begin{proof}
    Note that $\{Z_i\}$ is a stationary Markov chain with state space $\mathcal{X} = \{0, 1\}$. By Theorem 3.7 in \cite{bradley2005basic}, it suffices to verify Doeblin’s condition \citep[e.g., ][]{rosenthal1995convergence}, which holds on the transition probability $\mathbb P(\cdot, \cdot)$. Indeed, if there exists an $\epsilon > 0$ and a probability measure $\eta(\cdot)$, such that for all $x \in \mathcal{X}$ and measurable subsets $A \subseteq \mathcal{X}$, then we have that
    $\mathbb P(x, A) \geq \epsilon\,\eta(A)$. With this, we notice that we can write \eqref{eq:markov_missinigness} equivalently as
    \[
        Z_i = (1-U)\cdot Z_{i-1} + U\cdot B,
    \]
    where $U \sim \mathrm{Bernoulli}(1-\rho)$, $B \sim \mathrm{Bernoulli}(\mu(\bm{\bm{\vartheta}}))$ and $\{Z_i\}$ are mutually independent. We have that for any $A \subseteq \mathcal{X}$
    \begin{align*}
        \mathbb P(x, A) =& \mathbb P(Z_i \in A| Z_{i-1} = x)\\
        =& \mathbb P(Z_i \in A, U = 1| Z_{i-1} = x)\\
        &+ \mathbb P(Z_i \in A, U = 0| Z_{i-1} = x)\\
        \geq& \mathbb P(Z_i \in A, U = 1| Z_{i-1} = x)\\
        =& \mathbb P(B \in A, U = 1)\\
        =& (1-\rho)\mathbb P(B \in A).
    \end{align*}
    Thus, we have $\epsilon = 1-\rho > 0$, which concludes the proof.
\end{proof}

The following auxiliary lemma on the other hand provides a probabilistic guarantee on the structure of the design matrix under missingness (ensuring that the estimator $\hat{\bm{\beta}}$ can be computed with high probability in these settings).

\begin{Lemma}\label{lemma:restricted_eigen}
    Suppose Assumption~\ref{ass:missing} holds. Then with probability $1 - c\,n^{-3}$
    \begin{align*}
    \lambda_{\min}(\tilde{\X}^{\trans} \tilde{\X}) \geq C_{\tilde{x}}\,\mu(\bm{\vartheta})\,\lambda_{\min}(\X^{\trans} \X),
\end{align*}
where $c > 0$ and $C_{\tilde{x}} \in (0,1)$ are absolute constants.
\end{Lemma}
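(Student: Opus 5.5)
Since $Z_i^2 = Z_i$, we can write $\tilde{\X}^{\trans}\tilde{\X} = \sum_{i=1}^n Z_i \X_{i\cdot}\X_{i\cdot}^{\trans}$, whose expectation is $\mu(\bm{\vartheta})\X^{\trans}\X$. The plan is to show that this sum concentrates around its mean in operator norm, using the matrix Bernstein inequality for $\beta$-mixing sequences (Lemma~\ref{lemma:matrix_bernstein}). We then convert the concentration into an eigenvalue lower bound via Weyl's inequality (Lemma~\ref{lemma:weyl_ineq}). The rate condition in Assumption~\ref{ass:missing} is exactly what makes the fluctuation negligible relative to $\mu(\bm{\vartheta})\lambda_{\min}(\X^{\trans}\X)$.

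Concretely, set $\M_i = (Z_i - \mu(\bm{\vartheta}))\X_{i\cdot}\X_{i\cdot}^{\trans}$. These are self-adjoint $p\times p$ matrices with $\mathbb{E}[\M_i]=\bm{0}$ and $\|\M_i\|_{\op} \le \max_i\|\X_{i\cdot}\|_2^2 =: D$. Each $\M_i$ is a deterministic function of $Z_i$, so its $\beta$-mixing coefficients are bounded by those of $\Z$. By Lemma~\ref{lemma:ergodic} these satisfy $\beta(\ell)\le \exp(1-c\ell)$, so the hypotheses of Lemma~\ref{lemma:matrix_bernstein} hold.

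For the variance proxy $\iota^2$, I would bound $\|\mathbb{E}(\sum_{i\in\mathcal{K}}\M_i)^2\|_{\op}$ by expanding the square. The cross terms involve $\mathrm{cov}(Z_i,Z_j) = \mu(1-\mu)\rho^{|i-j|}$, which is summable in $|i-j|$. Each product $\X_{i\cdot}\X_{i\cdot}^{\trans}\X_{j\cdot}\X_{j\cdot}^{\trans}$ has norm at most $D^2$. Together these give $\iota^2 \lesssim D^2/(1-\rho)$. A cruder bound of this type, $\iota^2 n \lesssim D^2 n$, is enough.

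Plugging into Lemma~\ref{lemma:matrix_bernstein} with $x = C' D\sqrt{n}\log n$ gives the following. The denominator is of order $D^2 n + D^2 + D^2\sqrt{n}\log^3 n$, which is $O(D^2 n)$. The exponent is therefore at least of order $C'^2\log^2 n$, which exceeds $3\log n + \log p$ for large $n$. Hence
$$\|\tilde{\X}^{\trans}\tilde{\X} - \mu(\bm{\vartheta})\X^{\trans}\X\|_{\op} \le C' D\sqrt{n}\log n$$
with probability at least $1 - c\,n^{-3}$.

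Weyl's inequality then yields
$$\lambda_{\min}(\tilde{\X}^{\trans}\tilde{\X}) \ge \mu(\bm{\vartheta})\lambda_{\min}(\X^{\trans}\X) - C' D\sqrt{n}\log n.$$
Assumption~\ref{ass:missing} gives $D\sqrt{n}\log n \le \mu(\bm{\vartheta})\lambda_{\min}(\X^{\trans}\X)/(C_x\alpha_n)$. Since $\alpha_n\to\infty$, the subtracted term is at most $(1-C_{\tilde{x}})\mu(\bm{\vartheta})\lambda_{\min}(\X^{\trans}\X)$ for large $n$, for any fixed $C_{\tilde{x}}\in(0,1)$ (e.g.\ $1/2$). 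This gives the claim.

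The main obstacle is the bookkeeping in the Bernstein denominator. The term $xD\gamma(c,n)$ carries $\log^2 n$ factors, so the choice of $x$ must keep that term dominated by $D^2 n$ while still producing the $n^{-3}$ tail. If the $\log^3 n$ term turned out to be problematic, I would enlarge $x$ by a further $\log$ factor. The slowly diverging $\alpha_n$ in the assumption provides slack for such adjustments.
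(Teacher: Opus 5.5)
Your proposal is correct and follows essentially the same route as the paper: the matrix Bernstein inequality for $\beta$-mixing sequences (Lemma~\ref{lemma:matrix_bernstein}) is applied to $\M_i=(Z_i-\mu(\bm{\vartheta}))\X_{i\cdot}\X_{i\cdot}^{\trans}$ with $\|\M_i\|_{\op}\le\max_i\|\X_{i\cdot}\|_2^2$ and $\iota^2$ of order $\max_i\|\X_{i\cdot}\|_2^4$, followed by Weyl's inequality and the rate condition in Assumption~\ref{ass:missing}, which yields the factor $1-C_{\op}C_x^{-1}\alpha_n^{-1}$. Your explicit bookkeeping fills in details the paper leaves implicit, namely bounding $\iota^2$ through $\cov(Z_i,Z_j)=\mu(\bm{\vartheta})\{1-\mu(\bm{\vartheta})\}\rho^{|i-j|}$ and checking that the $xD\gamma(c,n)\asymp D^2\sqrt{n}\log^3 n$ term is dominated by $D^2 n$.
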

\begin{proof}[Proof of \Cref{lemma:restricted_eigen}]
    It follows from Assumption~\ref{ass:missing} that $\Z$ is a strictly stationary, finite state Markov chain. Then, Theorem 3.7 in \cite{bradley2005basic} implies that $\{Z_i\}_{i = 1}^n$ is a geometric ergodic sequence, i.e., ~its $\beta$-mixing coefficients with lag $k$ satisfy the condition $\beta_Z(k) \leq \exp(1-c\,k)$. We apply Lemma~\ref{lemma:matrix_bernstein} with $\M_i = (Z_i - \mu(\bm{\vartheta}))\X_{i \cdot}\X_{i \cdot}^{\trans}$. Note that
$$\max_i\|(Z_i - \mu(\bm{\vartheta}))\X_{i \cdot}\X_{i \cdot}^{\trans}\|_{\op} \leq \max_i \{\|\X_{i \cdot}\X_{i \cdot}^{\trans}\|_{\op}\} = \max_i\{\|\X_{i \cdot}\|_2^2\},$$
and
$$\iota^2 \leq C_{z}\max_i\{\|\X_{i \cdot}\X_{i \cdot}^{\trans}\|_{\op}^2\} = C_{z}\max_i\{\|\X_{i \cdot}\|_2^4\},$$
where $C_{z} > 0$ is a constant only depending on the Markov model of $\Z$. By Lemma~\ref{lemma:matrix_bernstein}, we have that with probability at least $1 - p\,n^{-3}$
\begin{align}\label{eq:event1}
    \|\tilde{\X}^{\trans} \tilde{\X} -\mu(\bm{\vartheta}) \X^{\trans} \X\|_{\op} \leq  C_{\op}\max_i\{\|\X_{i \cdot}\|_{2}^2\}\,n^{1/2}\,\log n,
\end{align}
for some constant $C_{\op} > 0$.
By Lemma~\ref{lemma:weyl_ineq}, we have that
\begin{align*}
    \lambda_{\min}(\tilde{\X}^{\trans} \tilde{\X}) \geq & \,\mu(\bm{\vartheta})\,\lambda_{\min}(\X^{\trans} \X) - \|\tilde{\X}^{\trans} \tilde{\X} - \mu(\bm{\vartheta}) \X^{\trans} \X\|_{\op}\\
    \geq & \,\mu(\bm{\vartheta})\,\lambda_{\min}(\X^{\trans} \X) - C_{\op}\,\max_i\|\X_{i \cdot}\|_{2}^2\,n^{1/2}\,\log n\\
    \geq & \, (1 - C_{\op}\,C^{-1}_{x}\,\alpha_n^{-1})\,\mu(\bm{\vartheta})\,\lambda_{\min}(\X^{\trans} \X),
\end{align*}
where the last inequality follows from Assumption \ref{ass:missing}.
\end{proof}

With this additional lemma we can now proceed to the proof of \Cref{lemma:beta_cov}. 

\begin{proof}[Proof of \Cref{lemma:beta_cov}]
Assume that $\tilde{\X}^{\trans} \tilde{\X}$ is non-singular, which holds with high probability due to \Cref{lemma:restricted_eigen}. Based on this, the least-squares estimator satisfies the following:
\begin{align}\label{eq:beta}
    \hat{\bm{\beta}} - \bm{\beta} =& (\tilde{\X}^{\trans} \tilde{\X})^{-1}\tilde{\X}^{\trans}\tilde{\Y} - \bm{\beta} = (\tilde{\X}^{\trans} \tilde{\X})^{-1}\tilde{\X}^{\trans}(\tilde{\X}\bm{\beta}+ \Z \odot \bm{\varepsilon}) - \bm{\beta}\nonumber\\
    =& (\tilde{\X}^{\trans} \tilde{\X})^{-1}\tilde{\X}^{\trans}(\Z \odot \bm{\varepsilon})\nonumber\\
    =& (\tilde{\X}^{\trans} \tilde{\X})^{-1}\tilde{\X}^{\trans}\tilde{\bm{\varepsilon}}\nonumber\\
    =& \left\{(\tilde{\X}^{\trans} \tilde{\X})^{-1} - \mu(\bm{\vartheta})^{-1}( \X^{\trans}\X)^{-1}\right\} \X^{\trans} \tilde{\bm{\varepsilon}} + \mu(\bm{\vartheta})^{-1}(\X^{\trans}\X)^{-1}\X^{\trans} \tilde{\bm{\varepsilon}} \nonumber\\ 
    =& (1+o_p(1)) \mu(\bm{\vartheta})^{-1}( \X^{\trans} \X)^{-1} \X^{\trans} \tilde{\bm{\varepsilon}},
\end{align}
where the last equality follows from the following fact. Since $\|(\A^{-1} - \B^{-1}) \B\|_{\op} \leq \lambda_{\min}^{-1}(\A)\|\A - \B\|_{\op}$, we have that
\begin{align}\label{eq:rev_ratio}
    &\left\|\left\{(\tilde{\X}^{\trans} \tilde{\X})^{-1} - \mu(\bm{\vartheta})^{-1}(\X^{\trans} \X)^{-1}\right\}\mu(\bm{\vartheta}) \X^{\trans}\X\right\|_{\op}\nonumber\\
    \leq& \lambda^{-1}_{\min}(\tilde{\X}^{\trans} \tilde{\X})\left\|\tilde{\X}^{\trans} \tilde{\X} - \mu(\bm{\vartheta}) \X^{\trans} \X\right\|_{\op}\nonumber\\
    =& O_p\left((1 - C_{\op}C^{-1}_{\mathrm{gap}}\alpha_n^{-1})^{-1}\mu(\bm{\vartheta})^{-1}\lambda^{-1}_{\min}(\X^{\trans} \X)\,C_{\op}\,\max_i\{\|\X_{i \cdot}\|_{2}\}\, n^{1/2}\, \log n\right)\nonumber\\
    =& O_p\left(C_{\op}C^{-1}_{\mathrm{gap}}\alpha^{-1}_n(1 - C_{\op}C^{-1}_{\mathrm{gap}}\alpha_n^{-1})^{-1}\right)\nonumber\\
    =& o_p(1),
\end{align}
where the first equality follows from \eqref{eq:event1} in the proof of \Cref{lemma:restricted_eigen}, and the second equality follows from \Cref{ass:missing}. 
Therefore, we have $$\hat{\bm{\beta}} - \bm{\beta} - \mu(\bm{\vartheta})^{-1}(\X^{\trans}\X)^{-1} \X^{\trans} \tilde{\bm{\varepsilon}} = o_p(1)\,\mu(\bm{\vartheta})^{-1}(\X^{\trans} \X)^{-1} \X^{\trans} \tilde{\bm{\varepsilon}} \overset{p}{\to} 0.$$ 
In addition, we have
\begin{align*}
    \bm{\Phi}_n =&  
 \var\left(\mu(\bm{\vartheta})^{-1}(\X^{\trans} \X )^{-1}\X^{\trans} \tilde{\bm{\varepsilon}} \right)\\
    =&
    \mu(\bm{\vartheta})^{-2}\mathbb E\left[(\X^{\trans} \X)^{-1}\tilde{\X}^{\trans} \bm{\Sigma}(\bm{\gamma})  \tilde{\X} ( \X^{\trans} \X)^{-1} \right]\\
    =&  \mu(\bm{\vartheta})^{-2}\mathbb E\left[(\X^{\trans}\X)^{-1}(\Z \otimes \bm{1}_p^{\trans} \odot \X)^{\trans} \bm{\Sigma}(\bm{\gamma})  (\Z \otimes \bm{1}_p^{\trans} \odot \X) ( \X^{\trans} \X)^{-1} \right]\\
    =& \mu(\bm{\vartheta})^{-2}(\X^{\trans} \X)^{-1} \X^{\trans} \left\{\mathbb{E}[\Z\Z^{\trans}] \odot \bm{\Sigma}(\bm{\gamma})\right\}  \X ( \X^{\trans} \X)^{-1}\\
    =& \mu(\bm{\vartheta})^{-2}(\X^{\trans} \X)^{-1} \X^{\trans} \left\{(\bm{\Lambda}(\bm{\vartheta}) + \mu(\bm{\vartheta})^2\bm{1}\bm{1}^{\trans}) \odot \bm{\Sigma}(\bm{\gamma})\right\} \X ( \X^{\trans} \X)^{-1}.
\end{align*}
\end{proof}
To obtain the finite sample covariance matrix $\bm{\Phi}_n$, we need the corresponding form of the theoretical WV in order to obtain $\hat{\boldsymbol{\gamma}}$ and then use $\boldsymbol{\Sigma}(\hat{\boldsymbol{\gamma}})$ in the expression of the covariance of $\hat{\boldsymbol{\beta}}$. For this reason, in the appendix section below we put forward a general form of the theoretical WV for a zero-mean stochastic process (as is the case of the error process $\bm{\varepsilon}$) as well as of its covariance matrix. 

\subsubsection{Theoretical WV Forms}
\label{app:wv_theo_forms}

Let $\h_{j} = \left(h_{j,L_j - 1}, \hdots, h_{j, 0}\right)^{\trans} \in \real^{L_j}$ be a generic wavelet filter vector at scale $j \in \mathbb{N}^+$ of length $L_j$ and let $M_j$ represent the number of wavelet coefficients issued from the wavelet decomposition at scale $j$. This allows us to define 
\begin{equation}
    \mathbf{A}_j = \sum_{l=0}^{L_j-1}\frac{h_{j,l}}{M_j} \H_{j,l},
    \label{eq:filt_mat}
\end{equation}
where $\H_{j,l}$ is a $n \times n$ matrix such that its $n \times (n-L_j+1)$ partitioned matrix with its $1^{\text{st}}$ to $n^{\text{th}}$ rows and its $(L_j-l)^{\text{th}}$ to $(n-l)^{\text{th}}$ columns is given by $$\begin{pmatrix} h_{j,L_j-1} & 0           & \ldots & 0    \\ h_{j,L_j-2} & h_{j,L_j-1} & \ldots & 0     \\  h_{j,L_j-3} & h_{j,L_j-2} & \ldots & 0       \\ \vdots      & \vdots      & \vdots & \vdots      \\  \ldots      & \ldots      & \ldots & h_{j,L_j-1}  \\ \vdots      & \vdots      & \vdots & \vdots      \\ 0           & 0           & 0      & h_{j,0} \end{pmatrix}.$$
With the above definitions, we can deliver the general form of the theoretical WV for a zero-mean stochastic process in the following proposition.

\begin{Proposition}
\label{prop:general_form_theo_wv}
    Assume that $\bm{\varepsilon} = (\varepsilon_1, \varepsilon_2, \ldots, \varepsilon_n)^{\trans} \in \real^n$ is a zero-mean stochastic process with covariance matrix $\bm{\Sigma}(\bm{\gamma}) \in \real^{n \times n}$ parametrized by $\bm{\gamma} \in \real^q$, then the theoretical WV at scale $j$ is given by
    $$\nu_j(\bm{\gamma}) = \tr[\A_j \bm{\Sigma}(\bm{\gamma})].$$
    \label{prop:theo_wv}
\end{Proposition}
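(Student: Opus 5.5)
The plan is to write the WV at scale $j$ as the average of the expected squared wavelet coefficients, $\nu_j = M_j^{-1}\sum_{t}\mathbb{E}[W_{j,t}^2]$, and then identify this average with a trace. The empirical WV uses the (MODWT-type) coefficients $W_{j,t} = \sum_{l=0}^{L_j-1} h_{j,l}\,\varepsilon_{t-l}$ for $t = L_j,\dots,n$, so that $M_j = n-L_j+1$. Since $\bm{\varepsilon}$ has zero mean, each $W_{j,t}$ is a zero-mean linear form $\mathbf{w}_t^{\trans}\bm{\varepsilon}$, where $\mathbf{w}_t\in\real^n$ carries the filter values $h_{j,l}$ in positions $t-l$. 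Hence
\[
\mathbb{E}[W_{j,t}^2]=\mathbf{w}_t^{\trans}\bm{\Sigma}(\bm{\gamma})\mathbf{w}_t=\tr\{\mathbf{w}_t\mathbf{w}_t^{\trans}\bm{\Sigma}(\bm{\gamma})\}.
\]

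Summing over $t$ and using linearity of the trace gives
\[
\nu_j(\bm{\gamma})=\tr\Big\{\Big(M_j^{-1}\sum_t \mathbf{w}_t\mathbf{w}_t^{\trans}\Big)\bm{\Sigma}(\bm{\gamma})\Big\}.
\]
It then remains to show that $M_j^{-1}\sum_t\mathbf{w}_t\mathbf{w}_t^{\trans}=\A_j$ as defined in \eqref{eq:filt_mat}. Expanding $\mathbf{w}_t\mathbf{w}_t^{\trans}$ entrywise gives $h_{j,l}h_{j,l'}$ at position $(t-l,t-l')$. Grouping the terms by the first filter index $l$ gives $\sum_l h_{j,l}\sum_t \mathbf{e}_{t-l}\mathbf{w}_t^{\trans}$. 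For fixed $l$, the matrix $\sum_t \mathbf{e}_{t-l}\mathbf{w}_t^{\trans}$ stacks the filter vectors with a shift of $l$, i.e.\ its nonzero block is the banded Toeplitz matrix with entries $h_{j,L_j-1},\dots,h_{j,0}$ displayed above. This is precisely $\H_{j,l}$, possibly up to transposition. Because the trace satisfies $\tr\{\H^{\trans}\bm{\Sigma}\}=\tr\{\H\bm{\Sigma}\}$ for symmetric $\bm{\Sigma}$, the orientation convention is immaterial.

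The main obstacle is purely bookkeeping. I must check that the index placement in $\H_{j,l}$ matches the paper's convention: the filter is reversed in $\h_j$, the column block runs from $L_j-l$ to $n-l$, and the partitioned block has $n\times(n-L_j+1)$ dimensions. The step is to verify, for a single pair $(l,l')$, that $\tr\{\H_{j,l}\bm{\Sigma}\}$ produces $\sum_t h_{j,l}h_{j,l'}\Sigma_{t-l,t-l'}$ summed over exactly the $M_j$ valid indices. I would first carry out this check for the Haar filter with $L_j=2$, which fixes the convention, and then argue that the general case follows identically by linearity. No stationarity assumption is needed, since only second moments of linear forms of $\bm{\varepsilon}$ are used.
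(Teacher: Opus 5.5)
Your proposal is correct and follows the paper's proof. Your $\sum_t \mathbf{w}_t\mathbf{w}_t^{\trans}$ is the paper's $\B_j^{\trans}\B_j$ (with $\B_j$ the $M_j\times n$ matrix whose rows are the shifted filter vectors), and the paper likewise takes the expectation of $M_j^{-1}\bm{\varepsilon}^{\trans}\B_j^{\trans}\B_j\bm{\varepsilon}$ via cyclicity of the trace, without stationarity; the paper simply sets $\A_j = M_j^{-1}\B_j^{\trans}\B_j$ and never reconciles it with \eqref{eq:filt_mat}, and your extra check holds exactly, since grouping $\mathbf{w}_t\mathbf{w}_t^{\trans}=\sum_l h_{j,l}\mathbf{w}_t\mathbf{e}_{t-l}^{\trans}$ by the second factor gives $\H_{j,l}=\sum_t \mathbf{w}_t\mathbf{e}_{t-l}^{\trans}$, hence $\sum_l h_{j,l}\H_{j,l}=\sum_t\mathbf{w}_t\mathbf{w}_t^{\trans}$ with no transposition caveat needed.
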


\begin{proof}
    The proof of this proposition is a result of the expectation of a quadratic form of the WV for a zero-mean error process such as $\bm{\varepsilon}$. Indeed, recalling that the $t^{\text{th}}$ wavelet coefficient at the $j^{\text{th}}$ scale is defined as 
\begin{equation*}
    W_{j, t} = \sum_{l=0}^{L_j-1}h_{j,l}\varepsilon_{t-l},
\end{equation*}
with $t \in \{L_j, \ldots, n\}$, we can write the vector of $M_j$ wavelet coefficients at scale $j$ as:
\begin{equation*}
\begin{aligned}
        \W_j &= \begin{pmatrix}
W_{j, L_j}\\
W_{j, L_j+1}\\
W_{j, L_j+2}\\
\vdots \\
W_{j, n}
\end{pmatrix}
=
\underbrace{\begin{pmatrix}
h_{j, L_j-1} & h_{j, L_j-2} & \ldots & h_{j,0} & 0 & 0 &  \ldots & 0 \\
0 & h_{j, L_j-1} & h_{j, L_j-2} & \ldots & h_{j,0} & 0 &  \ldots & 0\\
0 & 0 & h_{j, L_j-1} & h_{j, L_j-2} & \ldots & h_{j,0}  &  \ldots & 0\\
0 & 0 & 0 &\ldots  &  h_{j, L_j-1} & h_{j, L_j-2} & \ldots & h_{j,0}\\
\end{pmatrix}}_{=\B_j}
\underbrace{\begin{pmatrix}
\varepsilon_1 \\
\varepsilon_2 \\
\vdots\\
\varepsilon_n
\end{pmatrix}}_{\boldsymbol{\varepsilon}}
\end{aligned}
\end{equation*}
where $\B_j$ is an $M_j \times n$ matrix that contains the Haar wavelet filters. Using this expression of the wavelet coefficients, we can rewrite the theoretical wavelet variance as:
\begin{equation*}
  \begin{aligned}
        \boldsymbol{\nu}_j &= \mathbb{E}\left[  \frac{1}{M_j}\sum_{i=L_j}^n W_{j, i}^2\right]       =\mathbb{E}\left[  \frac{1}{M_j} \W_j^{\trans} \W_j \right] \\
        &= \frac{1}{M_j} \mathbb{E} \left[(\B_j \boldsymbol{\varepsilon})^{\trans} \B_j \boldsymbol{\varepsilon}\right] = \frac{1}{M_j} \mathbb{E} \left[ \boldsymbol{\varepsilon}^{\trans} \B_j^{\trans}  \B_j \boldsymbol{\varepsilon} \right] \\
        &= \frac{1}{M_j} \mathbb{E} \left[ \tr \left\{\boldsymbol{\varepsilon}^{\trans} \B_j^{\trans}  \B_j \boldsymbol{\varepsilon} \right\}\right] =  \frac{1}{M_j} \mathbb{E} \left[ \tr \left\{
       \B_j^{\trans}   \B_j  \boldsymbol{\varepsilon} \boldsymbol{\varepsilon}^{\trans}  \right\}\right] \\
       &=   \frac{1}{M_j}  \tr \left\{
       \B_j^{\trans}   \B_j  \mathbb{E} \left[  \boldsymbol{\varepsilon} \boldsymbol{\varepsilon}^{\trans}  \right] \right\} = \tr[\A_j \bm{\Sigma}(\bm{\gamma})]
  \end{aligned}
\end{equation*}
where $\A_j = \frac{1}{M_j} \B_j^{\trans}   \B_j $.
\end{proof}
Using this result together with the covariance of $\tilde{\boldsymbol{\varepsilon}}$, we obtain:

\begin{equation}
    \nu_j\left(\boldsymbol{\gamma},{\boldsymbol{\vartheta}} \right) = \text{tr}\left\{\mathbf{A}_j \left[\boldsymbol{\Sigma}\left(\boldsymbol{\gamma}\right) \odot \left\{\boldsymbol{\Lambda}(\boldsymbol{\vartheta}) + \mu(\boldsymbol{\vartheta})^2 \mathbf{1} \mathbf{1}^{\trans} \right\}\right] \right\}.
    \label{eq:wv_missing_in_appendix}
\end{equation}

Building on the quadratic form representation derived above, the empirical
wavelet variance at scale $j$ can be written as
\begin{equation*}
    \hat{\nu}_j
    =\frac{1}{M_j}\sum_{i=L_j}^n W_{j, i}^2 = 
    \frac{1}{M_j}\mathbf{W}_j^{\trans}\mathbf{W}_j
    =
    \boldsymbol{\varepsilon}^{\trans}\mathbf{A}_j\boldsymbol{\varepsilon}.
\end{equation*}
Let $\mathbf{V} = \var(\hat{\boldsymbol{\nu}})$ denote the covariance matrix of
the empirical wavelet variance across scales. The $(j,l)^{\text{th}}$ element of
$\mathbf{V}$ is therefore given by
\begin{equation*}
    u_{j,l}
    =
    \cov\left(
    \boldsymbol{\varepsilon}^{\trans}\mathbf{A}_j\boldsymbol{\varepsilon},
    \boldsymbol{\varepsilon}^{\trans}\mathbf{A}_l\boldsymbol{\varepsilon}
    \right).
\end{equation*}

Assuming that $\boldsymbol{\varepsilon}$ is a zero-mean Gaussian process with
covariance matrix $\boldsymbol{\Sigma}(\boldsymbol{\gamma})$, the covariance of
two quadratic forms yields the closed-form expression
\begin{equation}
\label{eq:cov_wv_appendix}
    u_{j,l}
    =
    2 \tr\left\{
    \mathbf{A}_j \boldsymbol{\Sigma}(\boldsymbol{\gamma})
    \mathbf{A}_l \boldsymbol{\Sigma}(\boldsymbol{\gamma})
    \right\}.
\end{equation}

\subsubsection{Short-Memory Regime}
\label{app:short_memory}

The following assumption specifies the short-memory regime on the error process.

\begin{Assumption}[Short-Memory Error Process]\label{ass:epsilon}
    Let $\{\varepsilon_{i}\}_{i \in \mathbb{Z}}$ be the error process that is strictly stationary and $\alpha$-mixing and satisfying one of the following conditions: \begin{itemize}
        \item[(i)]
        $\mathbb{E}|\varepsilon_i|^{\delta} < \infty \quad \text{and} \quad \sum_{k \geq 1}\alpha_k^{1 - 2/\delta} < \infty \quad \text{for some constant} \quad \delta > 2$.
        \item[(ii)]
        $\mathbb{P}(|\varepsilon_i| < C) = 1 \quad \text{for some constant} \quad C > 0, \quad \text{and} \quad \sum_{k \geq 1}\alpha_k < \infty$.
    \end{itemize}
\end{Assumption}
The above assumption is commonly used to ensure the validity of the classical central limit theorem under temporal dependence (e.g., \Cref{lemma:mixing_clt}) and the existence of the associated asymptotic variance (e.g., \Cref{lemma:mixing_mement_ineq}).

Next, we state the assumptions which place conditions on the behavior of the design matrix $\X$ and its derived quantities.

\begin{Assumption}[Fixed Design]\label{ass:X}
    Let $\X \in \real^{n \times p}$ be a deterministic covariate process. 
    Suppose there exist absolute constants $\{\tau_{jk}\}_{1 \leq j,k \leq p}$ such that as $n \to \infty$
    \begin{itemize}
        \item $\frac{1}{n}\sum_{i = 1}^nX_{ij}^2 \to \tau_{jj} > 0$, for any $j$;
        \item $\frac{1}{n}\sum_{i = 1}^nX_{ij}X_{ik} \to \tau_{jk}$, for any $j,k$;
        \item $\max_i(\X_{i \cdot}^{\trans} \D_{n}^{-1} \X_{i \cdot}) = o(1)$.
    \end{itemize}
    Moreover, suppose there exists a nondegenerate deterministic matrix $\C \in \mathbb R^{p \times p}$ such that
$$\C = \lim_{n \to \infty} \C_n = \lim_{n \to \infty}\D_n^{-1/2}\X^{\trans}\X\D_n^{-1/2},$$
where $\D_n$ is defined in \eqref{eq:D_n}.
\end{Assumption}

\begin{Assumption}\label{ass:lagk}
There exists, for each $k \in \mathbb{Z}$, $\bm{\Psi}_k \in \real^{p \times p}$, such that
    \begin{align*}
        \lim_{n \to \infty}\sum_{i = 1}^{n - |k|}\Q_{i \cdot}\Q_{i+k \cdot}^{\top} = \bm{\Psi}_k.
    \end{align*}
\end{Assumption}

The above two assumptions are relatively weak and essentially ensure that the design matrix is well defined in order to obtain stable quantities for deriving the asymptotic distribution of $\hat{\bm{\beta}}$. In particular, deterministic regressors may be rescaled without loss of generality, since any deterministic rescaling can be absorbed into the corresponding regression coefficient. This includes trend regressors, which may be normalized by the observation span in the asymptotic analysis.

\begin{Theorem}[Asymptotic Normality -- Short-memory]\label{theorem:short_memory}
Under Assumptions \ref{ass:epsilon} and \ref{ass:X} we have that
    \begin{align*}
         \bm{\Pi}_n^{1/2}(\hat{\bm{\beta}} - \bm{\beta}) = (1+o_p(1))\mu(\bm{\vartheta})^{-1}\sum_{i = 1}^n \varepsilon_{i}\tilde{\Q}_{i \cdot} = O_p(1).
    \end{align*}
In addition, if \Cref{ass:lagk} holds, then we have that
\begin{align}\label{eq:apply_CLT}
    \tilde{\bm{\Pi}}_n^{1/2}(\hat{\bm{\beta}} - \bm{\beta}) \overset{d}{\to} \mathcal{N}(\bm{0}, \bar{\bm{\Phi}}),
\end{align}
where $\bar{\bm{\Phi} }= \sum_{k \in \mathbb{Z}}\mathbb{E}[Z_0Z_k]\mathbb{E}[\varepsilon_{0}\varepsilon_{k}]\bm{\Psi}_k$. Consequently,
\begin{align*}
    \D_n^{1/2}(\hat{\bm{\beta}} - \bm{\beta}) 
    \xrightarrow{\mathcal{D}} \mu(\bm{\vartheta})^{-1}\C^{-1/2}\mathcal{N}(\bm{0}, \bar{\bm{\Phi}}).
\end{align*}
\end{Theorem}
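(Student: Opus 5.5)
The plan is to start from the linearization in \Cref{lemma:beta_cov}, $\hat{\bm{\beta}} - \bm{\beta} = (1+o_p(1))\mu(\bm{\vartheta})^{-1}(\X^{\trans}\X)^{-1}\X^{\trans}\tilde{\bm{\varepsilon}}$, and premultiply by $\bm{\Pi}_n^{1/2}$. Since $\bm{\Pi}_n^{-1/2}\X^{\trans}\tilde{\bm{\varepsilon}} = \sum_{i=1}^n Z_i\varepsilon_i \bm{\Pi}_n^{-1/2}\X_{i\cdot} = \sum_{i=1}^n \varepsilon_i\tilde{\Q}_{i\cdot}$, this gives the first identity directly. Here the $(1+o_p(1))$ factor must be understood in operator norm, after conjugation by $\bm{\Pi}_n^{1/2}$. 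This transfer follows from \eqref{eq:rev_ratio}, since $\bm{\Pi}_n^{1/2}$ commutes appropriately with $\mu(\bm{\vartheta})\X^{\trans}\X$.

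For tightness, I would bound the second moment. Using independence of $\Z$ and $\bm{\varepsilon}$,
\[
\mathbb{E}\Big\|\sum_i \varepsilon_i\tilde{\Q}_{i\cdot}\Big\|_2^2 = \sum_{i,k}\mathbb{E}[Z_iZ_k]\,\mathbb{E}[\varepsilon_i\varepsilon_k]\,\Q_{i\cdot}^{\trans}\Q_{k\cdot}.
\]
By Cauchy--Schwarz, $|\Q_{i\cdot}^{\trans}\Q_{k\cdot}|\le \frac12(\|\Q_{i\cdot}\|^2+\|\Q_{k\cdot}\|^2)$, and $\sum_i\|\Q_{i\cdot}\|_2^2 = \tr(\I_p)=p$. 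Then $\sum_k|\mathbb{E}[\varepsilon_0\varepsilon_k]|<\infty$ by \Cref{lemma:mixing_mement_ineq} under \Cref{ass:epsilon}, so the sum is $O(1)$ and Chebyshev gives $O_p(1)$.

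For the CLT I would use the Cramér--Wold device. Fix $\mathbf{a}\in\real^p$ and study $S_n=\sum_i Z_i\varepsilon_i\,\mathbf{a}^{\trans}\Q_{i\cdot}$. The process $\{Z_i\varepsilon_i\}$ is a function of two independent sequences. By \Cref{lemma:mixing_indep_comp} its $\alpha$-mixing coefficients are bounded by $\alpha^{\varepsilon}_k+\beta_Z(k)/2$, and the geometric term from \Cref{lemma:ergodic} preserves the summability conditions of \Cref{ass:epsilon}. The product also has the same moment bounds, since $|Z_i|\le 1$. So $\{Z_i\varepsilon_i\}$ is a stationary, $\alpha$-mixing, mean-zero sequence with autocovariance $\mathbb{E}[Z_0Z_k]\mathbb{E}[\varepsilon_0\varepsilon_k]$. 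Its variance then satisfies $\var(S_n)=\sum_{|k|<n}\mathbb{E}[Z_0Z_k]\mathbb{E}[\varepsilon_0\varepsilon_k]\,\mathbf{a}^{\trans}\sum_i\Q_{i\cdot}\Q_{i+k\cdot}^{\trans}\mathbf{a}\to \mathbf{a}^{\trans}\bar{\bm{\Phi}}\mathbf{a}$. This uses \Cref{ass:lagk} termwise together with dominated convergence, valid because each lag-$k$ design sum is bounded by $p$ and the covariances are absolutely summable.

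The main obstacle is that \Cref{lemma:mixing_clt} covers unweighted sums, whereas $S_n$ is a triangular array with deterministic weights $w_{n,i}=\mathbf{a}^{\trans}\Q_{i\cdot}$. These weights satisfy $\sum_i w_{n,i}^2=O(1)$ and $\max_i|w_{n,i}|\to0$, the latter following from the Lindeberg-type condition $\max_i \X_{i\cdot}^{\trans}\D_n^{-1}\X_{i\cdot}=o(1)$ in \Cref{ass:X} together with nondegeneracy of $\C$. I would handle this with a Bernstein big-block/small-block argument. Partition the indices into big blocks of length $r_n$ separated by small blocks of length $s_n$, with $s_n/r_n\to0$ and $r_n\max_i|w_{n,i}|\to0$. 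The small blocks are negligible in $L^2$ by the covariance summability. The big blocks are asymptotically independent by the $\alpha$-mixing coupling inequality. Lyapunov's condition for the big blocks then follows from the $\delta$-moment bound and the maximal weight condition. This yields $\tilde{\bm{\Pi}}_n^{1/2}$-normalized normality once one uses $\tilde{\bm{\Pi}}_n=\mu(\bm{\vartheta})\bm{\Pi}_n(1+o_p(1))$ from \eqref{eq:event1}, with Slutsky absorbing the $\mu(\bm{\vartheta})^{-1}$ rescaling (I will track the constant carefully here). Finally, writing $\D_n^{1/2}(\hat{\bm{\beta}}-\bm{\beta})=(\D_n^{-1/2}\bm{\Pi}_n\D_n^{-1/2})^{-1/2}$ times a rotation of $\bm{\Pi}_n^{1/2}(\hat{\bm{\beta}}-\bm{\beta})$, with $\C_n\to\C$, gives the last display by the continuous mapping theorem.
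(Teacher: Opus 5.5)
For the CLT you take a different route from the paper. Your linearization and tightness steps are essentially the paper's; your bound $|\mathbb{E}[Z_iZ_k]|\le 1$ even avoids passing through the mixing of $Z_i\varepsilon_i$ there. The paper conditions on $\Z$ and treats the masked design $\tilde{\X}$ as fixed. It checks $\max_i(\tilde{\X}_{i\cdot}^{\trans}\tilde{\bm{\Pi}}_n^{-1}\tilde{\X}_{i\cdot})^{1/2}=o(1)$ via the matrix Bernstein and Weyl inequalities, and then appeals to Theorem~1(i) of \cite{hannan1973asymptotic}.

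You instead work unconditionally. $\{Z_i\varepsilon_i\}$ is stationary and $\alpha$-mixing by \Cref{lemma:mixing_indep_comp} and \Cref{lemma:ergodic}. Cram\'er--Wold reduces everything to scalar sums with deterministic weights $w_{n,i}=\mathbf{a}^{\trans}\Q_{i\cdot}$, whose maximum vanishes by the leverage condition in \Cref{ass:X}. The covariance $\bar{\bm{\Phi}}$ then comes out directly from \Cref{ass:lagk}.

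This buys an unconditional limit and works directly under \Cref{ass:epsilon}. The conditional route yields a $\Z$-dependent covariance that must still be shown to converge, which the paper leaves implicit. The price is that you must carry out a weighted CLT for mixing sequences yourself, and two points in the blocking sketch need care:
\begin{itemize}
\item Summable covariances only give $\var(\sum_{i\in S}w_{n,i}Z_i\varepsilon_i)\le c\sum_{i\in S}w_{n,i}^2$. Small blocks must therefore carry a vanishing share of $\sum_i w_{n,i}^2$, which for uneven weights means choosing block positions accordingly.
\item The coupling needs $k_n\alpha_{s_n}\to0$, and this must be balanced against your Lyapunov requirement. Under mere summability of $\alpha_k^{1-2/\delta}$ there is little slack, so citing an existing weighted-sum CLT for strongly mixing sequences is cleaner.
\end{itemize}

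The genuine problem is the final step, where Slutsky is meant to ``absorb the $\mu(\bm{\vartheta})^{-1}$ rescaling''; it cannot. From \eqref{eq:event1} and \Cref{ass:missing}, $\lambda_{\min}(\bm{\Pi}_n)^{-1}\|\tilde{\bm{\Pi}}_n-\mu(\bm{\vartheta})\bm{\Pi}_n\|_{\op}\to0$ in probability. A square-root perturbation bound then gives $\tilde{\bm{\Pi}}_n^{1/2}\bm{\Pi}_n^{-1/2}\to\mu(\bm{\vartheta})^{1/2}\I$. Combined with your first display and $\sum_i\varepsilon_i\tilde{\Q}_{i\cdot}\to\mathcal{N}(\bm{0},\bar{\bm{\Phi}})$, your argument yields $\tilde{\bm{\Pi}}_n^{1/2}(\hat{\bm{\beta}}-\bm{\beta})\to\mathcal{N}(\bm{0},\mu(\bm{\vartheta})^{-1}\bar{\bm{\Phi}})$, not $\mathcal{N}(\bm{0},\bar{\bm{\Phi}})$.

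The stated middle display is itself off by this factor. Take $p=1$, $X_i\equiv1$, $\varepsilon_i$ i.i.d.\ $\mathcal{N}(0,\sigma^2)$ and $Z_i$ i.i.d.\ Bernoulli$(\mu)$ (so $\rho=0$). All assumptions hold, and $\tilde{\Pi}_n^{1/2}(\hat\beta-\beta)=\sum_iZ_i\varepsilon_i/(\sum_iZ_i)^{1/2}$ is exactly $\mathcal{N}(0,\sigma^2)$ given $\Z$. However, $\bar{\Phi}=\mathbb{E}[Z_0^2]\sigma^2=\mu\sigma^2$. So your promise to ``track the constant carefully'' cannot end at the stated limit: you should prove the $\mu(\bm{\vartheta})^{-1}\bar{\bm{\Phi}}$ version.

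The last display must be derived from $\bm{\Pi}_n^{1/2}(\hat{\bm{\beta}}-\bm{\beta})\to\mu(\bm{\vartheta})^{-1}\mathcal{N}(\bm{0},\bar{\bm{\Phi}})$. Deriving it ``consequently'' from the middle display as written would give $\mu(\bm{\vartheta})^{-1/2}$ instead. In the toy example the last display is correct: $\sqrt{n}(\hat\beta-\beta)\to\mathcal{N}(0,\sigma^2/\mu)$.

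Finally, the orthogonal factor $\mathbf{U}=\lim_n\C_n^{1/2}\D_n^{1/2}\bm{\Pi}_n^{-1/2}$ you noticed does not vanish under continuous mapping. The limit is $\mu(\bm{\vartheta})^{-1}\C^{-1/2}\mathbf{U}\,\mathcal{N}(\bm{0},\bar{\bm{\Phi}})$. This matches the display only when $\mathbf{U}\bar{\bm{\Phi}}\mathbf{U}^{\trans}=\bar{\bm{\Phi}}$, for instance when $\bar{\bm{\Phi}}$ is a multiple of the identity.
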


We underline that as $n \to \infty$, it can easily be shown that $\D_n^{1/2}\bm{\Phi}_n\D_n^{1/2} \to \mu(\bm{\vartheta})^{-2}\C^{-1/2} \bar{\bm{\Phi}}\C^{-1/2}$, where $\bm{\Phi}_n$ is given in \Cref{lemma:beta_cov}.

\begin{proof}[Proof of \Cref{theorem:short_memory}]
   
We recall that
\[
\bm{\Pi}_n = \X^{\trans}\X
\]
Combining the expansion in \eqref{eq:beta} with the definition $\Q_{i\cdot}=\bm{\Pi}_n^{-1/2}\X_{i\cdot}$ gives
\begin{align*}
    \bm{\Pi}_n^{1/2}(\hat{\bm{\beta}} - \bm{\beta})
    =& (1+o_p(1)) \mu(\bm{\vartheta})^{-1}\sum_{i =1}^n \tilde{\varepsilon}_i\Q_{i\cdot}.
\end{align*}
We now show that the leading sum is $O_p(1)$. Since $\tilde{\varepsilon}_i=Z_i\varepsilon_i$ and $\{Z_i\}$ is independent of $\{\varepsilon_i\}$, Lemma~\ref{lemma:mixing_indep_comp} and Assumption~\ref{ass:epsilon} imply that the autocovariances $\tilde{\gamma}(k)=\cov(\tilde{\varepsilon}_0,\tilde{\varepsilon}_k)$ are absolutely summable. Hence, for each $j=1,\ldots,p$,
\begin{align*}
\var\left(\sum_{i=1}^n\tilde{\varepsilon}_i Q_{ij}\right)
&=\sum_{i=1}^n\sum_{\ell=1}^n Q_{ij}Q_{\ell j}\tilde{\gamma}(\ell-i)\\
&\leq \sum_{k=-(n-1)}^{n-1}|\tilde{\gamma}(k)|\sum_{i=1}^{n-|k|}|Q_{ij}Q_{i+|k|,j}|\\
&\leq \sum_{k=-\infty}^{\infty}|\tilde{\gamma}(k)| < \infty,
\end{align*}
where the last inequality uses Cauchy's inequality and $\sum_{i=1}^nQ_{ij}^2=1$. Since $p$ is fixed, this proves $\sum_{i=1}^n\tilde{\varepsilon}_i\Q_{i\cdot}=O_p(1)$.

The proof of the limiting distribution of $\sum_{i = 1}^n\tilde\varepsilon_{i}\tilde{\Q}_{i \cdot}$ follows that of Theorem~1(i) in \cite{hannan1973asymptotic}, which relies on the requirement that, if conditioning on the random missingness vector $\Z$, we have
\begin{align}\label{eq:rescaling_condition}
    \max_{i \leq n}(\tilde{\X}_{i \cdot}^{\trans} \tilde{\bm{\Pi}}_n^{-1} \tilde{\X}_{i \cdot})^{1/2} = o(1).
\end{align}

To link the above requirement to \Cref{ass:X},
we will apply \Cref{lemma:matrix_bernstein} to obtain a high probability bound on $\|\tilde{\bm{\Pi}}_n - \mu(\bm{\vartheta}) \bm{\Pi}_n\|_{\op}$ and a lower bound of $\lambda_{\min}(\tilde{\bm{\Pi}}_n)$ using \Cref{lemma:weyl_ineq}. It follows from Theorem 3.7 in \cite{bradley2005basic} that the strictly stationary, finite state Markov chain $\Z = \{Z_i\}$ is geometrically ergodic, i.e., ~$\beta_Z(\ell) \leq \exp(1-c\ell)$ (see Lemma \ref{lemma:ergodic}). We apply \Cref{lemma:matrix_bernstein} with $\M_i = (Z_i - \mu(\bm{\vartheta}))\X_{i \cdot}\X_{i \cdot}^{\trans}$. We have that
$$\max_i\|(Z_i - \mu(\bm{\vartheta}))\X_{i \cdot}\X_{i \cdot}^{\trans}\|_{\op} \leq \max_i\|\X_{i \cdot}\X_{i \cdot}^{\trans}\|_{\op},$$
and
$$\iota^2 \leq C\max_i\|\X_{i \cdot}\X_{i \cdot}^{\trans}\|_{\op}^2,$$
where $C > 0$ is a constant only depending on the Markov model of $Z_i$. By \Cref{lemma:matrix_bernstein}, we have that with probability at least $1 - p\,n^{-3}$
\begin{align}\label{eq:event}
    \|\tilde{\bm{\Pi}}_n - \mu(\bm{\vartheta})\bm{\Pi}_n\|_{\op} \leq C_{\op}\max_i\|\X_{i \cdot}\X_{i \cdot}^{\trans}\|_{\op}n^{1/2}\log n,
\end{align}
for some sufficient large constant $C_{\op} > 0$. The following proof proceeds by conditioning on the high probability event in \eqref{eq:event}. By \Cref{lemma:weyl_ineq}, we have that
\begin{align*}
    \lambda_{\min}(\tilde{\bm{\Pi}}_n) & \geq \mu(\bm{\vartheta})\lambda_{\min}(\bm{\Pi}_n) - \|\tilde{\bm{\Pi}}_n - \mu(\bm{\vartheta})\bm{\Pi}_n\|_{\op}\\
    & \geq \mu(\bm{\vartheta})\lambda_{\min}(\bm{\Pi}_n) - C_{\op}\max_i\|\X_{i \cdot}\X_{i \cdot}^{\trans}\|_{\op}n^{1/2}\log n\\
    & = C^{\prime}\mu(\bm{\vartheta})\lambda_{\min}(\bm{\Pi}_n),
\end{align*}
where the equality follows from \Cref{ass:missing}, i.e., $$\mu(\bm{\bm{\vartheta}})\,\lambda_{\min}(\X^{\trans} \X) \geq C_{\mathrm{gap}} \, \alpha_n \, \max_i\{\|\X_{i \cdot}\|_{2}^2\}n^{1/2} \, \log (n),$$ and $C^{\prime} \in (0,1)$ is an absolute constant. Consequently, we have that
$$\max_{i \leq n}(\tilde{\X}_{i \cdot}^{\trans} \tilde{\bm{\Pi}}_n^{-1} \tilde{\X}_{i \cdot})^{1/2} \leq \max_{i \leq n}(\X_{i \cdot}^{\trans} \tilde{\bm{\Pi}}_n^{-1} \X_{i \cdot})^{1/2} \leq \sqrt{\frac{\max_{i \leq n}|\X_{i \cdot}|_2^2}{C^{\prime}\mu(\bm{\vartheta})\lambda_{\min}(\bm{\Pi}_n)}}  = o(1).$$

\noindent The proof of \eqref{eq:apply_CLT} then follows that of Theorem~1(i) in \cite{hannan1973asymptotic}.
\end{proof}

\subsubsection{Long-Memory Regime}
\label{app:long_memory}

For the long-memory regime we will only consider the one-sided linear process model for the error process $\{\varepsilon_i\}_{i \in \mathbb{Z}}$ which is defined as follows:
\begin{align}\label{eq:error_lp}
    \varepsilon_i = \sum_{k = 0}^{\infty}a_k e_{i-k},
\end{align}
where $\{e_h\}_{h \in \mathbb{Z}}$ are i.i.d.~innovations such that $\mathbb E(e_h) = 0$ and $\var(e_h) = \sigma_{e}^2 < \infty$. The coefficient sequence is defined by $a_k=0$ for $k<0$, by a finite normalization $a_0$ (for instance, $a_0=1$), and, for positive lags, by the asymptotic relation
\begin{align*}
    a_k \sim c_a\,k^{d-1}, \qquad k\to\infty,
\end{align*}
where $c_a$ is some constant and $0 < d < 1/2$. Then, for $\ell \in \mathbb Z$ with $|\ell| \to \infty$, the autocovariance at lag-$\ell$ satisfies:
\[
    \cov(\varepsilon_i, \varepsilon_{i+\ell}) = \sigma_{e}^2\sum_{r = 0}^{\infty}a_r\,a_{r+|\ell|} \asymp C_{a}\frac{\sigma_{e}^2}{|\ell|^{1-2d}},
\]
where $C_{a} > 0$ is a constant and $r$ is a dummy summation index. Note that when $0 < d < 1/2$ we have that $\cov(\varepsilon_i, \varepsilon_{i+\ell})$ is not summable and $\{\varepsilon_i\}_{i \in \mathbb{Z}}$ is a long-memory process. In contrast, when $d < 0$, we have that $\cov(\varepsilon_i, \varepsilon_{i+\ell})$ is summable and $\{\varepsilon_i\}_{i \in \mathbb{Z}}$ is a short-memory process. We refer to \cite{beran2013long} for a comprehensive review of long-memory processes. In this regime, we define an additional assumption.

\begin{Assumption}[Long-Memory Error Process]\label{ass:epsilon_long-memory}
    $\{\varepsilon_{i}\}_{i = 1}^n \subset \real$ is a stationary process in the form of \eqref{eq:error_lp} such that $\mathbb{E}(\varepsilon_{i}) = 0$, $\|\varepsilon_{i}\|_2 < \infty$ and with $0 < d < 1/2$.
\end{Assumption}

With this assumption, which basically formalizes the conditions of the long-memory process we use for this section, we can now deliver the following result.

\begin{Theorem}[Asymptotic Distribution -- Long-Memory]\label{theorem:long_memory}
Let the fixed design $\X$ satisfy \Cref{ass:X}. Let the error process $\{\varepsilon_i\}_{i = 1}^n$ satisfy \Cref{ass:epsilon_long-memory}. Then, with $\D_n = \diag(\X^{\trans}\X)$, $\C$ given in \Cref{ass:X} and for $j \in \{1, \dots, p\}$
\[
g_j(u) = \lim_{n \to \infty}X_{\lceil nu \rceil, j}, \;\; G_j(u) = \frac{g_j(u)}{\sqrt{\int_0^1g_j^2(u)\,\mathrm{d}u}}, \;\;  \G(u) = (G_1(u), \dots, G_p(u))^{\trans}, \; \text{for} \; u \in (0,1],
\]
we have that
\begin{align*}
    n^{-d}\D_n^{1/2}(\hat{\bm{\beta}} - \bm{\beta}) \xrightarrow{\mathcal{D}} \C^{-1}\int_{0}^1 \G(u) \, \mathrm{d}B^{d}(u),
\end{align*}
where $\{B^d(u)\}_{u \geq 0} \subset \real$ is a standard fractional Brownian motion with Hurst index $H = d + 1/2$.
\end{Theorem}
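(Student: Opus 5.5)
The plan is to reduce the statement to a weighted-sum invariance principle and apply Theorem~\ref{thm:clt_long-memory} componentwise, combined with the Cramér--Wold device.

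\textbf{Linearization.} By \Cref{lemma:beta_cov} (or directly, when $\Z\equiv 1$),
$$\hat{\bm{\beta}}-\bm{\beta}=(1+o_p(1))\,\mu(\bm{\vartheta})^{-1}(\X^{\trans}\X)^{-1}\X^{\trans}\tilde{\bm{\varepsilon}},$$
and $(\X^{\trans}\X)^{-1}=\D_n^{-1/2}\C_n^{-1}\D_n^{-1/2}$. It follows that
$$n^{-d}\D_n^{1/2}(\hat{\bm{\beta}}-\bm{\beta})=(1+o_p(1))\,\C_n^{-1}\,\mu(\bm{\vartheta})^{-1}\,n^{-d}\D_n^{-1/2}\X^{\trans}\tilde{\bm{\varepsilon}}.$$
Since $\C_n\to\C$ is nondegenerate by \Cref{ass:X}, Slutsky's lemma reduces the problem to the vector $\mathbf{S}_n=n^{-d}\D_n^{-1/2}\X^{\trans}\tilde{\bm{\varepsilon}}$.

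\textbf{Removing the missingness.} I would split $\tilde\varepsilon_i=\mu(\bm{\vartheta})\varepsilon_i+(Z_i-\mu(\bm{\vartheta}))\varepsilon_i$. The second sequence has autocovariances $\cov(Z_0,Z_k)\,\cov(\varepsilon_0,\varepsilon_k)$. By \Cref{lemma:ergodic}, $\cov(Z_0,Z_k)$ decays geometrically, so these autocovariances are summable. Therefore
$$\var\Big(\sum_i X_{ij}(Z_i-\mu)\varepsilon_i\Big)=O\big(\|\X_{\cdot j}\|_2^2\big)=O(n).$$
After normalizing by $n^{-d}\|\X_{\cdot j}\|_2^{-1}\asymp n^{-d-1/2}$, this contribution is $O_p(n^{-d})=o_p(1)$. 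The leading factor $\mu(\bm{\vartheta})$ cancels the $\mu(\bm{\vartheta})^{-1}$. It then remains to show
$$n^{-d}\D_n^{-1/2}\X^{\trans}\bm{\varepsilon}\xrightarrow{\mathcal D}\int_0^1\G(u)\,\mathrm dB^d(u).$$

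\textbf{Componentwise invariance principle.} By Cramér--Wold, I fix $\mathbf c\in\real^p$ and consider $\sum_j c_j\,n^{-d}\|\X_{\cdot j}\|_2^{-1}\sum_i X_{ij}\varepsilon_i$. Since $\|\X_{\cdot j}\|_2^2/n\to\tau_{jj}=\int_0^1 g_j^2$, this equals, up to a factor $1+o(1)$,
$$n^{-d-1/2}\sum_i f(i/n)\,\varepsilon_i+R_n,\qquad f=\sum_j c_jG_j\mathbbm 1_{(0,1]}.$$
Here $R_n$ collects the discrepancies $\delta_{ij}=X_{ij}-g_j(i/n)$. I would bound
$$\var(R_n)\lesssim n^{-2d-1}\sum_{i,k}|\delta_i||\delta_k|\,|i-k|^{2d-1}+n^{-2d-1}\sum_i\delta_i^2.$$
A Hardy--Littlewood--Sobolev type estimate then gives $\var(R_n)=o(1)$, provided $n^{-1}\sum_i\delta_{ij}^2\to0$. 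This $L^2$ Riemann approximation follows from the definition of $g_j$ as the pointwise limit, under mild regularity such as $g_j$ bounded and a.e.\ continuous, which I will impose implicitly.

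\textbf{Checking the hypotheses of Theorem~\ref{thm:clt_long-memory}.} Condition (i) holds because $f$ is bounded with support in $[0,1]$, and the kernel $|u-v|^{2d-1}$ is integrable on $[0,1]^2$. Consequently $f,f_n^{\pm}\in|\Lambda|^d$; the truncations $f^\pm_{n,k}$ coincide with $f^\pm_n$ for $k\ge n$; and $\|f-f_n\|_{|\Lambda|^d}\to0$ by dominated convergence, given a.e.\ continuity. For condition (ii), the covariance bound $|\mathbb E(\varepsilon_0\varepsilon_\ell)|\le c|\ell|^{2d-1}$ was derived above from $a_k\sim c_ak^{d-1}$. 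Finite-dimensional convergence of $B_n^d$ to fractional Brownian motion for such one-sided linear processes with i.i.d.\ finite-variance innovations is the classical Davydov/Taqqu result; its normalizing constant $C_a\sigma_e^2/\{d(2d+1)\}$ is absorbed into the standard normalization.

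\textbf{Main obstacle.} I expect the hardest step to be condition (i) together with the Riemann-sum replacement of $X_{ij}$ by $g_j(i/n)$, since \Cref{ass:X} only gives moment-type limits and no regularity of $g_j$. I would first try to handle both through the single $L^2$ approximation $n^{-1}\sum_i\delta_{ij}^2\to0$ and the HLS bound. If that is insufficient, I would add a bounded-variation assumption on $g_j$. This covers intercepts, rescaled trends, seasonal terms, Heaviside offsets and exponential relaxations.
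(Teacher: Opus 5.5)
Your proposal is correct and follows the paper's own route. You linearize via Lemma~\ref{lemma:beta_cov}, show that the missingness remainder built from $(Z_i-\mu(\bm{\vartheta}))\varepsilon_i$ is $O_p(n^{-d})$ through the geometric decay of $\cov(Z_0,Z_k)$ (the paper's term $II$), and apply Theorem~\ref{thm:clt_long-memory} to the leading term $n^{-d}\C_n^{-1}\D_n^{-1/2}\X^{\trans}\bm{\varepsilon}$. Your Cram\'er--Wold reduction and the explicit variance bound for the discrepancy $X_{ij}-g_j(i/n)$ make rigorous what the paper only asserts: it claims that $\|G_{n_b}-G_b\|_{L_2[0,1]}\to 0$ follows from Assumption~\ref{ass:X}, and it applies the scalar Pipiras--Taqqu theorem directly to a vector. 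One side remark is inaccurate: seasonal regressors are not covered, because $\sin(2\pi f_h t_i)$ with fixed $f_h$ has no pointwise limit $g_j(u)$, so they fall outside the theorem's setting altogether.
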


\begin{Remark}[Gaussianity of the limit distribution]
    We note that the standard fractional Brownian motion $B^d(\cdot)$ is a centered Gaussian process.  Note also $\G: [0,1] \to \real^p$  is a deterministic vector-valued function and is squared integrable.  We have 
    \[
    \mathcal{L}(\G) = \int_{0}^1 \G(u) \, \mathrm{d}B^{d}(u)
    \]
    is a linear functional of a Gaussian process. Thus, $\mathcal{L}(\G)$ is multivariate normal with mean zero and variance
    \[
    \var\left(\mathcal{L}(\G)\right) = 2d(d+1/2)\int_{0}^1\int_0^1\G(u)(\G(v))^{\trans}|u - v|^{2d-1}\,\mathrm{d}u\,\mathrm{d}v.
    \]
    See \cite{pipiras2001classes} for details.
\end{Remark}

\begin{proof}
       Recall from \eqref{eq:beta} that
    the least squares estimator satisfies the following
\begin{align*}
    \hat{\bm{\beta}} - \bm{\beta} 
    =& (1+o_p(1)) \, \mu(\bm{\vartheta})^{-1}( \X^{\trans} \X )^{-1} \X^{\trans}\tilde{\bm{\varepsilon}}.
\end{align*}
Then
\begin{align*}
    \D_n^{1/2}(\hat{\bm{\beta}} - \bm{\beta}) 
    =& o_p(1) \mu(\bm{\vartheta})^{-1}(\D_n^{-1/2} \X^{\trans} \X \D_n^{-1/2})^{-1} \D_n^{-1/2}\X^{\trans} \tilde{\bm{\varepsilon}}\\
    &+ \mu(\bm{\vartheta})^{-1}(\D_n^{-1/2} \X^{\trans}  \X\D_n^{-1/2})^{-1}\D_n^{-1/2}\X^{\trans} \tilde{\bm{\varepsilon}}\\
    =& o_p(1) \mu(\bm{\vartheta})^{-1}\C_n^{-1} \D_n^{-1/2}\X^{\trans} \tilde{\bm{\varepsilon}} + \mu(\bm{\vartheta})^{-1}\C_n^{-1}\D_n^{-1/2}\X^{\trans} \tilde{\bm{\varepsilon}}\\
    =& \underbrace{o_p(1) \mu(\bm{\vartheta})^{-1} \C_n^{-1} \G_n \tilde{\bm{\varepsilon}}}_{I} + \underbrace{\mu(\bm{\vartheta})^{-1} \C_n^{-1}\G_n \left\{\tilde{\bm{\varepsilon}} - \mu(\bm{\vartheta})\bm{\varepsilon}\right\}}_{II} + \underbrace{\C_n^{-1}\G_n  \bm{\varepsilon}}_{III},
\end{align*}
where we let $\C_n = \D_n^{-1/2}\X^{\trans}\X \D_n^{-1/2} \in \mathbb R^{p \times p}$ and $\G_n = \D_n^{-1/2}\X^{\trans} \in \mathbb R^{p \times n}$. 
For $b = 1, \dots, p$, define the function $G_{n_b}: \mathbb R \mapsto \mathbb R$, such that, for $i = 1, \dots, n$
$$G_{n_b}\left(\frac{i}{n}\right) = \frac{X_{i,b}}{\sqrt{\frac{1}{n}\sum_{l = 1}^nX_{l,b}^2}},$$
which can be represented by the piece-wise constant function
$$G_{n_b}(u) = \begin{cases}
    G_{n_b}\left(\frac{i}{n}\right), & \frac{i-1}{n} < u \le \frac{i}{n}, \quad i=1,\ldots,n,\\
    0 \quad &\text{otherwise}
\end{cases}.$$
\Cref{ass:X} implies that as $n \to \infty$
\[
\|G_{n_i} - G_i\|_{L_2[0,1]} \to 0,
\]
where $G_i(u) = g_i(u)/\sqrt{\int_0^1g_i^2(u)\,\mathrm{d}u}$.

It is straightforward to see that term $I$ is dominated by term $III$. We start by showing $II = O_p(1)$. Note that $II$ is a $p$-dimensional vector with fixed $p$, so it suffices to show that for each $b \in \{1, \dots, p\}$, the $b^{\text{th}}$ entry $[\G_n (\tilde{\bm{\varepsilon}} - \bm{\varepsilon})]_b = O_p(1)$.
Since $\tilde{\varepsilon}_i=Z_i\varepsilon_i$, we have
\begin{align*}
    [\G_n \{\tilde{\bm{\varepsilon}}-\mu(\bm{\vartheta})\bm{\varepsilon}\}]_b = \frac{1}{\sqrt{n}}\sum_{i = 1}^n G_{n_b}\left(\frac{i}{n}\right) (Z_i - \mu(\bm{\vartheta}))\varepsilon_i,
\end{align*}
and the lag-$k$ autocovariance of the process $\{(Z_i - \mu(\bm{\vartheta}))\varepsilon_i\}$ is
\begin{align*}
    &\left|\cov\left[(Z_i - \mu(\bm{\vartheta}))\varepsilon_i, (Z_{i+k} - \mu(\bm{\vartheta}))\varepsilon_{i+k}\right]\right|\\
    &\quad = \left|\mathbb E\left[ (Z_i - \mu(\bm{\vartheta}))(Z_{i+k} - \mu(\bm{\vartheta}))\right] \mathbb E\left(\varepsilon_i \varepsilon_{i+k} \right)\right|\\
    &\quad \leq \sigma_\varepsilon^2 \left|\cov\left[Z_i - \mu(\bm{\vartheta}), Z_{i+k} - \mu(\bm{\vartheta})\right]\right|,
\end{align*}
where $\sigma_{\varepsilon}^2 = \var\left(\varepsilon_i\right) < \infty$ is the marginal variance of $\bm{\varepsilon}$. Thus, we have for each $b~\in~\{1, \dots, p\}$,
\begin{align*}
    &\var\left([\G_n \{\tilde{\bm{\varepsilon}}-\mu(\bm{\vartheta})\bm{\varepsilon}\}]_b\right)\\
    &\quad = \frac{1}{n}\sum_{i=1}^n\sum_{\ell=1}^n
    G_{n_b}\left(\frac{i}{n}\right)G_{n_b}\left(\frac{\ell}{n}\right)
    \cov\left[(Z_i-\mu(\bm{\vartheta}))\varepsilon_i,(Z_\ell-\mu(\bm{\vartheta}))\varepsilon_\ell\right]\\
    &\quad \leq \frac{C}{n}\sum_{i=1}^n\sum_{\ell=1}^n
    \left|G_{n_b}\left(\frac{i}{n}\right)G_{n_b}\left(\frac{\ell}{n}\right)\right|\rho_Z^{|i-\ell|}\\
    &\quad \leq C\, \sup_n\left\{\frac{1}{n}\sum_{i=1}^nG_{n_b}\left(\frac{i}{n}\right)^2\right\}
    \sum_{k=-\infty}^{\infty}\rho_Z^{|k|}=O(1),
\end{align*}
for some $\rho_Z\in(0,1)$ and finite constant $C>0$, where the last line follows from Cauchy's inequality, the geometric covariance decay of $\{Z_i\}$ and \Cref{ass:X}. Hence $II=O_p(1)$ and therefore $n^{-d}II=o_p(1)$.

We then consider term $III$, which dominates the other two terms and thus provides the limiting distribution. For this purpose, we consider the long-memory regime of $\{\varepsilon_i\}$, i.e., ~$0 < d < 1/2$, for which we verify the condition of \Cref{thm:clt_long-memory}. Hereinafter we define $G_{n_i}(u) = f_n(u)$ and $G_i(u) = f(u)$, for $j = 1, \dots, p$. Since $G_{n_j}(u) < \infty$ and the support of $G_{n_j}(u)$ is $(0, 1]$ by definition, Condition (i) of  \Cref{thm:clt_long-memory} is satisfied. For Condition (ii) of  \Cref{thm:clt_long-memory}, note that $B_n^d(u)$ is a partial sum process of a stationary process with long-range dependence. The convergence to a standard fractional Brownian motion is satisfied by many examples, such as Gaussian processes \cite[see Theorem~4.2 in][]{beran2013long} and the linear processes \cite[see Theorem~4.6 in][]{beran2013long}. Hence, after proper rescaling, by \Cref{thm:clt_long-memory} we have
\[
n^{-d}\C_n^{-1}\G_n  \bm{\varepsilon} \xrightarrow{\mathcal{D}}\C^{-1}\int_{0}^1 \G(u) \, \mathrm{d}B^{d}(u),
\]
where $\{B^d(u)\}_{u \geq 0}$ is a standard fractional Brownian motion.  Therefore we have that
\begin{align*}
    n^{-d}\D_n^{1/2}(\hat{\bm{\beta}} - \bm{\beta}) \xrightarrow{\mathcal{D}} \C^{-1}\int_{0}^1 \G(u) \, \mathrm{d}B^{d}(u),
\end{align*}
which concludes the proof.
\end{proof}

\subsubsection{Consistency of $\hat{\boldsymbol{\nu}}$}
\label{app:consistency_nu_hat}

This appendix discusses the consistency of the WV estimator $\hat{\boldsymbol{\nu}} = (\hat{{\nu}}_1, \dots, \hat{{\nu}}_J)^{\trans}$ for some finite $J \in \mathbb{N}^+$, which is computed on the \textit{observable} residuals $\hat{\boldsymbol{\varepsilon}}= \tilde\Y-\tilde{\X}\hat{\bm{\beta}}$. More specifically, we denote by $\tilde{\bm{\nu}} = (\tilde{{\nu}}_1, \dots, \tilde{{\nu}}_J)^{\trans}$ the empirical WV computed on the \textit{unobservable} vector of residuals $\tilde{\boldsymbol{\varepsilon}}=\tilde{\Y}-\tilde{\X}{\bm{\beta}} = {\boldsymbol{\varepsilon}} \, \odot  \,\Z$, and we denote by ${\bm{\nu}} = ({{\nu}}_1, \dots, {{\nu}}_J)^{\trans}$ the theoretical WV of $\tilde{\boldsymbol{\varepsilon}}$, whose $j^{\text{th}}$ entry is given in \eqref{eq:wv_missing_in_appendix}. We firstly show that  $\tilde{\nu}_j \xrightarrow{p} \nu_j$ and we then show that $\hat{\nu}_j \xrightarrow{p} \tilde{\nu}_j$. By combining these two results we finally prove the desired elementwise result that $\hat{\nu}_j \xrightarrow{p} \nu_j$.

For $j = 1, \hdots, J$ and with $M_j=n-2^j+1$, we denote the wavelet coefficients at scale $j$ computed on the \textit{observable} residuals $\hat{\bm{\varepsilon}}$ and the \textit{unobservable} $\tilde{\bm{\varepsilon}}$ respectively as
\[
     \left\{\hat{W}_{k,j}\right\}_{k \in [1,M_j]} \quad \text{and} \quad \left\{\tilde{W}_{k,j}\right\}_{k \in [1,M_j]} 
\]
where $\hat{W}_{k,j} = \mathbf{h}_{k,j}^{\top}\hat{\bm{\varepsilon}}$ and $\tilde{W}_{k,j} = \mathbf{h}_{k,j}^{\top}\tilde{\bm{\varepsilon}}$ with $\mathbf{h}_{k,j} \in \real^n$ being the $k^{\text{th}}$ Haar wavelet filter at the level $j$.
The empirical wavelet variances are given by
$$
\hat{\nu}_j=\frac{1}{M_j} \sum_{k=1}^{M_j} \hat W_{k, j}^2 \quad \text{and} \quad \tilde{\nu}_j=\frac{1}{M_j} \sum_{k=1}^{M_j} \tilde W_{k, j}^2.
$$
In addition, let us define the first order differences of the process $\boldsymbol{\varepsilon}$ as
\[
    \Delta_i = \varepsilon_i - \varepsilon_{i-1}, \text{ with } i=2,\ldots,n,
\]
whereas the first order differences of the process $\tilde{\boldsymbol{\varepsilon}}$ is defined as
\[
    \tilde{\Delta}_i  = Z_i\varepsilon_i - Z_{i-1}\varepsilon_{i-1}, \text{ with } i=2,\ldots,n.
\]
The reason
for considering the first order differences lies in the fact that the wavelet coefficients can be represented as particular linear combinations of the process, i.e., 
\begin{equation}\label{eq:relationship_diff}
     \hat{W}_{k,j} = \mathbf{c}_j^{\trans}\bm{\Delta}_{k,j} \quad \text{and} \quad \tilde{W}_{k,j} = \mathbf{c}_j^{\trans}\tilde{\bm{\Delta}}_{k,j},
\end{equation}
where $\bm{\Delta}_{k,j} = (\Delta_{k}, \dots, \Delta_{k + 2^j-1})^{\trans}$, $\tilde{\bm{\Delta}}_{k,j} = (\tilde\Delta_{k}, \dots, \tilde\Delta_{k + 2^j-1})^{\trans}$ and $\mathbf{c}_j \in \real^{2^j}$ represents a specific filter vector.

We now state our assumption based on these definitions.
\begin{Assumption}
\label{ass:error_process_WV_consistency}
Let $\{\varepsilon_{i}\}_{i \in \mathbb{Z}}$ be a strictly stationary and $\alpha$-mixing error process satisfying one of the following conditions:
\begin{itemize}
        \item[(i)]
        $\mathbb{E}|\varepsilon_i|^{2\delta} < \infty \quad \text{and} \quad \sum_{k \geq 1}\alpha_k^{1 - 2/\delta} < \infty \quad \text{for some constant} \quad \delta > 2$.
        \item[(ii)]
        $\mathbb{P}(|\varepsilon_i| < C) = 1 \quad \text{for some constant} \quad C > 0, \quad \text{and} \quad \sum_{k \geq 1}\alpha_k < \infty$.
    \end{itemize}
\end{Assumption}
This is a stationary short-memory condition for the error process itself and should be read as an additional condition for the WV consistency result. It is not implied by the moment condition in \Cref{ass:epsilon}(i), which only assumes a lower-order moment and does not by itself give the higher moment required here. The implication is immediate under the bounded alternative in \Cref{ass:epsilon}(ii); under the moment alternative, the stronger moment and corresponding mixing summability are imposed directly in this assumption.
Using the above assumption, we can now present the following auxiliary Lemma~\ref{lemma:consistency_empirical_wv_on_noise_process_times_missingness_process}.

\begin{Lemma}
\label{lemma:consistency_empirical_wv_on_noise_process_times_missingness_process}
    Consider the empirical wavelet variance computed on the true vector of residuals $\tilde{\boldsymbol{\varepsilon}} = \tilde{\Y} - \tilde{\X}{\bm{\beta}} = \boldsymbol{\varepsilon} \odot \Z$ and denoted as $\tilde{\boldsymbol{\nu}}$, as well as the theoretical wavelet variance of the process $\tilde{\boldsymbol{\varepsilon}}$ denoted as $\boldsymbol{\nu}$. Under Assumptions~\ref{ass:missing} and \ref{ass:error_process_WV_consistency}, we have for any $j \in \{1, \dots, J\}$,
\[
   \left| \tilde{\nu}_j - \nu_j \right| \xrightarrow{p} 0.
\]
\label{lemma:nu_tilde_consistent}
\end{Lemma}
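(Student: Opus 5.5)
The plan is a weak law of large numbers for the stationary sequence of squared wavelet coefficients $\{\tilde W_{k,j}^2\}_k$, proved through a variance bound. Fix $j$. Since $\mathbb{E}[\tilde\nu_j]=\nu_j$ by Proposition~\ref{prop:theo_wv} combined with \eqref{eq:wv_missing_in_appendix}, Chebyshev's inequality reduces the claim to
\[
\var(\tilde\nu_j)=\frac{1}{M_j^2}\sum_{k,k'=1}^{M_j}\cov\big(\tilde W_{k,j}^2,\tilde W_{k',j}^2\big)\to 0 .
\]
If the autocovariances of $\{\tilde W_{k,j}^2\}_k$ are absolutely summable, this variance is $O(M_j^{-1})=O(n^{-1})$, because $M_j=n-2^j+1$ and $J$ is fixed.

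\textbf{Stationarity and mixing of $\tilde\varepsilon$.} Since $\{\varepsilon_i\}$ and $\{Z_i\}$ are independent and both strictly stationary, the sequence $\tilde\varepsilon_i=Z_i\varepsilon_i$ is strictly stationary. By Lemma~\ref{lemma:mixing_indep_comp} with $h_t(x,y)=xy$, its $\alpha$-mixing coefficients satisfy
\[
\tilde\alpha_\ell\le \alpha_\ell+\alpha^{Z}_\ell \le \alpha_\ell+\beta_Z(\ell)\le \alpha_\ell+e^{1-c\ell},
\]
where the last step uses Lemma~\ref{lemma:ergodic}. The geometric term is summable to any power, so $\tilde\alpha_\ell^{1-2/\delta}$ (case (i)) or $\tilde\alpha_\ell$ (case (ii)) remains summable under Assumption~\ref{ass:error_process_WV_consistency}.

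\textbf{Transfer to the squared coefficients.} By \eqref{eq:relationship_diff}, each $\tilde W_{k,j}$ is a fixed finite linear combination of $\tilde\varepsilon_{k-1},\dots,\tilde\varepsilon_{k+2^j-1}$, a window of length $2^j+1$. Hence $X_k:=\tilde W_{k,j}^2$ is a measurable function of a finite window. Its $\alpha$-mixing coefficients therefore satisfy $\alpha^X_\ell\le\tilde\alpha_{\ell-2^j-1}$ for $\ell>2^j+1$, so the same summability condition holds for $\{X_k\}$. Moments follow from $|Z_i|\le 1$:
\begin{itemize}
\item in case (i), $\mathbb{E}|X_k|^{\delta}\le C\,\mathbb{E}|\varepsilon_i|^{2\delta}<\infty$ by Minkowski's inequality;
\item in case (ii), $X_k$ is bounded.
\end{itemize}

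\textbf{Conclusion.} Lemma~\ref{lemma:mixing_mement_ineq}, applied to $\{X_k\}$, gives $\sum_k|\cov(X_0,X_k)|<\infty$ and $M_j^{-1}\var(\sum_k X_k)\to$ a finite constant. Therefore $\var(\tilde\nu_j)=O(M_j^{-1})\to0$, and Chebyshev's inequality yields $|\tilde\nu_j-\nu_j|\xrightarrow{p}0$. Since $J$ is finite, the result holds simultaneously for all $j$.

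\textbf{Main obstacle.} The delicate step is matching the moment and mixing exponents. The covariance inequality behind Lemma~\ref{lemma:mixing_mement_ineq} needs $\delta$-th moments of $X_k$, which translate into $2\delta$-th moments of $\varepsilon$. This is exactly why Assumption~\ref{ass:error_process_WV_consistency} strengthens Assumption~\ref{ass:epsilon}. The lag shift from the finite window must also leave the mixing sum unchanged, which it does because the shift is by a fixed amount.

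A potential subtlety is that $\nu_j$ in \eqref{eq:wv_missing_in_appendix} is an $n$-dependent finite-sample quantity. By stationarity, however, it equals $\mathbb{E}[X_k]$ for every $k$, so no separate bias term arises.
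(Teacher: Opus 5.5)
Your proposal is correct and follows essentially the same route as the paper: unbiasedness of $\tilde\nu_j$, $\alpha$-mixing of $Z_i\varepsilon_i$ via Lemma~\ref{lemma:mixing_indep_comp} and Lemma~\ref{lemma:ergodic}, transfer of that mixing to the finite-window functional $\tilde W_{k,j}^2$, then Lemma~\ref{lemma:mixing_mement_ineq} to get $\var(\tilde\nu_j)=O(M_j^{-1})$ and Chebyshev's inequality to conclude. Where the paper only states some steps in words, you make them explicit: the lag-shift bound $\alpha^X_\ell\le\tilde\alpha_{\ell-2^j-1}$, the summability of $\tilde\alpha_\ell^{1-2/\delta}$ from the geometric bound on $\beta_Z(\ell)$, and the Minkowski bound giving $\mathbb{E}|X_k|^{\delta}\le C\,\mathbb{E}|\varepsilon_i|^{2\delta}$.
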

\begin{proof}[Proof of \Cref{lemma:consistency_empirical_wv_on_noise_process_times_missingness_process}]
    Note that $\tilde{\nu}_j$ is an unbiased estimator of $\nu_j$.
We have that
\[
    \tilde{\nu}_j-\nu_j = \tilde{\nu}_j-\mathbb{E}\left[\tilde{\nu}_j\right]=\frac{1}{M_j} \sum_{k=1}^{M_j} \tilde{W}_{{k, j}}^2-\mathbb{E}\left[\tilde{W}_{k, j}^2\right],
\]
where $M_j \to \infty$ as $n \to \infty$. Under \Cref{ass:missing}, \Cref{lemma:ergodic} shows that $\{Z_i\}_i$ has $\alpha$-mixing coefficients that decay exponentially. Since $\{\varepsilon_i\}_i$ and $\{Z_i\}_i$ are independent, \Cref{ass:error_process_WV_consistency} and \Cref{lemma:mixing_indep_comp} imply that the masked process $\tilde{\varepsilon}_i=Z_i\varepsilon_i$ is strictly stationary and $\alpha$-mixing with summable coefficients of the required order. For fixed scale $j$, the sequence $\{\tilde W_{k,j}\}_{k\in\mathbb Z}$ is a finite-block measurable transformation of $\{\tilde{\varepsilon}_i\}$, and therefore $\{\tilde W_{k,j}^2-\mathbb E(\tilde W_{k,j}^2)\}_{k\in\mathbb Z}$ is also strictly stationary and $\alpha$-mixing, with the same summability after a finite lag shift. The moment condition in \Cref{ass:error_process_WV_consistency} gives $\mathbb E|\tilde W_{k,j}^2-\mathbb E(\tilde W_{k,j}^2)|^\delta<\infty$ in case (i), while case (ii) is bounded. Applying \Cref{lemma:mixing_mement_ineq} to the centered sequence $\tilde W_{k,j}^2-\mathbb E(\tilde W_{k,j}^2)$ gives
\[
    \var\left(\frac{1}{\sqrt{M_j}}\sum_{k=1}^{M_j}\left\{\tilde W_{k,j}^2-\mathbb E(\tilde W_{k,j}^2)\right\}\right)=O(1).
\]
Consequently, $\var(\tilde{\nu}_j-\nu_j)=O(M_j^{-1})$, and Chebyshev's inequality yields the final result.
\end{proof}

We now consider the wavelet variance  $\hat{\bm{\nu}}$ and show that $|\hat{\nu}_j - \tilde{\nu}_j | \xrightarrow{p} 0$. To obtain this result, we need to impose an assumption on the rows of the design matrix.

\begin{Assumption}[Bounded row vector of the design matrix]
\label{ass:bounded_design_matrix}
Denoting the $i^{\text{th}}$ row of the design matrix $\X$ as $\X_{i \cdot}$, we assume that
\begin{equation*}
    \max _{i\in \{1,\ldots, n\}} \left\|\X_{i \cdot} \right\|_2 < \infty.
\end{equation*}
    
\end{Assumption}

Using this additional Assumption~\ref{ass:bounded_design_matrix} on the rows of the design matrix, we can now state an intermediary result in Lemma~\ref{lemma:difference_empirical_wv_on_true_residuals_minus_empirical_wv_on_estimated_residuals_goes_to_0}
which establishes the convergence of the empirical wavelet variance computed on the \textit{estimated} vector of residuals $\hat{\boldsymbol{\varepsilon}}=\Y-\tilde{\X}\hat{\bm{\beta}}$ denoted as $\hat{\boldsymbol{\nu}}$  to the empirical wavelet variance computed on the true vector of residuals $\tilde{\boldsymbol{\varepsilon}} = \tilde{\Y} - \tilde{\X}{\bm{\beta}}$ and denoted as $\tilde{\boldsymbol{\nu}}$.

\begin{Lemma}
Denoting the empirical wavelet variance computed on the \textit{estimated} vector of residuals $\hat{\boldsymbol{\varepsilon}}= \tilde\Y-\tilde{\X}\hat{\bm{\beta}}$ as $\hat{\boldsymbol{\nu}}$  and the empirical wavelet variance computed on the true vector of residuals $\tilde{\boldsymbol{\varepsilon}} = \tilde{\Y} - \tilde{\X}{\bm{\beta}}$ as $\tilde{\boldsymbol{\nu}}$, under Assumption~\ref{ass:bounded_design_matrix}, we have
    
\begin{equation*}
    \left|\hat{\nu}_j - \tilde{\nu}_j \right|    \xrightarrow{p} 0.
\end{equation*}
\label{lemma:difference_empirical_wv_on_true_residuals_minus_empirical_wv_on_estimated_residuals_goes_to_0}

\end{Lemma}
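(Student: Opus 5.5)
The plan is to write the estimated residuals as the true masked residuals plus a low-rank perturbation and then control the cross and quadratic terms in the empirical WV. Since $\hat{\bm{\varepsilon}} = \tilde{\Y} - \tilde{\X}\hat{\bm{\beta}} = \tilde{\bm{\varepsilon}} - \tilde{\X}(\hat{\bm{\beta}} - \bm{\beta})$, each wavelet coefficient decomposes as
\[
\hat W_{k,j} = \tilde W_{k,j} - \mathbf{h}_{k,j}^{\trans}\tilde{\X}(\hat{\bm{\beta}}-\bm{\beta}) =: \tilde W_{k,j} - R_{k,j}.
\]
Hence
\[
\hat\nu_j - \tilde\nu_j = -\frac{2}{M_j}\sum_{k=1}^{M_j}\tilde W_{k,j}R_{k,j} + \frac{1}{M_j}\sum_{k=1}^{M_j}R_{k,j}^2,
\]
and by Cauchy--Schwarz $|\hat\nu_j - \tilde\nu_j| \le 2\tilde\nu_j^{1/2}\big(M_j^{-1}\sum_k R_{k,j}^2\big)^{1/2} + M_j^{-1}\sum_k R_{k,j}^2$. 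Since Lemma~\ref{lemma:nu_tilde_consistent} gives $\tilde\nu_j = \nu_j + o_p(1) = O_p(1)$ (for fixed $j$, $\nu_j$ is a finite quantity by stationarity), it suffices to show that $M_j^{-1}\sum_k R_{k,j}^2 \xrightarrow{p} 0$.

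To bound $R_{k,j}$, I will use that the Haar filter $\mathbf{h}_{k,j}$ has at most $L_j = 2^j$ nonzero entries, each bounded in absolute value by a constant depending only on $j$. Because $|Z_i|\le 1$, the rows of $\tilde{\X}$ satisfy $\|\tilde{\X}_{i\cdot}\|_2 \le \|\X_{i\cdot}\|_2 \le K$ by Assumption~\ref{ass:bounded_design_matrix}. Therefore $\|\tilde{\X}^{\trans}\mathbf{h}_{k,j}\|_2 \le C_j K$ uniformly in $k$, and
\[
\frac{1}{M_j}\sum_k R_{k,j}^2 \le C_j^2K^2\|\hat{\bm{\beta}}-\bm{\beta}\|_2^2.
\]
The problem thus reduces to showing $\hat{\bm{\beta}} \xrightarrow{p} \bm{\beta}$.

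The main obstacle is this consistency step, since the lemma only states Assumption~\ref{ass:bounded_design_matrix}. I would invoke the framework already in force: under Assumption~\ref{ass:missing}, Lemma~\ref{lemma:beta_cov} gives $\hat{\bm{\beta}}-\bm{\beta} = (1+o_p(1))\mu(\bm{\vartheta})^{-1}(\X^{\trans}\X)^{-1}\X^{\trans}\tilde{\bm{\varepsilon}}$. Theorem~\ref{theorem:short_memory} (or Theorem~\ref{theorem:long_memory}) then yields $\D_n^{1/2}(\hat{\bm{\beta}}-\bm{\beta}) = O_p(1)$ (respectively $O_p(n^{d})$ with $d<1/2$). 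Under Assumption~\ref{ass:X}, $\D_n \asymp n$, so $\|\hat{\bm{\beta}}-\bm{\beta}\|_2 = O_p(n^{-1/2})$ (respectively $O_p(n^{d-1/2})$), which tends to zero in probability.

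A caveat is that the trend regressor must be the rescaled one discussed after Assumption~\ref{ass:lagk}, so that the row bound of Assumption~\ref{ass:bounded_design_matrix} and $\D_n\asymp n$ hold simultaneously. Alternatively, one can bypass the rescaling by bounding $\|\tilde{\X}^{\trans}\mathbf{h}_{k,j}\|$ against $\D_n^{-1/2}$ directly. Combining the two terms then gives $|\hat\nu_j-\tilde\nu_j| = O_p(n^{-1/2}) \to 0$ in the short-memory case, and $O_p(n^{d-1/2})$ in the long-memory case.
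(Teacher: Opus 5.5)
Your proposal is correct and follows essentially the same route as the paper: the same decomposition $\hat W_{k,j}-\tilde W_{k,j}=\mathbf{h}_{k,j}^{\trans}\tilde{\X}(\bm\beta-\hat{\bm\beta})$, the same Cauchy--Schwarz bound on the cross term, and the same row-norm bound from Assumption~\ref{ass:bounded_design_matrix}. The one difference is that the paper just asserts $\sqrt{M_j}\,\|\hat{\bm\beta}-\bm\beta\|_2=O_p(1)$, whereas you state where the rate comes from (Lemma~\ref{lemma:beta_cov} with Theorem~\ref{theorem:short_memory} or~\ref{theorem:long_memory}, plus $\D_n\asymp n$) and note that consistency of $\hat{\bm\beta}$ alone suffices, so your argument also covers the long-memory case.
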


\begin{proof}[Proof of \Cref{lemma:difference_empirical_wv_on_true_residuals_minus_empirical_wv_on_estimated_residuals_goes_to_0}]
It follows that
\begin{equation*}
    \begin{aligned}
        \left|\hat{\nu}_j - \tilde{\nu}_j \right| &= \left|\frac{1}{M_j} \sum_{k=1}^{M_j} \left( \hat{W}_{k,j}^2 - \tilde{W}_{k,j}^2 \right)\right| \\
        &= \left|\frac{1}{M_j} \sum_{k=1}^{M_j} \left( \hat{W}_{k,j} - \tilde{W}_{k,j}\right)^2 +  \frac{2}{M_j} \sum_{k=1}^{M_j} \tilde{W}_{k,j} \left( \hat{W}_{k,j} - \tilde{W}_{k,j} \right)  \right| \\
        &\leq \left|\frac{1}{M_j} \sum_{k=1}^{M_j} \left( \hat{W}_{k,j} - \tilde{W}_{k,j}\right)^2 \right| + 2 \left| \sum_{k=1}^{M_j}\sqrt{\frac{1}{M_j}} \tilde{W}_{k,j} \sqrt{\frac{1}{M_j}}\left( \hat{W}_{k,j} - \tilde{W}_{k,j} \right)  \right| \\
        &\leq \left|\frac{1}{M_j} \sum_{k=1}^{M_j} \left( \hat{W}_{k,j} - \tilde{W}_{k,j}\right)^2 \right| + 2 \left|\sqrt{\sum_{k=1}^{M_j} \frac{1}{M_j} \tilde{W}_{k,j} ^2} \sqrt{ \sum_{k=1}^{M_j} \frac{1}{M_j} \left(\hat{W}_{k,j}-\tilde{W}_{k,j}\right)^2 } \right| \\
        &=\left|\frac{1}{M_j} \sum_{k=1}^{M_j}\left(\hat{W}_{k_j}-\tilde{W}_{k_j j}\right)^2\right|+2 \left| \sqrt{\frac{1}{M_j} \sum_{k=1}^{M_j} \tilde{W}^2_{k, j}} \sqrt{\frac{1}{M_j} \sum_{k=1}^{M_j}\left(\hat{W}_{k,j}-\tilde{W}_{k,j}\right)^2}\right|.
    \end{aligned}
\end{equation*}

Noting that the difference between the wavelet coefficients can be written as
\begin{equation*}
    \begin{aligned}
        \hat{W}_{k,j} - \tilde{W}_{k,j}&= \mathbf{h}_{k,j}^{\trans}\left(\hat{\boldsymbol{\varepsilon}} - \tilde{\boldsymbol{\varepsilon}} \right) \\
        &= \mathbf{h}_{k,j}^{\trans}\left(\tilde{\Y}- \tilde{\X}\hat{\boldsymbol{\beta}} - \left(\tilde{\Y}- \tilde{\X} \boldsymbol{\beta} \right)\right) \\
        &= \mathbf{h}_{k,j}^{\trans} \left(\tilde{\X} \left( \boldsymbol{\beta} - \hat{\boldsymbol{\beta}}\right) \right),
    \end{aligned}
\end{equation*}
we obtain
\begin{equation*}
    \left| \frac{1}{M_j} \sum_{k=1}^{M_j} \left(\mathbf{h}_{k,j}^{\trans} \tilde{\X} \left( \boldsymbol{\beta} - \hat{\boldsymbol{\beta}}\right) \right)^2\right|+2 \left| \sqrt{\frac{1}{M_j} \sum_{k=1}^{M_j} W^2_{k, j}} \sqrt{\frac{1}{M_j} \sum_{k=1}^{M_j}\left(
    \mathbf{h}_{k,j}^{\trans} \tilde{\X} \left( \boldsymbol{\beta} - \hat{\boldsymbol{\beta}}\right) 
    \right)^2}\right|.
\end{equation*}

For $ \left| \sum_{k=1}^{M_j} \left(  \mathbf{h}_{k,j}^{\trans} \tilde{\X} \left( \boldsymbol{\beta} - \hat{\boldsymbol{\beta}}\right)  \right)^2 \right|$, we have
\begin{equation*}
    \begin{aligned}
         \left| \sum_{k=1}^{M_j} \left(  \mathbf{h}_{k,j}^{\trans} \tilde{\X} \left( \boldsymbol{\beta} - \hat{\boldsymbol{\beta}}\right)  \right)^2 \right| 
       \leq M_j \max _{k=\left[1, M_j\right]}\left| \mathbf{h}_{k,j}^{\trans} \tilde{\X}\left( \boldsymbol{\beta} - \hat{\boldsymbol{\beta} } \right)  \right|^2.
    \end{aligned}
\end{equation*}
Taking the square root of the term yields
\begin{equation*}
    \begin{aligned}
  \sqrt{M_j} \max _{k=\left[1, M_j\right]}\left| \mathbf{h}_{k,j}^{\trans} \tilde{\X}\left( \boldsymbol{\beta} - \hat{\boldsymbol{\beta} } \right)  \right| &=
    \sqrt{M_j} \max _{k=\left[1, M_j\right]}\left| 
    \sum_{i=1}^n h_{k,j,i} \tilde{\X}_{i \cdot}
    \left( \boldsymbol{\beta} - \hat{\boldsymbol{\beta} } \right)  \right| \\
    &= \sqrt{M_j} \max _{k=\left[1, M_j\right]}\left| 
    \sum_{r=0}^{2^j-1} h_{k+r,j} \tilde{\X}_{k+r \cdot}
    \left( \boldsymbol{\beta} - \hat{\boldsymbol{\beta} } \right)  \right| \\
    & \leq \sqrt{M_j} \max _{k=\left[1, M_j\right]} \left\{ \sum_{r=0}^{2^j-1} \left| \frac{1}{2^j} \right| \left|\tilde{\X}_{k+r \cdot} \left( \boldsymbol{\beta} - \hat{\boldsymbol{\beta} } \right)  \right|\right\} \\
    &= \sqrt{M_j}\max _{k=\left[1, M_j\right]} \left\{  \max _{r=\left[0, 2^j-1\right]}  \left| \tilde{\X}_{k+r \cdot} \left( \boldsymbol{\beta} - \hat{\boldsymbol{\beta} } \right) \right| \right\} \\
    &=  \sqrt{M_j}\max _{i=\left[1, n\right]} \left| \tilde{\X}_{i \cdot} \left( \boldsymbol{\beta} - \hat{\boldsymbol{\beta} } \right) \right| \\
    & \leq \sqrt{M_j} \max _{i=\left[1, n\right]}  \left\|\tilde{\X}_{i \cdot} \right\|_2 \left\| \left( \boldsymbol{\beta} - \hat{\boldsymbol{\beta} } \right) \right\|_2 \\
    &= \left\| \sqrt{M_j} \left( \boldsymbol{\beta} - \hat{\boldsymbol{\beta} } \right)  \right\|_2 \max _{i=\left[1, n\right]} \left\|\tilde{\X}_{i \cdot} \right\|_2. 
    \end{aligned}
\end{equation*}

Using Assumption~\ref{ass:bounded_design_matrix}, we obtain
\begin{equation*}
    \begin{aligned}
         \sqrt{M_j} \max _{k=\left[1, M_j\right]}\left| \mathbf{h}_{k,j}^{\trans} \tilde{\X}\left( \boldsymbol{\beta} - \hat{\boldsymbol{\beta} } \right)  \right| &= \left\| \sqrt{M_j} \left( \boldsymbol{\beta} - \hat{\boldsymbol{\beta} } \right)  \right\|_2 \max _{i=\left[1, n\right]} \left\|\tilde{\X}_{i.} \right\|_2
    = \mathcal{O}_p(1).
    \end{aligned}
\end{equation*}

Hence we have
\begin{equation*}
    \begin{aligned}
     \left|\hat{\nu}_j - \tilde{\nu}_j \right| &= \left|\frac{1}{M_j} \sum_{k=1}^{M_j}\left(\hat{W}_{k_j}-\tilde W_{k_j j}\right)^2\right|+2 \left| \sqrt{\frac{1}{M_j} \sum_{k=1}^{M_j} \tilde W^2_{k, j}} \sqrt{\frac{1}{M_j} \sum_{k=1}^{M_j}\left(\hat{W}_{k,j}- \tilde W_{k,j}\right)^2}\right| \\
           &= \left|\frac{1}{M_j} \mathcal{O}_p(1)\right| +  2 \left| \mathcal{O}_p(1) \sqrt{\frac{1}{M_j}} \mathcal{O}_p(1) \right| 
           \stackrel{p}{\longrightarrow} 0
    \end{aligned}
\end{equation*}
\end{proof}

Combining these steps, we obtain the consistency of the estimator of the WV computed on the estimated residuals and can state Theorem~\ref{theorem:consistency_nu_hat}.

\begin{Theorem}[Consistency of the wavelet variance on the estimated residuals for stationary short-memory processes]
\label{theorem:consistency_nu_hat}

  Consider the empirical wavelet variance computed on the \textit{estimated} vector of residuals $\hat{\boldsymbol{\varepsilon}}=\Y-\tilde{\X}\hat{\bm{\beta}}$ denoted as $\hat{\boldsymbol{\nu}}$ and the theoretical wavelet variance of the process $\tilde{\boldsymbol{\varepsilon}} = \tilde{\Y} - \tilde{\X}{\bm{\beta}} = \boldsymbol{\varepsilon} \odot \Z$ denoted as $\boldsymbol{\nu}$. Under Assumption~\ref{ass:bounded_design_matrix}, we have

\begin{equation*}
    \hat{\bm{\nu}}_j \overset{p}{\rightarrow} \boldsymbol{\nu}_j
\end{equation*}
\end{Theorem}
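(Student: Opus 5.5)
The plan is a triangle-inequality argument, combining the two lemmas just established. For each fixed scale $j \in \{1,\dots,J\}$ we write
\[
|\hat{\nu}_j - \nu_j| \leq |\hat{\nu}_j - \tilde{\nu}_j| + |\tilde{\nu}_j - \nu_j|,
\]
where $\tilde{\nu}_j$ is the empirical WV computed on the unobservable masked errors $\tilde{\bm{\varepsilon}} = \bm{\varepsilon}\odot\Z$. We then show that each term is $o_p(1)$ and conclude by the elementary fact that a sum of two $o_p(1)$ quantities is $o_p(1)$. Since $J$ is finite, elementwise convergence also gives $\hat{\bm{\nu}} \overset{p}{\to} \bm{\nu}$ as a vector.

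For the second term, we invoke Lemma~\ref{lemma:consistency_empirical_wv_on_noise_process_times_missingness_process}. Under Assumptions~\ref{ass:missing} and \ref{ass:error_process_WV_consistency}, $\tilde{\nu}_j$ is unbiased for $\nu_j$, its variance is $O(M_j^{-1})$, and Chebyshev's inequality gives $|\tilde{\nu}_j-\nu_j|\overset{p}{\to}0$ because $M_j = n-2^j+1\to\infty$ for fixed $j$. Although the statement only lists Assumption~\ref{ass:bounded_design_matrix}, the theorem is titled for stationary short-memory processes, so I would read Assumptions~\ref{ass:missing} and \ref{ass:error_process_WV_consistency} as standing hypotheses of this appendix and state this explicitly.

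For the first term, we invoke Lemma~\ref{lemma:difference_empirical_wv_on_true_residuals_minus_empirical_wv_on_estimated_residuals_goes_to_0}. It bounds $|\hat{\nu}_j-\tilde{\nu}_j|$ by $M_j^{-1}S + 2\,(\tilde{\nu}_j)^{1/2}(M_j^{-1}S)^{1/2}$, where $S=\sum_k \{\mathbf{h}_{k,j}^{\trans}\tilde{\X}(\bm{\beta}-\hat{\bm{\beta}})\}^2$. Two inputs are needed.
\begin{enumerate}
\item $\tilde{\nu}_j = O_p(1)$. This follows immediately from the previous step, since $\nu_j<\infty$.
\item $\|\sqrt{M_j}(\hat{\bm{\beta}}-\bm{\beta})\|_2 = O_p(1)$, combined with Assumption~\ref{ass:bounded_design_matrix}.
\end{enumerate}

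The second input is the main obstacle, and it is what I would check carefully. We obtain it from Theorem~\ref{theorem:short_memory}: $\D_n^{1/2}(\hat{\bm{\beta}}-\bm{\beta}) = O_p(1)$, and under Assumption~\ref{ass:X} we have $\D_n \asymp n\,\mathrm{diag}(\tau_{11},\dots,\tau_{pp})$, so $\sqrt{n}(\hat{\bm{\beta}}-\bm{\beta}) = O_p(1)$, and hence the same holds with $\sqrt{M_j}$ in place of $\sqrt{n}$. With both inputs in hand, $S=O_p(1)$, so the first term is $O_p(M_j^{-1}) + O_p(M_j^{-1/2}) = o_p(1)$.

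The subtlety is the trend regressor, whose squared norm grows like $n^3$ unless it is rescaled. The paper's remark after Assumption~\ref{ass:lagk} allows normalizing it by the observation span, which simultaneously keeps the rows bounded (Assumption~\ref{ass:bounded_design_matrix}) and yields the $\sqrt{n}$ rate. If one preferred not to rescale, I would instead bound $\max_i|\tilde{\X}_{i\cdot}(\bm{\beta}-\hat{\bm{\beta}})|$ columnwise, using each column's own rate $\|\X_{\cdot b}\|_2^{-1}$ times its maximal entry. For the trend column this gives $n\cdot n^{-3/2}=n^{-1/2}$, which is again enough. Combining the two terms then completes the proof.
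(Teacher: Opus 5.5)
Your proposal is correct and is essentially the paper's proof: the triangle inequality through $\tilde{\nu}_j$, with $|\tilde{\nu}_j-\nu_j|$ handled by Lemma~\ref{lemma:nu_tilde_consistent} and $|\hat{\nu}_j-\tilde{\nu}_j|$ handled by Lemma~\ref{lemma:difference_empirical_wv_on_true_residuals_minus_empirical_wv_on_estimated_residuals_goes_to_0}. You also spell out what the paper leaves implicit, namely the tacit hypotheses (Assumptions~\ref{ass:missing}, \ref{ass:error_process_WV_consistency} and \ref{ass:X}) and the source of the $\sqrt{M_j}$ rate for $\hat{\bm{\beta}}$; note, however, that $M_j^{-1}S \le \max_i\|\tilde{\X}_{i\cdot}\|_2^2\,\|\hat{\bm{\beta}}-\bm{\beta}\|_2^2$, so mere consistency of $\hat{\bm{\beta}}$ would already suffice.
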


\begin{proof}
The result can be directly obtained from combining results of Lemma~\ref{lemma:nu_tilde_consistent} and Lemma~\ref{lemma:difference_empirical_wv_on_true_residuals_minus_empirical_wv_on_estimated_residuals_goes_to_0}. Indeed, recalling that $\tilde{\boldsymbol{\nu}}$ is an unbiased estimator of ${\boldsymbol{\nu}}$, we obtain:

\begin{equation*}
  \begin{aligned}
        \left| \hat{\boldsymbol{\nu}}_j  - \boldsymbol{\nu}_j \right| &=       \left| \hat{\boldsymbol{\nu}}_j  - \mathbb{E}[\tilde{\boldsymbol{\nu}}_j] \right| \\
        &=\left| \hat{\boldsymbol{\nu}}_j - \tilde{\boldsymbol{\nu}}_j + \tilde{\boldsymbol{\nu}}_j - \mathbb{E}[\tilde{\boldsymbol{\nu}}_j] \right| \\
        & \leq \left| \hat{\boldsymbol{\nu}}_j - \tilde{\boldsymbol{\nu}}_j  \right| + \left| \tilde{\boldsymbol{\nu}}_j - \boldsymbol{\nu}_j \right|  \overset{p}{\rightarrow} 0
  \end{aligned}
\end{equation*}

\end{proof}

\subsubsection{Consistency of $\hat{\boldsymbol{\gamma}}$}
\label{app:consistency_of_gamma_hat}

Recall the stochastic parameters $\boldsymbol{\gamma} \in \Gamma \subset \real^q$ and its estimator $\hat{\boldsymbol{\gamma}}$ defined in \eqref{eq:gmwm_new}.

Given the consistency of $\hat{\bm{\nu}}$ obtained in Appendix~\ref{app:consistency_nu_hat}, the consistency of the estimated stochastic parameters $\hat{\boldsymbol{\gamma}}$ is relatively straightforward based on existing results \citep[see e.g.,][]{newey1994large, guerrier2022robust}. For this purpose we list the additional required assumptions below which consist in standard regularity conditions for these settings.

\begin{Assumption}
\label{ass:gamma_parameter_space_compact}
    The parameter space $\boldsymbol{\Gamma}$ of the stochastic error parameter $\bm{\gamma}$ is compact.
\end{Assumption}
This compactness assumption on the parameter space $\boldsymbol{\Gamma}$ is standard in asymptotic theory and allows us to show uniform convergence of the later defined function $Q_n(\boldsymbol{\gamma}^\star)$ to $Q_0(\boldsymbol{\gamma}^\star)$ in order to then obtain consistency of $\hat{\boldsymbol{\gamma}}$.
\begin{Assumption}
\label{ass::nu_gamma_continuous}
The theoretical WV $\boldsymbol{\nu}(\boldsymbol{\gamma}) $ is continuous $\forall \, \boldsymbol{\gamma} \in \boldsymbol{\Gamma}$.
\end{Assumption}
This assumption on the continuity of the theoretical WV function is a standard regularity condition commonly made in method-of-moments estimation frameworks and allow us to make use of Theorem 2.1 in \cite{newey1994large} to obtain consistency of $\hat{\boldsymbol{\gamma}}$.

\begin{Assumption}
\label{ass::nu_gamma_injective}
The theoretical WV is injective with respect to $\bm{\gamma}$, i.e., $\boldsymbol{\nu}(\boldsymbol
\gamma_1) = \boldsymbol{\nu}(\boldsymbol{\gamma}_2)$ if and only if $\boldsymbol{\gamma}_1 = \boldsymbol{\gamma}_2$.
\end{Assumption}

This assumption on the injectivity of the theoretical WV with respect to $\boldsymbol{\gamma}$ is a crucial  condition. It ensures that the mapping from the parameter space to the moment space is one-to-one, so that the true parameter value $\boldsymbol{\gamma}$ can be uniquely recovered from the moment condition. 

\begin{Assumption}
\label{ass::consistent_omega_matrix}
    When using an estimated matrix $\hat{\bm{\Omega}}$, then it is such that $\left\|\hat{\boldsymbol{\Omega}}-\boldsymbol{\Omega}\right\|_{\op}~\xrightarrow{p}~0
    $.
\end{Assumption}
This assumption ensures that the estimated weighting matrix $\hat{\boldsymbol{\Omega}}$ converges in probability to the true matrix ${\boldsymbol{\Omega}}$, in operator norm. It is a standard condition in GMM-type estimation procedures where the optimal weighting matrix is estimated. 

As stated earlier, Assumptions~\ref{ass:gamma_parameter_space_compact}, \ref{ass::nu_gamma_continuous}, \ref{ass::nu_gamma_injective} and \ref{ass::consistent_omega_matrix} are standard regularity conditions that ensure that derivatives and expansions can be made on the GMWM objective function so that its convergence to the true objective function implies the convergence of $\hat{\bm{\gamma}}$ to $\bm{\gamma}$ by the continuous mapping theorem. Theorem~\ref{thm:consistency_stochastic_parameters} establishes the consistency of the estimated stochastic parameters $\hat{\boldsymbol{\gamma}}$ and briefly discusses these points in the following proof.

\begin{Theorem}
\label{thm:consistency_stochastic_parameters}
Under Assumptions \ref{ass:gamma_parameter_space_compact}, \ref{ass::nu_gamma_continuous}, \ref{ass::nu_gamma_injective} and \ref{ass::consistent_omega_matrix}, we have:
    \begin{equation}
        \hat{\boldsymbol{\gamma}} \stackrel{p}{\longrightarrow} \boldsymbol{\gamma}
    \end{equation}

\end{Theorem}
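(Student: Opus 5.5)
The plan is to apply the standard extremum-estimator consistency result, Theorem 2.1 in \cite{newey1994large}. It requires four things: (a) a compact parameter space, (b) a continuous limiting objective, (c) a limiting objective uniquely minimized at the true value, and (d) uniform convergence in probability of the sample objective to the limit.

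Define the sample objective $Q_n(\boldsymbol{\gamma}^\star) = \|\hat{\boldsymbol{\nu}} - \boldsymbol{\nu}(\boldsymbol{\gamma}^\star, \hat{\boldsymbol{\vartheta}})\|_{\hat{\boldsymbol{\Omega}}}^2$ and its limit $Q_0(\boldsymbol{\gamma}^\star) = \|\boldsymbol{\nu}(\boldsymbol{\gamma}) - \boldsymbol{\nu}(\boldsymbol{\gamma}^\star)\|_{\boldsymbol{\Omega}}^2$. Here $\boldsymbol{\nu}(\cdot)$ is shorthand for $\boldsymbol{\nu}(\cdot,\boldsymbol{\vartheta})$ at the true missingness parameter, and $\boldsymbol{\gamma}$ is the true value.

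\textbf{Conditions (a)--(c).} Compactness is Assumption~\ref{ass:gamma_parameter_space_compact}. Continuity of $Q_0$ follows from Assumption~\ref{ass::nu_gamma_continuous}, since $Q_0$ is a composition of continuous maps. For identification, $Q_0(\boldsymbol{\gamma}^\star)\ge 0$, with equality iff $\boldsymbol{\nu}(\boldsymbol{\gamma}^\star)=\boldsymbol{\nu}(\boldsymbol{\gamma})$ because $\boldsymbol{\Omega}$ is positive definite. By Assumption~\ref{ass::nu_gamma_injective} this holds iff $\boldsymbol{\gamma}^\star=\boldsymbol{\gamma}$, so $\boldsymbol{\gamma}$ is the unique minimizer.

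\textbf{Condition (d).} Write $\hat{\mathbf a}(\boldsymbol{\gamma}^\star) = \hat{\boldsymbol{\nu}} - \boldsymbol{\nu}(\boldsymbol{\gamma}^\star,\hat{\boldsymbol{\vartheta}})$ and $\mathbf a(\boldsymbol{\gamma}^\star) = \boldsymbol{\nu}(\boldsymbol{\gamma}) - \boldsymbol{\nu}(\boldsymbol{\gamma}^\star)$. Then
\[
|Q_n - Q_0| \le \|\hat{\mathbf a}\|_2^2\,\|\hat{\boldsymbol{\Omega}}-\boldsymbol{\Omega}\|_{\op} + \|\hat{\mathbf a}-\mathbf a\|_2\,\|\boldsymbol{\Omega}\|_{\op}\,\big(\|\hat{\mathbf a}\|_2+\|\mathbf a\|_2\big).
\]
Continuity on a compact set makes $\sup_{\boldsymbol{\gamma}^\star}\|\mathbf a(\boldsymbol{\gamma}^\star)\|_2$ finite. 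It then suffices to show $\sup_{\boldsymbol{\gamma}^\star\in\boldsymbol{\Gamma}}\|\hat{\mathbf a}(\boldsymbol{\gamma}^\star)-\mathbf a(\boldsymbol{\gamma}^\star)\|_2 \xrightarrow{p} 0$. Once that holds, $\sup\|\hat{\mathbf a}\|_2=O_p(1)$, and the first term vanishes by Assumption~\ref{ass::consistent_omega_matrix}.

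Split the difference as $\hat{\mathbf a}-\mathbf a = (\hat{\boldsymbol{\nu}}-\boldsymbol{\nu}(\boldsymbol{\gamma})) - (\boldsymbol{\nu}(\boldsymbol{\gamma}^\star,\hat{\boldsymbol{\vartheta}})-\boldsymbol{\nu}(\boldsymbol{\gamma}^\star,\boldsymbol{\vartheta}))$.
\begin{itemize}
\item The first piece does not depend on $\boldsymbol{\gamma}^\star$. It tends to zero by Theorem~\ref{theorem:consistency_nu_hat}, applied elementwise over the finitely many scales $J$.
\item The second piece needs uniform control. Use \eqref{eq:wv_missing_in_appendix}: $\boldsymbol{\nu}$ depends on $\boldsymbol{\vartheta}$ only through the missingness covariance $\boldsymbol{\Lambda}(\boldsymbol{\vartheta})+\mu(\boldsymbol{\vartheta})^2\mathbf 1\mathbf 1^{\trans}$.
\end{itemize}

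\textbf{Main obstacle.} The second piece is the step I expect to be hardest. The plan is to assume, as an implicit part of Assumption~\ref{ass::nu_gamma_continuous}, that $(\boldsymbol{\gamma},\boldsymbol{\vartheta})\mapsto\boldsymbol{\nu}(\boldsymbol{\gamma},\boldsymbol{\vartheta})$ is jointly continuous. It is then uniformly continuous on $\boldsymbol{\Gamma}\times K$ for a compact neighborhood $K$ of $\boldsymbol{\vartheta}$. Since $\hat{\boldsymbol{\vartheta}}\xrightarrow{p}\boldsymbol{\vartheta}$ (MLE of an ergodic two-state Markov chain, Lemma~\ref{lemma:ergodic}), the supremum over $\boldsymbol{\gamma}^\star$ of the second piece tends to zero in probability.

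An alternative route, if one prefers to avoid the joint-continuity assumption, is the convexity lemma (Lemma~\ref{lemma::uniform_convergence}): pointwise convergence of $Q_n$ would upgrade to uniform convergence on compacts. That route needs convexity, which holds only for models linear in the parameters (e.g., variances), so I would keep the uniform-continuity argument as the primary route.

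With (a)--(d) verified, Theorem 2.1 of \cite{newey1994large} gives $\hat{\boldsymbol{\gamma}}\xrightarrow{p}\boldsymbol{\gamma}$. The usual argument is that $Q_0(\hat{\boldsymbol{\gamma}})\le Q_n(\hat{\boldsymbol{\gamma}})+o_p(1)\le Q_n(\boldsymbol{\gamma})+o_p(1)=o_p(1)$. Compactness plus unique minimization then forces $\hat{\boldsymbol{\gamma}}$ into any neighborhood of $\boldsymbol{\gamma}$ with probability tending to one.
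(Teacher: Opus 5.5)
Your argument is correct. It keeps the paper's skeleton: the same $Q_n$ and $Q_0$, identification from injectivity plus positive definiteness of $\boldsymbol{\Omega}$, and Theorem 2.1 of \cite{newey1994large}. Where you differ is in how you obtain uniform convergence.

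The paper gets $\sup_{\boldsymbol{\gamma}^\star}|Q_n-Q_0|\xrightarrow{p}0$ from pointwise convergence (Theorem~\ref{theorem:consistency_nu_hat}, Assumption~\ref{ass::consistent_omega_matrix} and the continuous mapping theorem) combined with Pollard's convexity lemma (Lemma~\ref{lemma::uniform_convergence}). That is exactly the alternative you mention and set aside, and your caveat is well founded. The lemma requires $\boldsymbol{\gamma}^\star\mapsto Q_n(\boldsymbol{\gamma}^\star)$ to be convex on an open convex set. The stated assumptions do not supply this, it generally fails once $\boldsymbol{\nu}$ is nonlinear in $\boldsymbol{\gamma}$ (power-law indices, Mat\'ern range and smoothness), and the paper never checks it.

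Your explicit bound instead uses the fact that the data enter $Q_n$ only through finite-dimensional, $\boldsymbol{\gamma}^\star$-free quantities: $\hat{\boldsymbol{\nu}}$, $\hat{\boldsymbol{\Omega}}$ and $\hat{\boldsymbol{\vartheta}}$. Uniformity then reduces to their consistency plus boundedness and uniform continuity of $\boldsymbol{\nu}$ on compacts. This is more elementary, and it covers the nonlinear noise models actually used in the paper.

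The paper's version buys brevity. Part of that comes from writing $\boldsymbol{\nu}(\boldsymbol{\gamma}^\star)$, which implicitly evaluates the moment function at the true $\boldsymbol{\vartheta}$. You treat the plug-in $\hat{\boldsymbol{\vartheta}}$ that \eqref{eq:gmwm_new} actually uses. The price is two hypotheses not listed in the theorem: consistency of $\hat{\boldsymbol{\vartheta}}$, and continuity in $\boldsymbol{\vartheta}$ uniformly over $\boldsymbol{\Gamma}$.

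The second hypothesis is mild. For a stationary error process, \eqref{eq:wv_missing_in_appendix} makes $\nu_j(\boldsymbol{\gamma},\boldsymbol{\vartheta})$ a finite sum over lags $|h|<2^j$. Each term is a fixed Haar weight times $\mathbb{E}[Z_0Z_h]=\mu(\boldsymbol{\vartheta})^2+\mu(\boldsymbol{\vartheta})\{1-\mu(\boldsymbol{\vartheta})\}\rho^{|h|}$ times the lag-$h$ error autocovariance, where $\rho$ is the persistence parameter in \eqref{eq:markov_missinigness}. So it is enough that the error variance be bounded over $\boldsymbol{\Gamma}$.
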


\begin{proof}
Let $\boldsymbol{\gamma}^\star \in \boldsymbol{\Gamma}$, we define
$$
Q_n(\boldsymbol{\gamma}^\star) = \left\{\hat{\boldsymbol{\nu}}-\boldsymbol{\nu}(\boldsymbol{\gamma}^\star)\right\}^{\trans} \hat{\boldsymbol{\Omega}}\left\{\hat{\boldsymbol{\nu}}-\boldsymbol{\nu}(\boldsymbol{\gamma}^\star)\right\}
$$
and
$$
Q_0(\boldsymbol{\gamma}^\star) = \left\{\boldsymbol{\nu}\left(\boldsymbol{\gamma}\right)-\boldsymbol{\nu}(\boldsymbol{\gamma}^\star)\right\}^{\trans}\boldsymbol{\Omega}\left\{\boldsymbol{\nu}\left(\boldsymbol{\gamma}\right)-\boldsymbol{\nu}(\boldsymbol{\gamma}^\star)\right\}
$$

Based on Assumption~\ref{ass:gamma_parameter_space_compact} implying compactness of the parameter space $\boldsymbol{\Gamma}$ and Assumption \ref{ass::consistent_omega_matrix} implying consistency of $\hat{\boldsymbol{\Omega}}$, as well as Theorem~\ref{theorem:consistency_nu_hat} that state the consistency of the estimated WV computed on the estimated residuals and using the continuous mapping theorem, we can apply Lemma~\ref{lemma::uniform_convergence} and obtain uniform convergence of $Q_n(\boldsymbol{\gamma}^{\star})$ to $Q(\boldsymbol{\gamma}^{\star}$):
\begin{equation}
\label{uniform_confergence_loss_fct_gmm}
    \sup _{\boldsymbol{\gamma} \in \boldsymbol{\Gamma}}\left|Q_n(\boldsymbol{\gamma}^{\star})-Q_0(\boldsymbol{\gamma}^{\star})\right| \overset{p}{\longrightarrow} 0 .
\end{equation}

Based on Assumption \ref{ass::nu_gamma_injective} implying injectivity of the theoretical WV and the non-singularity of $\boldsymbol{\Omega}$, $Q_0(\boldsymbol{\gamma}^\star)$ has a unique minimum at $\boldsymbol{\gamma}$. Therefore, by Assumption \ref{ass:gamma_parameter_space_compact} implying compactness of the parameter space $\boldsymbol{\Gamma}$, Assumption \ref{ass::nu_gamma_continuous} implying that the theoretical WV is continuous, Assumption \ref{ass::nu_gamma_injective} implying that the theoretical WV is injective and Assumption \ref{ass::consistent_omega_matrix} implying consistency of $\hat{\boldsymbol{\Omega}}$, and using  \eqref{uniform_confergence_loss_fct_gmm} implying uniform convergence of $Q_n(\boldsymbol{\gamma}^{\star})$ to $Q_0(\boldsymbol{\gamma}^{\star})$, Theorem 2.1 in \cite{newey1994large} can be applied implying the consistency of $\hat{\boldsymbol{\gamma}}$.

\end{proof}

\subsubsection{Model Selection Properties}
\label{app:model_selection}

For ease of notation, we present the following derivations assuming that the process is parameterized only by $\bgamma$. Since the missing-data mechanism, parameterized by $\boldsymbol{\vartheta}$, is assumed to be common to all candidate models, its associated complexity penalty is constant and therefore omitted. The general formulation is recovered by replacing $\bnu(\bgamma)$ with $\bnu(\bgamma,\boldsymbol{\vartheta})$.

Recall the proposed model selection criterion (introduced in Section \ref{sec:model_selection}) defined as:
\begin{equation}
\label{def:model_selection_criterion1}
    C = \mathbb{E}_0\left[ \mathbb{E}\left\{\left\|\hat{\bnu}_0 - \bnu(\hat{\bgamma}) \right\|_{\bOmega}^2 \right\}\right] = \mathbb{E}_0\left( \mathbb{E}\left[\left\{\hat{\bnu}_0 - \bnu(\hat{\bgamma})\right\}\trans \bOmega \left\{\hat{\bnu}_0 - \bnu(\hat{\bgamma})\right\}\right]\right),
\end{equation}
where $\hat{\bgamma}$ is computed on the estimated residuals $\hat{\bm{\varepsilon}}$, $\hat{\bnu}_0$ is computed on $\hat{\bm{\varepsilon}}_0$ (an independent copy of $\hat{\bm{\varepsilon}}$ satisfying $\hat{\bm{\varepsilon}} \overset{d}{=} \hat{\bm{\varepsilon}}_0$) and $\boldsymbol{\Omega} = \mathbf{V}^{-1}$. To deliver a \textit{feasible} estimator for this criterion, we first state the following lemma.

\begin{Lemma}
\label{lem:alternative_expression_C}
The model selection criterion defined in \eqref{def:model_selection_criterion1} can also be expressed as 
\begin{equation*}
    C = \mathbb{E}\left\{\left\|\hat{\bnu} - \bnu(\hat{\bgamma}) \right\|_{\bOmega}^2 \right\} + 2 \tr\left[ \cov\left\{\hat{\bnu}, \bnu(\hat{\bgamma})\right\} \bOmega\right].
\end{equation*}
\end{Lemma}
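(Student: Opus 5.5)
The plan is the classical covariance-penalty decomposition of \cite{efron2004estimation}: expand both the out-of-sample loss and the in-sample loss around the common mean $\bnu_0^\star := \mathbb{E}[\hat{\bnu}] = \mathbb{E}_0[\hat{\bnu}_0]$, and compare the two expansions. The identity $\mathbb{E}[\hat{\bnu}] = \mathbb{E}_0[\hat{\bnu}_0]$ holds because $\hat{\bm{\varepsilon}}_0 \overset{d}{=} \hat{\bm{\varepsilon}}$. Throughout, $\bOmega$ is treated as a fixed deterministic symmetric matrix, so it can be pulled out of expectations; this is where $\bOmega = \V^{-1}$ being non-random is used.

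The first step is the out-of-sample term. Write $\hat{\bnu}_0 - \bnu(\hat{\bgamma}) = (\hat{\bnu}_0 - \bnu_0^\star) + (\bnu_0^\star - \bnu(\hat{\bgamma}))$ and expand the quadratic form. The cross term is $2(\hat{\bnu}_0 - \bnu_0^\star)\trans\bOmega(\bnu_0^\star - \bnu(\hat{\bgamma}))$. Its expectation vanishes because $\hat{\bnu}_0$ is independent of $\hat{\bgamma}$ and $\mathbb{E}_0[\hat{\bnu}_0] = \bnu_0^\star$. Hence
\begin{equation*}
C = \tr\{\bOmega\,\var(\hat{\bnu})\} + \mathbb{E}\left\{\|\bnu_0^\star - \bnu(\hat{\bgamma})\|_{\bOmega}^2\right\},
\end{equation*}
where $\var(\hat{\bnu}_0) = \var(\hat{\bnu})$ has been used.

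The second step expands the in-sample loss the same way:
\begin{equation*}
\mathbb{E}\|\hat{\bnu} - \bnu(\hat{\bgamma})\|_{\bOmega}^2 = \tr\{\bOmega\,\var(\hat{\bnu})\} + \mathbb{E}\|\bnu_0^\star - \bnu(\hat{\bgamma})\|_{\bOmega}^2 + 2\,\mathbb{E}\left[(\hat{\bnu} - \bnu_0^\star)\trans\bOmega(\bnu_0^\star - \bnu(\hat{\bgamma}))\right].
\end{equation*}
This time the cross term does not vanish, since $\hat{\bnu}$ and $\hat{\bgamma}$ are computed on the same residuals.

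The cross term is evaluated using the trace trick. Because $\hat{\bnu} - \bnu_0^\star$ is centered, subtracting the constant $\bnu_0^\star$ and centering $\bnu(\hat{\bgamma})$ do not change its expectation. Therefore
\begin{equation*}
\mathbb{E}\left[(\hat{\bnu} - \bnu_0^\star)\trans\bOmega(\bnu_0^\star - \bnu(\hat{\bgamma}))\right] = -\tr\left[\bOmega\,\cov\{\bnu(\hat{\bgamma}), \hat{\bnu}\}\right] = -\tr\left[\cov\{\hat{\bnu}, \bnu(\hat{\bgamma})\}\,\bOmega\right],
\end{equation*}
where the last equality uses the symmetry of $\bOmega$ together with $\tr(\M) = \tr(\M\trans)$.

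Subtracting the second-step identity from the first-step expression for $C$ cancels the variance and bias terms and gives
\begin{equation*}
C - \mathbb{E}\|\hat{\bnu} - \bnu(\hat{\bgamma})\|_{\bOmega}^2 = 2\tr\left[\cov\{\hat{\bnu}, \bnu(\hat{\bgamma})\}\,\bOmega\right],
\end{equation*}
which is the claim.

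The main obstacle is bookkeeping rather than substance. One must justify that $\mathbb{E}_0$ acts only on the independent copy, so that the first cross term vanishes. One must also ensure that second moments of $\bnu(\hat{\bgamma})$ exist for the expansions to be valid; this follows from compactness of $\bm{\Gamma}$ (Assumption~\ref{ass:gamma_parameter_space_compact}) and continuity of $\bnu$ (Assumption~\ref{ass::nu_gamma_continuous}), which make $\bnu(\hat{\bgamma})$ bounded. Finally, the transpose convention in the covariance must be fixed consistently with the stated trace.
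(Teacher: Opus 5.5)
Your proof is correct and follows the paper's covariance-penalty argument. In both, the out-of-sample cross term vanishes by independence and $\mathbb{E}_0(\hat{\bnu}_0)=\mathbb{E}(\hat{\bnu})$, while the in-sample cross term survives and is identified with $2\tr[\cov\{\hat{\bnu},\bnu(\hat{\bgamma})\}\bOmega]$ via the trace trick. The only difference is bookkeeping: you center both losses at $\mathbb{E}(\hat{\bnu})$, whereas the paper uses the uncentered identity $\|\x-\y\|_{\mathbf{A}}^2=\x\trans\mathbf{A}\x-\y\trans\mathbf{A}\y-2(\x-\y)\trans\mathbf{A}\y$ and then splits $\mathbb{E}(\x\trans\y)$ into a covariance trace plus a product of means.
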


\begin{proof}
Recall that we aim to show
\begin{equation*}
    C = \mathbb{E}_0\left[ \mathbb{E}\left\{\left\|\hat{\bnu}_0 - \bnu(\hat{\bgamma}) \right\|_{\bOmega}^2 \right\}\right] =  \mathbb{E}\left\{\left\|\hat{\bnu} - \bnu(\hat{\bgamma}) \right\|_{\bOmega}^2 \right\} + 2 \tr\left[ \cov\left\{\hat{\bnu}, \bnu(\hat{\bgamma})\right\} \bOmega\right].
\end{equation*}
Notice that, for any $\x, \y \in \real^k$ and symmetric matrix $\mathbf{A} \in \real^{k \times k}$, we have
\begin{equation*}
    \left\|\x-\y\right\|_{\mathbf{A}}^2 = (\x-\y)\trans \mathbf{A} (\x-\y) = \x\trans \mathbf{A} \x - \y\trans \mathbf{A} \y - 2(\x-\y)\trans\mathbf{A}\y. 
\end{equation*}

Thus, we have
\begin{equation*}
    C = \mathbb{E}_0\left(\hat{\bnu}_0\trans \bOmega \hat{\bnu}_0\right) - \mathbb{E}\left\{\bnu(\hat{\bgamma})\trans \bOmega \bnu(\hat{\bgamma})\right\} - 2 \mathbb{E}\left[\left\{\mathbb{E}_0\left(\hat{\bnu}_0\right) - \bnu(\hat{\bgamma})\right\}\trans \bOmega \bnu(\hat{\bgamma})\right].
\end{equation*}
Then, we consider the following term
\begin{equation*}
    D = \mathbb{E}\left\{\left\|\hat{\bnu} - \bnu(\hat{\bgamma}) \right\|_{\bOmega}^2
    \right\} = \mathbb{E}\left(\hat{\bnu}\trans \bOmega \hat{\bnu}\right) - \mathbb{E}\left\{\bnu(\hat{\bgamma})\trans \bOmega \bnu(\hat{\bgamma})\right\} - 2 \mathbb{E}\left[\left\{\hat{\bnu}- \bnu(\hat{\bgamma})\right\}\trans \bOmega \bnu(\hat{\bgamma})\right],
\end{equation*}
and thus
\begin{equation*}
    C - D = 2\mathbb{E}\left[\left\{\hat{\bnu} -\mathbb{E}_0\left(\hat{\bnu}_0\right) \right\}\trans \bOmega \bnu(\hat{\bgamma})\right].
\end{equation*}
Notice that
\begin{equation*}
    \mathbb{E}\left(\x\trans \y \right) = \mathbb{E}\left\{\tr\left(\x\trans \y \right)\right\} = \tr \left\{\mathbb{E}\left(\y\x\trans \right)\right\} = \tr\left\{\cov\left(\x, \y\right)\right\} + \tr\left\{ \mathbb{E}(\x) \mathbb{E}(\y)\trans\right\}.
\end{equation*}
Thus, we have
\begin{equation*}
\begin{aligned}
    C - D &= 2\mathbb{E}\left[\left\{\hat{\bnu} -\mathbb{E}_0\left(\hat{\bnu}_0\right) \right\}\trans \bOmega \bnu(\hat{\bgamma})\right] \\
    &= 2\tr\left[\cov\left\{\hat{\bnu}, \bOmega\bnu(\hat{\bgamma})\right\}\right] + 2\tr \left[\left\{\mathbb{E}(\hat{\bnu})-\mathbb{E}_0(\hat{\bnu}_0)\right\} \mathbb{E}\left\{\bOmega\bnu(\hat{\bgamma})\right\}\trans\right] \\
    &= 2\tr \left[\cov \left\{\hat{\bnu},  \bnu(\hat{\bgamma})\right\}\bOmega\right],
\end{aligned}
\end{equation*}
since $\mathbb{E}(\hat{\bnu})-\mathbb{E}_0(\hat{\bnu}_0)=\0$. Finally, we obtain
\begin{equation*}
    C = D + 2\tr \left[\cov \left\{\hat{\bnu},  \bnu(\hat{\bgamma})\right\}\bOmega\right] = \mathbb{E}\left\{\left\|\hat{\bnu} - \bnu(\hat{\bgamma}) \right\|_{\bOmega}^2 \right\} + 2 \tr\left[ \cov\left\{\hat{\bnu}, \bnu(\hat{\bgamma})\right\} \bOmega\right].
\end{equation*}
\end{proof}
Recalling that $\mathbf{V} = \var(\hat{\boldsymbol{\nu}})$, $\hat{\mathbf{V}}$ is a consistent estimator of $\mathbf{V}$, $\hat{\boldsymbol{\Omega}} = \hat{\mathbf{V}}^{-1}$ and $\boldsymbol{\Omega} = \mathbf{V}^{-1}$, we consider an estimator of $C$ defined as follows:
\begin{equation*}
    \hat{C} = \left\|\hat{\bnu} - \bnu(\hat{\bgamma}) \right\|_{\hat{\bOmega}}^2 + 2 \tr\left[ \hat{\V} \B(\hat{\bgamma}, \hat{\bOmega})\trans \hat{\bOmega}\right].
\end{equation*}
 Then we have $\tr\left[ \hat{\V} \B(\hat{\bgamma}, \hat{\bOmega})\trans \hat{\bOmega}\right] = \dim (\bm{\gamma})$.

Also, $\hat{\bOmega}$ has a deterministic limit
$\hat{\bOmega} \xrightarrow{p} \bOmega$, and we define
\begin{equation*}
    \G(\bgamma)
    =
    \frac{\partial \bnu(\bgamma)}{\partial \bgamma}.
\end{equation*}
For generic $\boldsymbol{\gamma}_1$, $\boldsymbol{\gamma}_2$ and
$\boldsymbol{\Omega}$, define
\begin{equation*}
    \B(\boldsymbol{\gamma}_1,\boldsymbol{\gamma}_2,\boldsymbol{\Omega})
    =
    \G(\boldsymbol{\gamma}_2)
    \left\{
    \G(\boldsymbol{\gamma}_1)\trans
    \boldsymbol{\Omega}
    \G(\boldsymbol{\gamma}_2)
    \right\}^{-1}
    \G(\boldsymbol{\gamma}_1)\trans
    \boldsymbol{\Omega}.
\end{equation*}
When both arguments coincide, we use the shorthand
\[
    \B(\boldsymbol{\gamma},\boldsymbol{\Omega})
   =
    \B(\boldsymbol{\gamma},\boldsymbol{\gamma},\boldsymbol{\Omega}).
\]

\noindent The following theorem states the properties of the proposed estimator $\hat{C}$.

\begin{Theorem}
\label{thm:model_selection_criterion_consistency}
Assume that 
\begin{enumerate}
    \item $\bnu(\bgamma)$ is twice differentiable, and $\partial^2 \nu_j(\bgamma)/(\partial\bgamma\partial\bgamma\trans)$ has bounded singular values.
    \item The matrices $\bOmega, \hat{\bOmega}, \hat{\V}$ have bounded singular values.
    \item The wavelet coefficients $W_{k,j}$ with $k=1,\ldots,M_j$ and $j=1,\ldots,J$ are all bounded. 
\end{enumerate}
Then we have $\hat{C} = C+o_p(1)$ and $\mathbb{E}(\hat{C}) = C + o(1)$.
\end{Theorem}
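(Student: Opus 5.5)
Starting from Lemma~\ref{lem:alternative_expression_C}, which writes $C=\mathbb{E}\{\|\hat{\bnu}-\bnu(\hat{\bgamma})\|_{\bOmega}^2\}+2\tr[\cov\{\hat{\bnu},\bnu(\hat{\bgamma})\}\bOmega]$, the plan is to compare $\hat C$ with $C$ term by term: the empirical loss $\|\hat{\bnu}-\bnu(\hat{\bgamma})\|^2_{\hat{\bOmega}}$ with its expectation $D$, and the plug-in penalty $2\tr[\hat{\V}\B(\hat{\bgamma},\hat{\bOmega})\trans\hat{\bOmega}]$ with the covariance penalty $2\tr[\cov\{\hat{\bnu},\bnu(\hat{\bgamma})\}\bOmega]$.

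\textbf{Step 1 (penalty).} Take a first-order expansion of the GMWM estimating equation $\G(\hat{\bgamma})\trans\hat{\bOmega}\{\hat{\bnu}-\bnu(\hat{\bgamma})\}=\0$ around $\bgamma$. This gives
\[
\bnu(\hat{\bgamma})-\bnu(\bgamma)=\B(\bar{\bgamma},\hat{\bgamma},\hat{\bOmega})\{\hat{\bnu}-\bnu(\bgamma)\}
\]
by the mean value theorem applied row-wise, with $\bar{\bgamma}$ intermediate. The second-derivative bound in condition 1 makes the remainder quadratic in $\hat{\bnu}-\bnu(\bgamma)$. Theorem~\ref{thm:consistency_stochastic_parameters} with continuity gives $\B(\bar{\bgamma},\hat{\bgamma},\hat{\bOmega})\to\B(\bgamma,\bOmega)$ in probability. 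Hence
\[
\cov\{\hat{\bnu},\bnu(\hat{\bgamma})\}=\V\,\B(\bgamma,\bOmega)\trans+o(\cdot).
\]
Passing from convergence in probability to convergence of covariances is justified by condition 3: bounded wavelet coefficients make $\hat{\bnu}$ bounded and give uniform integrability. Since $\hat{\V}\to\V$ and $\hat{\bOmega}\to\bOmega$ with bounded singular values (condition 2), the plug-in penalty converges to the true penalty. With $\bOmega=\V^{-1}$, the identity $\tr[\V\B\trans\V^{-1}]=\tr(\B)=q$ follows because $\B$ is an idempotent projection of rank $q$; this recovers the $2q$ in \eqref{eq.mod_crit}.

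\textbf{Step 2 (loss term).} Show that $\|\hat{\bnu}-\bnu(\hat{\bgamma})\|^2_{\hat{\bOmega}}-D=o_p(1)$. By Theorem~\ref{theorem:consistency_nu_hat}, $\hat{\bnu}\to\bnu(\bgamma)$, and by Theorem~\ref{thm:consistency_stochastic_parameters}, $\bnu(\hat{\bgamma})\to\bnu(\bgamma)$. The continuous mapping theorem, with $\hat{\bOmega}\to\bOmega$, then gives that the loss converges in probability to $0$. Bounded convergence (condition 3 together with the bounded singular values of $\hat{\bOmega}$) gives $D\to0$ as well. Combining with Step 1 yields $\hat C=C+o_p(1)$, and a second application of dominated convergence, using that $\hat C$ is uniformly bounded, gives $\mathbb{E}(\hat C)=C+o(1)$.

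\textbf{Main obstacle.} The scaling is delicate. Both $D$ and the penalty are of order $\V$, which is itself $O(1/M_j)$, while $\bOmega=\V^{-1}$ blows up. So the statement should be read with the normalisation $\bOmega=\V^{-1}$ kept fixed, that is, as the $O(1)$ quantities $\|\cdot\|^2_{\V^{-1}}$. The expansion in Step 1 has to control $\V^{-1/2}$ times the remainder. I would handle this by working with the standardised vector $\V^{-1/2}(\hat{\bnu}-\bnu)$. The quadratic remainder is $O_p(\|\hat{\bnu}-\bnu\|^2)$, and after standardisation it is $O_p(M^{-1/2})=o_p(1)$. This is the step I would check most carefully, and condition 3 is what supplies the needed uniform moment bounds.
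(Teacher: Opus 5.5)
Your Steps 1--2 follow the paper's proof essentially step for step. You linearize $\bnu(\hat{\bgamma})$ through the GMWM first-order condition to get $\bnu(\hat{\bgamma})\approx\bnu(\bgamma)+\B(\bgamma,\bOmega)\{\hat{\bnu}-\bnu(\bgamma)\}$, deduce $\cov\{\hat{\bnu},\bnu(\hat{\bgamma})\}=\V\B(\bgamma,\bOmega)\trans+o(1)$, compare the plug-in penalty with this, compare the empirical loss with $D$, and finish by dominated convergence. Your bounded-convergence argument for the loss term is, if anything, cleaner than the paper's Chebyshev step, where the vanishing variance is asserted rather than shown. The genuine gap is the scaling issue you raise at the end, and your proposed fix does not close it: the two halves of your proposal rely on incompatible normalizations.

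\emph{Literal reading of condition 2.} Here $\bOmega$ and $\hat{\bOmega}$ are bounded, and Step 2 is valid. But $\hat{\bnu}$ is bounded and converges in probability to $\bnu$, so $\V=\var(\hat{\bnu})\to\0$. Every term of both $C$ and $\hat C$ is then $o(1)$, and the theorem holds only trivially. In this regime $\hat{\bOmega}=\hat{\V}^{-1}$ cannot hold, since it would be unbounded. So the identity $\tr(\V\B\trans\V^{-1})=q$ that you use to recover the $2q$ is not available.

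\emph{Your standardized reading.} If you take $\bOmega=\V^{-1}$ with $\V=O(M^{-1})$, Step 2 fails. The quantity $\|\hat{\bnu}-\bnu(\hat{\bgamma})\|^2_{\V^{-1}}$ is a Sargan--Hansen overidentification statistic. Under asymptotic normality of $\V^{-1/2}(\hat{\bnu}-\bnu)$ it converges in law to $\chi^2_{J-q}$. It is therefore not $o_p(1)$, and $D\to J-q\neq 0$. Because $C$ is deterministic, $\hat C=C+o_p(1)$ is actually false whenever $J>q$. In this regime the covariance-penalty argument can only give asymptotic unbiasedness, $\mathbb{E}(\hat C)=C+o(1)$, exactly as for Mallows' $C_p$ and AIC.

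Even asymptotic unbiasedness needs $L^2$ rather than in-probability control of the standardized remainder. For instance, you would need $M\,\mathbb{E}\|\hat{\bgamma}-\bgamma\|^4\to0$, uniform integrability of $M\|\hat{\bnu}-\bnu\|^2$, and $\V^{-1/2}\hat{\V}\V^{-1/2}\to\mathbf{I}$ in probability. Condition 3 alone does not supply these: bounded $W_{k,j}$ make $\hat{\bnu}$ bounded but give no rates.

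The paper's proof has the same tension. It assumes $\bOmega$ and $\hat{\bOmega}$ bounded while also using $\tr[\hat{\V}\B(\hat{\bgamma},\hat{\bOmega})\trans\hat{\bOmega}]=\dim(\bgamma)$. So this is not something you missed relative to the paper, but your proposal as written does not resolve it.
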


\begin{proof}
    Since $\bnu(\bgamma)$ is twice differentiable, by Taylor's theorem we have
\begin{equation}
\label{eqn:taylor_expansion_nu_gamma_hat}
    \bnu(\hat{\bgamma}) = \bnu(\bgamma) + \G(\bgamma)(\hat{\bgamma} - \bgamma) + \r(\hat{\bgamma}, \bgamma_1', \ldots, \bgamma_J'),
\end{equation}
where we recall that $\G(\bgamma) =  \partial \bnu(\bgamma)/\partial \bgamma$. Here $\bgamma_j'$ with $j=1,\ldots,J$ lies between $\hat{\bgamma}$ and $\bgamma$, and $\r(\hat{\bgamma}, \bgamma_1', \ldots, \bgamma_J')$ is defined as a vector of length $J$ whose $j^{\text{th}}$ element is given by 
\begin{equation*}
    r_j(\hat{\bgamma}, \bgamma_1', \ldots, \bgamma_J') = \frac{1}{2} (\hat{\bgamma}-\bgamma)\trans \frac{\partial^2 \nu_j (\bgamma_j')}{\partial \bgamma \partial \bgamma\trans} (\hat{\bgamma}-\bgamma).
\end{equation*}

By the definition of $\hat{\bgamma}$ in \eqref{eq:gmwm_new}, we have 
\begin{equation*}
    \0 = \frac{\partial}{\partial \bgamma} \left\{\hat{\bnu} - \bnu(\hat{\bgamma})\right\}\trans \bOmega \left\{\hat{\bnu} - \bnu(\hat{\bgamma})\right\} = -2 \left\{\hat{\bnu} - \bnu(\hat{\bgamma})\right\}\trans \bOmega \G(\hat{\bgamma}),
\end{equation*}
and thus we have
\begin{equation*}
\begin{aligned}
    \0 &= \left\{\hat{\bnu} - \bnu(\hat{\bgamma})\right\}\trans \bOmega \G(\hat{\bgamma}) \\
    &= \left\{\hat{\bnu} - \bnu(\bgamma) - \G(\bgamma) (\hat{\bgamma}-\bgamma) - \r(\hat{\bgamma}, \bgamma_1',\ldots,\bgamma_J')\right\}\trans \bOmega \G(\hat{\bgamma}) \\
    &= \G(\hat{\bgamma})\trans\bOmega \left\{\hat{\bnu}-\bnu(\bgamma)\right\} - \G(\hat{\bgamma})\trans\bOmega \G(\bgamma) \hat{\bgamma} + \G(\hat{\bgamma})\trans\bOmega \G(\bgamma) \bgamma - \G(\hat{\bgamma})\trans\bOmega \r(\hat{\bgamma}, \bgamma_1',\ldots,\bgamma_J').
\end{aligned}
\end{equation*}
By rearranging the terms, we further obtain
\begin{equation*}
\begin{aligned}
    \hat{\bgamma} &= \left\{\G(\hat{\bgamma})\trans\bOmega \G(\bgamma)\right\}^{-1} \G(\hat{\bgamma})\trans\bOmega \left\{\G(\bgamma)\bgamma + \hat{\bnu} - \bnu(\bgamma) - \r(\hat{\bgamma}, \bgamma_1',\ldots,\bgamma_J')\right\} \\
    &= \bgamma + \left\{\G(\hat{\bgamma})\trans\bOmega \G(\bgamma)\right\}^{-1} \G(\hat{\bgamma})\trans\bOmega \left\{\hat{\bnu} - \bnu(\bgamma) - \r(\hat{\bgamma}, \bgamma_1',\ldots,\bgamma_J')\right\}
\end{aligned}
\end{equation*}
By \eqref{eqn:taylor_expansion_nu_gamma_hat}, we can further write
\begin{equation*}
\begin{aligned}
    \bnu(\hat{\bgamma}) &= \bnu(\bgamma) + \G(\bgamma)(\hat{\bgamma} - \bgamma) + \r(\hat{\bgamma}, \bgamma_1', \ldots, \bgamma_J') \\
    &= \bnu(\bgamma) + \r(\hat{\bgamma}, \bgamma_1', \ldots, \bgamma_J') + \G(\bgamma)\left\{\G(\hat{\bgamma})\trans\bOmega \G(\bgamma)\right\}^{-1} \cdot\\ 
    &\hspace{10em}\G(\hat{\bgamma})\trans\bOmega \left\{\hat{\bnu} - \bnu(\bgamma) - \r(\hat{\bgamma}, \bgamma_1',\ldots,\bgamma_J')\right\} \\
    &= \bnu(\bgamma) + \r(\hat{\bgamma}, \bgamma_1', \ldots, \bgamma_J') + \G(\bgamma)\left\{\G(\bgamma)\trans\bOmega \G(\bgamma)\right\}^{-1} \cdot\\
    &\hspace{10em}\G(\bgamma)\trans\bOmega \left\{\hat{\bnu} - \bnu(\bgamma) - \r(\hat{\bgamma}, \bgamma_1',\ldots,\bgamma_J')\right\} \\
    &\quad + \left[\G(\bgamma)\left\{\G(\hat{\bgamma})\trans\bOmega \G(\bgamma)\right\}^{-1} \G(\hat{\bgamma})\trans\bOmega - \G(\bgamma)\left\{\G(\bgamma)\trans\bOmega \G(\bgamma)\right\}^{-1} \G(\bgamma)\trans\bOmega\right] \cdot\\ 
    &\hspace{20em}\left\{\hat{\bnu} - \bnu(\bgamma) - \r(\hat{\bgamma}, \bgamma_1',\ldots,\bgamma_J')\right\} \\
    &= \bnu(\bgamma) + \r(\hat{\bgamma}, \bgamma_1', \ldots, \bgamma_J')
+ \B(\bgamma,\bOmega)
\left\{\hat{\bnu} - \bnu(\bgamma) - \r(\hat{\bgamma}, \bgamma_1',\ldots,\bgamma_J')\right\} \\
&\quad + \left\{
\B(\hat{\bgamma},\bgamma,\bOmega)
-
\B(\bgamma,\bOmega)
\right\}
\left\{\hat{\bnu} - \bnu(\bgamma) - \r(\hat{\bgamma}, \bgamma_1',\ldots,\bgamma_J')\right\}.
\end{aligned}
\end{equation*}

Since $\bnu(\bgamma)$ is twice differentiable and
$\G(\bgamma)=\partial\bnu(\bgamma)/\partial\bgamma$,
the mapping $\G(\cdot)$ is continuous. Therefore,
$\B(\boldsymbol{\gamma}_1,\boldsymbol{\gamma}_2,\boldsymbol{\Omega})$
is continuous in
$(\boldsymbol{\gamma}_1,\boldsymbol{\gamma}_2)$.
In addition, since $\bg$ lies in a compact space and $\bOmega$ has bounded singular values, we can show that $\B(\boldsymbol{\gamma}_1,\boldsymbol{\gamma}_2,\boldsymbol{\Omega})$ has bounded singular values. In addition, since $\hat{\bgamma}\overset{p}{\to} \bgamma$ by Theorem~\ref{thm:consistency_stochastic_parameters}, by the continuous mapping theorem we obtain
\begin{equation*}
    \B(\hat{\bgamma},\bgamma,\bOmega)
    \overset{p}{\to}
    \B(\bgamma,\bOmega).
\end{equation*}
Moreover, we have
\begin{equation*}
    \left\|\r(\hat{\bgamma}, \bgamma_1',\ldots,\bgamma_J')\right\|_2 \leq \sqrt{J} \max_{j=1,\ldots,J} \left|r_j(\hat{\bg}, \bg_1', \ldots, \bg_J')\right| \leq c \|\hat{\bg}-\bg\|_2^2 = o_p(1),
\end{equation*}
where $c$ is some finite positive constant. Here the inequality uses the fact that $\partial^2 \nu_j(\bg) / (\partial\bg\partial\bg\trans)$ has bounded singular values, and the last equality uses $\hat{\bgamma}\overset{p}{\to} \bgamma$ by Theorem~\ref{thm:consistency_stochastic_parameters}. Together with $\hat{\bnu} \overset{p}{\to} \bnu(\bg)$ by Theorem~\ref{theorem:consistency_nu_hat}, by Slutsky's theorem we can obtain 
\begin{equation*}
    \left\{
\B(\hat{\bgamma},\bgamma,\bOmega)
-
\B(\bgamma,\bOmega)
\right\} \left\{\hat{\bnu} - \bnu(\bgamma) - \r(\hat{\bgamma}, \bgamma_1',\ldots,\bgamma_J')\right\} = o_p(1).
\end{equation*}
So we further have
\begin{equation*}
\begin{aligned}
    \bnu(\hat{\bgamma}) = \bnu(\bg) + \B(\bg,\bOmega)\hat{\bnu} - \B(\bg,\bOmega)\bnu(\bg) + o_p(1).
\end{aligned}
\end{equation*}
Therefore, we obtain
\begin{equation*}
    \cov\left\{\hat{\bnu}, \bnu(\hat{\bg})\right\} = \cov\left\{\hat{\bnu}, \B(\bg,\bOmega)\hat{\bnu}\right\} + o(1) = \var(\hat{\bnu}) \B(\bg,\bOmega)\trans + o(1) = \V\B(\bg,\bOmega)\trans + o(1),
\end{equation*}
and 
\begin{equation*}
    C = \mathbb{E}\left\{\left\|\hat{\bnu} - \bnu(\hat{\bgamma}) \right\|_{\bOmega}^2 \right\} + 2 \tr\left[ \cov\left\{\hat{\bnu}, \bnu(\hat{\bgamma})\right\} \bOmega\right] = \mathbb{E}\left\{\left\|\hat{\bnu} - \bnu(\hat{\bgamma}) \right\|_{\bOmega}^2 \right\} + 2 \tr\left[ \bm{\V} \B(\bg,\bOmega)\trans \bOmega\right] + o(1).
\end{equation*}
We recall that 
\begin{equation*}
    \hat{C} =  \left\|\hat{\bnu} - \bnu(\hat{\bgamma}) \right\|_{\hat{\bOmega}}^2 + 2 \tr\left[ \hat{\V} \B(\hat{\bgamma}, \hat{\bOmega})\trans \hat{\bOmega}\right],
\end{equation*}
and thus we have
\begin{equation*}
\begin{aligned}
    \hat{C} - C &= \left\|\hat{\bnu} - \bnu(\hat{\bgamma}) \right\|_{\hat{\bOmega}}^2 - \mathbb{E}\left\{\left\|\hat{\bnu} - \bnu(\hat{\bgamma}) \right\|_{\bOmega}^2 \right\} + 2 \tr\left[ \hat{\V} \B(\hat{\bgamma}, \hat{\bOmega})\trans \hat{\bOmega}\right] - 2 \tr\left[ \bm{\V} \B(\bg,\bOmega)\trans \bOmega\right] + o(1) \\
    &= \left\{\hat{\bnu} - \bnu(\hat{\bg})\right\}\trans \hat{\bOmega}\left\{\hat{\bnu} - \bnu(\hat{\bg})\right\} - \mathbb{E} \left[ \left\{\hat{\bnu} - \bnu(\hat{\bg})\right\}\trans \bOmega \left\{\hat{\bnu} - \bnu(\hat{\bg})\right\} \right] + o_p(1) \\
    &= \left\{\hat{\bnu} - \bnu(\hat{\bg})\right\}\trans (\hat{\bOmega}-\bOmega) \left\{\hat{\bnu} - \bnu(\hat{\bg})\right\} + \left\{\hat{\bnu} - \bnu(\hat{\bg})\right\}\trans \bOmega \left\{\hat{\bnu} - \bnu(\hat{\bg})\right\} - \\ & \quad \quad \mathbb{E} \left[ \left\{\hat{\bnu} - \bnu(\hat{\bg})\right\}\trans \bOmega \left\{\hat{\bnu} - \bnu(\hat{\bg})\right\} \right] + o_p(1) \\
    &= \left\{\hat{\bnu} - \bnu(\hat{\bg})\right\}\trans \bOmega \left\{\hat{\bnu} - \bnu(\hat{\bg})\right\} - \mathbb{E} \left[ \left\{\hat{\bnu} - \bnu(\hat{\bg})\right\}\trans \bOmega \left\{\hat{\bnu} - \bnu(\hat{\bg})\right\} \right] + o_p(1).
\end{aligned}
\end{equation*}

The second equality follows from $\hat{\V}\overset{p}{\to}\V$,
$\hat{\bOmega}\overset{p}{\to}\bOmega$, and
$\B(\hat{\bgamma},\hat{\bOmega})\overset{p}{\to}\B(\bgamma,\bOmega)$
by the continuous mapping theorem. The last equality is because $\left\{\hat{\bnu} - \bnu(\hat{\bg})\right\}\trans (\hat{\bOmega}-\bOmega) \left\{\hat{\bnu} - \bnu(\hat{\bg})\right\} = o_p(1)$. Indeed, $\hat{\bOmega}\overset{p}{\to}\bOmega$ and both matrices have bounded singular values, $\hat{\bnu}$ is bounded as the wavelet coefficients $W_{k,j}$ with $k=1,\ldots,M_j$ and $j=1,\ldots,J$ are all bounded, $\bnu(\hat{\bg})$ is bounded as $\bnu(\cdot)$ is continuous and $\hat{\bg}$ lies in a compact space. Lastly, by Chebyshev's inequality, for an arbitrary constant $M>0$ we have 
\begin{equation*}
\begin{aligned}
    \mathbb{P}\left( \left|\left\{\hat{\bnu} - \bnu(\hat{\bg})\right\}\trans \bOmega \left\{\hat{\bnu} - \bnu(\hat{\bg})\right\} - \mathbb{E} \left[ \left\{\hat{\bnu} - \bnu(\hat{\bg})\right\}\trans \bOmega \left\{\hat{\bnu} - \bnu(\hat{\bg})\right\} \right]\right| \geq M\right) \leq \\
    & \hspace{-15em} M^{-2} \var\left( \left\{\hat{\bnu} - \bnu(\hat{\bg})\right\}\trans \bOmega \left\{\hat{\bnu} - \bnu(\hat{\bg})\right\} \right) = o(1),
\end{aligned}
\end{equation*}
and thus we get
\begin{equation*}
    \left\{\hat{\bnu} - \bnu(\hat{\bg})\right\}\trans \bOmega \left\{\hat{\bnu} - \bnu(\hat{\bg})\right\} - \mathbb{E} \left[ \left\{\hat{\bnu} - \bnu(\hat{\bg})\right\}\trans \bOmega \left\{\hat{\bnu} - \bnu(\hat{\bg})\right\} \right] = o_p(1),
\end{equation*}
and
\begin{equation*}
    \hat{C}-C= \left\{\hat{\bnu} - \bnu(\hat{\bg})\right\}\trans \bOmega \left\{\hat{\bnu} - \bnu(\hat{\bg})\right\} - \mathbb{E} \left[ \left\{\hat{\bnu} - \bnu(\hat{\bg})\right\}\trans \bOmega \left\{\hat{\bnu} - \bnu(\hat{\bg})\right\} \right] + o_p(1) = o_p(1).
\end{equation*}
Moreover, since $\hat{\bnu}$, $\bnu(\hat{\bg})$ are bounded and $\hat{\bOmega}$ has bounded singular values, we have $\|\hat{\bnu}-\bnu(\hat{\bg})\|_{\hat{\bOmega}}^2$ to be bounded. Moreover, $\hat{\V}$ and $\B(\hat{\bg},\bOmega)$ have bounded singular values, so we also have $\tr\left[ \hat{\V} \B(\hat{\bgamma}, \hat{\bOmega})\trans \hat{\bOmega}\right]$ to be bounded, and thus $\hat{C}$ is bounded. Therefore, by the dominated convergence theorem, we also obtain $\mathbb{E}(\hat{C})=C+o(1)$.
\end{proof}

The asymptotic properties of the AIC have been studied for a large class of models (see e.g., \citealp{mcquarrie1998regression, claeskens2008model}). The AIC-like behavior of the proposed criterion can be understood from its GMM structure. The GMWM estimator minimizes a quadratic discrepancy between the empirical WV and the model-implied WV, and is therefore a GMM-type estimator based on WV moments. When $\bnu(\bgamma)$ is linear in $\bgamma$,  the lack-of-fit term is a weighted least-squares problem in the empirical WV. More generally, under regularity conditions, a first-order expansion around the population target gives the same local weighted least-squares form. Based on the weight $\V^{-1}$, the lack-of-fit term is therefore a standardized quadratic form, analogous to the quadratic forms underlying Hansen's $J$-test, likelihood-ratio arguments, and the classical AIC expansion.

This interpretation is consistent with the GMM model and moment selection framework of \citet{andrews2001consistent}. Their AIC-type criterion has the form of a GMM $J$-statistic with a correction depending on the numbers of moments and parameters. In our setting, the number of WV moments is fixed across candidate stochastic models. 
Therefore, for a candidate model with $q_m$ stochastic parameters, their AIC-type GMM criterion reduces, up to a constant common to all candidates, to a quadratic lack-of-fit term plus $2q_m$, matching the structure of the proposed criterion. 
This supports the usual AIC-type interpretation in regular nested comparisons: underfitted models are expected to be selected with probability tending to zero, whereas overfitted models may retain a positive limiting selection probability.

\newpage

\subsection{Computational Results}
\label{app:comp_results}

\subsubsection{Computation of the Theoretical WV}
\label{app:fast_theo_wv}

Let $\{\varepsilon_i\}_{i=1}^n$ be the stochastic error process such that $\mathbb{E}[\boldsymbol{\varepsilon}] = \boldsymbol{0}$ and $\var(\boldsymbol{\varepsilon}) = \bm{\Sigma} \in \real^{n \times n}$ where $\boldsymbol{\varepsilon} = \left(\varepsilon_1 \; \ldots \; \varepsilon_n \right)^{\trans}$. Define the average of the entries of the $h^{\text{th}}$ super diagonal of $\boldsymbol{\Sigma}$ with 
\begin{equation*}
    \gamma(h) = \frac{1}{n-h}\sum_{t=1}^{n-h}\cov(\varepsilon_t, \varepsilon_{t+h}), \text{ where } h = 0, \ldots, n-1.
\end{equation*}
We define our approximation for the theoretical WV of $\mathbf{\varepsilon}$ as:
\begin{equation}
    \nu^{\star}_j = \frac{1}{L_j}\left\{\gamma(0) - \gamma\left(\frac{L_j}{2}\right)\right\} + \frac{2}{L_j^2}\sum_{h=1}^{\frac{L_j}{2}-1}h\left\{2\gamma\left(\frac{L_j}{2}-h\right)-\gamma(h) - \gamma(L_j-h)\right\},
    \label{eq:approx_theo_wv}
\end{equation}
where $L_j = 2^j$. Recall that the theoretical WV is given by:
\begin{equation}
        \nu_j = \mathbb{E}(\hat{\nu}_j) = \frac{1}{M_j}\sum_{t=1}^{M_j}\mathbb{E}\left(W_{j, t}^2\right),
        \label{eq:theo_wv_no_approx}
\end{equation}
where $M_j = n-L_j+1$.

\begin{Theorem}[Order of approximation for the theoretical WV]
\label{theorem:order_approximation_theo_wv}
Given the theoretical WV $\nu_j$ defined in \eqref{eq:theo_wv_no_approx} and the proposed approximation $\nu_j^\star$ defined in \eqref{eq:approx_theo_wv}, we have:
\begin{equation*}
    \nu_j = \nu_j^\star +\mathcal{O}(n^{-1}).
\end{equation*}
\end{Theorem}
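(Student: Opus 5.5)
We expand the theoretical WV in \eqref{eq:theo_wv_no_approx} explicitly in terms of the entries of $\bm{\Sigma}$, using the Haar filter at scale $j$. That is, $h_{j,l} = 1/L_j$ for $l=0,\ldots,L_j/2-1$ and $h_{j,l}=-1/L_j$ for $l = L_j/2,\ldots,L_j-1$, up to the sign and normalization convention behind \eqref{eq:approx_theo_wv}. This gives
\[
\mathbb{E}(W_{j,t}^2) = \sum_{l=0}^{L_j-1}\sum_{m=0}^{L_j-1} h_{j,l}h_{j,m}\,\cov(\varepsilon_{t-l},\varepsilon_{t-m}).
\]
Averaging over $t$ and swapping the order of summation, we regroup the double sum by lag $h=|l-m|$:
\[
\nu_j = \sum_{h=0}^{L_j-1} c_{j}(h)\,\frac{1}{M_j}\sum_{t} \sum_{(l,m):|l-m|=h} \cov(\varepsilon_{t-l},\varepsilon_{t-m}).
\]
The coefficients are $c_j(h)=\sum_{|l-m|=h} h_{j,l}h_{j,m}$, read off filter-pair by filter-pair. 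A direct count of pairs gives the following. At lag $0$ there are $L_j$ pairs, each weighted $1/L_j^2$. At lag $h\in\{1,\ldots,L_j/2-1\}$, the within-half pairs number $2(L_j/2-h)$, each weighted $+1/L_j^2$, and the cross-half pairs number $2h$, each weighted $-1/L_j^2$. Lags $h\ge L_j/2$ involve only cross-half pairs, of which there are $2(L_j-h)$, with negative weight. Resumming these counts is a purely algebraic step: it yields exactly the combination in \eqref{eq:approx_theo_wv}, with $\gamma(\cdot)$ replaced by the window-restricted averages of the superdiagonals. We would verify this by checking the coefficient of each $\gamma(h)$ on both sides, using $\sum_{h=1}^{L_j/2-1}h\{2\gamma(L_j/2-h)\}$ and the analogous terms.

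The remaining step, and the only analytic one, is to compare the superdiagonal average restricted to the range of $t$ actually used with the full average $\gamma(h)$. For fixed lag $h<L_j$ and fixed offsets $(l,m)$, the inner sum $\frac{1}{M_j}\sum_{t=L_j}^{n}\cov(\varepsilon_{t-l},\varepsilon_{t-m})$ runs over $M_j = n-L_j+1$ consecutive entries of the $h$-th superdiagonal. The full superdiagonal has $n-h$ entries. The two averages therefore differ by at most $L_j$ boundary entries plus the renormalization factor $(n-h)/M_j - 1 = O(L_j/n)$. Under the standing assumption that the covariances are uniformly bounded, $|\cov(\varepsilon_s,\varepsilon_u)|\le \max_i \var(\varepsilon_i)<\infty$, each discrepancy is $O(L_j/n)$. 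Since $J$ is fixed, $L_j\le 2^J$ is fixed and the number of $(l,m)$ pairs is $O(L_j^2)$ with weights $1/L_j^2$. Summing then gives $\nu_j-\nu_j^\star = O(L_j/n)=O(n^{-1})$.

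The main obstacle is the combinatorial bookkeeping in the first step: matching exactly the signs and multiplicities of each lag to \eqref{eq:approx_theo_wv}, including the treatment of lag $L_j/2$, which appears as the term $-\gamma(L_j/2)/L_j$. We would handle this by first checking the identity in the stationary case, where $\gamma(h)$ equals the autocovariance and the formula is the known Haar WV expression. Since the resummation is linear in the superdiagonal averages, the nonstationary case then differs only through the boundary terms already bounded above. A uniform bound on the variances (finite second moments, uniformly in $i$) is the only assumption we need, and we would state it explicitly.
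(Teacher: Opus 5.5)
Your proposal follows the paper's proof: expand $\nu_j$ as a filter-weighted sum of window averages along the superdiagonals of $\bm{\Sigma}$, identify the lag-regrouped combination with $\nu_j^\star$, and bound each window-versus-full-superdiagonal discrepancy by $O(L_j/n)$ using uniformly bounded covariances. Your per-offset $(l,m)$ bound is in fact slightly more careful than the paper's $\Delta(|l_2-l_1|)$, which ignores the shift $\min(l_1,l_2)$ of the summation window in the nonstationary case.

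One slip needs fixing in the bookkeeping. For $1\le h<L_j/2$ there are $4(L_j/2-h)$ ordered within-half pairs, not $2(L_j/2-h)$, against $2h$ ordered cross-half pairs. The coefficient of $\gamma(h)$ is therefore $(2L_j-6h)/L_j^2$. This is exactly the coefficient $\frac{2}{L_j^2}\{2(L_j/2-h)-h\}$ that \eqref{eq:approx_theo_wv} gives, so the identity holds once the count is corrected.
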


We now prove Theorem~\ref{theorem:order_approximation_theo_wv}.
\begin{proof}
    Consider the theoretical WV defined as:
   \begin{equation}
   \label{eq:theo_wv_with_approx}
 \begin{aligned}
        \nu_j = \mathbb{E}(\hat{\nu}_j) &= \frac{1}{M_j}\sum_{t=1}^{M_j}\mathbb{E}\left(W_{j, t}^2\right)\\
        &=\frac{1}{M_j}\sum_{t=1}^{M_j}\var(W_{j,t})\\
        &=\frac{1}{M_j}\sum_{t=1}^{M_j}\sum_{l_1=0}^{L_j-1}\sum_{l_2=0}^{L_j-1}\cov\left(h_{j, l_1}\varepsilon_{t+l_1}, h_{j, l_2}\varepsilon_{t+l_2}\right)\\
        &= \sum_{l_1=0}^{L_j-1}\sum_{l_2=0}^{L_j-1}h_{j, l_1}h_{j, l_2} \frac{1}{M_j}\sum_{t=1}^{M_j}\cov \left(\varepsilon_{t+l_1}, \varepsilon_{t+l_2}\right) \\
        &= \frac{1}{2^{2j}}\sum_{l_1=0}^{L_j-1}\sum_{l_2=0}^{L_j-1}(-1)^{\mathbbm{1}\left\{l_1 < \frac{L_j}{2}\right\}}(-1)^{\mathbbm{1}\left\{l_2 < \frac{L_j}{2}\right\}} \frac{1}{M_j}\sum_{t=1}^{M_j}\cov(\varepsilon_{t+l_1}, \varepsilon_{t+l_2})\\
        &=\frac{1}{2^{2j}}\sum_{l_1=0}^{L_j-1}\sum_{l_2=0}^{L_j-1}(-1)^{\mathbbm{1}\left\{l_1< \frac{L_j}{2}\right\}}(-1)^{\mathbbm{1}\left\{l_2< \frac{L_j}{2}\right\}} \left\{\gamma(|l_2-l_1|) + \Delta(|l_2 - l_1|) \right\}\\
        &= \nu_j^\star +\frac{1}{2^{2j}}\sum_{l_1=0}^{L_j-1}\sum_{l_2=0}^{L_j-1}(-1)^{\mathbbm{1}\left\{l_1< \frac{L_j}{2}\right\}}(-1)^{\mathbbm{1}\left\{l_2< \frac{L_j}{2}\right\}} \Delta(|l_2 - l_1|),
 \end{aligned}
\end{equation}
where $\Delta(h) = \frac{1}{M_j}\sum_{t=1}^{M_j}\cov(\varepsilon_t, \varepsilon_{t+h}) - \gamma(h)$, with $h=0, \ldots, L_j-1$.
\\
Note that:
\begin{equation*}
   \begin{aligned}
        \frac{1}{M_j}\sum_{t=1}^{M_j}\cov(\varepsilon_t, \varepsilon_{t+h}) &= \frac{1}{n-L_j + 1}\sum_{t=1}^{n-L_j+1}\cov(\varepsilon_t, \varepsilon_{t+h})\\
        &=\frac{1}{n-L_j + 1}\sum_{t=1}^{n-h -\{(L_j-1) -h\}}\cov(\varepsilon_t, \varepsilon_{t+h}) \\
        &=\frac{1}{n-L_j + 1}\sum_{t=1}^{n-h}\cov(\varepsilon_t, \varepsilon_{t+h}) - \frac{1}{n-L_j +1}\sum_{t=n-(L_j-1)+1}^{n-h}\cov(\varepsilon_t, \varepsilon_{t+h}) \\
        &= \frac{n-h}{n-L_j+1}\gamma(h) -  \frac{1}{n-L_j +1}\sum_{t=n-L_j+2}^{n-h}\cov(\varepsilon_t, \varepsilon_{t+h}). 
   \end{aligned}
\end{equation*}
Hence, we have:
\begin{equation*}
    \begin{aligned}
        \Delta(h) &= \frac{1}{M_j}\sum_{t=1}^{M_j}\cov(\varepsilon_t, \varepsilon_{t + h}) - \gamma(h)\\
        &=\frac{n-h-n+L_j-1}{n-L_j+1}\gamma(h) - \frac{1}{n-L_j + 1}\sum_{t=n-L_j + 2}^{n-h}\cov(\varepsilon_t, \varepsilon_{t+h})\\
        &=\frac{L_j-h-1}{n-L_j+1}\gamma(h)-\frac{1}{n-L_j + 1}\sum_{t=n-L_j + 2}^{n-h}\cov(\varepsilon_t, \varepsilon_{t+h}).
    \end{aligned}
\end{equation*}
\\
Therefore, using the bounded second moment of $\boldsymbol{\varepsilon}$ implied by the model defined in \eqref{eq.reg_complete}, we have:
\begin{equation*}
    \begin{aligned}
        |\Delta(h)| \leq C_2 \frac{L_j-h-1}{n - L_j + 1} =C_2 \frac{(L_j-1)-h}{n-(L_j-1)},
    \end{aligned}
\end{equation*}
where $C_2$ is some finite positive constant. Then we further have:
\begin{equation*}
    \begin{aligned}
        \max_{h=0,\ldots,L_j-1} |\Delta(h)| \leq C_2 \frac{L_j-1}{n - (L_j-1)} = C_2 \frac{1}{\frac{n}{L_j-1}-1} \leq C_2 \frac{1}{\frac{n}{2^J-1}-1}.
    \end{aligned}
\end{equation*}

We can then obtain the following rate for the second term of the last equality of \eqref{eq:theo_wv_with_approx}:
\begin{equation*}
    \begin{aligned}
        \left| \frac{1}{2^{2j}}\sum_{l_1=0}^{L_j-1}\sum_{l_2=0}^{L_j-1}(-1)^{\mathbbm{1}\left\{l_1< \frac{L_j}{2}\right\}}(-1)^{\mathbbm{1}\left\{l_2< \frac{L_j}{2}\right\}} \Delta(|l_2 - l_1|) \right| =\mathcal{O}(n^{-1}).
    \end{aligned}
\end{equation*}
\\
Finally, we obtain:
\begin{equation*}
    \begin{aligned}
         \nu_j 
         &= \nu_j^\star +\frac{1}{2^{2j}}\sum_{l_1=0}^{L_j-1}\sum_{l_2=0}^{L_j-1}(-1)^{\mathbbm{1}\left\{l_1< \frac{L_j}{2}\right\}}(-1)^{\mathbbm{1}\left\{l_2< \frac{L_j}{2}\right\}} \Delta(|l_2 - l_1|) = \nu_j^\star + \mathcal{O}(n^{-1}).
    \end{aligned}
\end{equation*}

\end{proof}

\newpage

\subsubsection{Approximation of the Theoretical WV using Residuals}
\label{app:IminusH}

Theorem~\ref{theorem:consistency_nu_hat} in Appendix~\ref{app:consistency_nu_hat} state that the empirical WV computed on the estimated vector of residuals $\hat{\boldsymbol{\varepsilon}}$ denoted as $\hat{\boldsymbol{\nu}}$ converges in probability to the theoretical WV of the process $\tilde{\boldsymbol{\varepsilon}}$ denoted as $\boldsymbol{\nu}$. Note that $\hat{\boldsymbol{\nu}}$ and $\boldsymbol{\nu}$ have the same dimension that is finite. More specifically, we show in Theorem~\ref{theorem:consistency_nu_hat} that each entry of $\hat{\boldsymbol{\nu}}$ converges in probability to the corresponding entry of $\boldsymbol{\nu}$. Based on this result, we could compute the theoretical WV $\boldsymbol{\nu}\left( \bm{\gamma}, {\boldsymbol{\vartheta}} \right)$ in \eqref{eq:gmwm_new} based on the variance covariance of $\tilde{\boldsymbol{\varepsilon}}$, i.e., $\var(\tilde{\boldsymbol{\varepsilon}}) =\boldsymbol{\Sigma}\left(\boldsymbol{\gamma}\right) \odot \left\{\boldsymbol{\Lambda}(\boldsymbol{\vartheta}) + \mu(\boldsymbol{\vartheta})^2 \mathbf{1} \mathbf{1}^{\trans} \right\}$. However, in this appendix, we propose a finite sample approximation that we use in practice. More precisely, we propose to obtain the theoretical WV by using the approximation described in Appendix~\ref{app:fast_theo_wv} computed on the following matrix:
\begin{equation}
\label{eq:matrix_on_which_compute_theo_wv}
    \left\{(\mathbf{I}-\P)\boldsymbol{\Sigma}\left(\boldsymbol{\gamma}\right)(\mathbf{I}-\P)\right\} \odot \left\{  \boldsymbol{\Lambda}(\boldsymbol{\vartheta}) + \mu(\boldsymbol{\vartheta})^2 \mathbf{1} \mathbf{1}^{\trans} \right\} \text{ ,}
\end{equation}
where $\I$ is a $n\times n$ identity matrix, $\P$ is the projection matrix $\P={\X}\left({\X}^{\trans}{\X}\right)^{-1}{\X}^{\trans}$ and $\boldsymbol{\Sigma}(\boldsymbol{\gamma})$ is the variance covariance of the error process $\boldsymbol{\varepsilon}$ parametrized by $\boldsymbol{\gamma}$. 

\noindent We motivate this finite sample approximation as follows.
Recall the definition of the estimated residuals $\hat{\boldsymbol{\varepsilon}} = \tilde{\Y}- \tilde{\X}\hat{\boldsymbol{\beta}}$ and $\tilde{\boldsymbol{\varepsilon}} = \boldsymbol{\varepsilon} \odot \Z$. By Theorem~\ref{theorem:short_memory}, we have that $\forall i,j , \bigg|\var(\hat{\boldsymbol{\varepsilon}})_{i,j} - \var(\tilde{\boldsymbol{\varepsilon}})_{i,j}\bigg| \overset{p}{\longrightarrow}0 $. 
Moreover, we have that $\var\left(\tilde{\boldsymbol{\varepsilon}}\right) = \boldsymbol{\Sigma}\left(\boldsymbol{\gamma}\right) \odot \left\{\boldsymbol{\Lambda}(\boldsymbol{\vartheta}) + \mu(\boldsymbol{\vartheta})^2 \mathbf{1} \mathbf{1}^{\trans} \right\}$. This result is easily obtained using the independence of $\Z$ and $\boldsymbol{\varepsilon}$ as well as $\mathbb{E}(\boldsymbol{\varepsilon}) = \mathbf{0}$. Indeed we have:
\begin{equation*}    \begin{aligned}
\var(\tilde{\boldsymbol{\varepsilon}}) &=\var(\boldsymbol{\varepsilon} \odot \Z)\\
&=\mathbb{E}\left(\boldsymbol{\varepsilon} \boldsymbol{\varepsilon}^{\trans}\right)\odot \mathbb{E}\left(\Z \Z^{\trans}\right)-
\mathbb{E}(\boldsymbol{\varepsilon})\mathbb{E}(\boldsymbol{\varepsilon})^{\trans} \odot \mathbb{E}(\Z)\mathbb{E}(\Z)^{\trans} \\
&=  \left\{\boldsymbol{\Sigma}(\boldsymbol{\gamma})+ \mathbb{E}(\boldsymbol{\varepsilon})\mathbb{E}(\boldsymbol{\varepsilon})^{\trans}\right\} \odot \left\{\boldsymbol{\Lambda}(\boldsymbol{\vartheta})+\mathbb{E}(\Z) \mathbb{E}(\Z)^{\trans} \right\}-\left\{\mathbb{E}(\boldsymbol{\varepsilon})\mathbb{E}(\boldsymbol{\varepsilon})^{\trans}\right\} \odot \left\{\mathbb{E}(\Z) \mathbb{E}(\Z)^{\trans}\right\} \\
&= \boldsymbol{\Sigma}(\boldsymbol{\gamma}) \odot \boldsymbol{\Lambda}(\boldsymbol{\vartheta}) + \boldsymbol{\Sigma}(\boldsymbol{\gamma}) \odot \left\{\mathbb{E}(\Z) \mathbb{E}(\Z)^{\trans}\right\} \\
&= \boldsymbol{\Sigma}\left(\boldsymbol{\gamma}\right) \odot \left\{\boldsymbol{\Lambda}(\boldsymbol{\vartheta}) + \mu(\boldsymbol{\vartheta})^2 \mathbf{1} \mathbf{1}^{\trans} \right\}.
\end{aligned}
\end{equation*}
Also, consider the estimated residuals when there are no missing observations defined as $\hat{\boldsymbol{\varepsilon}}^{\text{comp}}=\mathbf{Y}-\mathbf{X}\hat{\boldsymbol{\beta}}$ where $\hat{\boldsymbol{\beta}}$ is the estimator with $\tilde{\mathbf{Y}} = \mathbf{Y}$ and $\tilde{\mathbf{X}} = \mathbf{X}$ (i.e., with complete observations). We have that $\var(\hat{\boldsymbol{\varepsilon}}^{\text{comp}} \odot \Z) =  \left\{(\mathbf{I}-\P)\boldsymbol{\Sigma}\left(\boldsymbol{\gamma}\right)(\mathbf{I}-\P)\right\} \odot \left\{  \boldsymbol{\Lambda}(\boldsymbol{\vartheta}) + \mu(\boldsymbol{\vartheta})^2 \mathbf{1} \mathbf{1}^{\trans} \right\} $ and since $\hat{\boldsymbol{\varepsilon}}^{\text{comp}} \xrightarrow{p} \boldsymbol{\varepsilon}$ by Theorem~\ref{theorem:short_memory}, we have that 
\begin{equation*}
    \Big[\boldsymbol{\Sigma}\left(\boldsymbol{\gamma}\right) \odot \left\{\boldsymbol{\Lambda}(\boldsymbol{\vartheta}) + \mu(\boldsymbol{\vartheta})^2 \mathbf{1} \mathbf{1}^{\trans} \right\}\Big]_{i,j} - \Big[\left\{(\mathbf{I}-\P)\boldsymbol{\Sigma}\left(\boldsymbol{\gamma}\right)(\mathbf{I}-\P)\right\} \odot \left\{  \boldsymbol{\Lambda}(\boldsymbol{\vartheta}) + \mu(\boldsymbol{\vartheta})^2 \mathbf{1} \mathbf{1}^{\trans} \right\}\Big]_{i,j} \xrightarrow{p}0.
\end{equation*}
Therefore, we have $\forall (i,j)$:
\begin{equation*}
\begin{aligned}
    \Big|\var(\hat{\boldsymbol{\varepsilon}})_{i,j} - \var(\hat{\boldsymbol{\varepsilon}}^{\text{comp}} \odot \Z)_{i,j}\Big| 
    &= \Big| \var(\hat{\boldsymbol{\varepsilon}})_{i,j} -\var(\boldsymbol{\varepsilon} \odot \Z)_{i,j} + \var(\boldsymbol{\varepsilon} \odot \Z)_{i,j}   - \var(\hat{\boldsymbol{\varepsilon}}^{\text{comp}} \odot \Z)_{i,j}\Big| \\
    &\leq \Big| \var(\hat{\boldsymbol{\varepsilon}})_{i,j} -\var(\boldsymbol{\varepsilon} \odot \Z)_{i,j}\Big| + \Big| \var(\boldsymbol{\varepsilon} \odot \Z)_{i,j}   - \var(\hat{\boldsymbol{\varepsilon}}^{\text{comp}} \odot \Z)_{i,j}\Big| \\
    &= o_p(1) + o_p(1) \xrightarrow{p}0.
\end{aligned}
\end{equation*}

A problem with this approach though is that, within the GMWM, the matrix in \eqref{eq:matrix_on_which_compute_theo_wv} should be computed at each evaluation of the GMWM objective function which requires two matrix multiplications of $n \times n$ matrices (entailing a computational complexity of order $\mathcal{O}(n^3)$) as well as the Hadamard product of two $n \times n $ matrices (which has a computational complexity of order $\mathcal{O}(n^2)$).  
\\ 
To overcome this computational bottleneck, we rely on the approximation of the theoretical WV proposed in \ref{app:fast_theo_wv}  based on Lemma 1 of \cite{xu2017study}, which states that the theoretical WV of a (zero-mean) non-stationary process is a linear function in $L_J$ of the averages taken over the diagonal and superdiagonals elements of the process covariance matrix. Considering that $\left\{\boldsymbol{\Lambda}({\boldsymbol{\vartheta}}) + \mu({\boldsymbol{\vartheta}})^2 \mathbf{1} \mathbf{1}^{\trans} \right\} $ is a Toeplitz matrix due to the stationarity of $\Z$, the strategy that we propose to efficiently compute $\boldsymbol{\nu}\left( \bm{\gamma}, \boldsymbol{\vartheta} \right)$ is built on different considerations. We start by considering the case where the error process $\bm{\varepsilon}$ is stationary. Based on this, using the notation $\R = \I- \P$, the first consideration is that $\R\boldsymbol{\Sigma}(\boldsymbol{\gamma})$ is a close approximation to $\R\boldsymbol{\Sigma}(\boldsymbol{\gamma})\R$. An intuitive argument as to why this approximation can be considered reasonable is given by analyzing the trace of the matrix $\R\boldsymbol{\Sigma}(\boldsymbol{\gamma})\R$. Indeed we have:

\begin{equation*}
  \begin{aligned}
        \tr\left[ (\I-\P)\boldsymbol{\Sigma}(\boldsymbol{\gamma})(\I-\P) \right] &=  \tr\left[\boldsymbol{\Sigma}(\boldsymbol{\gamma}) - \boldsymbol{\Sigma}(\boldsymbol{\gamma})\P - \P \boldsymbol{\Sigma}(\boldsymbol{\gamma}) + \P\boldsymbol{\Sigma}(\boldsymbol{\gamma})\P\right] \\
        &=\tr\left[ \boldsymbol{\Sigma}(\boldsymbol{\gamma}) \left(\I -\P \right) - \P \boldsymbol{\Sigma}(\boldsymbol{\gamma}) \left(\I - \P\right) \right] \\
        &= \tr\left[ \boldsymbol{\Sigma}(\boldsymbol{\gamma}) \left(\I -\P \right) - \P \boldsymbol{\Sigma}(\boldsymbol{\gamma}) + \P \boldsymbol{\Sigma}(\boldsymbol{\gamma})\P \right] \\
        &=  \tr\left[\boldsymbol{\Sigma}(\boldsymbol{\gamma}) \left(\I -\P \right) \right] -  \tr\left[ \P \boldsymbol{\Sigma}(\boldsymbol{\gamma})\right] +  \tr\left[ \P \boldsymbol{\Sigma}(\boldsymbol{\gamma})\P  \right] \\
        &= \tr\left[\boldsymbol{\Sigma}(\boldsymbol{\gamma}) \left(\I -\P \right) \right] -  \tr\left[ \P \boldsymbol{\Sigma}(\boldsymbol{\gamma})\right] +  \tr\left[ \P\P \boldsymbol{\Sigma}(\boldsymbol{\gamma})  \right] \\
        &= \tr\left[\boldsymbol{\Sigma}(\boldsymbol{\gamma})\left(\I -\P \right) \right] -  \tr\left[ \P \boldsymbol{\Sigma}(\boldsymbol{\gamma})\right] +  \tr\left[ \P \boldsymbol{\Sigma}(\boldsymbol{\gamma})  \right] \\
        &= \tr\left[ \left(\I -\P \right) \boldsymbol{\Sigma}(\boldsymbol{\gamma})\right] = \tr\left[ \R \boldsymbol{\Sigma}(\boldsymbol{\gamma})\right]
  \end{aligned}
\end{equation*}
\\
 Hence, we will first approximate $\R\boldsymbol{\Sigma}(\boldsymbol{\gamma})\R$ by $\R\boldsymbol{\Sigma}(\boldsymbol{\gamma})$. Now, given the approximation discussed in Appendix \ref{app:fast_theo_wv}, we are interested in efficiently computing the average over the diagonal and superdiagonals of the matrix $\R\boldsymbol{\Sigma}(\boldsymbol{\gamma})$ where $\R$ is idempotent and symmetric and $\boldsymbol{\Sigma}(\boldsymbol{\gamma})$ is symmetric and Toeplitz (assuming a stationarity of the error process $\bm{\varepsilon}$). Using $\rho(k) = \cov(\varepsilon_t , \varepsilon_{t+k})$, with $k=0, \ldots , n-1$, to denote the autocovariance function of $\bm{\varepsilon}$, without lack of generality, we define the sum of elements over the diagonal and over the $l^{\text{th}}$ superdiagonal of $\R \boldsymbol{\Sigma}(\boldsymbol{\gamma})$ as
\begin{equation*}
    S_l=\sum_{i=1}^{n-l}\left[\R \boldsymbol{\Sigma}(\boldsymbol{\gamma})\right]_{i, i+l} = \sum_{k=1}^n \sum_{i=1}^{n-l} R_{i k} \rho(k-i-l),\; \text{ for } l=0, \ldots , n-1 \text{.}
\end{equation*}

 It can be shown that
\begin{equation}
\label{eq::decomposition_s_l}
    S_l = \sum_{j=1}^{m-1} \left[\rho(-l-j) \sum_{i=1}^{m-j} R_{i+j, i}\right]
+
\sum_{j=0}^l \left[ \rho(-j) \sum_{i=1}^m R_{i, i+l-j}\right]
+
\sum_{j=1}^{m-1} \left[ \rho(j) \sum_{i=1}^{m-j} R_{i, i+l+j}\right], 
\end{equation}
where $m = n-l$, which allows to pre-compute the quantities that depend on $\R$ outside of the optimization function and to simply index them in the optimization operation.

 Nevertheless, evaluating all the quantities involving $\R$ in \eqref{eq::decomposition_s_l} remains computationally expensive, especially as $n$ grows. To address this issue, we propose the following computational strategy. Let
\[
\S = \left(S_0, S_1, \ldots, S_{n-1}\right)^{\trans}
\quad \text{and} \quad
\boldsymbol{\rho} = \left(\rho(0), \rho(1), \ldots, \rho(n-1)\right)^{\trans}
\]
denote, respectively, the vector of diagonal and superdiagonal sums and the autocovariance vector of the error process $\boldsymbol{\varepsilon}$. Rather than computing all entries of $\S$, we evaluate $S_l$ only for a selected subset of lags and consider the corresponding differences between $\S$ and $\boldsymbol{\rho}$. The full vector $\S$ is then approximated by linearly interpolating these differences across lags. This approach substantially reduces the computational burden. While a formal approximation error bound for this interpolation strategy is currently unavailable, our simulation results suggest that it provides an accurate approximation of $\S$ and of the resulting theoretical WV across the range of sample sizes, missing-data settings, and stochastic models considered in this work. Empirically, we found that the number of points on which to evaluate $\S$ can be chosen of the order $n^\star=a+b\log_2(n)$, with $a=1$ and $b=3$.
We can then rely on the approximation proposed in Appendix~\ref{app:fast_theo_wv} together with the interpolation strategy described above to obtain a computationally efficient approximation of the theoretical WV computed on the estimated residuals process $\hat{\boldsymbol{\varepsilon}}$.

It is important to note that, when the stochastic process $\boldsymbol{\varepsilon}$ is stationary, the approximation is due to (i) considering $\R\boldsymbol{\Sigma}(\boldsymbol{\gamma})$ instead of $\R\boldsymbol{\Sigma}(\boldsymbol{\gamma})\R$ and (ii) because we approximate $\S$ using a linear interpolation. When the process  $\boldsymbol{\varepsilon}$ is not stationary, we introduce an additional approximation step. Indeed the result obtained in \eqref{eq::decomposition_s_l} is only valid in case $\boldsymbol{\Sigma}(\boldsymbol{\gamma})$ is a Toeplitz matrix which allows us to work directly with the autocovariance function. However, in the case of a non-stationary process, the autocovariance function is not clearly defined. Therefore, when the process $\boldsymbol{\varepsilon}$ is not stationary, we consider the same approximation strategy but consider the average over the diagonal and superdiagonals of the covariance matrix of $\boldsymbol{\varepsilon}$ instead of $\boldsymbol{\rho}$. For the power-law noises considered in this work, the average of the diagonal and superdiagonals can be efficiently computed using the strategy described above.
\\

\subsubsection{Computation of the WV Covariance Matrix}
\label{app:compute_v}

Consider the error process $\bm{\varepsilon}$ to be a stationary process with  autocovariance function $\rho(k) = \cov(\varepsilon_i, \varepsilon_{i+k})$. We also define the autocovariance of the wavelet coefficients at scale $j$ as 
$$
f_j(h) = \cov\left(W_{j, t}, W_{j, t+h}\right) .
$$
We now construct the autocovariance of the wavelet coefficients issued from the Haar wavelet filter at scale $j=1$, $f_1(k), k = 1, \ldots n$ using the following relationship:

\begin{equation*}
f_1(h)  =\frac{1}{2} \rho(k)-\frac{1}{4} \rho(k-1)-\frac{1}{4} \rho(k+1).
\end{equation*}

From this expression, we can now recursively compute the autocovariance of the wavelet coefficients for scales $2,\ldots, J$ using the following formula:

  \begin{equation*}
        f_{j+1}(k)= \frac{3}{2} f_j(k)+f_j\left(k+\frac{2^j}{2}\right)+f_j\left(k-\frac{2^j}{2}\right)+\frac{1}{4} f_j\left(k+2^j\right)+\frac{1}{4} f_j\left(k-2^j\right) .
    \end{equation*}

Using these expression, the variance of the estimated WV is given by:

\begin{equation*}
     \var\left(\hat{\nu}_j\right) =  \frac{2}{M_j^2} \sum_{i=-M_{j}+1}^{M_j-1}\left(M_j-|i|\right) f_j(i)^2,
\end{equation*}

while the covariance is given by

\begin{equation*}
  \begin{aligned}
        \cov\left(\hat{\nu}_j, \hat{\nu}_{j+l}\right) & =\cov\left(\frac{1}{M_j} W_j^{\trans} W_j, \frac{1}{M_{j+l}} W_{j+l}^{\trans} W_{j0+l}\right)\\
        & = \frac{2}{M_j M_{j+l}} \sum_{t=1}^{M_{j+l}} \sum_{m=1}^{M_j} C_{t, m}^2,
  \end{aligned}
\end{equation*}

where the summation above is as follows:

\begin{equation*}
\begin{aligned}
\sum_{t=1}^{M_{j+k}} \sum_{m=1}^{M_j} C_{t, m}^2
&= \sum_{i=-M_{j}+1}^{M_{j +l}-1} \Bigl[
    M_{j + l} \mathbbm{1}\left\{2^j-2^{j+l} \leqslant t \leqslant 0\right\} \\
&\quad + \left(M_{j+l}-|i|\right) \mathbbm{1}\{i>0\}
    + \left(M_{j+l}-\left|i-2^j+2^{j+l}\right|\right)
      \mathbbm{1}\left\{i<2^j-2^{j+l}\right\}
\Bigr] \delta_i ,
\end{aligned}
\end{equation*}

where
\begin{equation*}
    \delta_i  =\left[\sum_{p=0}^{2^l-2}\left\{\frac{p+1}{2^l} f_j\left(i+p  2^{j-1}\right)+\frac{2^l-1-p}{2^l} f_j\left(i+p 2^{j-1}+2^{j+l-1}\right)\right\}+f_j\left(i+2^{j+l-1}-2^{j-1}\right)\right]^2.
\end{equation*}

When the wavelet coefficients are not stationary, using $\boldsymbol{\Sigma}$ and $\boldsymbol{\Sigma}_{W}$ to represent the covariance matrix of the error process and of the wavelet coefficients respectively, then we obtain the variance of the WV using the following procedure:

\begin{enumerate}
    \item Compute $\var\left(\W_1\right)$, where $\W_1 \in \real^{M_1}$ is the vector containing the wavelet coefficients at scale $j=1$, as follows:
$$
\begin{aligned}
&\var \left(\W_1\right)_{k, l}=\sum_{\substack{i \in\{k, k+1\} \\ j \in\{l, l+1\}}} A_{k i} A_{l j}\left(\boldsymbol{\Sigma}\right)_{i j}\\
&\text { where } \quad A_{k i}=\left\{\begin{array}{cc}
-\frac{1}{2} & \text { if } i=k \\
\frac{1}{2} & \text { if } i=k+1
\end{array} \quad A_{l j}= \begin{cases}-\frac{1}{2} & \text { if } j=l \\
\frac{1}{2} & \text { if }j=l+1.\end{cases}\right.
\end{aligned}
$$
    
    \item  Compute all the variances the wavelet coefficients at scale $j=2, \ldots, J$ as follows:

$$
\begin{aligned}
\var\left(\W_{j+1}\right)_{k l}= & \sum_{i=1}^{M_j} \sum_{j=1}^{M_j} B_{k i} \{\var\left(\W_j\right)\}_{i j}  B_{l j} \\
= 
& \sum_{\substack{
i \in\left\{k, k+\frac{L_j}{2}, k+L_j\right\}
\\
j \in\left\{l, l+\frac{L_j}{2}, 1 +L_j\right\}
}}B_{k i} B_{l j}\left[\var\left(\W_{j}\right)\right]_{i j},
\end{aligned}
$$
where
$$
B_{k i}=\left\{\begin{array}{ll}
\frac{1}{2} & \text { if } i=k \\
1 & \text { if }i=k+\frac{L_j}{2} \\
\frac{1}{2} &\text { if } i=k+L_j
\end{array} \; \text{ and } \;B_{l j}=\left\{\begin{array}{lll}
\frac{1}{2} & \text { if } j=l \\
1 & \text { if } j=l+\frac{L_j}{2} \\
\frac{1}{2} & \text { if } j=l+L_j.
\end{array}\right.\right.
$$

    \item Compute $C_{j, j+k} =\cov\left(\W_j, \W_{j+k}\right)$ as follows:

$$
C_{l m}=\sum_{p=0}^{2^{k+1}-2} C_p \left(\Sigma_{W_j}\right)_{l+p 2^{j-1}, m } \quad \begin{array}{r}
\text { with } l=1 \cdots M_{j+k} \text{ and } \\
m=1 \cdots M_j,
\end{array}
$$

where

$$
C_p= \begin{cases}\frac{p+1}{2^k} & \text { if } 0 \leqslant p \leqslant 2^k-1 \\ \frac{2^{k+1}-1-p}{2^k} & \text { if } 2^k \leqslant p \leqslant 2^{k+1}-2.\end{cases}
$$

    \item Compute
$$
\cov\left(\hat{\nu}_j, \hat{\nu}_{j+k}\right)=\frac{2}{M_j M_{j+k}}  \sum_{l=1}^{M_{j+k}} \sum_{m=1}^{M_j}\left[\cov\left(\W_j, \W_{j+k}\right)\right]_{l m}^2,
$$

where we rewrite the above equation as

$$
\cov\left(\hat{\nu}_j, \hat{\nu}_{j+k}\right)=\frac{2}{M_j M_{j+k}} \sum_{h=-M_{j+1}}^{M_{j+k}-1} C_h \tilde{f}_{j, j+k}(h)^2,
$$

where

$$
C_h= \begin{cases}M_{j+k} & \text { if } h=M_{j+k}-M_j, \cdots, 0 \\ M_{j+k}-h & \text { if } h=1, \ldots,  M_{j+k}-1 \\ M_{j+k}-\left|h-M_{j+k}+M_j\right| & \text { if } h=1-M_j, \ldots, M_{j+k}-M_j-1\end{cases}
$$

and $\tilde{f}_{j, j+k}(h)$ is defined as:

$$
\tilde{f}_{j, j+k}(h) = \begin{cases}
    \lambda_1 & \text{ if } h= M_{j+k} - M_j , \ldots, 0 \\
  \lambda_2 & \text{ if } h= 1, \ldots, M_{j+k}-1\\
   \lambda_3 & \text{ if } h= 1 - M_j, \ldots, M_{j+k} -M_j -1,
\end{cases}
$$

\noindent where 
\begin{itemize}
    \item $\lambda_1 = \sqrt{\frac{1}{M_{j+k}} \sum_{l=1}^{M_{j+k}} \cov\left(W_{j+k, l}, W_{j, l-h}\right)^2}$,
    \item $\lambda_2 = \sqrt{\frac{1}{M_{j+k}-h} \sum_{l=h+1}^{M_{j+k}} \cov\left(W_{j+k, l}, W_{j, l-h}\right)^2}$,
    \item $\lambda_3 = \sqrt{\frac{1}{M_{j+k}-|h - M_{j+k}+M_j|} \sum_{l=1}^{M_{j+k} - |h - M_{j+k}+M_j|} \cov\left(W_{j+k, l}, W_{j, l-h}\right)^2}$.
\end{itemize}
\end{enumerate}

\noindent Rewriting $\cov\left(\hat{\nu}_j, \hat{\nu}_{j+k}\right)$ with the function $\tilde{f}_{j, j+k}(h) $ allows us to construct a fast approximation of $\var\left(\hat{\bnu}\right)$ in the non-stationary setting.

\subsection{Simulation Studies}
\label{app:simulations}

\subsubsection{Complete Simulation Results}
\label{app:main_simu_results}

In this section, we provide the complete results of the simulation studies described in Section~\ref{sec:simulations}. More precisely, we provide the ratio of the length of the CI for the simulation setting that considers the stochastic model WN + MAT + RW discussed in Section~\ref{sec:simulations} as well as the results for the two other simulation settings that consider the stochastic model WN + FL and WN + PL. For both simulation settings, we present the average running time, the average type I error, and the median ratio of the length of the CI. The performance of the proposed WAMORE framework are compared to the MLE implemented in the software \texttt{Hector} v2.0 of \cite{Bos2008}. Moreover, for the missingness process $\Z$, we consider a Markov model with transition probabilities defined as follows:

\begin{equation}
\label{eq:markov_model_def}
\begin{aligned}
P(Z_2=1 \mid Z_1=1) &= 1-p_1, &\quad P(Z_2=0 \mid Z_1=1) &= p_1, \\
P(Z_2=1 \mid Z_1=0) &= p_2,   &\quad P(Z_2=0 \mid Z_1=0) &= 1-p_2 .
\end{aligned}
\end{equation}

It can be shown that the general Markov chain represented in \eqref{eq:markov_model_def} is a special example of \eqref{eq:markov_missinigness} and hence respects this requirement for the WAMORE properties to hold. Proof of this is given in Appendix \ref{app:beta_mixing}. With this definition, we can interpret $p_1 \in (0,1)$ as the probability of an observation being missing while $p_2 \in (0,1)$ is the opposite (hence, for example, a higher the value of $p_1$ implies a higher probability of missing observations, all other things remaining equal). Based on this, we have $ \mathbb{E}[Z]= \nicefrac{p_2}{p_1 + p_2}$ which can be interpreted as the general probability of observing a value. Modeling the missingness in GNSS time series using such a process seems appropriate since missing observations can occur due to environmental or human factors, like receiver antenna replacement, poor observation conditions, signal interruptions, or intense crustal movement \citep{bao2021filling}. These factors generally entail a missingness process where there is a low probability of not observing the next data point, but when a point is not observed, it is likely that several consecutive observations will also be missing. Based on this we consider nine settings of missingness which are summarized in Table~\ref{tab:parameters:missingness}.
\begin{table}[h!]
\centering
\begin{tabular}{c c c c}
\toprule
Setting & $p_1$ & $p_2$ & $\mathbb{E}[Z]$\\ 
\midrule
1 & 0.00 & - & 1 \\ 
2 & 0.05 & 0.95 & 0.95\\
3 & 0.05 & 0.45 & 0.90\\
4 & 0.06 & 0.34 & 0.85\\
5 & 0.05 & 0.20 & 0.80\\
6 & 0.06 & 0.18 & 0.75 \\
7 & 0.06 & 0.14 & 0.70\\
8 & 0.07 & 0.13 & 0.65 \\
9 & 0.10 & 0.15 & 0.60\\ 
\bottomrule
\end{tabular}
\caption{Parameter values for the missingness process $\Z$ for each setting.}
\label{tab:parameters:missingness}
\end{table}
\\
In Figure~\ref{fig:wn_matern_rw_ratio_ci_length}, we present the ratio of the length of CI for the setting considering the stochastic model WN + MAT + RW for which the average running time and empirical type I error are presented in Section~\ref{sec:simulations}. In Figure~\ref{fig:wn_fl_running_time} and Figure~\ref{fig:wn_fl_type_1_error}, we present the average running time and the empirical type I error for the setting that considers a stochastic model composed of WN and FL while Figure~\ref{fig:wn_fl_ratio_length_ci} presents the median ratio of the length of CI between the WAMORE approach and the MLE implemented in \texttt{Hector} for that setting. Finally, in Figure~\ref{fig:wn_pl_running_time}, Figure~\ref{fig:wn_pl_type_1_error} and Figure~\ref{fig:wn_pl_ratio_length_ci}, we present respectively the average running time, the empirical type I error and the median ratio of the length of CI for the setting WN + PL.
\begin{figure}
    \centering
    \includegraphics[width=0.6\linewidth]{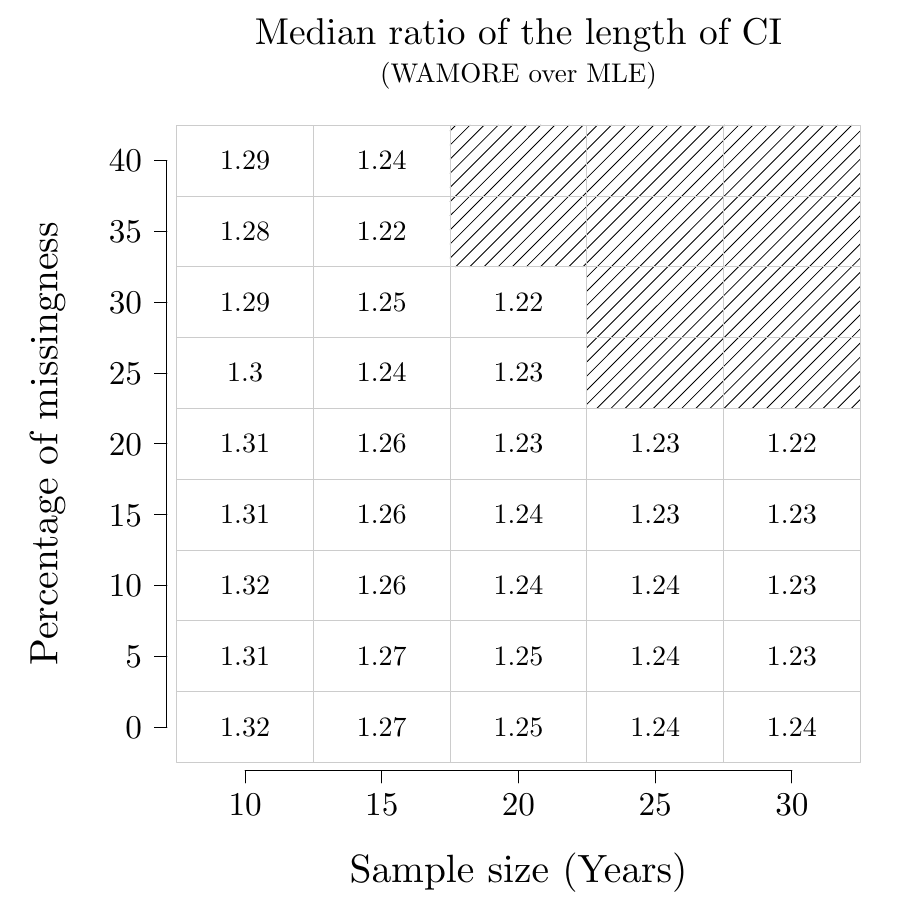}
    \caption{Median ratio of the length of confidence intervals for the setting with the WN + MAT + RW model.}
    \label{fig:wn_matern_rw_ratio_ci_length}
\end{figure}

\begin{figure}
    \centering
    \includegraphics[width=0.9\linewidth]{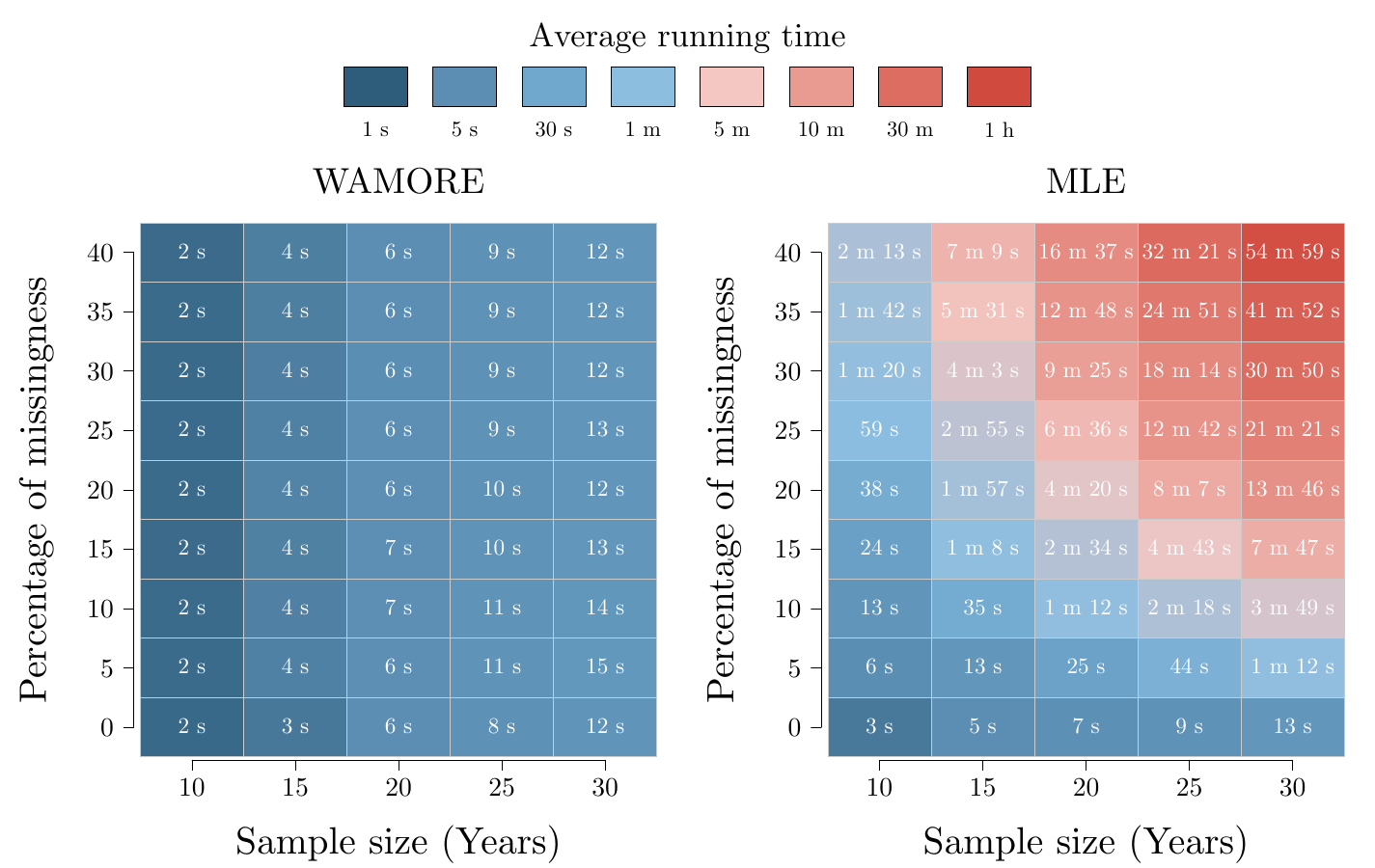}
    \caption{Average running time of the WAMORE (left panel) vs the MLE (right panel) for the setting with the WN + FL model.}
    \label{fig:wn_fl_running_time}
\end{figure}
\begin{figure}
    \centering
    \includegraphics[width=0.9\linewidth]{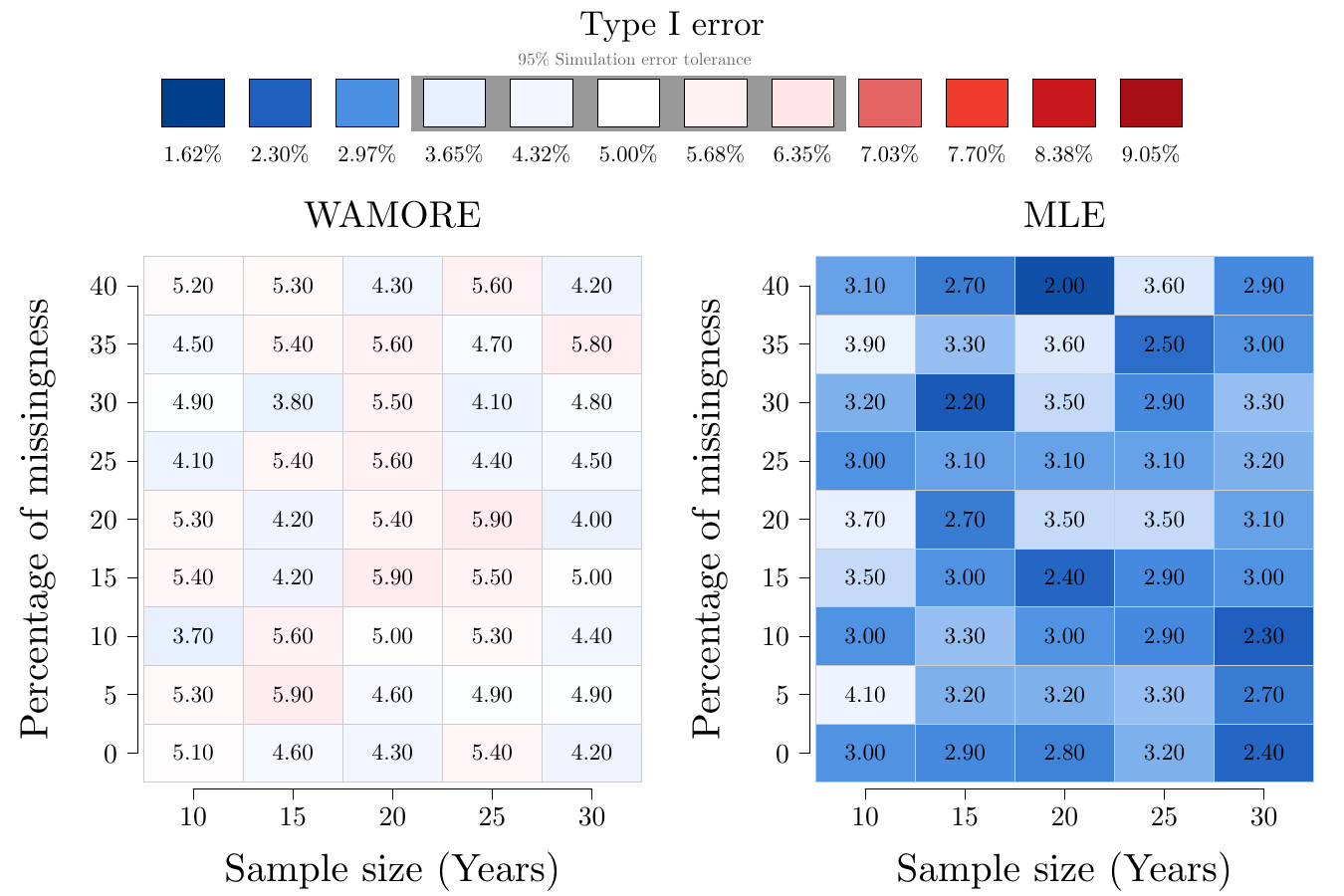}
    \caption{Empirical type I error (in \%) of the WAMORE (left panel) vs the MLE (right panel) for the setting with the WN + FL model.}
    \label{fig:wn_fl_type_1_error}
\end{figure}
\begin{figure}
    \centering
    \includegraphics[width=0.6\linewidth]{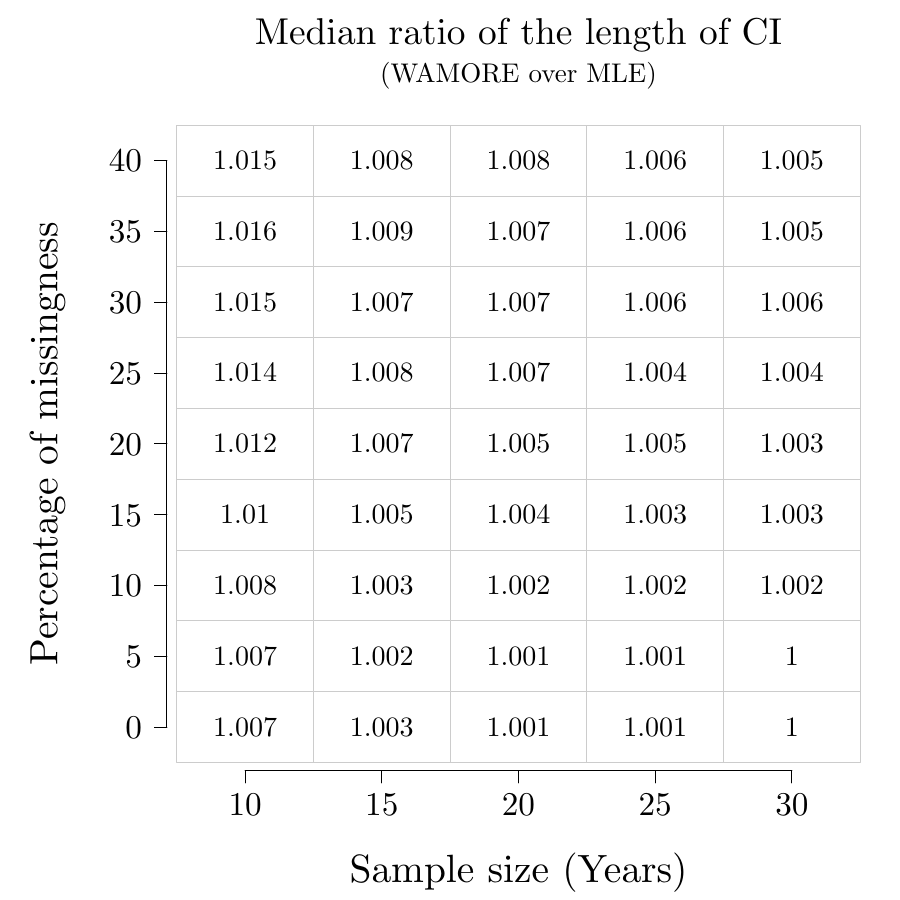}
    \caption{Median ratio of the length of confidence intervals for the setting with the WN + FL model.}
    \label{fig:wn_fl_ratio_length_ci}
\end{figure}
\begin{figure}
    \centering
    \includegraphics[width=0.8\linewidth]{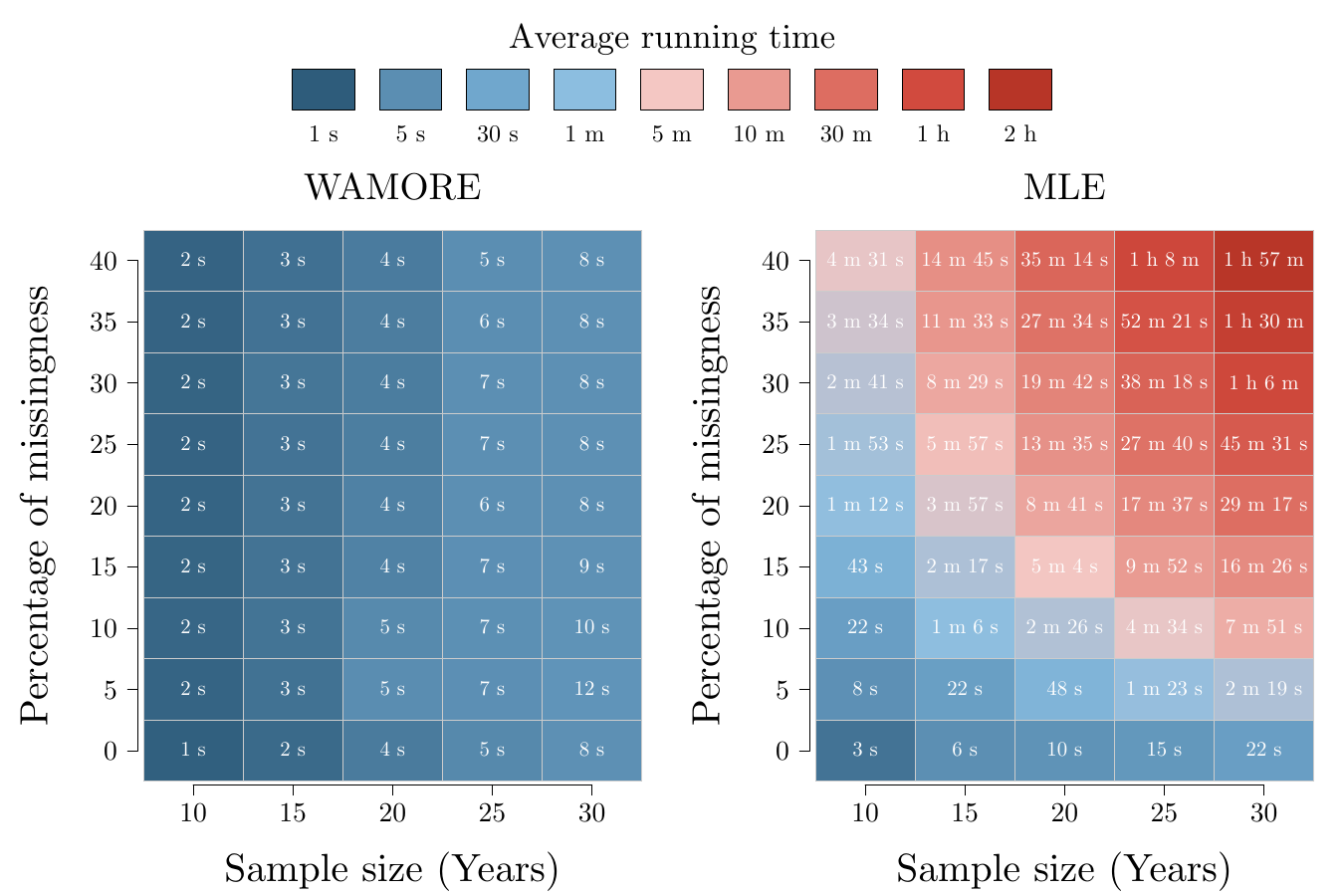}
    \caption{Average running time of the WAMORE (left panel) vs the MLE (right panel) for the setting with the WN + PL model.}
    \label{fig:wn_pl_running_time}
\end{figure}
\begin{figure}
    \centering
    \includegraphics[width=0.8\linewidth]{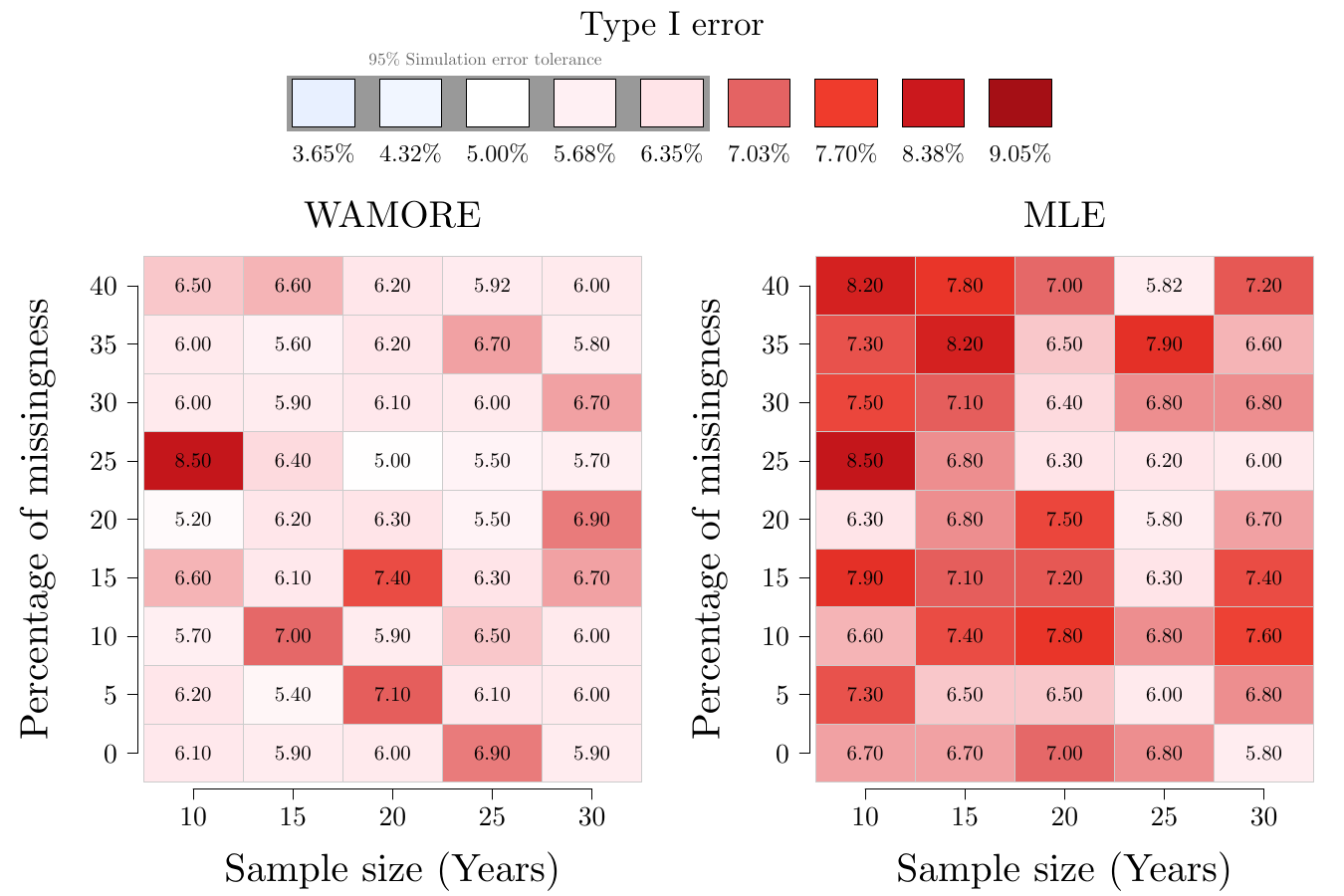}
    \caption{Empirical type I error of the WAMORE (left panel) vs the MLE (right panel) for the setting with the WN + PL model.}
    \label{fig:wn_pl_type_1_error}
\end{figure}
\begin{figure}
    \centering
    \includegraphics[width=0.6\linewidth]{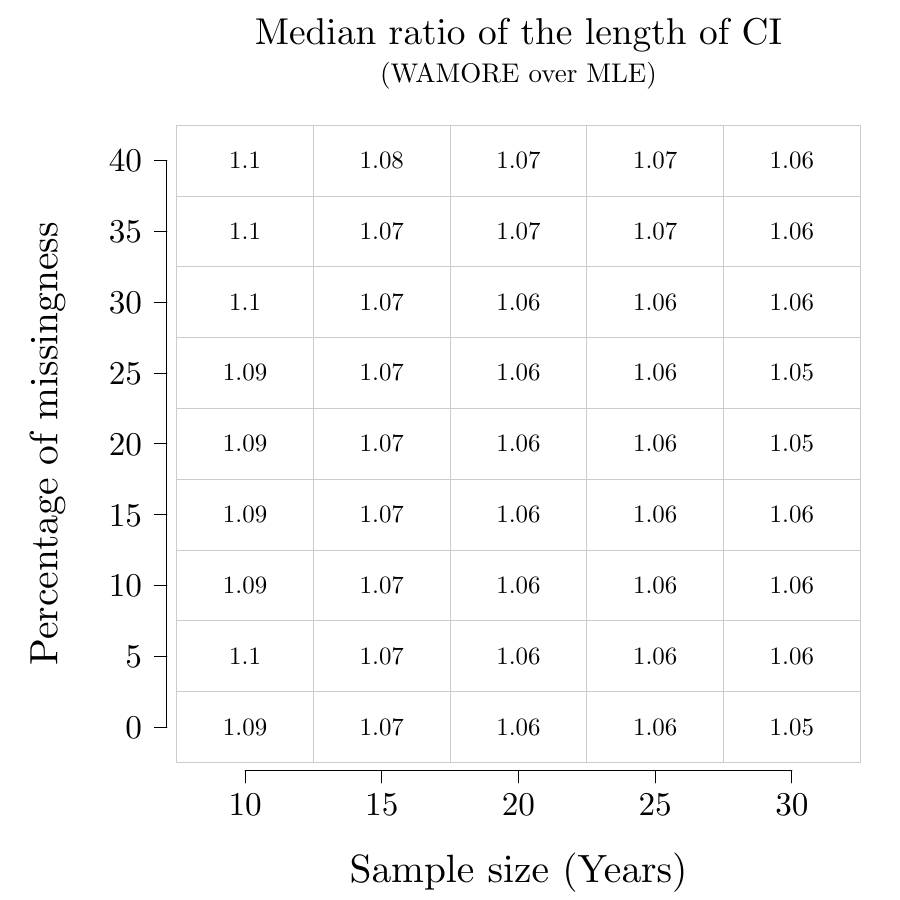}
    \caption{Median ratio of the length of confidence intervals for the setting with the WN + PL model.}
    \label{fig:wn_pl_ratio_length_ci}
\end{figure}

\begin{Remark}
    Based on our experimental findings, ignoring that the estimated WV $\hat{\bm{\nu}}$ is computed on the residuals (and not on the true error process) can introduce a bias in the last scales of the WV. This, in turn, can lead to biased estimates of the stochastic parameters, particularly if a stochastic process is primarily distinguishable in the last scales of the WV. Nevertheless, since we know that the last scales of the WV are also those that generally suffer from a higher variability (due to the lower number of wavelet coefficients on which the WV is estimated at these scales), it could be considered reasonable to ignore this approximation depending on the setting. Nevertheless, with the procedure proposed in Appendix \ref{app:IminusH}, we can still perform this approximation in a computationally efficient manner.
\end{Remark}

\FloatBarrier
\subsubsection{$\beta$-Mixing Missingness Process for Simulations}
\label{app:beta_mixing}

In this appendix we show that the Markov chain used for the simulations in Section~\ref{sec:simulations} is a special case of \eqref{eq:markov_missinigness}. Consider the two-state Markov chain with transition probabilities
\[
\begin{aligned}
P(Z_i=0 \mid Z_{i-1}=1)&=p_1,
\qquad
P(Z_i=1 \mid Z_{i-1}=1)=1-p_1,\\
P(Z_i=1 \mid Z_{i-1}=0)&=p_2,
\qquad
P(Z_i=0 \mid Z_{i-1}=0)=1-p_2.
\end{aligned}
\]

We now match this process with the representation in
\eqref{eq:markov_missinigness},

    \begin{equation*}
    Z_i =
\begin{cases}
    Z_{i-1}, & \text{with probability } \rho, \\
    W_i \sim \mathrm{Bernoulli}(\mu(\bm{\vartheta})), & \text{with probability } 1-\rho,
\end{cases}
\end{equation*}
with $\rho \in [0, 1)$ and $\mu(\bm{\bm{\vartheta}}) \in (0,1]$.

Under this representation,
\[
P(Z_i=0\mid Z_{i-1}=1)
=
(1-\rho)\left\{1-\mu(\bm{\vartheta})\right\},
\]
and
\[
P(Z_i=1\mid Z_{i-1}=0)
=
(1-\rho)\mu(\bm{\vartheta}).
\]

Matching these probabilities with the transition probabilities of the Markov chain yields
\begin{equation}
\label{eq:equality_transition_prob_markov}
    p_1=(1-\rho)\left\{1-\mu(\bm{\vartheta})\right\},
\qquad
p_2=(1-\rho)\mu(\bm{\vartheta}).
\end{equation}

Adding the two equations gives
\[
p_1+p_2=1-\rho,
\]
which implies
\[
\rho=1-p_1-p_2.
\]

Substituting this relation into the second equation of \eqref{eq:equality_transition_prob_markov} gives
\[
\mu(\bm{\vartheta})
=
\frac{p_2}{p_1+p_2}.
\]

Hence, the Markov chain used in the simulations is a special case of \eqref{eq:markov_missinigness} with
\[
\mu(\bm{\vartheta})
=
\frac{p_2}{p_1+p_2},
\qquad
\rho
=
1-p_1-p_2.
\]

\FloatBarrier
\subsubsection{Model Selection Simulation}
\label{app:model_selection_simulation}

In this appendix, we present a simulation study that investigate the performance of the model selection criterion presented in Section~\ref{sec:model_selection}. More precisely, we consider a subset of the model presented in \eqref{eq:func_model} with an intercept, a trend parameter, two sinusoidal periodic components (annual and semi-annual) with parameters in the range of the ones estimated on real GNSS times series analyzed in Section~\ref{sec:case_study}. For the stochastic model, we consider a combination of white noise and  flicker noise, which is one of the most commonly considered model in GNSS times series (see e.g., \citealp{he2019investigation}). In this simulation we consider the task of identifying an additional random walk component to this stochastic model which is a standard problem when modeling GNSS times series (see e.g., \citealp{kaczmarek2018identification, he2021analysis, he2019investigation}).

For the missingness process $\Z$, we generate missing observations using the Markov model described in \eqref{eq:markov_model_def}, with parameters from Setting~3 in Table~\ref{tab:parameters:missingness}. This setting corresponds to $10$\% of missing observations, which approximately matches the estimated median proportion of missing data in publicly available datasets (\citealp{bos2013fast}).

To study the performance the model selection criteria proposed in Section~\ref{sec:model_selection} with respect to the AIC (\citealp{akaike1974new}) implemented in the \texttt{Hector} software, we gradually increase the variance of the random walk and perform $1,000$ Monte Carlo simulations for each unique value of the random walk parameter.  For each simulation, we compute the proposed criterion as well as the AIC considering a stochastic model composed of white noise and flicker noise and a stochastic model composed of white noise, flicker noise and random walk. We record the selected stochastic model for both criterion. For each unique value of the random walk parameter, we count how many time the \textit{full} model (composed of white noise, flicker noise and random walk) is selected for both model selection criteria. We perform this simulation for two lengths of GNSS time series, respectively $20$ and $40$ years of daily data corresponding to respectively $7,300$ and $14,600$ observations. \\
The results of this simulation study are shown in Figure~\ref{fig:result_simu_model_selection}. It can be observed  that while the proposed criterion exhibits a marginally higher selection rate for the full model when the true model consists solely of white noise and flicker noise ($3.8\%$ compared to $0.1\%$ for the AIC in the setting with $20$ years of daily data, and $4.3\%$ compared to $0.2\%$ for the AIC in the setting with $40$ years of daily data), the proposed criterion generally demonstrate a higher selection rate for the full model when the true model includes white noise, flicker noise, and random walk. This difference can be as high as $15.3\%$ for the 20-year daily data setting and up to $18.1\%$ for the 40-year daily data setting. Overall, both criteria show very similar performance.

Additionally, it is worth highlighting that our proposed criterion also demonstrates a significant computational advantage over the likelihood-based criteria. For a dataset covering $20$ years of daily observations, our criterion's median runtime is $1$ minute and $19$ seconds, compared to $4$ minutes and $4$ seconds for the AIC. Similarly, for a dataset spanning $40$ years of daily observations, our criterion's median runtime is $4$ minutes and $6$ seconds, while the AIC approach takes a median of $29$ minutes and $8$ seconds.

\begin{figure}

    \centering
    \includegraphics[width=.9\linewidth]{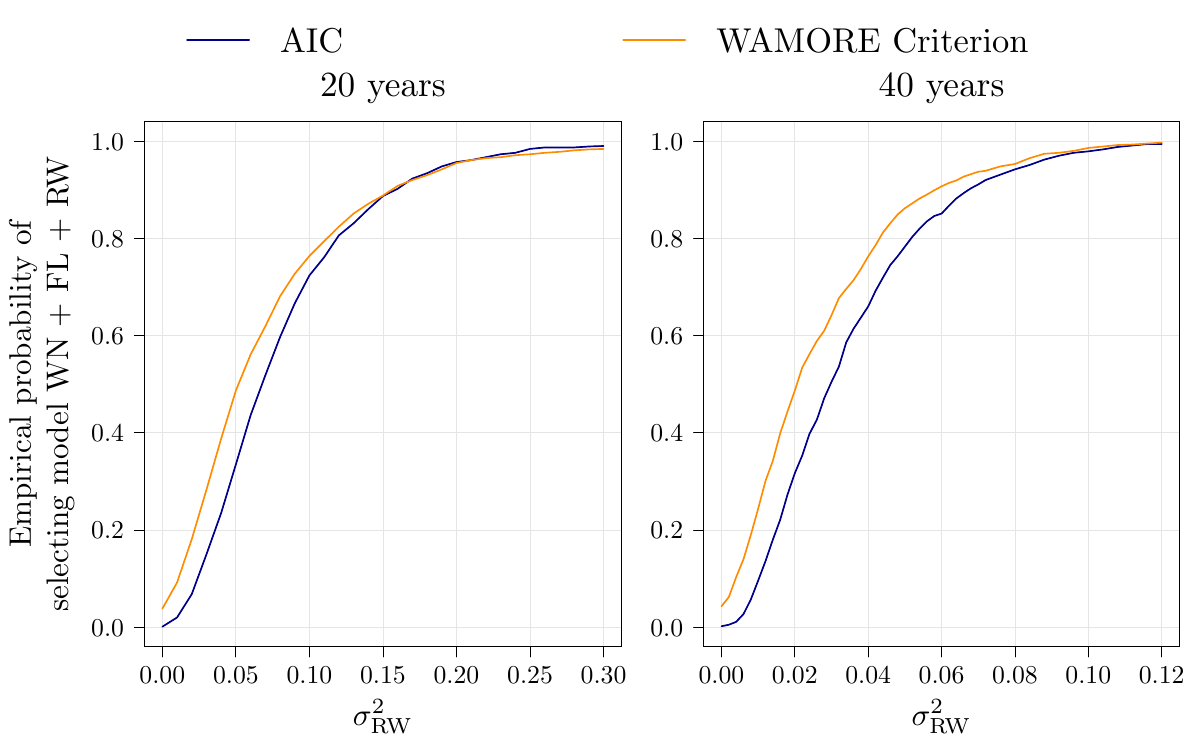}
    \caption{Selection rate of the AIC and the proposed WAMORE criterion for setting considering $20$ years of daily data (left panel) and $40$ years of daily data (right panel).}
    \label{fig:result_simu_model_selection}

\end{figure}

\FloatBarrier

\subsection{Case Study: Additional Results}
\label{app:case_study}
\subsubsection{Comparison with MLE}
\label{app:case_study_additional_graphs}

\begin{center}

\begin{figure}[H]
    \centering
    \includegraphics[width=0.9\linewidth]{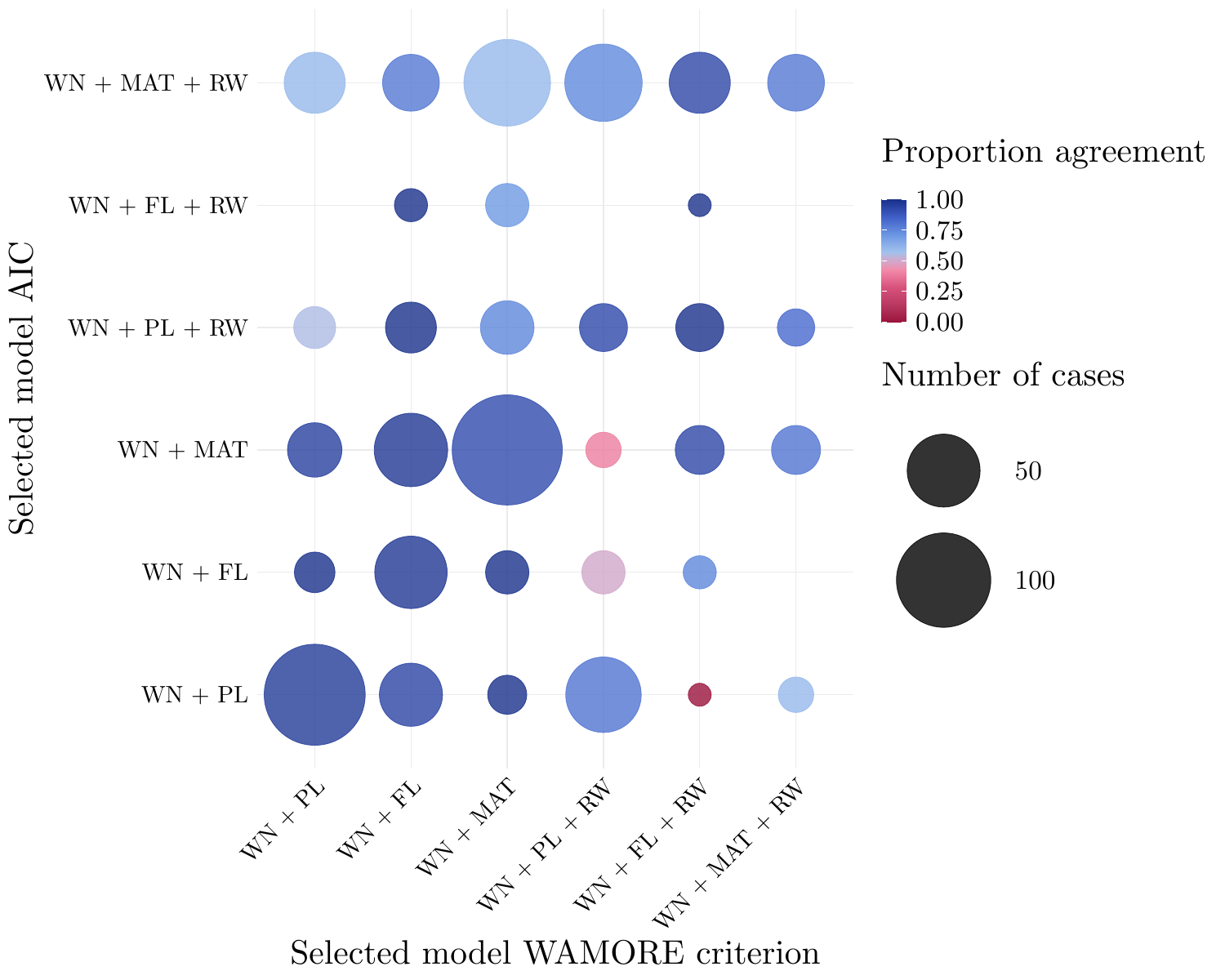}
    \caption{Agreement proportion and case count by selected model combination for the AIC and the WAMORE criterion.}
    \label{fig_agreement_test}
\end{figure}

\begin{figure}[H]
    \centering
    \includegraphics[width=0.9\linewidth]{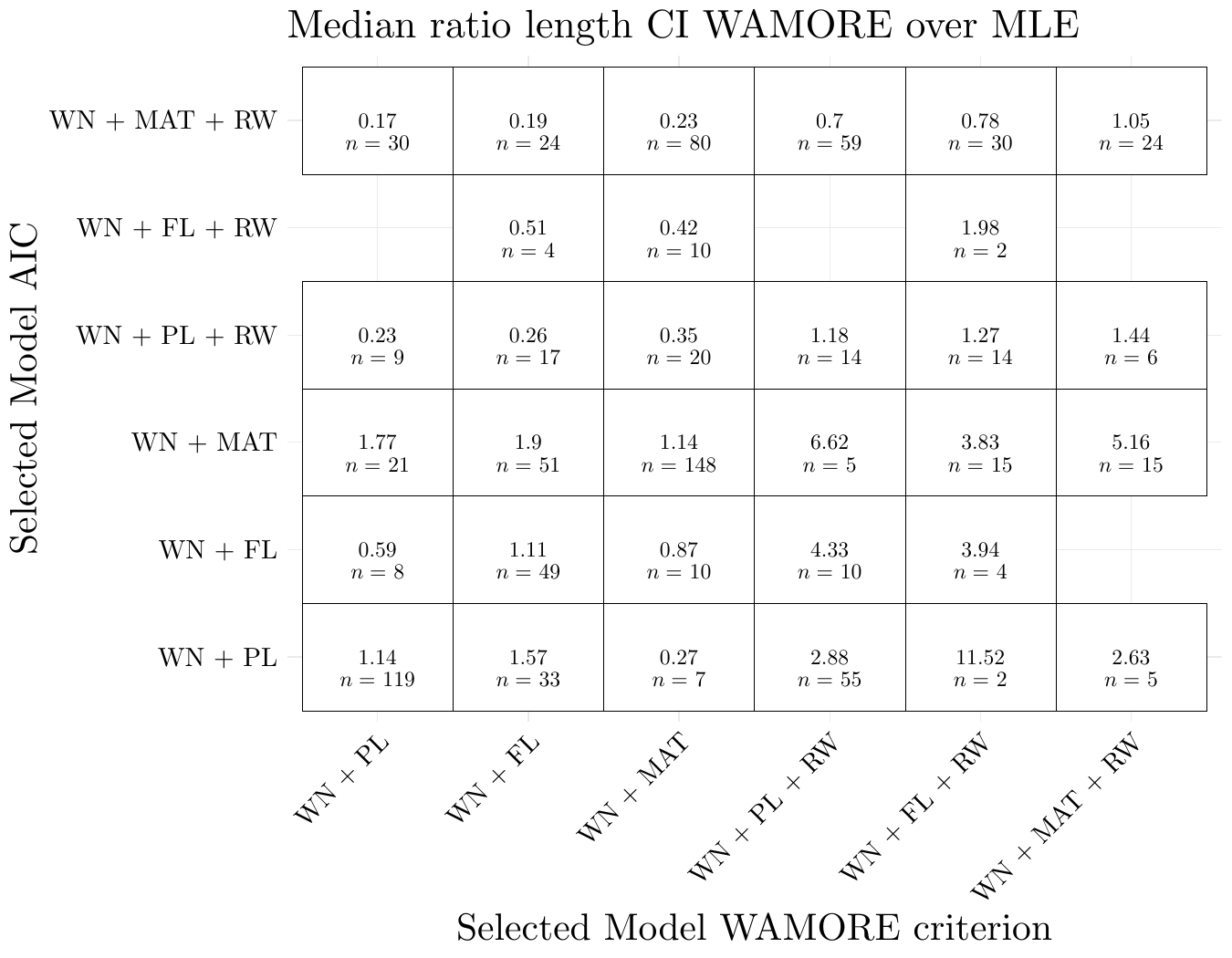}
    \caption{Ratio of median CI length for all combinations of selected models for the proposed criterion and the AIC (WAMORE CI length over MLE CI length).}
    \label{fig:median_ratio_length_ci}
\end{figure}

\end{center}

\subsubsection{Emulation Study}
\label{app:emulation_study}

We design an emulation study based on one of the stations analyzed in the case study for which both model selection criteria selected the model composed of a WN and a MAT process. More precisely, we consider the ACU5 station near Boston which, as of the drafting of this work, records more than $18$ years of daily data and reporting $10.77\%$ of missing observations.

\noindent We first estimate the model considering a WN and a MAT process as the stochastic model. We then fix all estimated parameters and generate $1,000$ simulated times series data from this model and do so also with larger signal lengths compared to the observed one to which we add $5$, $10$ and $15$ years of daily data. We omit showing the results here since, in this case, the WAMORE and the MLE always have good empirical coverage for the trend parameter (within simulation error) no matter whether we estimate this model directly or if we run a selection procedure versus the model with an additional RW. This performance is confirmed by the simulations in Appendix~\ref{app:main_simu_results}.

\begin{wrapfigure}[12]{r}{0.55\textwidth}
\centering

\includegraphics[width=0.5\textwidth]{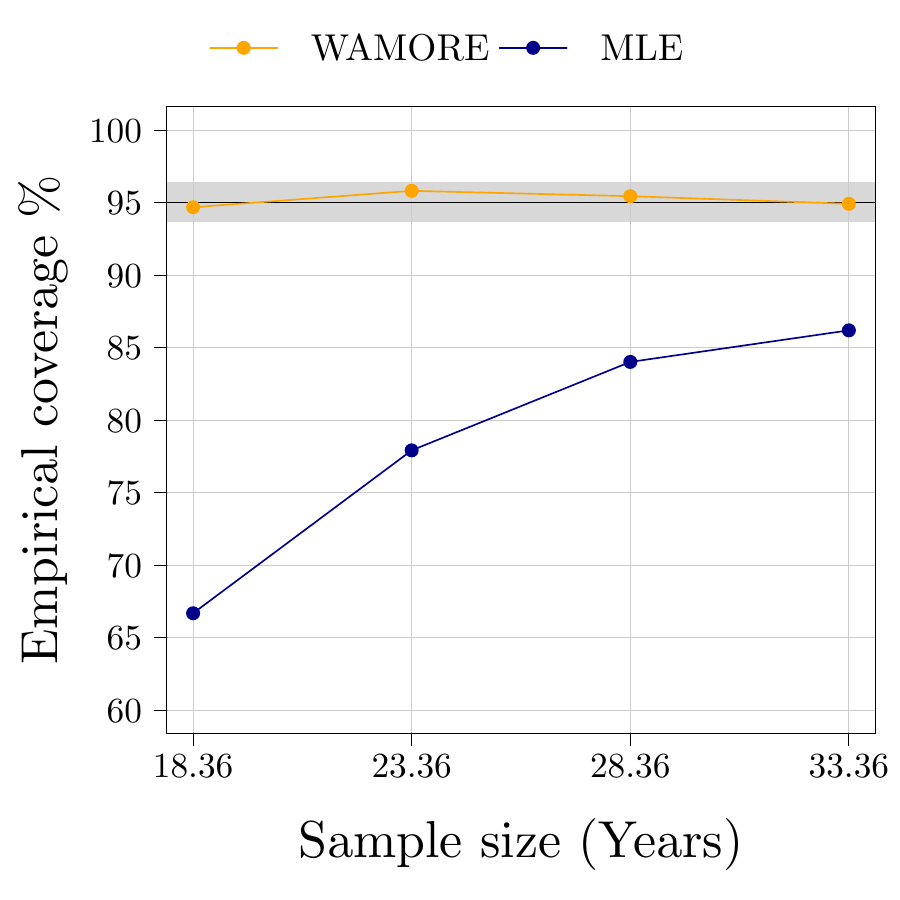}
\vspace{-10pt}
\caption{Empirical coverages of the WAMORE and MLE when the true model is WN + MAT + RW.}
\label{fig:emp_coverage_wn_matern_rw_w_missing}
\end{wrapfigure}

We then run another emulation study by adding a RW to the true model, where the innovation variance of the RW is within range of those estimated when this model is selected in the signals analyzed in the case study. Also in this case we generate $1,000$ signals and consider the original signal length to which we add $5$, $10$ and $15$ years of daily data. For each of these lengths, we compare the empirical coverages obtained (i) when we directly estimate the true model (i.e., WN + MAT + RW) and then (ii) when we proceed to model selection using the WAMORE criterion proposed in Section~\ref{sec:model_selection} and the AIC implemented in \texttt{Hector}. The empirical coverage of the WAMORE and the MLE when we directly estimate the true model (i.e., WN + MAT + RW) is shown in Figure~\ref{fig:emp_coverage_wn_matern_rw_w_missing} where it can be observed that the WAMORE estimator consistently provides standard errors that ensure valid inference for the trend parameter across all signal lengths. In contrast, the MLE's estimated standard errors deliver an empirical coverage which is too liberal and falls outside simulation error ranges. The WAMORE hence appears to provide slightly larger standard errors than the MLE when a RW is present and consequently appears to better target the nominal inferential level.

We now consider the empirical coverage conditional on the outcome of the model selection procedure between the candidate models WN + MAT and WN + MAT + RW. For these results, in Figure~\ref{fig:emp_coverage_wn_matern_rw_w_missing_w_mdl_selection_and_selection_rate} we observe (i) the empirical coverage at the selected model for both methods as well as (ii) the empirical selection rate of the correct model. As can be seen, both criteria exhibit similar empirical selection rates of the correct model and select the correct model more frequently as the signal grows in size, resulting in empirical coverages improving towards the nominal level. In particular, the WAMORE appears to approach the required nominal level faster than the MLE as the signal length increases.

\begin{figure}[H]
    \centering
    \includegraphics[width=1\linewidth]{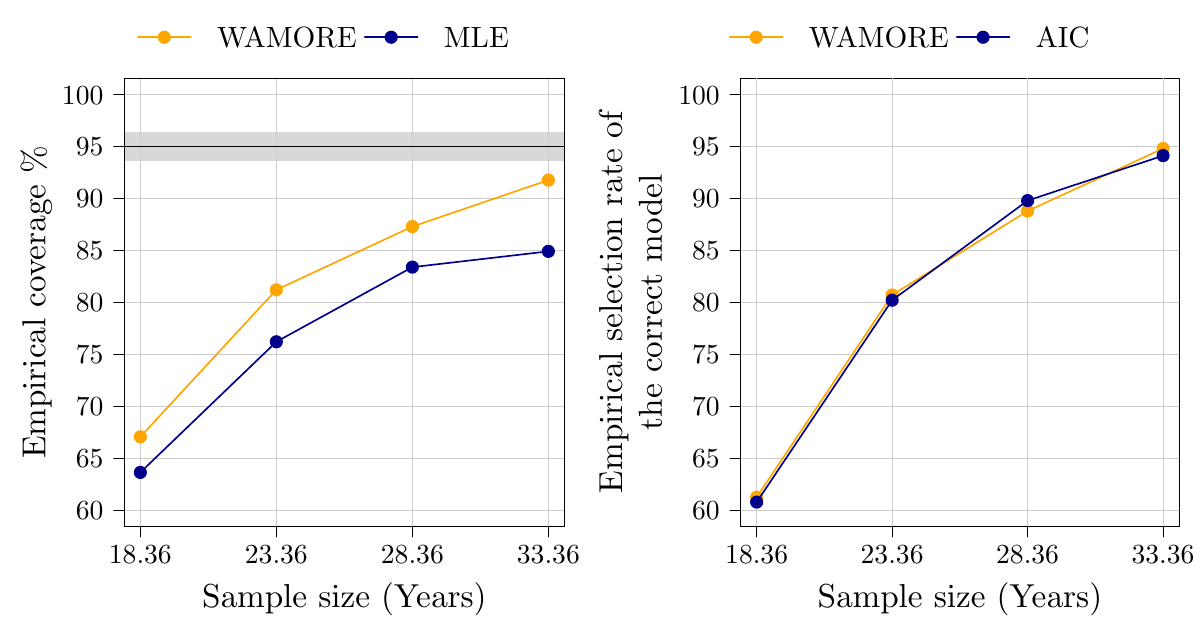}
    \caption{Empirical coverages of the WAMORE and MLE at the selected model and empirical selection rate of the correct model when the true model is WN + MAT + RW.}
    \label{fig:emp_coverage_wn_matern_rw_w_missing_w_mdl_selection_and_selection_rate}
\end{figure}

These results help to better understand and shed light on the differences observed in the case study discussed in Section~\ref{sec:case_study}. Indeed, these findings suggest that the WAMORE estimator appears to provide slightly larger but more reliable standard errors with respect to the MLE when the stochastic model includes a RW. Also, the proposed model selection criterion appears to have similar performance to the AIC implemented in \texttt{Hector}.

\end{document}